\documentclass[a4paper,reqno,11pt]{amsart}

\usepackage{amsfonts}
\usepackage{tikz}
\usepackage{tkz-graph}
\usetikzlibrary{graphs}
\usetikzlibrary{graphs.standard}
\usepackage[margin=1.3in]{geometry}
\usepackage{accents}
\usepackage{array,calc} 
\usepackage{mathrsfs}
\usepackage{stmaryrd} 
\usepackage{pgf}
\usetikzlibrary{shapes,cd,arrows,automata,decorations.pathreplacing,angles,quotes,calc}
\usepackage{tkz-euclide}
\usepackage{caption}
\usepackage{subcaption}
\usepackage{amssymb}
\usepackage{amsmath,calc,graphicx}
\usepackage{url}
\usepackage[hidelinks]{hyperref}
\usepackage{amssymb}
\usepackage{amsthm}
\usepackage{multirow}
\usepackage{float}
\usepackage{enumerate}
\usepackage{enumitem}
\setcounter{MaxMatrixCols}{20}
\usepackage{amsrefs}
 \usepackage[foot]{amsaddr}
\numberwithin{equation}{section}
\numberwithin{figure}{section}
\theoremstyle{plain}

\newtheorem{theorem}{Theorem}[section]
\newtheorem{lemma}[theorem]{Lemma}
\newtheorem{corollary}[theorem]{Corollary}
\newtheorem{proposition}[theorem]{Proposition}
\newtheorem{definition}[theorem]{Definition}

\newtheorem{rhprob}{Riemann-Hilbert Problem}

\theoremstyle{remark}
\newtheorem{remark}[theorem]{Remark}

\newtheorem*{lem*}{\textsc{Lemma}}
\newtheorem*{cor*}{\textsc{Corollary}}
\newtheorem*{exer*}{\textsc{Exercise}}
\newtheorem*{con*}{\textsc{Conjecture}}
\newtheorem*{thm*}{\textsc{Theorem}}

\newcommand{\beq}{\begin{equation}}
\newcommand{\eeq}{\end{equation}}

\usepackage{amsfonts} 
\DeclareMathSymbol{\shortminus}{\mathbin}{AMSa}{"39}

\setcounter{tocdepth}{2}

\newcommand\Psix{\textrm{P}_{\textrm{VI}}}

\newcommand\B[1]{\ensuremath{\scalebox{0.5}{$($}{#1}\scalebox{0.5}{$)$}}}

\setcounter{tocdepth}{1}


\title{On $q$-Painlev\'e VI and the geometry of Segre surfaces}
\date{}
\author{Pieter Roffelsen}
\email{pieter.roffelsen@sydney.edu.au}
\address{School of Mathematics and Statistics F07, The University of Sydney, NSW 2006, Australia}



\begin{document}
\begin{abstract}
In the context of $q$-Painlev\'e VI with generic parameter values, the Riemann-Hilbert correspondence induces a one-to-one mapping between solutions of the nonlinear equation and points on an affine Segre surface. Upon fixing a generic point on the surface, we give formulae for the function values of the corresponding solution near the critical points, in the form of complete, convergent, asymptotic expansions. These lead in particular to the solution of the nonlinear connection problem for the general solution of $q$-Painlev\'e VI. We further show that, when the point on the Segre surface is moved to one of the sixteen lines on the surface, one of the asymptotic expansions near the critical points truncates, under suitable parameter assumptions. At intersection points of lines, this  then yields doubly truncated asymptotics at one of the critical points or simultaneous truncation at both.

 \smallskip
\noindent \textit{Keywords: connection problems, Painlev\'e equations, Riemann-Hilbert problems, Segre surfaces, truncated asymptotics.}
\end{abstract}
\maketitle

\tableofcontents

\section{Introduction}\label{sec:intro}
The complete intersection of two quadrics in $\mathbb{CP}^4$ forms a Segre surface. An affine Segre surface is by definition a Segre surface minus a hyperplane section. In the most generic setting, the Segre surface is smooth
and the hyperplane section does not contain any lines.  In this paper, we relate classical algebro-geometric aspects of such affine Segre surfaces, to the asymptotics of solutions of the famous $q$-difference Painlev\'e VI equation with generic parameter values.

Assuming $q\in\mathbb{C}$, $0<|q|<1$, and given parameters $\Theta:=(\theta_0,\theta_t,\theta_1,\theta_\infty)\in\mathbb{C}^4$, the latter $q$-difference equation is given by
\begin{equation}\label{eq:qpvi}
q\Psix:\ \begin{cases}\  f\overline{f}&=\dfrac{(\overline{g}-q^{+\theta_0}t)(\overline{g}-q^{-\theta_0}t)}{(\overline{g}-q^{\theta_\infty-1})(\overline{g}-q^{-\theta_\infty})},\\
  \   g\overline{g}&=\dfrac{(f-q^{+\theta_t}t)(f-q^{-\theta_t}t)}{q(f-q^{+\theta_1})(f-q^{-\theta_1})},
    \end{cases} 
\end{equation}
where $f,g:T\rightarrow \mathbb{CP}^1$ are complex functions defined on a domain $T$ invariant under multiplication by $q$ and we have used the abbreviated notation $f=f(t)$, $g=g(t)$, $\overline{f}=f(qt)$, $\overline{g}=g(qt)$, for $t\in T$. For the sake of simplicity, we will assume that $q$ is real an positive throughout the paper.

We will be concerned with asymptotics of solutions of $q\Psix$ along discrete sets of the form $q^\mathbb{Z}t_0$, $t_0\in\mathbb{C}^*$, which we call $q$-lines.
That is, we fix an initial point $t_0\in\mathbb{C}^*$, let time $t$ vary in the $q$-line $q^\mathbb{Z}t_0$, and consider the behaviour of solutions as $t\rightarrow 0$ or $t\rightarrow \infty$ along this $q$-line. Without loss of generality, we may thus restrict our attention to solutions $(f,g)$ whose whole domain is given by the $q$-line $T=q^\mathbb{Z}t_0$ and we will call $(f,g)$ a solution of $q\Psix(\Theta,t_0)$ in this case.

In \cite{roffelsenjoshiqpvi}, the Jimbo-Sakai Lax pair \cite{jimbosakai} of $q\Psix$  was used to construct a bijective mapping from the solution space of $q\Psix(\Theta,t_0)$ to an affine Segre surface. In the current paper, we derive the generic asymptotic behaviours of solutions as $t\rightarrow 0$ and $t\rightarrow \infty$, parametrised explicitly in terms of coordinates on this affine Segre surface. The affine Segre surface has $16$ lines and we show that, on any of these lines, asymptotic expansions around either $t=0$ or $t=\infty$ truncate under suitable parameter assumptions. At intersection points of lines we then have either simultaneous truncation of asymptotic expansions around $t=0$ and $t=\infty$, or double truncation at one the two critical points. In the latter case the asymptotic expansions reduce to
convergent Laurent series, and the corresponding solutions were classified by Ohyama \cite{ohyamamero}.

The asymptotics of solutions to $q\Psix$ have been studied before. 
For each of the two critical points, Mano \cite{manoqpvi} constructed a two complex parameter family of solutions of $q\Psix$ and derived explicit leading order asymptotics around the corresponding critical point, as well as corresponding monodromy. Jimbo et al. \cite{jimbonagoyasakai} further derived conjectural full asymptotic expansions of the corresponding tau functions around one of the two critical points.

One of the important asymptotic problems that was still unsolved, is the corresponding nonlinear connection problem, which entails providing the asymptotic behaviours around both critical points simultaneously, or relating them to one another, explicitly and rigorously.
This problem is solved in the current paper for the general solution of $q\Psix$. It is further shown how, under a formal continuum limit as $q\rightarrow 1$, this solution reduces to the classical solution of the nonlinear connection problem for Painlev\'e VI by Jimbo \cite{jimbo1982}. In particular, this leads to an explicit bi-rational map between the famous Jimbo-Fricke cubic surface and the continuum limit of the Segre surface.

Our approach to the asymptotic analysis of solutions goes via the Riemann-Hilbert problem (RHP) associated with $q\Psix$. It rests on three important technical innovations. The first, are so called Mano decompositions by Ohyama et al. \cite{ohyamaramissualoy}, which decompose the monodromy of rank and degree two linear Fuchsian $q$-difference systems into the monodromies of two  linear Fuchsian $q$-difference systems of rank two and degree one. We will recall this construction in this paper and provide additional formulae.

The second, is that such Mano decompositions can also be realised analytically, by recasting the RHP into a corresponding factorised form. We note that such a factorised form of a RHP for $q\Psix$ first appeared in \cite{jimbonagoyasakai}. 

The third, is that, by quotienting the global solution of this RHP by local parametrices which solve the individual components, the RHP can be recast into one, posed on a single, closed and time-independent contour, with the jump matrix made out of Heine's $q$-hypergeometric functions. The solution of this RHP can be written as perturbation of the identity for small or large time, yielding the asymptotics of the underlying solution of $q\Psix$.
In the context of $\Psix$, this methodology was developed and applied in \cites{itslisovyy,gavlis2018,cafassowidom}. We adapt it to the $q$-difference setting in the current paper.

One of the ways the geometry of the affine Segre surface enters the asymptotic analysis, is that, on the lines, some of factors in the Mano decompositions become reducible. This leads to a triangular reduction of one of the parametrices that consequently truncates the small or large time perturbative solution of the RHP. Similar truncations also occur in the context of $\Psix$, see Jimbo \cite{jimbo1982} and Guzzetti \cites{guzzetti_solving,guzzettireview}.

The classical Cayley–Salmon theorem states that any smooth cubic surface in $\mathbb{CP}^3$ contains precisely $27$ lines. The problem of writing these lines in a useful and explicit form, attracted a lot of interest in the nineteenth and early twentieth century, with various different solutions given over the  years, for example by Schl\"afli \cite{schlafki1858} and B. Segre \cite{segrebook}. The fact that smooth Segre surfaces contain sixteen lines is also classical and due to Corrado Segre himself \cite{Segre}. 
We will show that the interpretation of the Segre surface for $q\Psix$ as a monodromy manifold leads to explicit representations of the $16$ lines. Furthermore, certain natural coordinates, which we call twist parameters, corresponding to different Mano decompositions, trace out the lines when varied and thus lead to explicit parametrisations of the lines.

\subsection{Overview of the paper}
Section \ref{sec:results}, starts with introducing the general set up and recalling some of the main results in \cite{roffelsenjoshiqpvi}, that we need in the current paper.
This is followed by Subsections \ref{subsec:geometry}-\ref{subsec:nonlinear_con}, in which all the main results of the paper are given. Near each result, reference is given to the location of the proof in the main body of the text. In Subsection \ref{subsec:continuumlimit}, the continuum limit of some of the results is discussed in relation to $\Psix$, and two forthcoming works are announced.

Section \ref{sec:geometrysegre} is on the geometry of the Segre surface corresponding to $q\Psix$. It contains all the proofs of results on the geometry of the surface, described in Subsection \ref{subsec:geometry}.

In Section \ref{sec:qpairspants}, two Mano decompositions, in the context of the linear problem for $q\Psix$, are described, both algebraically, as factorisations of monodromy, and analytically, as factorisations of the Riemann-Hilbert problem.

In Section \ref{sec:rhanalysis}, the generic asymptotics of solutions near $t=0$ is derived from one of the factorised Riemann-Hilbert problems. Furthermore, truncated asymptotics at $t=0$ on one of the lines of the Segre surface are derived as well as doubly truncated asymptotics at $t=0$, corresponding to an intersection point of lines.

In Section \ref{sec:symmetries}, the actions of several symmetries of $q\Psix$ on monodromy are computed. The results are used to derive all the remaining asymptotic results in the paper. These include, generic asymptotics near $t=\infty$, truncated asymptotics on each of the sixteen lines as well as asymptotics on some of the intersection points of lines.

Finally, in Appendix \ref{appendix:rhdif}, a slightly technical proof is given of Proposition \ref{prop:rhdiffeo}, which says, roughly speaking, that the Riemann-Hilbert correspondence induces a biholomorpism between the initial value space of $q\Psix$ and the corresponding affine Segre surface.

\subsection{Acknowledgements}
This research was supported by Australian Research
Council Discovery Project \#DP210100129 and the Leverhulme Trust Visiting Professorship VP2-2018-013.

I am deeply indepted to Nalini Joshi, for many inspiring discussions and suggestions that helped to shape parts of this paper as well as her continuing scientific and material support.
I would further like to express my gratitude to Oleg Lisovyy, Marta Mazzocco, Peter Miller and Yang Shi for their interest and discussions  at various stages of this work.

I am  grateful to Yousuke Ohyama, Jean-Pierre Ramis and Jacques Sauloy for the interesting email exchanges about the present topic of investigation. I would also like to thank Hajime Nagoya, for sharing mathematica code for the tau-function expansions in \cite{jimbonagoyasakai}, and Marco Bertola for introducing me to the notion of Tyurin parameters.

A part of the work was done during the author's residence at the Isaac Newton Institute during the Fall of 2022. I am grateful to the Møller institute and organisers of the program ”Applicable resurgent asymptotics: towards a universal theory” for their hospitality.

\subsection{Notation}
Here, we briefly describe the notation used in this paper. The symbol $\sigma_3$ is the well-known Pauli matrix $\sigma_3=\operatorname{diag}(1,-1)$. We fix a $0<q<1$, set $\tau=-\frac{2\pi i}{\log q}$, so that $\tau\in i\,\mathbb{R}_{>0}$, and define the lattice
\begin{equation*}
    \Lambda_\tau=\mathbb{Z}\cdot 1+\mathbb{Z}\cdot \tau.
\end{equation*}

The $q$-Pochhammer symbol is the (convergent) product
\begin{equation*}
(z;q)_\infty=\prod_{k=0}^{\infty}{(1-q^kz)}\qquad (z\in\mathbb{C}).
\end{equation*}
Note that the entire function $(z;q)_\infty$ satisfies
\begin{equation*}
(qz;q)_\infty=\frac{1}{1-z}(z;q)_\infty,
\end{equation*}
with $(0;q)_\infty=1$ and, moreover, possesses simple zeros at points in $q^{-\mathbb{N}}$. The $\mathit{q}$-theta function 
\begin{equation}\label{eq:thetasym}
\theta_q(z)=(z;q)_\infty(q/z;q)_\infty\quad (z\in \mathbb{C}^*),\quad \mathbb{C}^*:=\mathbb{C}\setminus\{0\},
\end{equation}
is analytic on $\mathbb{C}^*$, with essential singularities at $z=0, \infty$, and has simple zeros on the $q$-line $q^\mathbb{Z}$. It satisfies
\begin{equation}\label{eq:qtheta_identities}
\theta_q(qz)=-\frac{1}{z}\theta_q(z)=\theta_q(1/z).
\end{equation}

We make use of the $q$-gamma function
\begin{equation*}
    \Gamma_q(x)=(1-q)^{1-x}\frac{(q;q)_\infty}{(q^x;q)_\infty},
\end{equation*}
and define
\begin{equation*}
    \vartheta_\tau(x):=\theta_q(q^x)=\frac{(1-q)(q;q)_\infty^2}{\Gamma_q(x)\Gamma_q(1-x)},
\end{equation*}
so that
\begin{equation*}
    \vartheta_\tau(x+\tau)=\vartheta_\tau(x),\quad \vartheta_\tau(x+1)=-e^{\frac{2\pi i}{\tau}x}\vartheta_\tau(x)=\vartheta_\tau(-x).
\end{equation*}
For $n\in\mathbb{N}^*$, we use the common abbreviation for repeated products of these functions,
\begin{align*}
\theta_q(z_1,\ldots,z_n)&=\theta_q(z_1)\cdot \ldots\cdot \theta_q(z_n),\\
\vartheta_\tau(z_1,\ldots,z_n)&=\vartheta_\tau(z_1)\cdot \ldots\cdot \vartheta_\tau(z_n),\\
\Gamma_q(z_1,\ldots,z_n)&=\Gamma_q(z_1)\cdot\ldots\cdot \Gamma_q(z_n),\\
(z_1,\ldots,z_n;q)_\infty&=(z_1;q)_\infty\cdot\ldots\cdot (z_n;q)_\infty.
\end{align*}
We further denote
\begin{equation}\label{eq:defiFq}
 F_q(\alpha, \beta;\gamma
; z) :=\;_{2}\phi_1 \left[\begin{matrix} 
q^\alpha, q^\beta \\ 
q^\gamma \end{matrix} 
; q,z \right],
\end{equation}
where $_2\phi_1$ is the standard Heine $q$-hypergeometric function, for $\alpha,\beta,\gamma\in\mathbb{C}$ with $\gamma\notin \mathbb{Z}_{\leq 0}$.

\section{Set up and main results}\label{sec:results}
In this section, we start by discussing the general set up, introducing the linear problem for $q\Psix$ and defining the corresponding monodromy manifold. Then, in Subsection \ref{subsec:previous}, some of the results in \cite{roffelsenjoshiqpvi} are recalled, that we will be building on in this paper. In particular, the affine Segre surface corresponding to $q\Psix$ is introduced.
In Subsection \ref{subsec:geometry}, we discuss some fundamental properties of this surface, like its smoothness, number of lines and their intersection points.

 We then move on to the asymptotics of solutions of $q\Psix$. In Subsections \ref{subsec:asympzero} and \ref{subsec:asympinfty}, the  respective asymptotics near $t=0$ and $t=\infty$ of generic solutions are described. Then, in Subsection \ref{subsec:asymplines}, it is shown how the generic asymptotic expansions truncate on the different lines of the Segre surface.
This is followed by Subsection \ref{subsection:intersect}, in which  double or doubly truncated asymptotics on intersection points of lines are discussed.

In subsection \ref{subsec:nonlinear_con}, the solution to the nonlinear connection problem for the general solution of $q\Psix$ is given and shown to yield some surprising consequences regarding  solutions on open domains.

Finally, in Subsection \ref{subsec:continuumlimit}, the continuum limit of our connection formulae is described and shown to match exactly with the classical formulas by Jimbo \cite{jimbo1982} that solve the nonlinear connection problem for $\Psix$. In particular, this yields an explicit bi-rational mapping between the Jimbo-Fricke cubic and the continuum limit of the Segre surface. We end the subsection, announcing two forthcoming works.

\subsection{Set up}\label{subsec:setup}
Take parameters $\Theta:=(\theta_0,\theta_t,\theta_1,\theta_\infty)\in\mathbb{C}^4$ and consider a rank two linear system
\begin{align}
    Y(qz)&=A(z,t)Y(z),\label{eq:linear_problem}\\
    A(z,t)&=A_0(t)+z A_1(t)+z^2 A_2,\nonumber\label{eq:matrix_polynomial}
\end{align}
where $A(z,t)$ is a $2\times 2$ matrix polynomial with determinant given by
\begin{equation}\label{eq:detA}
|A(z,t)|=\left(z-q^{+\theta_t}t\right)\left(z-q^{-\theta_t}t\right)\left(z-q^{+\theta_1}\right)\left(z-q^{-\theta_1}\right),
\end{equation}
and
\begin{equation*}
    A_0(t)=H(t)\begin{pmatrix}
    q^{+\theta_0}t & 0\\
    0 & q^{-\theta_0}t \end{pmatrix}H(t)^{-1},\quad  A_2=\begin{pmatrix}
    q^{-\theta_\infty} & 0\\
    0 & q^{+\theta_\infty}\end{pmatrix},\label{eq:diagonal}
\end{equation*}
for an $H(t)\in GL_2(\mathbb{C})$.

We assume that
\begin{equation}\label{eq:param_assumptions_1}
2\theta_0,2\theta_t,2\theta_1,2\theta_\infty\notin \Lambda_\tau,\qquad
\epsilon_0 \theta_0+\epsilon_t \theta_t+\epsilon_1 \theta_1+\epsilon_\infty \theta_\infty\notin \Lambda_\tau,
\end{equation}
for any $\epsilon_j\in\{\pm 1\}$, $j=0,t,1,\infty$, and we let $t$ vary in a $q$-line $q^\mathbb{Z}t_0$, for some fixed $t_0\in\mathbb{C}^*$, with
\begin{equation}\label{eq:param_assumptions_2}
    t_0\notin q^{\mathbb{Z}\pm (\theta_t+\theta_1)},q^{\mathbb{Z}\pm (\theta_t-\theta_1)}, q^{\mathbb{Z}\pm (\theta_0+\theta_\infty)},q^{\mathbb{Z}\pm (\theta_0-\theta_\infty)}.
\end{equation}
Under these assumptions the linear system has no resonance and is guaranteed to be irreducible, see \cite{roffelsenjoshiqpvi}*{\S 1.1}, where we note the following correspondence between the
 $\theta$-parameters and the $\kappa$-parameters used in \cite{roffelsenjoshiqpvi},
\begin{equation*}
    \kappa_0=q^{\theta_0},\quad \kappa_t=q^{\theta_t},\quad \kappa_1=q^{\theta_1},\quad \kappa_\infty=q^{-\theta_\infty}.
\end{equation*}
In particular, $A_{12}(z)$ cannot be identically zero and we parametrise $A$ by introducing the standard coordinates $\{f,g,w\}$, through
\begin{subequations}\label{eq:coordinates_linear}
\begin{align}
    A_{12}(z,t)&=q^{\theta_\infty} w\,(z-f),\\
    A_{22}(f,t)&=q(f-q^{+\theta_1})(f-q^{-\theta_1})g,
\end{align}
\end{subequations}
see Jimbo and Sakai \cite{jimbosakai}. The coefficient matrix is then given in terms of $\{f,g,w\}$ by
\begin{equation*}
    A(z,t)=\begin{pmatrix}
    q^{-\theta_\infty} ((z-f)(z-\alpha)+g_1) & q^{\theta_\infty}\, w(z-f)\\
    q^{-\theta_\infty}\, w^{-1}(\gamma\, z+\delta) & q^{\theta_\infty}((z-f)(z-\beta)+g_2)
    \end{pmatrix},
\end{equation*}
where
\begin{align}
g_1&=q^{-1+\theta_\infty}(f-q^{\theta_t} t)(f-q^{-\theta_t}t)g^{-1},\label{eq:g1}\\ 
g_2&=q^{1-\theta_\infty}(f-q^{\theta_1})(f-q^{-\theta_1})g,\nonumber
\end{align}
and, temporarily using the notation $\mathring{\kappa}_j=q^{\theta_j}+q^{-\theta_j}$, for $j\in\{0,t,1\}$,
\begin{align*}
\alpha&=\frac{1}{(q^{2\theta_\infty}-1)f}
\left(g_1-q^{\theta_\infty}\mathring{\kappa}_0t+q^{2\theta_\infty}g_2+q^{2\theta_\infty}(\mathring{\kappa}_tt+\mathring{\kappa}_1)f-2 q^{2\theta_\infty} f^2\right),\\
\beta&=\frac{1}{(1-q^{2\theta_\infty})f}
\left(g_1-q^{\theta_\infty}\mathring{\kappa}_0t+q^{2\theta_\infty}g_2+(\mathring{\kappa}_tt+\mathring{\kappa}_1)f-2 f^2\right),\\
\gamma&=g_1+g_2+f^2+2(\alpha+\beta)f+\alpha\beta -(t^2+\mathring{\kappa}_t\mathring{\kappa}_1t+1),\\           
\delta&=f^{-1}(t^2-(g_1+\alpha f)(g_2+\beta f)),
\end{align*}

We denote by $Y_0(z,t)$ and $Y_\infty(z,t)$ the canonical solutions of the linear system respectively given by convergent series expansions around $z=0$ and $z=\infty$ of the following form,
\begin{subequations}\label{eq:linear_sys_solutions}
\begin{align}
Y_0(z,t)&=z^{\log_q(t)}\Psi_0(z,t)z^{\theta_0\sigma_3},& \Psi_0(z,t)&=H(t)+\sum_{n=1}^\infty z^n M_n(t),\\
Y_\infty(z,t)&=z^{\log_q(z/q)}\Psi_\infty(z,t) z^{-\theta_\infty\sigma_3},& \Psi_\infty(z,t)&=I+\sum_{n=1}^\infty z^{-n} N_n(t),
\end{align}
\end{subequations}
whose existence follows from the general existence theorems in Carmichael \cite{carmichael1912}. We note that the matrix functions $\Psi_\infty(z,t)$ and $\Psi_0(z,t)^{-1}$ extend to single-valued analytic functions in $z$ on $\mathbb{CP}^1\setminus\{0\}$ and $\mathbb{CP}^1\setminus\{\infty\}$ respectively. 

In Birkhoff's classical treatment \cite{birkhoffgeneralized1913} of the Riemann problem for $q$-difference systems, a crucial role is played by the connection matrix which, in the current context, is defined by
\begin{equation*}
    P(z,t)=Y_0(z,t)^{-1}Y_\infty(z,t),
\end{equation*}
where we remark that $P(z,t)$ is $q$-periodic with respect to $z$, i.e. $P(qz,t)=P(z,t)$.
An analytic characterisation of this matrix is best done via the associated matrix function
\begin{equation*}
    C(z,t):=\Psi_0(z,t)^{-1}\Psi_\infty(z,t)=z^{\log_q(qt/z)}z^{\theta_0\sigma_3}P(z,t)z^{\theta_\infty\sigma_3},
\end{equation*}
as the latter is single-valued in $z$. Indeed, $C(z,t)$ is a $2\times 2$ matrix function which satisfies the following analytic properties.
\begin{enumerate}
    \item It is a single-valued analytic function in $z\in\mathbb{C}^*$.
    \item It satisfies the $q$-difference equation
    \begin{equation}\label{eq:connectionqdif}
        C(qz,t)=\frac{t}{z^2}q^{\theta_0\sigma_3}C(z,t)q^{\theta_\infty\sigma_3}.
    \end{equation}
    \item Its determinant is given by
    \begin{equation}\label{eq:connectiondet}
    |C(z,t)|=c\, \theta_q\left(q^{-\theta_t}\frac{z}{t},q^{+\theta_t}\frac{z}{t},q^{-\theta_1}z,q^{+\theta_1}z\right),
    \end{equation}
    for some $c\in\mathbb{C}^*$.
\end{enumerate}
We accordingly make the following definition.
\begin{definition}\label{def:connection_matrix_space}
We denote by $\mathfrak{C}(\Theta,t)$, for any fixed $t\in\mathbb{C}^*$, the set of all $2\times 2$ matrix functions satisfying properties (1)-(3) above.
\end{definition}
Next we consider deforming the linear system \eqref{eq:linear_problem}, as $t\rightarrow qt$, in such a way that the connection matrix is left invariant. 
As $P(z,t)$ is only rigidly defined up to left-multiplication by invertible diagonal matrices (due to the freedom in choosing $H(t)$), we thus impose
$P(z,qt)=D(t)P(z,t)$ or, equivalently,
\begin{equation}\label{eq:connection_deformation}
    C(z,qt)=z\,D(t)C(z,t),
\end{equation}
for some diagonal matrix $D(t)$.

A principal result in Jimbo and Sakai \cite{jimbonagoyasakai}, is that \eqref{eq:connection_deformation} holds true if and only if $(f,g)$ satisfies $q\Psix$ and $w$ satisfies the auxiliary equation
\begin{equation}\label{eq:auxiliary}
    \frac{\overline{w}}{w}=\frac{q^{1-\theta_\infty}\overline{g}-1}{q^{\theta_\infty}\overline{g}-1}.
\end{equation}
Furthermore, in that case, $H(t)$ may be chosen such that $D(t)=I$.

It follows that we may associate, to a solution $(f,g)$ of $q\Psix(\Theta,t_0)$, the matrix $C_0(z):=C(z,t_0)$, defined via the linear system, up to a natural equivalence. One half of this natural equivalence comes from the fact that $C_0(z)$ is only rigidly defined up to left-multiplication by arbitrary diagonal matrices. The other half comes from the gauge freedom 
\begin{equation*}
    A(z,t)\mapsto DA(z,t)D^{-1},
\end{equation*}
for an arbitrary invertible diagonal matrix $D$,  which rescales $w$, but leaves $(f,g)$ invariant, and has the effect of scaling $C_0(z)$ by $C_0(z)\mapsto C_0(z)D^{-1}$.

For these reasons, the linear system \eqref{eq:linear_problem} yields a mapping
\begin{equation}\label{eq:monodromy_mapping}
    (f,g)\mapsto [C_0(z)],
\end{equation}
where $[C_0(z)]$ denotes the equivalence class of $C_0(z)$ in $\mathfrak{C}(\Theta,t_0)$ with respect to arbitrary left and right-multiplication by invertible diagonal matrices.

\begin{definition}\label{def:monodromy_manifold}
We denote by $\mathcal{M}(\Theta,t_0)$, the space of matrix functions $\mathfrak{C}(\Theta,t_0)$, defined in Definition \ref{def:connection_matrix_space}, quotiented by arbitrary left and right-multiplication by invertible diagonal matrices. We refer to $\mathcal{M}(\Theta,t_0)$ as the monodromy manifold of $q\Psix(\Theta,t_0)$.

Correspondingly, we call the mapping \eqref{eq:monodromy_mapping}, which associates with any solution $(f,g)$ of $q\Psix(\Theta,t_0)$, a corresponding point on the monodromy manifold, the monodromy mapping.
\end{definition}
The space $\mathcal{M}(\Theta,t_0)$ was first introduced by Ohyama et al. \cite{ohyamaramissualoy}. They showed that it naturally comes with an algebraic structure and studied lines on it. They furthermore showed how any element in the monodromy manifold admits a Mano decomposition, which will be a fundamental tool in this paper.

We further emphasise that, contrary to the differential setting, the monodromy manifold depends on the initial point $t_0$. This dependence is non-trivial, since the geometry of the monodromy manifold changes drastically when $t_0$ takes value in some of the $q$-lines in equation \eqref{eq:param_assumptions_2}, see
\cite[Theorem 2.17]{roffelsenjoshiqpvi}.

\subsection{Summary of results in a previous work}\label{subsec:previous}
In this section, we recall some results in \cite{roffelsenjoshiqpvi} that we require in what follows.
\begin{proposition}[Corollary 2.13 in \cite{roffelsenjoshiqpvi}]\label{prop:bijective}
The monodromy mapping, given in Definition \ref{def:monodromy_manifold}, is bijective.
\end{proposition}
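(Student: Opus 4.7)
The plan is to prove bijectivity by constructing an inverse to the monodromy mapping via a Riemann-Hilbert factorisation problem, and then verifying that this inverse is well-defined on equivalence classes in $\mathcal{M}(\Theta,t_0)$. Injectivity and surjectivity will both flow from the same factorisation, so I would set up the machinery once and then read off both directions.

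For injectivity, suppose two solutions $(f,g)$ and $(\tilde f,\tilde g)$ map to the same class $[C_0]$. Pick representatives of $C_0$ that coincide. Then at $t=t_0$ the matrix $C(z,t_0)=\Psi_0(z,t_0)^{-1}\Psi_\infty(z,t_0)$ coincides for both linear problems, so the Birkhoff factorisation of $C_0$ into a factor analytic (and normalised) at $z=\infty$ times the inverse of a factor analytic at $z=0$ is unique up to a common diagonal constant, which is precisely the residual gauge freedom $A(z,t)\mapsto DA(z,t)D^{-1}$ quotiented out in the definition of $[C_0]$. From $\Psi_0,\Psi_\infty$ one recovers $A(z,t_0)$ via $A(z,t_0)=\Psi_\infty(qz,t_0)(qz/z)^{\log_q(z/q)\sigma\text{-shift}}\Psi_\infty(z,t_0)^{-1}$ (equivalently by the $q$-shift of $Y_\infty$), and the coordinates \eqref{eq:coordinates_linear} then read off $f(t_0)$ and $g(t_0)$ unambiguously. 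Iterating the $q$-shift along the $q$-spiral reconstructs $(f,g)$ everywhere on $q^{\mathbb Z}t_0$.

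For surjectivity, start from an arbitrary $[C_0]\in\mathcal{M}(\Theta,t_0)$ and pick a representative $C_0(z)\in\mathfrak{C}(\Theta,t_0)$. Extend $C_0$ to a $t$-dependent family $C(z,t)$ on $t\in q^{\mathbb Z}t_0$ by enforcing \eqref{eq:connection_deformation} with $D(t)=I$, which is consistent because \eqref{eq:connectionqdif} and \eqref{eq:connectiondet} are preserved under this rule. At each $t=q^nt_0$ I would solve the Birkhoff factorisation problem for $C(\cdot,t)$, producing $\Psi_0(z,t)$ and $\Psi_\infty(z,t)$ of the correct form \eqref{eq:linear_sys_solutions}, hence matrices $A(z,t)$ satisfying the structure of \eqref{eq:linear_problem}. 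Define $(f(t),g(t),w(t))$ by \eqref{eq:coordinates_linear}. By the main result of Jimbo-Sakai recalled after Definition~\ref{def:connection_matrix_space}, the compatibility of \eqref{eq:linear_problem} with $C(z,qt)=z\,C(z,t)$ forces $(f,g)$ to satisfy $q\Psix$ (and $w$ to satisfy the auxiliary equation). By construction the monodromy mapping of this solution returns $[C_0]$.

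The main obstacle is showing that the Birkhoff factorisation actually exists at \emph{every} point of the $q$-spiral: a priori the factorisation could fail on a discrete set of times where a theta divisor is hit. This is precisely what the parameter hypotheses \eqref{eq:param_assumptions_1}--\eqref{eq:param_assumptions_2} are designed to exclude. Concretely, one shows that under these assumptions the zeros of $|C(z,t)|$ stay in generic position relative to $q^{\mathbb Z}$, that no resonance among the characteristic exponents $\pm\theta_0,\pm\theta_t,\pm\theta_1,\pm\theta_\infty$ occurs, and hence the associated system is irreducible and the normalised factors $\Psi_0,\Psi_\infty$ depend $q$-meromorphically on $t$ without accidental poles on $q^{\mathbb Z}t_0$. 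Granted this non-degeneracy, the construction of the inverse map is uniform in $t$, and combining it with the injectivity argument above completes the proof.
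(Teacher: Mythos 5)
Your overall architecture --- invert the monodromy map by a Birkhoff/Riemann--Hilbert factorisation of the connection matrix, recover $A(z,t)$ from the $q$-shift of the normalised factors, and invoke the Jimbo--Sakai compatibility to conclude that the resulting $(f,g)$ solves $q\Psix$ --- is the route taken in the cited reference and recalled in Section \ref{sec:rhp} of this paper (which itself only cites Corollary 2.13 and does not reprove it). Your injectivity argument via uniqueness of the normalised factorisation, and your matching of the left/right diagonal equivalences on $[C_0(z)]$ with the residual gauge freedoms of the linear system, are both sound.

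The genuine gap is in your resolution of what you correctly identify as the main obstacle. The hypotheses \eqref{eq:param_assumptions_1}--\eqref{eq:param_assumptions_2} do \emph{not} ensure that the factorisation exists at every point of the $q$-spiral; they only give non-resonance and irreducibility. As recalled after RHP \ref{rhp:main}, the set $\mathfrak{M}$ of integers $m$ for which the factorisation of $z^mC_0(z)$ is solvable can genuinely omit isolated values, and these are exactly the times where $(f(t_m),g(t_m))=(\infty,q^{-\theta_\infty})$: a $2\times 2$ factorisation with total index zero can still fail because of non-trivial partial indices, and no condition on $\Theta$ or $t_0$ prevents a given solution from passing through that point. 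What the actual proof requires, and what your proposal omits, is (i) a solvability argument showing that $\mathfrak{M}\neq\emptyset$ and that $m\notin\mathfrak{M}$ forces $m\pm1\in\mathfrak{M}$, and (ii) a prescription for defining the solution at the exceptional times (via the $q\Psix$ recursion from neighbouring times, equivalently by identifying $\mathcal{I}(\Theta,t_0)$ with the compactified initial value space). Without these, your surjectivity construction does not yield a solution on all of $q^{\mathbb{Z}}t_0$, and your injectivity argument breaks down whenever $t_0$ itself is an exceptional time, so that the monodromy map must first be evaluated at a shifted base point.
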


We proceed to introduce some coordinates on the monodromy manifold $\mathcal{M}(\Theta,t_0)$. For any $2\times2$ matrix $R$ of rank $1$, let $R_1$ and $R_2$ be respectively its first and second column, so $R=(R_1,R_2)$, then we define $\pi(R)\in\mathbb{CP}^1$ by
\begin{equation*}
R_1=\pi(R)R_2,
\end{equation*}
with $\pi(R)=0$ if and only if $R_1=(0,0)^T$ and $\pi(R)=\infty$ if and only if $R_2=(0,0)^T$.

 Take some monodromy $[C(z)]\in \mathcal{M}(\Theta,t_0)$ and denote
 \begin{equation}\label{eq:intro_xnotation}
    (x_1,x_2,x_3,x_4)=\left(q^{+\theta_t}t_0,q^{-\theta_t}t_0,q^{+\theta_1},q^{-\theta_1}\right).
\end{equation}
Let $1\leq k\leq 4$, then $|C(z)|$ has a simple zero at $z=x_k$ and thus $C(x_k)$ has rank $1$. We define the coordinates
\begin{equation*}
\rho_k=\pi(C(x_k)),\quad (1\leq k\leq 4).
\end{equation*}

Note that $\rho=(\rho_1,\rho_2,\rho_3,\rho_4)$ is invariant under left multiplication of $C(z)$ by diagonal matrices, but right-multiplication translates to scalar multiplication of $\rho$. This means that only the equivalence class $[\rho]\in(\mathbb{CP}^1)^4/\mathbb{C}^*$ is rigidly defined by the monodromy $[C(z)]$. The $\rho_k$'s $1\leq k\leq 4$, are a special case of Tyurin parameters and we will often refer to them by this name.

In \cite{roffelsenjoshiqpvi}, it was shown that $\rho$ satisfies the following quadratic equation (in affine form),
\begin{equation}\label{eq:quadratic}
0=T(\rho):=T_{12}\rho_1\rho_2+T_{13}\rho_1\rho_3+T_{14}\rho_1\rho_4+T_{23}\rho_2\rho_3+T_{24}\rho_2\rho_4+T_{34}\rho_3\rho_4,
\end{equation}
with coefficients given by
\begin{align*}
T_{12}&=+\theta_q\left(q^{\theta_0+\theta_\infty}t_0,q^{-\theta_0+\theta_\infty}t_0\right)\vartheta_\tau(2\theta_t,2\theta_1),\\
T_{34}&=+\theta_q\left(q^{\theta_0-\theta_\infty}t_0,q^{-\theta_0-\theta_\infty}t_0\right)\vartheta_\tau(2\theta_t,2\theta_1)q^{2\theta_\infty},\\
T_{13}&=-\theta_q\left(q^{\theta_t-\theta_1}t_0,q^{-\theta_t+\theta_1}t_0\right)\vartheta_\tau\left(\theta_0+\theta_t+\theta_1+\theta_\infty,-\theta_0+\theta_t+\theta_1+\theta_\infty\right),\\
T_{24}&=-\theta_q\left(q^{\theta_t-\theta_1}t_0,q^{-\theta_t+\theta_1}t_0\right)\vartheta_\tau\left(\theta_0-\theta_t-\theta_1+\theta_\infty,-\theta_0-\theta_t-\theta_1+\theta_\infty\right)q^{2\theta_t+2\theta_1},\\
T_{23}&=+\theta_q\left(q^{\theta_t+\theta_1}t_0,q^{-\theta_t-\theta_1}t_0\right)\vartheta_\tau\left(\theta_0-\theta_t+\theta_1+\theta_\infty,-\theta_0-\theta_t+\theta_1+\theta_\infty\right)q^{2\theta_t},\\
T_{14}&=+\theta_q\left(q^{\theta_t+\theta_1}t_0,q^{-\theta_t-\theta_1}t_0\right)\vartheta_\tau\left(\theta_0+\theta_t-\theta_1+\theta_\infty,-\theta_0+\theta_t-\theta_1+\theta_\infty\right)q^{2\theta_1}.
\end{align*}
To be precise, using homogeneous coordinates $\rho_k=[\rho_k^x: \rho_k^y]\in \mathbb P^1$, $1\le k\le 4$, this equation should be read as
 \begin{equation}\label{eq:Thom}
      0=T_{12}\rho_1^x\rho_2^x\rho_3^y\rho_4^y+T_{13}\rho_1^x\rho_2^y\rho_3^x\rho_4^y+\ldots+T_{34}\rho_1^y\rho_2^y\rho_3^x\rho_4^x.
  \end{equation}
Note that none of the coefficients of $T(\rho)$ vanish, precisely due to parameter assumptions \eqref{eq:param_assumptions_1} and \eqref{eq:param_assumptions_2}.

Denote by
    \begin{equation*}
T'(\rho)=T_{12}'\rho_1\rho_2+T_{13}'\rho_1\rho_3+T_{14}'\rho_1\rho_4+T_{23}'\rho_2\rho_3+T_{24}'\rho_2\rho_4+T_{34}'\rho_3\rho_4,
\end{equation*}
the polynomial $T(\rho)$ with $\theta_0$ set equal to $0$. Then we further recall from \cite{roffelsenjoshiqpvi}, that
\begin{equation}\label{eq:Thom0}
      0\neq T_{12}'\rho_1^x\rho_2^x\rho_3^y\rho_4^y+T_{13}'\rho_1^x\rho_2^y\rho_3^x\rho_4^y+\ldots+T_{34}'\rho_1^y\rho_2^y\rho_3^x\rho_4^x,
  \end{equation}
and equations \eqref{eq:Thom} and \eqref{eq:Thom0} completely describe the space of values of the $\rho$-coordinates on the monodromy manifold, modulo scalar multiplication.

We proceed to define a set of affine coordinates on the monodromy manifold, using inequality \eqref{eq:Thom0}. Take $1\leq i<j\leq 4$ and define the coordinate
\begin{equation}\label{eq:eta_defi}
    \eta_{ij}:=\frac{\vartheta_\tau(+\frac{1}{2},-\frac{1}{2})}{\vartheta_\tau(+\theta_0,-\theta_0)}\frac{T_{ij}\rho_i\rho_j}{T'(\rho)}.
\end{equation}
So, for example, $\eta_{12}$ is given by
\begin{equation*}
\eta_{12}=\frac{\vartheta_\tau(+\frac{1}{2},-\frac{1}{2})}{\vartheta_\tau(+\theta_0,-\theta_0)}\frac{T_{12}\rho_1^x\rho_2^x\rho_3^y\rho_4^y}{T_{12}'\rho_1^x\rho_2^x\rho_3^y\rho_4^y+T_{13}'\rho_1^x\rho_2^y\rho_3^x\rho_4^y+\ldots+T_{34}'\rho_1^y\rho_2^y\rho_3^x\rho_4^x},
\end{equation*}
in homogeneous coordinates. Note that $\eta_{ij}$ is invariant under scalar multiplication $\rho\mapsto c \rho$, $c\in\mathbb{C}^*$, and is furthermore necessarily finite due to inequality \eqref{eq:Thom0}.

Therefore the $\eta$-coordinates form a set of affine coordinates which take value in $\mathbb{C}^6$. It furthermore follows, by construction, that they satisfy the following four equations,
\begin{subequations}\label{eq:eta_equations}
\begin{align}
&\eta_{12}+\eta_{13}+\eta_{14}+\eta_{23}+\eta_{24}+\eta_{34}=0,\label{eq:eta_equationsa}\\
&a_{12}\eta_{12}+a_{13}\eta_{13}+a_{14}\eta_{14}+a_{23}\eta_{23}+a_{24}\eta_{24}+a_{34}\eta_{34}+a_\infty=0,\label{eq:eta_equationsb}\\
    &\eta_{13}\eta_{24}-b_1\eta_{12}\eta_{34}=0,\label{eq:eta_equationsc}\\ 
    &\eta_{14}\eta_{23}-b_2\eta_{12}\eta_{34}=0,\label{eq:eta_equationsd}
\end{align}
\end{subequations}
where
\begin{align*}
    &a_{12}=\prod_{\epsilon=\pm 1}\frac{\theta_q(q^{+\theta_\infty}t_0)}{\theta_q(q^{\epsilon \hspace{0.5mm}\theta_0+\theta_\infty}t_0)}, &
    &a_{34}=\prod_{\epsilon=\pm 1}    \frac{\theta_q(q^{-\theta_\infty}t_0)}{\theta_q(q^{\epsilon \hspace{0.5mm}\theta_0-\theta_\infty}t_0)},\\
    &a_{13}=\prod_{\epsilon=\pm 1} \frac{\vartheta_\tau(\theta_t+\theta_1+\theta_\infty)}{\vartheta_\tau(\epsilon \,\theta_0+\theta_t+\theta_1+\theta_\infty)}, &
    &a_{24}=\prod_{\epsilon=\pm 1}\frac{\vartheta_\tau(-\theta_t-\theta_1+\theta_\infty)}{\vartheta_\tau(\epsilon \,\theta_0-\theta_t-\theta_1+\theta_\infty)},\\
    &a_{23}=\prod_{\epsilon=\pm 1}\frac{\vartheta_\tau(-\theta_t+\theta_1+\theta_\infty)}{\vartheta_\tau(\epsilon \,\theta_0-\theta_t+\theta_1+\theta_\infty)}, &
    &a_{14}=\prod_{\epsilon=\pm 1}\frac{\vartheta_\tau(\theta_t-\theta_1+\theta_\infty)}{\vartheta_\tau(\epsilon \,\theta_0+\theta_t-\theta_1+\theta_\infty)},
\end{align*}
and
\begin{equation*}
    a_\infty=-\frac{\vartheta_\tau(+\frac{1}{2},-\frac{1}{2})}{\vartheta_\tau(+\theta_0,-\theta_0)},\quad b_1=\frac{T_{13}T_{24}}{T_{12}T_{34}},\quad
    b_2=\frac{T_{14}T_{23}}{T_{12}T_{34}}.
\end{equation*}

\begin{definition}\label{def:affine_variety}
We denote by $\mathcal{F}(\Theta,t_0)$ the affine algebraic surface in
\begin{equation*}
    \{(\eta_{12},\eta_{13},\eta_{14},\eta_{23},\eta_{24},\eta_{34})\in\mathbb{C}^6\}
\end{equation*}
defined by equations \eqref{eq:eta_equations}. We correspondingly denote by 
\begin{equation*}
 \Phi:\mathcal{M}(\Theta,t_0)\rightarrow \mathcal{F}(\Theta,t_0), [C(z)]\rightarrow \eta,
\end{equation*}
the mapping defined through the $\eta$-coordinates \eqref{eq:eta_defi}.
\end{definition}
Then we have the following theorem.
\begin{theorem}[Theorem 2.20 in \cite{roffelsenjoshiqpvi}]\label{thm:main_affine}
The mapping $\Phi$, given in Definition \ref{def:affine_variety}, is a bijective correspondence between the
monodromy manifold $\mathcal{M}(\Theta,t_0)$ and the affine algebraic surface $\mathcal{F}(\Theta,t_0)$.
\end{theorem}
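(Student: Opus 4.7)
The plan is to use the results recalled in Subsection~\ref{subsec:previous} of \cite{roffelsenjoshiqpvi}, according to which $[C(z)]\mapsto [\rho]$ is a bijection from $\mathcal{M}(\Theta,t_0)$ onto the space
\[
\mathcal{R}(\Theta,t_0):=\{[\rho]\in(\mathbb{CP}^1)^4/\mathbb{C}^*\,:\,T(\rho)=0\text{ and }T'(\rho)\neq 0\},
\]
whose constraints are precisely \eqref{eq:Thom} and \eqref{eq:Thom0}. Since the $\eta$-coordinates depend only on the Tyurin data, the theorem reduces to proving that the induced map $\psi:[\rho]\mapsto\eta$ is a bijection from $\mathcal{R}(\Theta,t_0)$ onto $\mathcal{F}(\Theta,t_0)$.

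First I would verify well-definedness and that the image of $\psi$ lies in $\mathcal{F}$. Each $\eta_{ij}$, viewed as a rational expression in the eight homogeneous coordinates of $\rho$, has numerator and denominator of matched total degree and is separately homogeneous of degree one in the coordinates of each $\rho_k$; hence it descends to a function on $(\mathbb{CP}^1)^4/\mathbb{C}^*$, and \eqref{eq:Thom0} makes it finite. The linear relation \eqref{eq:eta_equationsa} follows by recognising $\sum_{i<j}T_{ij}\rho_i\rho_j=T(\rho)=0$. The quadratic relations \eqref{eq:eta_equationsc} and \eqref{eq:eta_equationsd} are Pl\"ucker-type identities satisfied automatically by the degree-four monomials in the homogeneous coordinates of the four $\rho_k$'s, with $b_1,b_2$ exactly the ratios of $T_{ij}$'s required to make them hold. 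The affine relation \eqref{eq:eta_equationsb} is more subtle: after clearing denominators it amounts to an identity $\sum_{i<j}a_{ij}T_{ij}\rho_i\rho_j\equiv \lambda T'(\rho)\pmod{T(\rho)}$ for a specific constant $\lambda$, which is a genuine algebraic relation among the defining coefficients and requires a $q$-theta calculation.

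For injectivity, suppose $\psi([\rho])=\psi([\rho'])$. Because every $T_{ij}$ is nonzero under the parameter assumptions, all six ratios $\rho_i\rho_j:\rho_k\rho_l$ agree with the corresponding ratios for $\rho'$. A short computation shows that combinations such as $\eta_{12}\eta_{13}/(\eta_{14}\eta_{23})$ reduce, after cancellation, to the quotient of $\rho_1^x/\rho_1^y$ by $\rho_4^x/\rho_4^y$ times an explicit constant, and the three cyclic analogues of this formula determine every pairwise ratio of the $\rho_j^x/\rho_j^y$, hence uniquely pin down $[\rho]\in(\mathbb{CP}^1)^4/\mathbb{C}^*$. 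For surjectivity, given $\eta\in\mathcal{F}(\Theta,t_0)$ I would run the same formulae in reverse to build candidate classes $[\rho_1],\ldots,[\rho_4]\in\mathbb{CP}^1$, the consistency of the reconstruction being guaranteed by the Pl\"ucker relations \eqref{eq:eta_equationsc} and \eqref{eq:eta_equationsd}; equation \eqref{eq:eta_equationsa} translates back into $T(\rho)=0$; and $T'(\rho)\neq 0$ follows because $T'(\rho)=0$ would force each $\rho_i\rho_j$ to vanish (by finiteness of the $\eta_{ij}$'s) and thereby contradict \eqref{eq:eta_equationsb} since $a_\infty\neq 0$.

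The technical heart of the argument is the verification of \eqref{eq:eta_equationsb}. Unlike the other three defining relations, this is not a formal consequence of $T(\rho)=0$ and the Pl\"ucker identities; it is a nontrivial identity among products of $\theta_q$ and $\vartheta_\tau$ evaluated at closely related arguments. I expect to prove it by exhibiting explicit constants $\alpha,\beta\in\mathbb{C}$ for which $\sum_{i<j}a_{ij}T_{ij}\rho_i\rho_j=\alpha T(\rho)+\beta T'(\rho)$ holds as an identity of quadratic forms in the $\rho_i$'s, and then checking the resulting six coefficient identities using the quasi-periodicities \eqref{eq:qtheta_identities} together with the relation $\vartheta_\tau(x)=\theta_q(q^x)$.
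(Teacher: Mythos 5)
This statement is quoted verbatim from \cite{roffelsenjoshiqpvi} (Theorem 2.20 there) and the present paper contains no proof of it, so there is nothing to compare your argument against line by line; I can only judge it against the surrounding machinery the paper does supply. Your overall strategy --- factor $\Phi$ through the Tyurin data, verify that $\eta(\rho)$ satisfies the four defining equations of $\mathcal{F}$, and invert via the ratios $\rho_{ij}$ --- is exactly the route suggested by equations \eqref{eq:rho12}--\eqref{eq:rhogen} and Lemma \ref{lemma:global_mero}, and is almost certainly the intended one. Two of your verifications are fine but you have misjudged their difficulty in opposite directions. Equation \eqref{eq:eta_equationsb} is not ``the technical heart'': by inspection $a_{ij}T_{ij}=T_{ij}'$ for every pair (e.g.\ $a_{12}T_{12}=\theta_q(q^{\theta_\infty}t_0)^2\vartheta_\tau(2\theta_t,2\theta_1)=T_{12}'$), so $\sum_{i<j}a_{ij}T_{ij}\rho_i\rho_j=T'(\rho)$ holds \emph{identically}, not merely modulo $T(\rho)$, and \eqref{eq:eta_equationsb} follows in one line from $a_\infty=-\vartheta_\tau(\tfrac12,-\tfrac12)/\vartheta_\tau(\theta_0,-\theta_0)$; no $q$-theta identity is needed.

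Conversely, the step you treat as routine --- recovering $[\rho]$ from $\eta$ --- is where the real work lies, and as written it has a gap on the degenerate strata. Your ratio $\eta_{12}\eta_{13}/(\eta_{14}\eta_{23})$ (or the simpler $\rho_{12}=T_{23}\eta_{13}/(T_{13}\eta_{23})$ from \eqref{eq:rho12}) is an indeterminate $0/0$ precisely when the relevant $\eta_{ij}$ vanish, i.e.\ when some $\rho_k\in\{0,\infty\}$; these are exactly the points on the sixteen lines, which lie in $\mathcal{F}$ and cannot be excluded. One must show that for each ratio at least one of its two rational expressions \eqref{eq:rhogen} is determinate --- this is the content of the vanishing analysis in the proof of Lemma \ref{lemma:global_mero} ($N_{13}=N_{23}=N_{14}=N_{24}=0$ forces $N=0$) --- and, further, that knowing the ratios $\rho_{ij}\in\mathbb{CP}^1$ together with the vanishing pattern of the monomials pins down the class $[\rho]\in(\mathbb{CP}^1)^4/\mathbb{C}^*$, including which coordinates equal $0$ or $\infty$; the same case analysis is needed to make the surjectivity reconstruction rigorous when some $\mu_{ij}:=\eta_{ij}/T_{ij}$ vanish. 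Finally, note that your reduction assumes injectivity of $[C(z)]\mapsto[\rho]$, which the present paper's summary does not literally assert (it only describes the image of the $\rho$-coordinates); since that too is imported from \cite{roffelsenjoshiqpvi}, you should cite it explicitly rather than fold it into ``the results recalled''.
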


Denote the solution space of $q\Psix(\Theta,t_0)$ by
\begin{equation*}
    \mathcal{I}(\Theta,t_0)=\{\text{solutions of $q\Psix(\Theta,t_0)$}\}.
\end{equation*}
Composition of the monodromy mapping, defined in Definition \ref{def:monodromy_manifold}, and the mapping $\Phi$ in Definition \ref{def:affine_variety}, gives a bijective mapping
\begin{equation}\label{eq:rhmapping}
    \textrm{RH}:\mathcal{I}(\Theta,t_0)\rightarrow \mathcal{F}(\Theta,t_0).
\end{equation}
We will consider the $\eta\in \mathcal{F}(\Theta,t_0)$ as a set of global coordinates on the solution space of $q\Psix(\Theta,t_0)$, under this bijection. We note that $\mathcal{I}(\Theta,t_0)$ can naturallly be identified with the initial value space at $t=t_0$ of $q\Psix$, so that it becomes a rational complex surface \cite{s:01}. It can then be shown that $\textrm{RH}$ is a (non-algebraic) biholomorphic mapping. We formulate this as the following proposition.
\begin{proposition}\label{prop:rhdiffeo}
Upon identifying $\mathcal{I}(\Theta,t_0)$ with the initial value space of $q\Psix(\Theta)$ at $t=t_0$, the mapping $\textrm{RH}$, defined in equation \eqref{eq:rhmapping}, is a biholomorphism.
\end{proposition}
The proof of this proposition is technical and can be found in Appendix \ref{appendix:rhdif}.

\subsection{Geometry of an affine Segre surface}\label{subsec:geometry}
In this section, we discuss some classical algebro-geometric aspects of the surface $\mathcal{F}=\mathcal{F}(\Theta,t_0)$.

Firstly, we make the important observation that $\mathcal{F}$ is an affine Segre surface. An easy way to see this, is by using the linear equations \eqref{eq:eta_equationsa} and \eqref{eq:eta_equationsb} to eliminate either the pair $\{\eta_{13},\eta_{23}\}$ or the pair $\{\eta_{14},\eta_{24}\}$. We may always do so, as at least one of the relevant determinants,
\begin{align*}
    &\begin{vmatrix}
    1 & 1\\
    a_{13} & a_{23}
    \end{vmatrix}=-q^{-\theta_t+\theta_1+\theta_\infty}\frac{\vartheta_\tau(\theta_0)\vartheta_\tau(-\theta_0)\vartheta_\tau(2\theta_\infty+2\theta_1,2\theta_t)}{\prod_{\epsilon=\pm 1}\vartheta_\tau(\epsilon\,\theta_0+\theta_t+\theta_1+\theta_\infty,\epsilon\,\theta_0-\theta_t+\theta_1+\theta_\infty)},\\
    &\begin{vmatrix}
    1 & 1\\
    a_{14} & a_{24}
    \end{vmatrix}=-q^{-\theta_t-\theta_1+\theta_\infty}\frac{\vartheta_\tau(\theta_0)\vartheta_\tau(-\theta_0)\vartheta_\tau(2\theta_\infty-2\theta_1,2\theta_t)}{\prod_{\epsilon=\pm 1}\vartheta_\tau(\epsilon\,\theta_0+\theta_t-\theta_1+\theta_\infty,\epsilon\,\theta_0-\theta_t-\theta_1+\theta_\infty)},
\end{align*}
must be nonzero, due to conditions \eqref{eq:param_assumptions_1}. The remaining two equations \eqref{eq:eta_equationsc} and \eqref{eq:eta_equationsd} are then both quadratic in the remaining variables.

Upon introducing projective coordinates
\begin{equation*}
[\eta_{12}:\eta_{13}:\eta_{14}:\eta_{23}:\eta_{24}:\eta_{34}:1]=[N_{12}:N_{13}:N_{14}:N_{23}:N_{24}:N_{34}:N_\infty],
\end{equation*}
equations \eqref{eq:eta_equations} become
\begin{subequations}\label{segre:projective}
\begin{align}
&N_{12}+N_{13}+N_{14}+N_{23}+N_{24}+N_{34}=0,\label{eq:eta_equationsaproj}\\    &a_{12}N_{12}+a_{13}N_{13}+a_{14}N_{14}+a_{23}N_{23}+a_{24}N_{24}+a_{34}N_{34}+a_\infty N_{\infty}=0,\label{eq:eta_equationsbproj}\\
    &N_{13}N_{24}-b_1N_{12}N_{34}=0,\label{eq:eta_equationscproj}\\ 
    &N_{14}N_{23}-b_2N_{12}N_{34}=0.\label{eq:eta_equationsdproj}
\end{align}
\end{subequations}
We denote the corresponding projective completion of $\mathcal{F}$ in $\mathbb{CP}^6$  by $\widehat{\mathcal{F}}$, which is thus a classical Segre (quartic) surface. 
For background information on Segre surfaces, we recommend Dolgachev \cite{dolgachev}*{\S 8.6} and Kunyavski{\i} et al. \cite{kunyavski}. We have the following proposition for $\widehat{\mathcal{F}}$, which is proven in Section \ref{sec:smoothness}.
\begin{proposition}\label{lemma:smooth}
The Segre surface $\widehat{\mathcal{F}}$ is smooth.
\end{proposition}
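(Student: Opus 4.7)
I will apply the Jacobian criterion to the four defining equations \eqref{eq:eta_equationsaproj}--\eqref{eq:eta_equationsdproj} of $\widehat{\mathcal{F}}$ in $\mathbb{CP}^6$. Since the surface is a complete intersection of expected dimension $2$, $\widehat{\mathcal{F}}$ is smooth at a point $P$ precisely when the $4\times 7$ Jacobian matrix $J(P)$ attains rank $4$ there.

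The first step is to write the four row-gradients of $J(P)$ explicitly. Rows one and two are the constant vectors $v_1=(1,1,1,1,1,1,0)$ and $v_2=(a_{12},a_{13},a_{14},a_{23},a_{24},a_{34},a_\infty)$; these are linearly independent because $a_\infty=-\vartheta_\tau(\tfrac{1}{2},-\tfrac{1}{2})/\vartheta_\tau(\theta_0,-\theta_0)\neq 0$ under \eqref{eq:param_assumptions_1}. The remaining two rows are the gradients of the quadrics,
\begin{align*}
\nabla F_3 &= (-b_1 N_{34},\,N_{24},\,0,\,0,\,N_{13},\,-b_1 N_{12},\,0),\\
\nabla F_4 &= (-b_2 N_{34},\,0,\,N_{23},\,N_{14},\,0,\,-b_2 N_{12},\,0),
\end{align*}
both of which have zero entry in the $N_\infty$-column.

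Next, I would exploit the structure of the $N_\infty$-column to reduce the rank-deficiency search: a nontrivial dependency $c_1 v_1+c_2 v_2+c_3\nabla F_3+c_4\nabla F_4=0$ immediately forces $c_2=0$, so singular points of $\widehat{\mathcal{F}}$ are precisely those at which there exist $(c_3,c_4)\neq(0,0)$ and $\alpha\in\mathbb{C}$ with $c_3\nabla F_3+c_4\nabla F_4=\alpha v_1$. Reading off this identity coordinate-by-coordinate gives the scalar constraints
\[
c_3 N_{13}=c_3 N_{24}=c_4 N_{14}=c_4 N_{23}=\alpha,\qquad -(b_1c_3+b_2c_4)N_{12}=-(b_1c_3+b_2c_4)N_{34}=\alpha.
\]
Splitting according to which of $c_3,c_4,b_1c_3+b_2c_4$ vanish, and treating $\alpha=0$ and $\alpha\neq 0$ separately, yields a finite list of candidate singular loci, each characterised by a prescribed pattern of vanishing or pairwise equalities amongst the $N_{ij}$.

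The final step is to refute each candidate by substituting the resulting pattern back into the linear relations \eqref{eq:eta_equationsaproj}--\eqref{eq:eta_equationsbproj} and the quadratic relations \eqref{eq:eta_equationscproj}--\eqref{eq:eta_equationsdproj}. In every branch the reduced system is overdetermined on projective space, and its inconsistency reduces to the nonvanishing of explicit products of $\vartheta_\tau$ and $\theta_q$ whose arguments are precisely the exponent combinations barred by \eqref{eq:param_assumptions_1}--\eqref{eq:param_assumptions_2}. The main technical obstacle is the bookkeeping: matching each branch of the case analysis to the exact genericity condition it requires, and ensuring no configuration is missed. Once this dictionary is in place, no singular point survives and $\widehat{\mathcal{F}}$ is smooth.
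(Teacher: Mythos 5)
Your overall strategy is the same as the paper's: apply the Jacobian criterion, use $a_\infty\neq 0$ to remove the second row from any potential dependency, and reduce the question to a dependency between the two quadric gradients modulo the all-ones vector. Your bookkeeping of the resulting scalar constraints $c_3N_{13}=c_3N_{24}=c_4N_{14}=c_4N_{23}=-(b_1c_3+b_2c_4)N_{12}=-(b_1c_3+b_2c_4)N_{34}=\alpha$ is correct and equivalent to the paper's row reduction, and the degenerate branches ($\alpha=0$, or one of $c_3,c_4$ vanishing) do collapse exactly as you say: the vanishing pattern forces all seven homogeneous coordinates to zero via \eqref{segre:projective}, using only that $a_\infty\neq0$ and $b_1,b_2\neq0$.

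The genuine gap is in the one branch that survives this easy analysis, and your description of how it dies is not accurate. If $\alpha\neq 0$ then necessarily $c_3,c_4,b_1c_3+b_2c_4$ are all nonzero, so $N_{13}=N_{24}$, $N_{14}=N_{23}$, $N_{12}=N_{34}$ are all nonzero and determined up to scale by $\mu=c_4/c_3$; nothing is forced to vanish, and $N_\infty$ is freely fixed by \eqref{eq:eta_equationsbproj}. Substituting into \eqref{eq:eta_equationsaproj}, \eqref{eq:eta_equationscproj}, \eqref{eq:eta_equationsdproj} leaves three quadratic equations in the single unknown $\mu$, and the proof hinges on showing these have no common root. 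Their pairwise resultants share the common factor $(b_1-b_2)^2-2(b_1+b_2)+1$, so the whole argument reduces to the nonvanishing of this one quantity. This is not "an explicit product of $\vartheta_\tau$ and $\theta_q$ whose arguments are barred by \eqref{eq:param_assumptions_1}--\eqref{eq:param_assumptions_2}" in any visible way: $b_1$ and $b_2$ are ratios of four theta products each, and the identity that saves the day is
\begin{equation*}
(b_1-b_2)^2-2(b_1+b_2)+1=\frac{|H|}{T_{12}^2T_{34}^2},
\end{equation*}
where $H$ is the Hessian matrix \eqref{eq:hessian} of the quadric $T(\rho)$, together with the explicit factorisation of $|H|$ as a perfect square of theta products (quoted from \cite{roffelsenjoshiqpvi}). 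Only after that factorisation do the parameter assumptions apply. Your proposal asserts the conclusion of this step without identifying the resultant, the Hessian interpretation, or the theta identity, and this is precisely the nontrivial content of the proof rather than a matter of bookkeeping.
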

The hyperplane section at infinity,
\begin{equation}\label{eq:segre_infinity}
    X:=\widehat{\mathcal{F}}\setminus\mathcal{F},
\end{equation}
is isomorphic to the intersection of two quadrics in $\mathbb{CP}^3$, and in this regard we mention an interesting paper by Honda \cite{hondaminitwistor}, where special hyperplane sections of (smooth and singular) Segre surfaces are discussed in relation to minitwistor spaces.

It is classical result, essentially due to Segre himself \cite{Segre}, that smooth Segre surfaces have $16$ lines. To describe these lines on $\widehat{\mathcal{F}}$, we use the Tyurin parameters $\rho$, as well as the dual Tyurin parameters,
\begin{equation*}
    \widetilde{\rho}_k=\pi(C(x_k)^T) \qquad (1\leq k\leq 4).
\end{equation*}
The lines are then indirectly given by
\begin{align*}
    &\mathcal{L}_k^0:\;\; \rho_k=0,& &\mathcal{L}_k^\infty:\;\; \rho_k=\infty, \\
    &\widetilde{\mathcal{L}}_k^0:\;\; \widetilde{\rho}_k=0,& &\widetilde{\mathcal{L}}_k^\infty:\;\; \widetilde{\rho}_k=\infty, 
\end{align*}
in conjunction with equation \eqref{eq:eta_defi}, where $1\leq k\leq 4$.

To make this description more concrete, we introduce the following ratios of Tyurin parameters,
\begin{equation}\label{eq:def_rhoij}
    \rho_{ij}:=\frac{\rho_i}{\rho_j}\in\mathbb{CP}^1,\quad \widetilde{\rho}_{ij}:=\frac{\widetilde{\rho}_i}{\widetilde{\rho}_j}\in\mathbb{CP}^1,\qquad (i\neq j, 1\leq i,j\leq 4).
\end{equation}
These ratios were first introduced by Ohyama et al. \cite{ohyamaramissualoy}, to study the monodromy manifold and construct Mano decompositions, amongst other things.

We call these ratios (dual) Tyurin ratios from here on. Each of them is a rational function on $\widehat{\mathcal{F}}$. For example, $\rho_{12}$ can be written explicitly as
\begin{equation}\label{eq:rho12}
    \rho_{12}=\frac{T_{23}\,\eta_{13}}{T_{13}\,\eta_{23}}=\frac{T_{24}\,\eta_{14}}{T_{14}\,\eta_{24}}.
\end{equation}
In general, for any choice of indices $\{i,j,k,l\}=\{1,2,3,4\}$, 
\begin{equation}\label{eq:rhogen}
    \rho_{ij}=\frac{T_{jk}\,\eta_{ik}}{T_{ik}\,\eta_{jk}}=\frac{T_{jl}\,\eta_{il}}{T_{il}\,\eta_{jl}},
\end{equation}
with the convention that $T_{ij}=T_{ji}$ and $\eta_{ij}=\eta_{ji}$ for $1\leq i,j\leq 4$.

Explicit formulas for the dual Tyurin ratios are given in the following lemma, which is proven in Section \ref{sec:tyurinquotients}.
\begin{lemma}\label{lem:duality}
The Tyurin ratios in equation \eqref{eq:def_rhoij} are related via the transformations
\begin{align}
   \widetilde{\rho}_{ij}&=\frac{x_j}{x_i}M_{(i,j,k,l)}\left( \rho_{ik};\theta_0,\theta_\infty\right)M_{(j,l,k,i)}\left(\rho_{jk},\theta_0,\theta_\infty\right),\label{eq:rhoduality1}\\
    \rho_{ij}&=\frac{x_j}{x_i}M_{(i,j,k,l)}\left( \widetilde{\rho}_{ik};\theta_\infty,\theta_0\right)M_{(j,l,k,i)}\left(\widetilde{\rho}_{jk},\theta_\infty,\theta_0\right),\label{eq:rhoduality2}
\end{align}
for any $\{i,j,k,l\}=\{1,2,3,4\}$, where the $x_i$ are defined as in equation \eqref{eq:intro_xnotation} and $M_{(i,j,k,l)}$ denotes the M\"obius transform
\begin{equation*}
M_{(i,j,k,l)}(Z;\theta_0,\theta_\infty)=\frac{\theta_q\left(\frac{x_i x_l}{t_0}\frac{q^{\theta_\infty}}{q^{\theta_0}} \right)\theta_q\left(  \frac{x_k x_l}{t_0} \frac{q^{-\theta_\infty}}{q^{\theta_0}}\right)Z-\theta_q\left(\frac{x_k x_l}{t_0} \frac{q^{\theta_\infty}}{q^{\theta_0}}\right)\theta_q\left(  \frac{x_i x_l}{t_0} \frac{q^{-\theta_\infty}}{q^{\theta_0}}\right)}{\theta_q\left(\frac{x_i x_j}{t_0} \frac{q^{\theta_\infty}}{q^{\theta_0}}\right)\theta_q\left(  \frac{x_j x_k}{t_0} \frac{q^{-\theta_\infty}}{q^{\theta_0}}\right)Z-\theta_q\left(\frac{x_j x_k}{t_0} \frac{q^{\theta_\infty}}{q^{\theta_0}}\right)\theta_q\left(  \frac{x_i x_j}{t_0} \frac{q^{-\theta_\infty}}{q^{\theta_0}}\right)}.
\end{equation*}
\end{lemma}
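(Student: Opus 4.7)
The proof will proceed by direct computation in an explicit theta-function representation of the connection matrix $C(z,t_0)\in\mathfrak{C}(\Theta,t_0)$. By property (2) of Definition \ref{def:connection_matrix_space}, each entry $C_{ab}(z)$ satisfies $C_{ab}(qz)=(t_0/z^2)q^{\varepsilon_a\theta_0+\varepsilon_b\theta_\infty}C_{ab}(z)$ for appropriate signs $\varepsilon_a,\varepsilon_b\in\{\pm1\}$. A standard Riemann--Roch count on the elliptic curve $\mathbb{C}^*/q^{\mathbb{Z}}$ places each entry in a two-dimensional space. Given two distinguished points $x_k$ and $x_l$, a natural basis for this space consists of the two products $\theta_q(z/x_k)\theta_q(zx_k q^{-\varepsilon_a\theta_0-\varepsilon_b\theta_\infty}/t_0)$ and $\theta_q(z/x_l)\theta_q(zx_l q^{-\varepsilon_a\theta_0-\varepsilon_b\theta_\infty}/t_0)$, the first vanishing at $z=x_k$ and the second at $z=x_l$, as is readily checked against (\ref{eq:qtheta_identities}).

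Using the rank-one factorisation $C(x_m)=v_m w_m^T$ at each zero $x_m$ of $|C|$, one extracts the entry-wise formulae
\begin{equation*}
\rho_m=\frac{C_{11}(x_m)}{C_{12}(x_m)}=\frac{C_{21}(x_m)}{C_{22}(x_m)},\qquad
\widetilde{\rho}_m=\frac{C_{11}(x_m)}{C_{21}(x_m)}=\frac{C_{12}(x_m)}{C_{22}(x_m)}.
\end{equation*}
Thus $\rho_{ik}$ and $\rho_{jk}$ involve only first-row evaluations of $C$, whereas $\widetilde{\rho}_{ij}$ involves only first-column evaluations. Expanding each of the four entries in the basis above, the two scalar values $C_{ab}(x_k)$ and $C_{ab}(x_l)$ determine the two expansion coefficients; the values $C_{ab}(x_i)$ and $C_{ab}(x_j)$ are then explicit linear combinations of $C_{ab}(x_k)$ and $C_{ab}(x_l)$ with theta-function weights of the form $\theta_q(x_a/x_b)\theta_q(x_ax_b q^{\pm\theta_0\pm\theta_\infty}/t_0)$ divided by analogous normalisations.

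Forming the ratios $\rho_{ik}=[C_{11}(x_i)C_{12}(x_k)]/[C_{12}(x_i)C_{11}(x_k)]$ and $\rho_{jk}$ in the same way, together with $\widetilde{\rho}_{ij}=[C_{11}(x_i)C_{21}(x_j)]/[C_{21}(x_i)C_{11}(x_j)]$, the dependence on the internal scalar values $C_{ab}(x_k), C_{ab}(x_l)$ cancels after a simplification, producing a formula in which $\widetilde{\rho}_{ij}$ is expressed as a product of two M\"obius transformations, one in $\rho_{ik}$ and one in $\rho_{jk}$. The prefactor $x_j/x_i$ emerges because the interpolation normalisations for first-row and first-column bases differ by the factors $\theta_q(z/x_k)\theta_q(zx_k q^{-\varepsilon_a\theta_0-\varepsilon_b\theta_\infty}/t_0)$ with the opposite sign of $\varepsilon_b$ (resp.\ $\varepsilon_a$), and the ratio of these normalisations at $z=x_i$ versus $z=x_j$ contributes precisely $x_j/x_i$ after applying (\ref{eq:qtheta_identities}).

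The main obstacle is not conceptual but combinatorial: verifying that the $\theta_q$-arguments in the M\"obius coefficients produced by the above calculation line up precisely with those prescribed in $M_{(i,j,k,l)}$, namely $x_a x_b q^{\pm\theta_\infty}/(t_0 q^{\theta_0})$ for the four combinations of indices $(a,b)$ among $\{i,j,k,l\}$. This matching is carried out by repeated application of the quasi-periodicity (\ref{eq:qtheta_identities}) together with the explicit values $x_1=q^{\theta_t}t_0,\,x_2=q^{-\theta_t}t_0,\,x_3=q^{\theta_1},\,x_4=q^{-\theta_1}$ from (\ref{eq:intro_xnotation}), and the symmetry of the argument under $\theta_0\leftrightarrow\theta_\infty$ (equivalent to transposing $C$) immediately yields the dual identity (\ref{eq:rhoduality2}) from (\ref{eq:rhoduality1}).
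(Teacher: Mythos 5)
Your starting point coincides with the paper's: each entry $C_{ab}(z)$ lies in a two-dimensional space of theta products, and the rank-one factorisation at each $x_m$ gives the entry-wise formulae for $\rho_m$ and $\widetilde{\rho}_m$. The gap is in the elimination step. After interpolating each entry from its values at $x_k$ and $x_l$, the ratios $\rho_{ik}$ and $\rho_{jk}$ are built from the first-row data $C_{11},C_{12}$, while $\widetilde{\rho}_{ij}$ is built from the first-column data $C_{11},C_{21}$, and the only link you establish between the $C_{12}$-interpolation and the $C_{21}$-interpolation is the rank-one condition at $x_k$ and $x_l$. Normalising $C_{11}(x_k)=1$ and writing $u=C_{11}(x_l)$ (a scaling-invariant quantity), that data expresses everything through the three internal parameters $u$, $\rho_{kl}$, $\widetilde{\rho}_{kl}$: one finds that $\rho_{ik}$ and $\rho_{jk}$ depend only on $(u,\rho_{kl})$, whereas $\widetilde{\rho}_{ij}$ depends on $(u,\widetilde{\rho}_{kl})$. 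Two known quantities cannot eliminate three internal unknowns, so the advertised cancellation does not occur: the expression for $\widetilde{\rho}_{ij}$ retains a genuine dependence on $\widetilde{\rho}_{kl}$, which none of the inputs you have used ties to $\rho_{ik}$ and $\rho_{jk}$.

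What is missing is the rank-one (equivalently, determinant-vanishing) condition at the remaining two points $x_i$ and $x_j$. An entry-wise interpolation from rank-one data at $x_k,x_l$ yields a matrix whose determinant vanishes at $x_k,x_l$ but generically not at $x_i,x_j$ (only the product of the remaining two zeros is constrained), so imposing $\operatorname{rank}C(x_i)=\operatorname{rank}C(x_j)=1$ is a genuine extra constraint, and it is precisely what couples the row data to the column data. The paper performs this elimination by expanding all four entries in a basis adapted to $x_i,x_j$, recording the full rank-one conditions (both $\rho$ and $\widetilde{\rho}$) at $x_i$ and $x_j$ together with the $\rho$-condition at $x_k$ as an $8\times 8$ homogeneous linear system for the eight basis coefficients, and reading off the identity from the vanishing of its determinant, which factors into the two M\"obius transforms. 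Some such global elimination, or an equivalent resultant computation, must be supplied before the theta-function bookkeeping in your final paragraph can even begin.
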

\begin{remark}\label{rem:duality}
Fix some labeling $\{i,j,k,l\}=\{1,2,3,4\}$, then the above lemma gives two formulas to compute the dual Tyurin ratio $\widetilde{\rho}_{ij}$, namely, formula \eqref{eq:rhoduality1}, and the same formula with $k$ and $l$ interchanged,
\begin{equation*}
    \widetilde{\rho}_{ij}=\frac{x_j}{x_i}M_{(i,j,l,k)}\left( \rho_{il};\theta_0,\theta_\infty\right)M_{(j,k,l,i)}\left(\rho_{jl},\theta_0,\theta_\infty\right).\label{eq:rhoduality1int}
\end{equation*}
It is possible, for any of the two formulas, that the two M\"obius transforms on the right-hand side evaluate to $0$ and $\infty$, so that their product is ill-defined. There are only two values of the Tyurin parameters $\rho$, up to scalar equivalence, for which this happens in both formulas. One is characterised by
\begin{subequations}\label{eq;mobiussingular}
\begin{align}
    M_{(i,j,k,l)}\left( \rho_{ik};\theta_0,\theta_\infty\right)&=0, & M_{(j,l,k,i)}\left(\rho_{jk},\theta_0,\theta_\infty\right)&=\infty,\\
    M_{(i,j,l,k)}\left( \rho_{il};\theta_0,\theta_\infty\right)&=0, & M_{(j,k,l,i)}\left(\rho_{jl},\theta_0,\theta_\infty\right)&=\infty,
\end{align}
\end{subequations}
and the correct corresponding value of $\widetilde{\rho}_{ij}$ is given by
\begin{equation}\label{eq:singularmobiusdualrho}
    \widetilde{\rho}_{ij}=\frac{\theta_q\left(\frac{x_j}{x_k},\frac{x_l}{x_i},\frac{x_ix_l}{t_0q^{+\theta_0+\theta_\infty}},\frac{x_ix_l}{t_0q^{+\theta_0-\theta_\infty}}\right)}{\theta_q\left(\frac{x_i}{x_k},\frac{x_l}{x_j},\frac{x_jx_l}{t_0q^{+\theta_0+\theta_\infty}},\frac{x_jx_l}{t_0q^{+\theta_0-\theta_\infty}}\right)}.
\end{equation}
The other is characterised by \eqref{eq;mobiussingular}, after swapping $0\leftrightarrow \infty$ on the right-hand side of each equality, and the corresponding value for $\widetilde{\rho}_{ij}$ is then given by equation \eqref{eq:singularmobiusdualrho} after swapping $k\leftrightarrow l$.
\end{remark}

We have the following important fact for the Tyurin ratios, proven in Section \ref{sec:tyurinquotients}.
\begin{lemma}\label{lemma:global_mero}
Each of the Tyurin ratios defined in equation \eqref{eq:def_rhoij} extends to a globally meromorpic function on $\widehat{\mathcal{F}}$.
\end{lemma}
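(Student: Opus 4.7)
The plan is to exhibit each Tyurin ratio explicitly as a rational function on the projective variety $\widehat{\mathcal{F}}\subset\mathbb{CP}^6$ and then use that $\widehat{\mathcal{F}}$ is a smooth projective surface by Proposition~\ref{lemma:smooth}: on such a surface, every rational function is automatically locally a quotient of holomorphic functions, hence a global meromorphic function in the classical sense.

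For the ordinary ratios $\rho_{ij}$, formula \eqref{eq:rhogen} already does the job. Indeed, in terms of the projective coordinates $(N_{12}:\ldots:N_{34}:N_\infty)$ used in equations~\eqref{segre:projective}, the identities $\eta_{ab}=N_{ab}/N_\infty$ convert \eqref{eq:rhogen} into
\[
\rho_{ij}=\frac{T_{jk}\,N_{ik}}{T_{ik}\,N_{jk}},
\]
a ratio of two linear forms in $\mathbb{CP}^6$, which restricts to a well-defined rational function on $\widehat{\mathcal{F}}$. The second expression in \eqref{eq:rhogen}, obtained by swapping $k\leftrightarrow l$, furnishes a second rational representative that agrees with the first generically, so the two descriptions patch into one element of the function field of $\widehat{\mathcal{F}}$.

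For the dual ratios $\widetilde{\rho}_{ij}$, I would apply Lemma~\ref{lem:duality}, which writes $\widetilde{\rho}_{ij}$ as the product of two M\"obius transformations in $\rho_{ik}$ and $\rho_{jk}$, with coefficients that are nonzero theta-function constants in the parameters. Substituting the rational expressions for $\rho_{ik}$ and $\rho_{jk}$ from the previous paragraph produces a rational function on $\widehat{\mathcal{F}}$ outside the locus where the two M\"obius factors yield a $0\cdot\infty$ ambiguity. Remark~\ref{rem:duality} supplies a second rational expression (by swapping $k\leftrightarrow l$), matching the first generically; at the two isolated points where both formulas degenerate simultaneously, the correct value is pinned down explicitly by \eqref{eq:singularmobiusdualrho}. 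Together, these pieces define one rational function on $\widehat{\mathcal{F}}$, and smoothness of $\widehat{\mathcal{F}}$ then gives global meromorphy.

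The principal obstacle lies in the dual case: one must verify that the different formulas supplied by Lemma~\ref{lem:duality} and Remark~\ref{rem:duality} truly assemble into a single element of the function field of $\widehat{\mathcal{F}}$, i.e.\ that they agree wherever two of them are simultaneously defined and finite. This compatibility reduces to a theta-function identity essentially built into the derivation of those formulas; once it is in hand, the smoothness of $\widehat{\mathcal{F}}$ completes the argument with no further work.
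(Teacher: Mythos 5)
Your strategy uses the same ingredients as the paper's proof: the rational expressions \eqref{eq:rhogen} in the $N$-coordinates for the $\rho_{ij}$, and Lemma \ref{lem:duality} together with Remark \ref{rem:duality} for the $\widetilde{\rho}_{ij}$. For the claim as literally worded your argument does suffice: the displayed expressions are visibly rational on the irreducible projective surface $\widehat{\mathcal{F}}$, and a rational function on such a surface is a global meromorphic function (possibly with points of indeterminacy). In that sense the proposal is correct.

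The substantive difference is that the paper's proof establishes, and the rest of the paper relies on, the strictly stronger statement that each Tyurin ratio is an \emph{analytic map} $\widehat{\mathcal{F}}\to\mathbb{CP}^1$ with no indeterminacy points; this is what legitimises the pointwise definitions \eqref{eq:linesdefi}--\eqref{eq:linesdualdefi} of the sixteen lines and the evaluations in Section \ref{sec:lines}. Your argument does not reach this, and the places where it stops short are precisely where the paper does its real work. For $\rho_{ij}$ one must check that the two representatives in \eqref{eq:rhogen} never degenerate to $0/0$ simultaneously; the paper shows that $N_{ik}=N_{jk}=N_{il}=N_{jl}=0$ forces $N=0$ via equations \eqref{segre:projective}, so no indeterminacy point exists. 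For $\widetilde{\rho}_{ij}$ the issue is sharper: knowing from Remark \ref{rem:duality} that a ``correct value'' such as \eqref{eq:singularmobiusdualrho} exists at the two points where both products of M\"obius transforms take the form $0\cdot\infty$ does not by itself show that the rational function is regular there rather than indeterminate. The paper handles this in two steps: first, two of the four possible $0\cdot\infty$ sign combinations are excluded by an explicit theta-function computation contradicting the identity \eqref{eq:rhotrivial}; second, at the two surviving combinations, $\widetilde{\rho}_{ij}$ is re-expressed via the quadratic relation \eqref{eq:tyurinquadratics} as the ratio \eqref{eq:rho24alt}, whose denominator is manifestly nonzero there because the auxiliary ratios $\widetilde{\rho}_{jk}$ and $\widetilde{\rho}_{lj}$ are locally analytic and vanish at those points. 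If you only want meromorphy your proof is fine; to recover what the paper actually proves (and uses) you would need to add these two verifications.
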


The indirect descriptions of the lines above can now be made precise as follows. For $\diamond\in\{0,\infty\}$ and $1\leq j\leq 4$,
\begin{align}\label{eq:linesdefi}
    \mathcal{L}_j^\diamond&=\{\eta\in\widehat{\mathcal{F}}:\rho_{jk}=\diamond\text{ for }1\leq k\leq 4, k\neq j\},\\
    \widetilde{\mathcal{L}}_j^\diamond&=\{\eta\in\widehat{\mathcal{F}}:\widetilde{\rho}_{jk}=\diamond\text{ for }1\leq k\leq 4, k\neq j\}.\label{eq:linesdualdefi}
\end{align}
In Section \ref{sec:lines}, we prove that these sets are indeed lines and give explicit, linear descriptions for them.

It is a classical fact, due to Segre \cite{Segre}, that any line on a smooth Segre surface, intersects with precisely $5$ other lines, yielding a total of $40$ intersection points.
As to the surface $\widehat{F}$, the intersections of different lines are specified in the following theorem, proven in Section \ref{sec:intersectionpoints}.
\begin{theorem}\label{thm:intersection}
None of the $16$ lines on $\widehat{\mathcal{F}}(\Theta,t_0)$ lie entirely in the hyperplane section at infinity, defined in equation \eqref{eq:segre_infinity}.
The lines intersect as specified in the Clebsch graph in Figure \ref{fig:lines_intersection}. In particular, each line intersects with exactly five other lines, yielding a total of 40 intersection points. Furthermore, under the additional parameter conditions, 
\begin{equation*}
    2\theta_0\pm 2\theta_t,2\theta_0\pm 2\theta_1,2\theta_\infty\pm 2\theta_t, 2\theta_\infty\pm 2\theta_1\notin \Lambda_\tau,
\end{equation*}
and $t_0\notin q^{\mathbb{Z}+U}$, where
\begin{align*}
U=&\{\epsilon_0\theta_0+\epsilon_\infty\theta_\infty+2\epsilon_t\theta_t:\epsilon_{0,t,\infty}\in\{\pm 1\}\}\cup\{\epsilon_t\theta_t+\epsilon_1\theta_1+2\epsilon_\infty\theta_\infty:\epsilon_{t,1,\infty}\in\{\pm 1\}\}\cup\\
&\{\epsilon_0\theta_0+\epsilon_\infty\theta_\infty+2\epsilon_1\theta_1:\epsilon_{0,1,\infty}\in\{\pm 1\}\}\cup\{\epsilon_t\theta_t+\epsilon_1\theta_1+2\epsilon_0\theta_0:\epsilon_{0,t,1}\in\{\pm 1\}\},
\end{align*}
all $40$ intersection points are finite, i.e. they lie in $\mathcal{F}(\Theta,t_0)$.
\end{theorem}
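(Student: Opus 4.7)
The plan is to reduce everything to explicit linear algebra in the projective coordinates $(N_{12}:\cdots:N_{34}:N_\infty)$ of $\mathbb{CP}^6$, once explicit linear equations for each of the 16 lines are obtained (these are produced in Section \ref{sec:lines}, which the theorem sits in the context of). First, I would verify the first assertion by inspecting, for each of the 16 linear descriptions, that $N_\infty$ does not vanish identically on the line; equivalently, that the associated linear system admits a solution with $N_\infty \neq 0$. Since each line is cut out of $\widehat{\mathcal{F}}$ by three independent hyperplanes (owing to it being a $\mathbb{CP}^1$ in $\mathbb{CP}^6$), this is a finite matrix-rank check that can be read off from the coefficient tables in Section \ref{sec:lines}.

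Next, I would systematically analyze all $\binom{16}{2}=120$ pairwise intersections, organized via two natural symmetries: the duality $[C(z)] \mapsto [C(z)^T]$, which exchanges the families $\{\mathcal{L}_k^\diamond\}$ and $\{\widetilde{\mathcal{L}}_k^\diamond\}$ and whose action on $\eta$-coordinates is governed by Lemma \ref{lem:duality}, together with the involution $0 \leftrightarrow \infty$ on the Tyurin decoration. These collapse the case analysis to a handful of representatives. A typical calculation runs as follows. For $\mathcal{L}_i^0 \cap \mathcal{L}_j^0$ with $i \neq j$, the conditions $\rho_i = \rho_j = 0$ force $\eta_{ab}$ to vanish whenever $\{a,b\} \cap \{i,j\} \neq \emptyset$, leaving only the single coordinate $\eta_{kl}$ with $\{k,l\} = \{1,2,3,4\}\setminus\{i,j\}$; equation \eqref{eq:eta_equationsa} then forces that coordinate to vanish as well, which violates \eqref{eq:eta_equationsb}, so the intersection is empty. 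For $\mathcal{L}_i^0 \cap \mathcal{L}_j^\infty$ with $i \neq j$, one parametrises points of $\mathcal{L}_j^\infty$ by $\rho_k = \epsilon \alpha_k$ for $k \neq j$ in the limit $\epsilon \to 0$ subject to $T(\rho)=0$, and the extra condition $\rho_i = 0$ picks out a single point whose $\eta$-coordinates are explicitly determined by solving two scalar equations. Cases involving dual lines reduce to $\rho$-cases via Lemma \ref{lem:duality}. Tallying produces exactly the incidence graph of Figure \ref{fig:lines_intersection} and certifies that each line meets precisely five others, giving 40 points.

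The finiteness statement in the last part then follows by inspecting the explicit formulae for the 40 intersection points obtained in the previous step. By construction each intersection point is a rational expression in $q^{\pm \theta_0}, q^{\pm \theta_t}, q^{\pm \theta_1}, q^{\pm \theta_\infty}$ and $t_0$, and lies in $\mathcal{F}$ precisely when its projective representative has $N_\infty \neq 0$, which amounts to non-vanishing of explicit theta-type denominators. Reading off the zero loci of those denominators produces exactly the parameter exclusion set $U$ and the additional $2\theta_0 \pm 2\theta_t$, $2\theta_0 \pm 2\theta_1$, $2\theta_\infty \pm 2\theta_t$, $2\theta_\infty \pm 2\theta_1 \notin \Lambda_\tau$ conditions in the theorem statement. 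The main obstacle will be keeping the Stage 2 case analysis organised and verifiable without being swamped by algebraic bookkeeping: the crucial leverage is the $\mathbb{Z}/2 \times \mathbb{Z}/2$ symmetry generated by the transpose duality and the $0 \leftrightarrow \infty$ involution, which collapses symmetric families of cases into single computations and simultaneously ensures the 40 intersection-point formulae come out in a uniform shape suitable for the finiteness analysis.
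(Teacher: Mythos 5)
Your overall strategy is the same as the paper's: read off each line as an explicit intersection of hyperplanes in $\mathbb{CP}^6$, check $N_\infty\not\equiv 0$ on each, run the pairwise intersection analysis modulo the transpose/duality and $0\leftrightarrow\infty$ symmetries, and read finiteness off the explicit intersection points. The easy cases you describe are handled exactly as in the paper: for $\mathcal{L}_i^0\cap\mathcal{L}_j^0$ the vanishing of five of the six $N_{ab}$ together with \eqref{eq:eta_equationsaproj} and \eqref{eq:eta_equationsbproj} forces $N=0$; for $\mathcal{L}_i^0\cap\mathcal{L}_j^\infty$ with $i\neq j$ the intersection point is written down directly; and the mixed cases $\mathcal{L}_i^\diamond\cap\widetilde{\mathcal{L}}_j^{\diamond'}$ with $i\neq j$ are empty by a one-line substitution.

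The genuine gap is in your sentence ``Cases involving dual lines reduce to $\rho$-cases via Lemma \ref{lem:duality}.'' The eight intersections $\mathcal{L}_i^\diamond\cap\widetilde{\mathcal{L}}_i^{\diamond'}$ do \emph{not} reduce to a routine rank check this way: the defining conditions of $\widetilde{\mathcal{L}}_i^{\diamond'}$ are M\"obius-transform conditions on the $\rho$'s, and showing that the combined linear system of five hyperplanes (three from $\mathcal{L}_i^\diamond$, four from $\widetilde{\mathcal{L}}_i^{\diamond'}$, with the common hyperplane \eqref{eq:eta_equationsbproj}) has a nonzero solution amounts to the vanishing of a $3\times 3$ determinant whose entries are $q$-theta functions. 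In the paper this is the identity $\Delta(x_1)\equiv 0$, proved by observing that $\Delta$ is a second-degree theta function in $x_1$ with three inequivalent zeros $x_2,x_3,x_4$ modulo $q^{\mathbb{Z}}$, hence identically zero. Your plan supplies no mechanism for this step; nor does the symmetry group help, since all eight such intersections lie in a single orbit and the identity must be proved for at least one representative. One can sidestep \emph{existence} by invoking smoothness (Proposition \ref{lemma:smooth}) and the classical fact that each line on a smooth Segre surface meets exactly five others — but you present the five-line count as an output of your tally rather than an input, and in any case the finiteness criterion for these eight points still requires evaluating $N_\infty$ at the intersection, which in the paper needs a further non-obvious simplification of $a_{23}T_{23}P_{24}^{(1)}+a_{24}T_{24}P_{23}^{(1)}+a_{34}T_{34}$ into a product of theta functions with the single factor $\theta_q\bigl(x_1^2/(t_0q^{\theta_0+\theta_\infty})\bigr)$ carrying the vanishing locus. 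Without these two theta-function computations, the incidence graph and the exclusion set $U$ are not established.
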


	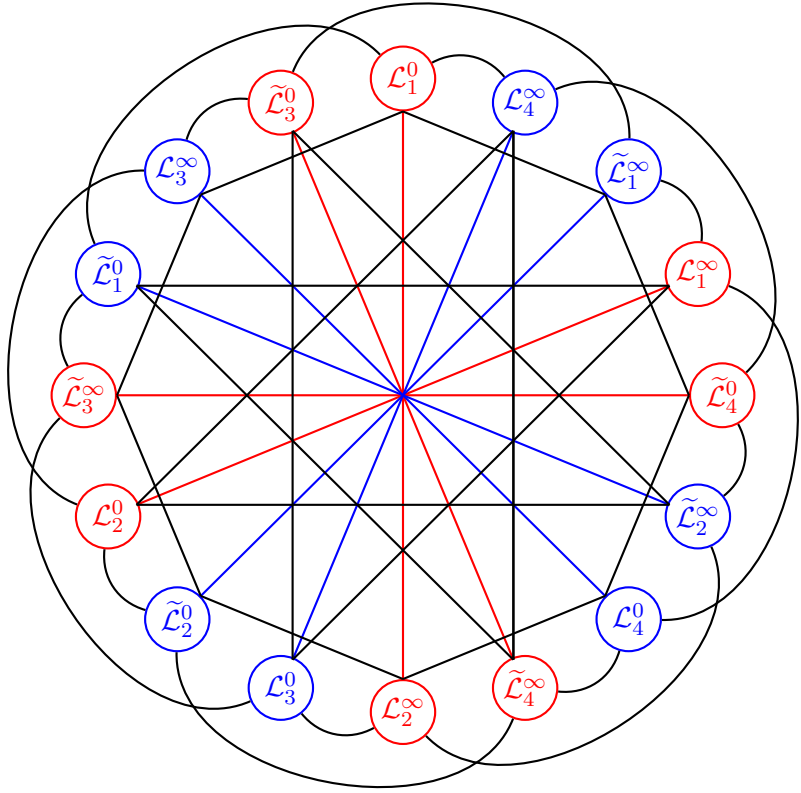
\begin{figure}[t]
		\centering
\begin{tikzpicture}
\tikzset{every state/.style={inner sep=-1pt,minimum size=2.2em}}
\def\ra{4.2}

\node[state,blue,thick] (Li4) at ( {\ra*cos(90-1*22.5)} , {\ra*sin(90-1*22.5)} ) {$\mathcal{L}_4^\infty$};
\node[state,blue,thick] (Lhi1) at ( {\ra*cos(90-2*22.5)} , {\ra*sin(90-2*22.5)} ) {$\widetilde{\mathcal{L}}_1^\infty$};

\node[state,red,thick] (Li1) at ( {\ra*cos(90-3*22.5)} , {\ra*sin(90-3*22.5)} ) {$\mathcal{L}_1^\infty$};
\node[state,red,thick] (Lh04) at ( {\ra*cos(90-4*22.5)} , {\ra*sin(90-4*22.5)} ) {$\widetilde{\mathcal{L}}_4^0$};

\node[state,blue,thick] (Lhi2) at ( {\ra*cos(90-5*22.5)} , {\ra*sin(90-5*22.5)} ) {$\widetilde{\mathcal{L}}_2^\infty$};
\node[state,blue,thick] (L04) at ( {\ra*cos(90-6*22.5)} , {\ra*sin(90-6*22.5)} ) {$\mathcal{L}_4^0$};

\node[state,red,thick] (Lhi4) at ( {\ra*cos(90-7*22.5)} , {\ra*sin(90-7*22.5)} ) {$\widetilde{\mathcal{L}}_4^\infty$};
\node[state,red,thick] (Li2) at ( {\ra*cos(90-8*22.5)} , {\ra*sin(90-8*22.5)} ) {$\mathcal{L}_2^\infty$};

\node[state,blue,thick] (L03) at ( {\ra*cos(90-9*22.5)} , {\ra*sin(90-9*22.5)} ) {$\mathcal{L}_3^0$};
\node[state,blue,thick] (Lh02) at ( {\ra*cos(90-10*22.5)} , {\ra*sin(90-10*22.5)} ) {$\widetilde{\mathcal{L}}_2^0$};

\node[state,red,thick] (L02) at ( {\ra*cos(90-11*22.5)} , {\ra*sin(90-11*22.5)} ) {$\mathcal{L}_2^0$};
\node[state,red,thick] (Lhi3) at ( {\ra*cos(90-12*22.5)} , {\ra*sin(90-12*22.5)} ) {$\widetilde{\mathcal{L}}_3^\infty$};

\node[state,blue,thick] (Lh01) at ( {\ra*cos(90-13*22.5)} , {\ra*sin(90-13*22.5)} ) {$\widetilde{\mathcal{L}}_1^0$};
\node[state,blue,thick] (Li3) at ( {\ra*cos(90-14*22.5)} , {\ra*sin(90-14*22.5)} ) {$\mathcal{L}_3^\infty$};

\node[state,red,thick] (Lh03) at ( {\ra*cos(90-15*22.5)} , {\ra*sin(90-15*22.5)} ) {$\widetilde{\mathcal{L}}_3^0$};
\node[state,red,thick] (L01) at ( {\ra*cos(90-0*22.5)} , {\ra*sin(90-0*22.5)} ) {$\mathcal{L}_1^0$};

\draw[red,thick] (L01) -- (Li2);
\draw[red,thick] (Lh03) -- (Lhi4);
\draw[red,thick] (Lhi3) -- (Lh04);
\draw[red,thick] (L02) -- (Li1);

\draw[blue,thick] (Li3) -- (L04);
\draw[blue,thick] (Lh01) -- (Lhi2);
\draw[blue,thick] (Li4) -- (L03);
\draw[blue,thick] (Lhi1) -- (Lh02);

\draw[black,thick] (Li3.south east) -- (L01.south);
\draw[black,thick] (L01.south) -- (Lhi1.south west);
\draw[black,thick] (Lhi1.south west) -- (Lh04.west);
\draw[black,thick] (Lh04.west) -- (L04.north west);
\draw[black,thick] (L04.north west) -- (Li2.north);
\draw[black,thick] (Li2.north) -- (Lh02.north east);
\draw[black,thick] (Lh02.north east) -- (Lhi3.east);
\draw[black,thick] (Lhi3.east) -- (Li3.south east);

\draw[black,thick] ($(Lh03.south)!0.5!(Lh03.south east)$) -- ($(Lhi2.west)!0.5!(Lhi2.north west)$);
\draw[black,thick] ($(Lhi2.west)!0.5!(Lhi2.north west)$) -- ($(L02.east)!0.5!(L02.north east)$);
\draw[black,thick] ($(L02.east)!0.5!(L02.north east)$) -- ($(Li4.south)!0.5!(Li4.south west)$) ;
\draw[black,thick] ($(Li4.south)!0.5!(Li4.south west)$) -- ($(Lhi4.north west)!0.5!(Lhi4.north)$);
\draw[black,thick] ($(Li4.south)!0.5!(Li4.south west)$) -- ($(Lhi4.north west)!0.5!(Lhi4.north)$);
\draw[black,thick] ($(Lhi4.north west)!0.5!(Lhi4.north)$) -- ($(Lh01.south east)!0.5!(Lh01.east)$);
\draw[black,thick] ($(Lh01.south east)!0.5!(Lh01.east)$) -- ($(Li1.south west)!0.5!(Li1.west)$);
\draw[black,thick] ($(Li1.south west)!0.5!(Li1.west)$) -- ($(L03.north east)!0.5!(L03.north)$);
\draw[black,thick] ($(L03.north east)!0.5!(L03.north)$) -- ($(Lh03.south)!0.5!(Lh03.south east)$);

\path[-]     (Lh03) edge   [bend left=80,thick]   (Lhi1);
\path[-]     (Lhi1) edge   [bend left=40,thick]   (Li1);
\path[-]     (Li1) edge   [bend left=80,thick]   (L04);
\path[-]     (L04) edge   [bend left=40,thick]   (Lhi4);
\path[-]     (Lhi4) edge   [bend left=80,thick]   (Lh02);
\path[-]     (Lh02) edge   [bend left=40,thick]   (L02);
\path[-]     (L02) edge   [bend left=80,thick]   (Li3);
\path[-]     (Li3) edge   [bend left=40,thick]   (Lh03);

\path[-]     (Lh01) edge   [bend left=80,thick]   (L01);
\path[-]     (L01) edge   [bend left=40,thick]   (Li4);
\path[-]     (Li4) edge   [bend left=80,thick]   (Lh04);
\path[-]     (Lh04) edge   [bend left=40,thick]   (Lhi2);
\path[-]     (Lhi2) edge   [bend left=80,thick]   (Li2);
\path[-]     (Li2) edge   [bend left=40,thick]   (L03);
\path[-]     (L03) edge   [bend left=80,thick]   (Lhi3);
\path[-]     (Lhi3) edge   [bend left=40,thick]   (Lh01);
\end{tikzpicture}
\caption{Clebsch graph encoding the configuration of lines and their points of intersection on the Segre surface $\widehat{\mathcal{F}}(\Theta,t_0)$. Each vertex represents a line and each edge represents an intersection point of the two lines corresponding to its endpoints. Blue (resp. red) vertices correspond to lines on which the asymptotics at $t=0$ (resp. $t=\infty$) are truncated, under suitable parameter assumptions. Blue (resp. red) edges correspond to intersection points at which the asymptotics at $t=0$ (resp. $t=\infty$) are doubly truncated. Black edges correspond to intersection points where the asymptotics at $t=0$ and $t=\infty$ are simultaneously truncated, under certain parameter conditions.}
\label{fig:lines_intersection}
\end{figure}

\begin{remark}
  As this work was being finalised, an interesting preprint was placed on the arXiv by Ramis and Sauloy \cite{ramis2023geometry}, regarding the geometry of the monodromy manifold, the affine Segre surface and related topics.  
  They independently show that the lines on the affine Segre surface are given by equations \eqref{eq:linesdefi}, see \cite{ramis2023geometry}*{Proposition 5.8}, after this result was communicated privately by the author.
\end{remark}

\subsection{Generic asymptotics near $\boldsymbol{t=0}$}\label{subsec:asympzero}
In this section we give the asymptotics at $t=0$ of generic solutions of $q\Psix(\Theta,t_0)$.
 Let $\mathcal{E}_0$ denote the elliptic function
\begin{equation*}
    \mathcal{E}_0(\sigma)=\frac{\vartheta_\tau(\sigma-\theta_1+\theta_\infty,\sigma+\theta_1-\theta_\infty)}{\vartheta_\tau(\sigma+\theta_1+\theta_\infty,\sigma-\theta_1-\theta_\infty)}.
\end{equation*}
This is an elliptic function of degree $2$ with periods $1$ and $\tau$, that plays an important geometric role in the Mano decomposition discussed in Section \ref{sec:dec_algebraic}, and was first derived by Ohyama et al. \cite{ohyamaramissualoy}*{\S 5.1.3} in that context. It has the reflection symmetry $\mathcal{E}_0(-\sigma)=\mathcal{E}_0(\sigma)$ and, therefore, the equation 
\begin{equation}\label{eq:elliptic_equation0}
    \mathcal{E}_0(\sigma)=\rho_{34},
\end{equation}
has two solutions (counting multiplicity) in the fundamental domain
\begin{equation*}
    \{\sigma\in\mathbb{C}:-\tfrac{1}{2}\leq \Re\sigma<\tfrac{1}{2}\text{ and } \tfrac{1}{2}i\tau \leq \Im\sigma <-\tfrac{1}{2}i\tau\}.
\end{equation*}
It then follows from the reflection symmetry that \eqref{eq:elliptic_equation0} has at least one solution, which we denote by $\sigma_{0t}$, in the domain
\begin{equation}\label{eq:domain0}
    \{\sigma\in\mathbb{C}:0\leq \Re\sigma\leq \tfrac{1}{2}\text{ and } \tfrac{1}{2}i\tau \leq \Im\sigma <-\tfrac{1}{2}i\tau\}.
\end{equation}
 If $0<\Re \sigma_{0t}<\tfrac{1}{2}$, then the solution is unique, otherwise a second solution is given by its complex conjugate $\overline{\sigma}_{0t}$, or $\overline{\sigma}_{0t}-\tau$, in the domain \eqref{eq:domain0}, which may coincide with $\sigma_{0t}$.

\begin{definition}\label{def:generic0}
We call a point $\eta$, in the affine Segre surface $\mathcal{F}(\Theta,t_0)$, $0$-generic if $\sigma_{0t}$ satisfies
\begin{equation*}
    0<\Re \sigma_{0t}<\tfrac{1}{2}, 
\end{equation*}
and
\begin{equation}\label{eq:line_conditions}
    \sigma_{0t}\not\equiv\pm (\theta_0+\theta_t),\pm (\theta_0-\theta_t),\pm (\theta_\infty+\theta_1),\pm (\theta_\infty-\theta_1) \mod{\Lambda_\tau}.
\end{equation}
\end{definition}
It follows from conditions \eqref{eq:line_conditions} that a $0$-generic point in $\mathcal{F}$ does not lie on any of the eight lines denoted by blue vertices in Figure \ref{fig:lines_intersection}.
In the following theorem, proven in Section \ref{sec:extract_asymp}, we give the asymptotics of generic solutions near $t=0$.
\begin{theorem}\label{thm:generic_asymp_zero}
Take any $0$-generic point $\eta$ in the affine Segre surface $\mathcal{F}(\Theta,t_0)$ and let $(f,g)$ denote the corresponding solution of $q\Psix(\Theta,t_0)$. Denote $t_m=q^mt_0$, $m\in\mathbb{Z}$, then $f$ and $g$ have complete asymptotic expansions as $t_m\rightarrow 0$, of the form,
\begin{align*}
    f(t_m)&=\sum_{n=1}^\infty\sum_{k=-n}^n F_{n,k}r_{0t}^k(- t_m)^{n+2k\sigma_{0t}},\\
    g(t_m)&=\sum_{n=1}^\infty\sum_{k=-n}^n G_{n,k}r_{0t}^k(- t_m)^{n+2k\sigma_{0t}},
\end{align*}
which are absolutely convergent for large enough $m\geq 0$, where
\begin{align*}
    F_{1,\pm 1}&=q^{-\theta_t}\frac{\bigl(q^{\theta_t+\theta_0\mp\sigma_{0t}}-1\bigr)\bigl(q^{\theta_t-\theta_0\mp\sigma_{0t}}-1\bigr)\bigl(q^{\theta_1+\theta_\infty\mp\sigma_{0t}}-1\bigr)}{\bigl(q^{\theta_1+\theta_\infty\pm \sigma_{0t}}-1\bigr)\bigl(q^{\sigma_{0t}}-q^{-\sigma_{0t}}\bigr)^2},\\
    F_{1,0}&=\frac{2\bigl(q^{\theta_t}+q^{-\theta_t}\bigr)-\bigl(q^{\theta_0}+q^{-\theta_0}\bigr)\bigl(q^{ \sigma_{0t}}+q^{- \sigma_{0t}}\bigr)}{\bigl(q^{ \sigma_{0t}}-q^{- \sigma_{0t}}\bigr)^2},\\
    G_{1,0}&=\frac{2\bigl(q^{ \theta_0}+q^{- \theta_0}\bigr)-\bigl(q^{ \theta_t}+q^{- \theta_t}\bigr)\bigl(q^{ \sigma_{0t}}+q^{- \sigma_{0t}}\bigr)}{\bigl(q^{ \sigma_{0t}}-q^{- \sigma_{0t}}\bigr)^2}q^{-1},\\
    G_{1,\pm 1}&=-q^{-1\mp\sigma_{0t}}F_{1,\pm 1},
\end{align*}
and the higher order coefficients may be computed recursively via the $q\Psix$ equation, where we note that each coefficient
\begin{equation}\label{eq:coefficients_zero}
    F_{n,k}=F_{n,k}(\Theta,\sigma_{0t}),\quad
    G_{n,k}=G_{n,k}(\Theta,\sigma_{0t})\qquad (-n\leq k \leq n, n\geq 1),
\end{equation}
only depends on the parameters $\Theta$ and $\sigma_{0t}$.
The branches of the complex powers are principal and the pair of integration constants $\{\sigma_{0t},r_{0t}\}$ is related to $\eta$, via the pair of Tyurin ratios $\{\rho_{24},\rho_{34}\}$, as follows.  The exponent $\sigma_{0t}$ is defined as the unique solution of equation \eqref{eq:elliptic_equation0} in the domain \eqref{eq:domain0}, and
\begin{equation*}
    r_{0t}=c_{0t}\times s_{0t},
\end{equation*}
with
\begin{align*}
c_{0t}&=\frac{\Gamma_q(1-2\sigma_{0t})^2}{\Gamma_q(1+2\sigma_{0t})^2}\prod_{\epsilon=\pm1}\frac{ \Gamma_q(1+\theta_t+\epsilon\,\theta_0+\sigma_{0t})  \Gamma_q(1+\theta_1+\epsilon\,\theta_\infty+\sigma_{0t})}{ \Gamma_q(1+\theta_t+\epsilon\,\theta_0-\sigma_{0t})  \Gamma_q(1+\theta_1+\epsilon\,\theta_\infty-\sigma_{0t})},\\
s_{0t}&=-(-t_0)^{-2\sigma_{0t}}M_{0t}(\rho_{24}),
\end{align*}
where $M_{0t}(\cdot)$ is the M\"obius transformation
\begin{equation*}
    M_{0t}(Z)=\frac{\vartheta_\tau(\theta_1-\theta_\infty+\sigma_{0t})\theta_q(q^{\theta_t+\theta_\infty+\sigma_{0t}}t_0^{-1})-Z
\vartheta_\tau(\theta_1+\theta_\infty+\sigma_{0t})\theta_q(q^{\theta_t-\theta_\infty+\sigma_{0t}}t_0^{-1})}{\vartheta_\tau(\theta_1-\theta_\infty-\sigma_{0t})\theta_q(q^{\theta_t+\theta_\infty-\sigma_{0t}}t_0^{-1})-Z
\vartheta_\tau(\theta_1+\theta_\infty-\sigma_{0t})\theta_q(q^{\theta_t-\theta_\infty-\sigma_{0t}}t_0^{-1})}.
\end{equation*}
\end{theorem}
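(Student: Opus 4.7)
The plan is to follow the factorised Riemann-Hilbert methodology sketched in the introduction, adapting to the $q$-difference setting the strategy developed for $\Psix$ in \cites{itslisovyy,gavlis2018,cafassowidom}. Starting from the bijection $\mathrm{RH}:\mathcal{I}(\Theta,t_0)\to\mathcal{F}(\Theta,t_0)$ of \eqref{eq:rhmapping}, I would realise the solution $(f,g)$ via its connection matrix $C(z,t_0)$ and apply the $q$-pair of pants decomposition of Ohyama-Ramis-Sauloy adapted to $t=0$. This splits the four pairs of singularities into an inner pair carrying $\theta_0,\theta_t$ and an outer pair carrying $\theta_1,\theta_\infty$, joined across a cylinder on which the monodromy is diagonal with exponents $\pm\sigma_{0t}$ and a complementary twist $r_{0t}$. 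The elliptic identity $\mathcal{E}_0(\sigma)=\rho_{34}$ expresses precisely the fact that the inner subsystem's spectral invariant matches the Tyurin ratio $\rho_{34}$; the conditions in Definition \ref{def:generic0} ensure that $\sigma_{0t}$ is uniquely defined in the strip \eqref{eq:domain0} and that neither pants piece is reducible.

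Next I would construct explicit local parametrices for the inner and outer pants. Each is a rank-two Fuchsian $q$-difference system with only two pairs of singularities and diagonal monodromy at $z=0,\infty$; its canonical solutions and connection matrix can be written in closed form in terms of the Heine functions $F_q$ of \eqref{eq:defiFq}. Gluing them produces an approximate global solution $\Phi_{\mathrm{app}}(z,t)$. Setting $R(z,t):=\Psi(z,t)\Phi_{\mathrm{app}}(z,t)^{-1}$, where $\Psi$ is built out of $\Psi_0$ and $\Psi_\infty$, recasts the problem as a RHP for $R$ on a single, closed, time-independent contour separating the inner from the outer singularities, with jump $J_R$ expressible as a product of ratios of $_2\phi_1$'s. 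The crucial analytic estimate is that under $0<\Re\sigma_{0t}<\tfrac12$ the off-diagonal entries of $J_R-I$ are $O(|t|^{1-2\Re\sigma_{0t}})$ and $O(|t|^{1+2\Re\sigma_{0t}})$ respectively, so Neumann iteration of the associated singular integral equation converges and yields
\begin{equation*}
R(z,t)=I+\sum_{n\geq 1}\sum_{|k|\leq n} R_{n,k}(z)\,r_{0t}^{k}(-t)^{n+2k\sigma_{0t}},
\end{equation*}
absolutely convergent for $|t|$ small.

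Extracting $(f,g)$ via the Jimbo-Sakai parametrisation \eqref{eq:coordinates_linear}, by reading off the first two coefficients of $\Psi_\infty=R\Phi_{\mathrm{app}}$ at $z=\infty$, converts this Neumann series into the double asymptotic expansion of the theorem. The explicit leading coefficients $F_{1,\pm 1},F_{1,0},G_{1,\pm 1},G_{1,0}$ are then obtained from the first-order residues of the parametrix and reduce to the stated trigonometric expressions after elementary $q$-identities; all $F_{n,k},G_{n,k}$ depend only on $(\Theta,\sigma_{0t})$ because the parametrices do, and their recursive determination from $q\Psix$ follows by substituting the ansatz into \eqref{eq:qpvi} and matching powers of $(-t)^{n+2k\sigma_{0t}}$. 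The constant $r_{0t}$ is matched to $\rho_{24}$ by computing the Tyurin ratio of the parametrix directly, evaluating $\Phi_{\mathrm{app}}$ at $z=x_2$ and $z=x_4$: the M\"obius transformation $M_{0t}$ encodes the dependence on $\rho_{24}$, while the $q$-gamma prefactor $c_{0t}$ arises as the Barnes-type connection coefficient relating the inner and outer $F_q$ bases.

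The principal obstacle I anticipate is the uniform small-norm estimate. The parametrices have spiral strings of poles at $q^{\mathbb{Z}\pm\theta_t\pm\sigma_{0t}}t_0$, $q^{\mathbb{Z}\pm\theta_0\pm\sigma_{0t}}t_0$, and analogous outer strings; to keep $J_R$ uniformly close to $I$ as $t\to 0$, the quotient contour must be chosen to stay bounded away from all these strings simultaneously. The parameter assumptions \eqref{eq:param_assumptions_1}, \eqref{eq:param_assumptions_2} together with the avoidance conditions \eqref{eq:line_conditions} are precisely what is needed to permit such a choice and, once it is in place, the convergence of the double series as well as the precise form of the exponents $n+2k\sigma_{0t}$ follow by standard bookkeeping of the orders of $t$ produced by each Neumann iterate.
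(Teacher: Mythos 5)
Your proposal is correct and follows essentially the same route as the paper: the algebraic $q$-pairs of pants factorisation of the connection matrix (Proposition \ref{prop:algebraicdecomzero}), explicit Heine-hypergeometric parametrices for the inner and outer pants, quotienting to a small-norm RHP on the single fixed contour $\gamma_{0t}$ (RHP \ref{rhp:circleI}), Neumann iteration of the singular integral equation to get the double series in $r_{0t}^k(-t_m)^{n+2k\sigma_{0t}}$ (Proposition \ref{prop:omega_perturb}), and extraction of $(f,g)$ through the Jimbo--Sakai coordinates, with $r_{0t}$ tied to $\rho_{24}$ exactly as in Lemma \ref{lem:twist_s0t}. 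The only cosmetic difference is that the paper reads off $A_1(t_m)$ and $A_0(t_m)$ from the exterior parametrix at finite $z$ and the interior parametrix at $z=0$ respectively, rather than from the expansion of $\Psi_\infty$ at $z=\infty$, but this does not change the argument.
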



\begin{remark}
The integration constants $\{\sigma_{0t},s_{0t}\}$ form a pair of local coordinates on $\mathcal{F}(\Theta,t_0)$. Geometrically, $\sigma_{0t}$ plays the role of the intermediate exponent and $s_{0t}$ plays the role of a twist parameter in a Mano decomposition of the monodromy, as detailed in Section \ref{sec:dec_algebraic}. We further note that, in the above theorem, we have related these integration constants to $\eta$ using the intermediate pair of Tyurin ratios $\{\rho_{34},\rho_{24}\}$, which also define local coordinates on the monodromy manifold.
\end{remark}
\begin{remark}\label{remark:weakeningzero}
The asymptotic expansions in Theorem \ref{thm:generic_asymp_zero} remain valid when $\Re \sigma_{0t}=0$, as long as $\sigma_{0t}\not\equiv 0,\tfrac{\tau}{2}$ modulo $\Lambda_\tau$. In this case the leading order behaviour of the solution is damped oscillatory as $t_m\rightarrow 0$. This is proven at the end of Section \ref{sec:extract_asymp}.
In the exceptional cases, when $\sigma_{0t}\equiv 0$ or $\sigma_{0t}\equiv \tfrac{\tau}{2}$ modulo $\Lambda_\tau$,  the corresponding Mano decomposition is logarithmic, see Ohyama et al. \cite{ohyamaramissualoy}*{Theorem 5.13(2)}. Drawing on analogy with $\Psix$, see Guzzetti \cite{guzzettiloga}, one would expect that the solution admits asymptotic expansions as $t_m\rightarrow 0$ which are power series in $(-t_m)$ and $\log_q(-t_m)=m\log_q(-t_0)$. This case will not be studied in this paper. We further note that formal asymptotic behaviours expressed in terms of $\log_q(t)$ were found for solutions of the $q$-difference Painlev\'e equation $qP(A_1)$, in
\cite{joshi_roffelsenqpa1}*{\S 3.3}.
\end{remark}

\begin{remark}
For generic $\sigma_{0t}$, the value of $s_{0t}$ cannot be $0$ or $\infty$ on $\mathcal{F}(\Theta,t_0)$. This can for example be derived from inequality \eqref{eq:Thom0}. Geometrically, upon fixing the value of the exponent $\sigma_{0t}$, $\eta=\eta(s_{0t})$ traces out a conic in $\widehat{\mathcal{F}}$. The values $s_{0t}=0,\infty$ correspond to the two intersections of this conic with the hyperplane section at infinity, $X=\widehat{\mathcal{F}}\setminus \mathcal{F}$.
\end{remark}
\begin{remark}
Note that the asymptotic formulas in the above theorem are independent of the choice of branch of the complex powers, as
\begin{equation*}
    r_{0t}^k(-t_m)^{2k\sigma_{0t}}=-c_{0t}^kM_0(\rho_{24})^k q^{2k\sigma_{0t}}.
\end{equation*}
When considering the continuum limit, the choice of the branch cut along $[0,\infty)$ becomes very convenient, since, as $q\rightarrow 1$, the four $q$-lines on the right-hand side of equation \eqref{eq:param_assumptions_2} condensate on the positive real line $[0,\infty)$. In other words, all the problematic points in the $t$-plane for our approach asymptotically lie on the branch cut $[0,\infty)$.
\end{remark}
\begin{remark}\label{remark:JNS}
We compared the terms with $n=1$ in the asymptotic expansion of $f(t_m)$ against the tau-function expansions in \cite{jimbonagoyasakai}. They match under the identification
\begin{equation*}
\Theta=\Theta^{\text{JNS}},\qquad f(t)=q^{\theta_1+1}y^{\text{JNS}}(q^{\theta_t+\theta_1}t),\qquad \sigma_{0t}=\sigma^{\text{JNS}},
\end{equation*}
and
\begin{align*}
\frac{r_{0t}}{s^{\text{JNS}}}=&(-1)^{1-2\sigma_{0t}}q^{2(\theta_t+\theta_1)\sigma_{0t}}\frac{1-q^{\theta_1+\theta_\infty+\sigma_{0t}}}{1-q^{\theta_1+\theta_\infty-\sigma_{0t}}}\frac{\Gamma_q(-2\sigma_{0t})^2}{\Gamma_q(+2\sigma_{0t})^2}\times\\
&\prod_{\epsilon=\pm 1}\frac{\Gamma_q(\frac{1}{2}-\theta_1+\epsilon\theta_\infty+\sigma_{0t})\Gamma_q(\theta_t+\epsilon \theta_0+\sigma_{0t})}{\Gamma_q(\frac{1}{2}-\theta_1+\epsilon\theta_\infty-\sigma_{0t})\Gamma_q(\theta_t+\epsilon \theta_0-\sigma_{0t})}.
\end{align*}
Comparison with the asymptotics by Mano \cite{manoqpvi} is given in Remark \ref{rem:mano_compare}.
\end{remark}

As explained in Remark \ref{remark:weakeningzero}, the asymptotic expansions in Theorem \ref{thm:generic_asymp_zero} remain valid when $\Re \sigma_{0t}=0$.
On the other hand, when $\Re \sigma_{0t}=\tfrac{1}{2}$, the series representations of $f$ and $g$ break down as they are no longer asymptotic expansions (they e.g. contain infinitely many terms of order a constant).
To cover this case, we have the dual asymptotic expansions, detailed in the following theorem and proven in Section \ref{sec:extract_asymp_zero_dual}.
\begin{theorem}\label{thm:generic_asymp_zero_dual}
Take any $0$-generic point $\eta$ in the affine Segre surface $\mathcal{F}(\Theta,t_0)$ and let $(f,g)$ denote the corresponding solution of $q\Psix(\Theta,t_0)$. Recall the notation in equation \eqref{eq:coefficients_zero} for the coefficients of the series in Theorem \ref{thm:generic_asymp_zero}.
Denote $t_m=q^mt_0$, $m\in\mathbb{Z}$, then $t/f$ and $t/g$ have complete asymptotic expansions as $t_m\rightarrow 0$, of the form,
\begin{align*}
    \frac{t_m}{f(t_m)}&=\sum_{n=1}^\infty\sum_{k=-n}^n F_{n,k}(\widehat{\Theta},\widehat{\sigma}_{0t})\widehat{r}_{0t}^k(- t_m)^{n+2k\widehat{\sigma}_{0t}},\\
    \frac{t_m}{q^{\frac{3}{2}}g(t_m)}&=\sum_{n=1}^\infty\sum_{k=-n}^n G_{n,k}(\widehat{\Theta},\widehat{\sigma}_{0t})\widehat{r}_{0t}^k(- t_m)^{n+2k\widehat{\sigma}_{0t}},
\end{align*}
which are absolutely convergent for large enough $m\geq 0$,
where
\begin{equation*}
\widehat{\theta}_0=\theta_\infty-\tfrac{1}{2},\quad
\widehat{\theta}_t=\theta_1,\quad
\widehat{\theta}_1=\theta_t,\quad
\widehat{\theta}_\infty=\theta_0+\tfrac{1}{2},
\end{equation*}
the branches of the complex powers are principal and the pair of integration constants $\{\widehat{\sigma}_{0t},\widehat{r}_{0t}\}$ is related to the pair of integration constants $\{\sigma_{0t},r_{0t}\}$ in Theorem \ref{thm:generic_asymp_zero}, by
\begin{equation*}
    \widehat{\sigma}_{0t}=\tfrac{1}{2}-\sigma_{0t},\quad \widehat{r}_{0t}=-\frac{F_{1,-1}(\Theta,\sigma_{0t})F_{1,-1}(\widehat{\Theta},\widehat{\sigma}_{0t})}{r_{0t}}=-\frac{F_{1,-1}(\Theta,\sigma_{0t})F_{1,-1}(\widehat{\Theta},\widehat{\sigma}_{0t})}{c_{0t}s_{0t}}.
\end{equation*}
The latter pair is in turn related to $\eta$ via
 the pair of dual Tyurin ratios $\{\widetilde{\rho}_{12},\widetilde{\rho}_{13}\}$, by
\begin{equation*}
   \frac{\vartheta_\tau(\sigma_{0t}-\theta_t+\theta_0,\sigma_{0t}+\theta_t-\theta_0)}{\vartheta_\tau(\sigma_{0t}-\theta_t-\theta_0,\sigma_{0t}+\theta_t+\theta_0)}= \widetilde{\rho}_{12},
\end{equation*}
and
\begin{equation}\label{eq:twist0talt}
s_{0t}=-(-t_0)^{-2\sigma_{0t}}\widehat{M}_{0t}(\widetilde{\rho}_{13}),
\end{equation}
where $\widehat{M}_{0t}(\cdot)$ is the M\"obius transformation
\begin{equation*}
    \widehat{M}_{0t}(Z)=\frac{\vartheta_\tau(\theta_t-\theta_0+\sigma_{0t})\theta_q(q^{\theta_1+\theta_0+\sigma_{0t}}t_0^{-1})-Z
\vartheta_\tau(\theta_t+\theta_0+\sigma_{0t})\theta_q(q^{\theta_1-\theta_0+\sigma_{0t}}t_0^{-1})}{\vartheta_\tau(\theta_t-\theta_0-\sigma_{0t})\theta_q(q^{\theta_1+\theta_0-\sigma_{0t}}t_0^{-1})-Z
\vartheta_\tau(\theta_t+\theta_0-\sigma_{0t})\theta_q(q^{\theta_1-\theta_0-\sigma_{0t}}t_0^{-1})}.
\end{equation*}
\end{theorem}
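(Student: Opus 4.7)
The plan is to deduce Theorem \ref{thm:generic_asymp_zero_dual} from Theorem \ref{thm:generic_asymp_zero} by invoking a Bäcklund symmetry of $q\Psix$. First I would verify that the map
\begin{equation*}
B:\;(f,g;\Theta,t_0)\longmapsto(F,G;\widehat\Theta,t_0),\qquad F=t/f,\;\; G=t/(q^{3/2}g),
\end{equation*}
sends solutions of $q\Psix(\Theta,t_0)$ bijectively onto solutions of $q\Psix(\widehat\Theta,t_0)$. This is a short algebraic check: substituting $f=t/F$ and $\bar g=q^{-1/2}t/\bar G$ into \eqref{eq:qpvi} and collecting monomials in $q$, $t$, $\bar G$, $F$ yields the $q\Psix$ system in $(F,G)$ with exactly the parameter shift $\widehat\theta_0=\theta_\infty-\tfrac12$, $\widehat\theta_\infty=\theta_0+\tfrac12$, $\widehat\theta_t=\theta_1$, $\widehat\theta_1=\theta_t$.

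Applying Theorem \ref{thm:generic_asymp_zero} to $(F,G)$ with parameters $\widehat\Theta$ then immediately produces expansions of the form claimed in Theorem \ref{thm:generic_asymp_zero_dual}, with a priori exponent $\widehat\sigma_{0t}$ solving $\widehat{\mathcal E}_0(\widehat\sigma_{0t})=\widehat\rho_{34}$ and twist parameter $\widehat s_{0t}=-(-t_0)^{-2\widehat\sigma_{0t}}\widehat M_{0t}(\widehat\rho_{24})$, where $\widehat\rho_{jk}$ denote the Tyurin ratios of the monodromy class of the transformed linear system. To match these to the original monodromy class $[C(z,t_0)]$ I would set $\widehat\sigma_{0t}=\tfrac12-\sigma_{0t}$ and check, using the quasi-periodicity $\vartheta_\tau(x+1)=\vartheta_\tau(-x)=-q^{-x}\vartheta_\tau(x)$, that $\widehat{\mathcal E}_0(\tfrac12-\sigma_{0t})$ reduces to $q^{2\theta_t}$ times the $\widetilde\rho_{12}$-ratio displayed in the theorem, the spurious $q^{2\theta_t}$ prefactor being absorbed into the diagonal-rescaling freedom defining the class in $\mathcal M(\widehat\Theta,t_0)$. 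The twist $\widehat r_{0t}$ is then pinned down by matching leading orders of the two descriptions of $t_m/f(t_m)$: for $0<\Re\sigma_{0t}<\tfrac12$, the dominant term of $f(t_m)$ coming from Theorem \ref{thm:generic_asymp_zero} is $F_{1,-1}(\Theta,\sigma_{0t})r_{0t}^{-1}(-t_m)^{1-2\sigma_{0t}}$ (since the exponent $n+2k\Re\sigma_{0t}$ is minimised at $(n,k)=(1,-1)$ in this range), producing $t_m/f(t_m)\sim -F_{1,-1}(\Theta,\sigma_{0t})^{-1}r_{0t}(-t_m)^{2\sigma_{0t}}$; equating this with the dual leading term $F_{1,-1}(\widehat\Theta,\widehat\sigma_{0t})\widehat r_{0t}^{-1}(-t_m)^{1-2\widehat\sigma_{0t}}$ and observing $1-2\widehat\sigma_{0t}=2\sigma_{0t}$ yields the asserted formula for $\widehat r_{0t}$.

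It remains to derive the alternative Möbius expression \eqref{eq:twist0talt} for $s_{0t}$ in terms of $\widetilde\rho_{13}$. For this I would substitute the duality formula \eqref{eq:rhoduality2} (which writes each $\rho_{ij}$ as a composition of two $\theta_q$-Möbius transforms in the dual Tyurin ratios) into the original expression $s_{0t}=-(-t_0)^{-2\sigma_{0t}}M_{0t}(\rho_{24})$ from Theorem \ref{thm:generic_asymp_zero}, and then use the elliptic identity $\mathcal E_0(\sigma_{0t})=\rho_{34}$, transported to dual Tyurin ratios through Lemma \ref{lem:duality}, to eliminate the auxiliary variable $\widetilde\rho_{24}$ in favour of $\sigma_{0t}$. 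The main technical obstacle is to show that the resulting composite Möbius transform collapses to the single $\widehat M_{0t}$ of the theorem statement; this collapse requires repeated application of the quasi-periodicity relations \eqref{eq:qtheta_identities} for $\theta_q$ together with the corresponding identities for $\vartheta_\tau$, and Riemann-type three-term theta-function relations, while carefully tracking the $q$-power prefactors produced by the parameter shift $\Theta\to\widehat\Theta$ and by the transition between $\rho$- and $\widetilde\rho$-coordinates.
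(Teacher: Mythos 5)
Your overall strategy -- push the solution through the symmetry $v_{0\infty}:(f,g)\mapsto(t/f,\,t/(q^{3/2}g))$, $\Theta\mapsto\widehat\Theta$, and then apply Theorem \ref{thm:generic_asymp_zero} to the transformed solution -- is exactly the paper's route. Your leading-order matching for $\widehat r_{0t}$ is valid under the $0$-generic hypothesis (since then $0<\Re\widehat\sigma_{0t}<\tfrac12$; the paper avoids it only because the same proof is meant to cover Remark \ref{remark:weakeningzerodual}, where $\Re\widehat\sigma_{0t}=0$ and there is no single dominant term), and your plan for \eqref{eq:twist0talt} -- feeding the duality formula \eqref{eq:rhoduality2} into $M_{0t}(\rho_{24})$ -- is also what the paper does.

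The gap is in how you connect the integration constants of the transformed expansion back to the original point $\eta$. Verifying the symmetry by a ``short algebraic check'' on the nonlinear equation tells you nothing about how $v_{0\infty}$ acts on the connection matrix, hence nothing about the Tyurin ratios $\widehat\rho_{34},\widehat\rho_{24}$ of the transformed system in terms of the original monodromy data. The paper must lift $v_{0\infty}$ to the linear problem -- the inversion $\zeta=qt_m/z$, the adjugate of $A$, and the half-integer-exponent gauge $G^{\B{m}}(\zeta)$ of Section \ref{sec:symroinf} -- precisely to obtain $\widehat\rho_k=t_0^{1/2}/(x_{\beta(k)}\widetilde\rho_{\beta(k)})$ with $\beta=(1\;4)(2\;3)$, whence $\widehat\rho_{34}=q^{2\theta_t}\widetilde\rho_{12}$ and $\widehat\rho_{24}=q^{\theta_t-\theta_1}t_0\,\widetilde\rho_{13}$. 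Your substitute -- absorbing the $q^{2\theta_t}$ prefactor ``into the diagonal-rescaling freedom'' -- cannot work: the ratios $\rho_i/\rho_j$ are invariant under both left and right diagonal multiplication of $C(z)$ (left multiplication fixes $\rho$, right multiplication rescales all $\rho_k$ by a common factor), so they are honest functions on $\mathcal M(\widehat\Theta,t_0)$ and no prefactor can be discarded. The $q^{2\theta_t}$ produced by the $\vartheta_\tau$ quasi-periodicity in $\widehat{\mathcal E}_0(\tfrac12-\sigma_{0t})$ is in fact cancelled by the genuine factor $x_1/x_2=q^{2\theta_t}$ in $\widehat\rho_{34}=q^{2\theta_t}\widetilde\rho_{12}$, which only the linear-problem computation supplies. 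Relatedly, the displayed identity expressing $\widetilde\rho_{12}$ through $\sigma_{0t}$ is not a theta-function tautology: it comes from the identification $(\widetilde\rho_1,\widetilde\rho_2)=(\widetilde\rho_1^i,\widetilde\rho_2^i)$ of equation \eqref{eq:tyurin_identification} together with formula \eqref{eq:tyurinqoutientdual} for the internal hypergeometric connection matrix, an input your proposal never invokes.
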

\begin{remark}\label{remark:weakeningzerodual}
The asymptotic expansions in Theorem \ref{thm:generic_asymp_zero_dual} remain valid when $\Re \sigma_{0t}=\frac{1}{2}$, as long as $\sigma_{0t}\not\equiv \tfrac{1}{2}, \tfrac{1}{2}+\tfrac{\tau}{2}$ modulo $\Lambda_\tau$. In this case, the leading order behaviours of $f(t_m)^{-1}$ and $g(t_m)^{-1}$ are oscillatory as $t_m\rightarrow 0$. 
\end{remark}

Note that the two different expressions for the twist parameter $s_{0t}$, in theorems
 \ref{thm:generic_asymp_zero} and \ref{thm:generic_asymp_zero_dual}, imply the remarkably identity,
 \begin{equation*}
     M_{0t}(\rho_{24})=\widehat{M}_{0t}(\widetilde{\rho}_{13}).
 \end{equation*}

\subsection{Generic asymptotics near $\boldsymbol{t=\infty}$}\label{subsec:asympinfty}
In this section, we discuss the asymptotics of solutions of $q\Psix$ around $t=\infty$. Let $\mathcal{E}_\infty$ denote the elliptic function
\begin{equation*}
    \mathcal{E}_\infty(\sigma)=\frac{\vartheta_\tau(\sigma-\theta_t+\theta_\infty,\sigma+\theta_t-\theta_\infty)}{\vartheta_\tau(\sigma+\theta_t+\theta_\infty,\sigma-\theta_t-\theta_\infty)}.
\end{equation*}
This is an elliptic function of degree $2$ with periods $1$ and $\tau$, that plays an important geometric role in the Mano decomposition discussed in Section \ref{sec:dec_algebraicII}, and again was first derived by Ohyama et al. \cite{ohyamaramissualoy}*{\S 5.1.3} in that context.
It has the reflection symmetry $\mathcal{E}_\infty(-\sigma)=\mathcal{E}_\infty(\sigma)$ and, therefore, the equation 
\begin{equation}\label{eq:elliptic_equationinf}
    \mathcal{E}_\infty(\sigma)=\rho_{12},
\end{equation}
has at least one solution, which we denote by $\sigma_{01}$, in the domain
\eqref{eq:domain0}. If $0<\Re \sigma_{01}<\tfrac{1}{2}$, then the solution is unique, otherwise a second solution is given by its complex conjugate $\overline{\sigma}_{01}$, or $\overline{\sigma}_{01}-\tau$, in the domain \eqref{eq:domain0}, which may coincide with $\sigma_{01}$.

\begin{definition}\label{def:genericinfty}
We call a point $\eta$, in the affine Segre surface $\mathcal{F}(\Theta,t_0)$, $\infty$-generic if $\sigma_{01}$ satisfies
\begin{equation*}
    0<\Re \sigma_{01}<\tfrac{1}{2}, 
\end{equation*}
and
\begin{equation*}
    \sigma_{01}\not\equiv\pm (\theta_0+\theta_1),\pm (\theta_0-\theta_1),\pm (\theta_\infty+\theta_t),\pm (\theta_\infty-\theta_t) \mod{\Lambda_\tau}.
\end{equation*}
\end{definition}
The following theorem, proven in Section \ref{sec:extract_asymp_infty}, describes the asymptotics of generic solutions near $t=\infty$.
\begin{theorem}\label{thm:generic_asymp_infty}
Take any $\infty$-generic point $\eta$ in the affine Segre surface $\mathcal{F}(\Theta,t_0)$ and let $(f,g)$ denote the corresponding solution of $q\Psix(\Theta,t_0)$. Denote $t_m=q^mt_0$, $m\in\mathbb{Z}$, then $t^{-1}f$ and $g^{-1}$ have complete asymptotic expansions as $t_m\rightarrow \infty$ of the form,
\begin{align*}
    \frac{f(t_m)}{t_m}&=\sum_{n=1}^\infty\sum_{k=-n}^n \dot{F}_{n,k}r_{01}^k(- t_m)^{-(n+2k\sigma_{01})},\\
    \frac{1}{g(t_m)}&=\sum_{n=1}^\infty\sum_{k=-n}^n \dot{G}_{n,k}r_{01}^k(- t_m)^{-(n+2k\sigma_{01})},
\end{align*}
which are absolutely convergent for large enough $m\leq 0$,
where
\begin{align*}
    \dot{F}_{1,\pm 1}&=q^{-\theta_1}\frac{\bigl(q^{\theta_1+\theta_0\mp\sigma_{01}}-1\bigr)\bigl(q^{\theta_1-\theta_0\mp\sigma_{01}}-1\bigr)\bigl(q^{\theta_t+\theta_\infty\mp\sigma_{01}}-1\bigr)}{\bigl(q^{\theta_t+\theta_\infty\pm \sigma_{01}}-1\bigr)\bigl(q^{\sigma_{01}}-q^{-\sigma_{01}}\bigr)^2},\\
    \dot{F}_{1,0}&=\frac{2\bigl(q^{\theta_1}+q^{-\theta_1}\bigr)-\bigl(q^{\theta_0}+q^{-\theta_0}\bigr)\bigl(q^{\sigma_{01}}+q^{-\sigma_{01}}\bigr)}{\bigl(q^{\sigma_{01}}-q^{-\sigma_{01}}\bigr)^2},\\
    \dot{G}_{1,0}&=\frac{2\bigl(q^{\theta_0}+q^{-\theta_0}\bigr)-\bigl(q^{\theta_1}+q^{-\theta_1}\bigr)\bigl(q^{\sigma_{01}}+q^{-\sigma_{01}}\bigr)}{\bigl(q^{\sigma_{01}}-q^{-\sigma_{01}}\bigr)^2}q,\\
    \dot{G}_{1,\pm 1}&=-q^{1\pm \sigma_{01}}\dot{F}_{1,\pm 1},
\end{align*}
and the higher order coefficients may be computed recursively via the $q\Psix$ equation,  where we note that each coefficient
\begin{equation}\label{eq:coefficients_infty}
    \dot{F}_{n,k}=\dot{F}_{n,k}(\Theta,\sigma_{01}),\quad
    \dot{G}_{n,k}=\dot{G}_{n,k}(\Theta,\sigma_{01})\qquad (-n\leq k \leq n, n\geq 1),
\end{equation}
only depends on the parameters $\Theta$ and $\sigma_{01}$.
Here the branches of the complex powers are principal and the pair of integration constants $\{\sigma_{01},r_{01}\}$ is related to $\eta$, via the pair of Tyurin ratios $\{\rho_{12},\rho_{24}\}$, as follows.  The exponent $\sigma_{01}$ is defined as the unique solution of equation \eqref{eq:elliptic_equationinf} in the domain \eqref{eq:domain0}, and
\begin{equation*}
    r_{01}=c_{01}\times s_{01},
\end{equation*}
with
\begin{align*}
c_{01}&=\frac{\Gamma_q(1-2\sigma_{01})^2}{\Gamma_q(1+2\sigma_{01})^2}\prod_{\epsilon=\pm1}\frac{ \Gamma_q(1+\theta_1+\epsilon\,\theta_0+\sigma_{01})  \Gamma_q(1+\theta_t+\epsilon\,\theta_\infty+\sigma_{01})}{ \Gamma_q(1+\theta_1+\epsilon\,\theta_0-\sigma_{01})  \Gamma_q(1+\theta_t+\epsilon\,\theta_\infty-\sigma_{01})},\\
s_{01}&=-(-t_0)^{2\sigma_{01}}M_{01}(\rho_{24}),
\end{align*}
where $M_{01}(\cdot)$ is the M\"obius transformation
\begin{equation*}
    M_{01}(Z)=\frac{\vartheta_\tau(\theta_t+\theta_\infty+\sigma_{01})\theta_q(q^{\theta_1-\theta_\infty+\sigma_{01}}t_0)-Z
\vartheta_\tau(\theta_t-\theta_\infty+\sigma_{01})\theta_q(q^{\theta_1+\theta_\infty+\sigma_{01}}t_0)}{\vartheta_\tau(\theta_t+\theta_\infty-\sigma_{01})\theta_q(q^{\theta_1-\theta_\infty-\sigma_{01}}t_0)-Z
\vartheta_\tau(\theta_t-\theta_\infty-\sigma_{01})\theta_q(q^{\theta_1+\theta_\infty-\sigma_{01}}t_0)}.
\end{equation*}
\end{theorem}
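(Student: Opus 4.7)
The plan is to mirror the proof strategy of Theorem~\ref{thm:generic_asymp_zero}, but with the $q$-pairs of pants decomposition associated with $t=\infty$ (developed in the forthcoming Section~\ref{sec:dec_algebraicII}) in place of the $t=0$ decomposition. In that decomposition, the linear system is split by grouping the singularities $\{0,1\}$ against $\{tq^{\pm\theta_t},\infty\}$, and the intermediate exponent $\sigma_{01}$ appears via the elliptic relation $\mathcal{E}_\infty(\sigma)=\rho_{12}$, precisely parallel to the role played by $\mathcal{E}_0(\sigma)=\rho_{34}$ in the small-$t$ case. The asymptotic variable is $(-t_m)^{-1}$ rather than $(-t_m)$, and the natural coordinates on the monodromy manifold to relate to the integration constants $\{\sigma_{01},r_{01}\}$ are the Tyurin ratios $\{\rho_{12},\rho_{24}\}$ read off from the factorised connection matrix.

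Concretely, I would (i) recast the RHP for $q\Psix$ into factorised form using the $t=\infty$ pants decomposition, so that $C(z,t)$ is (up to diagonal factors) a product of two local parametrices built from Heine $q$-hypergeometric functions $F_q$, with twist parameter $s_{01}$ playing the role of gluing data; (ii) divide out these parametrices to reduce the RHP to one on a single closed, time-independent contour whose jump matrix is $I+O(\delta)$, where $\delta$ is a small parameter controlled by ratios of $\theta_q$-functions that decay as $|t_m|\to\infty$ along $q^\mathbb{Z}t_0$ under the $\infty$-genericity hypothesis; (iii) solve the reduced RHP by a small-norm Neumann series argument, yielding a convergent expansion of the global solution $\Psi$ in the monomials $r_{01}^k(-t_m)^{-(n+2k\sigma_{01})}$; (iv) read off $(f,g)$ from the coordinates \eqref{eq:coordinates_linear}, namely from the zero of $A_{12}(z,t)$ and the value of $A_{22}(f,t)$, producing the claimed double series. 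The leading coefficients $\dot{F}_{1,0}$, $\dot{F}_{1,\pm 1}$, $\dot{G}_{1,0}$, $\dot{G}_{1,\pm 1}$ come out by explicit computation at the first perturbative order, while the structural dependence of all higher coefficients $\dot{F}_{n,k}(\Theta,\sigma_{01})$, $\dot{G}_{n,k}(\Theta,\sigma_{01})$ only on $\Theta$ and $\sigma_{01}$ follows from a Frobenius-type argument: once the ansatz is plugged into $q\Psix$, the equation becomes triangular in $(n,|k|)$ and the recursions have coefficients involving $\Theta$ and $\sigma_{01}$ only.

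For the identification of the integration constants, I would compute $\rho_{12}$ and $\rho_{24}$ directly from the factorised expression of $C(z,t_0)$ at the four special points $z=x_k$ from \eqref{eq:intro_xnotation}. The relation $\mathcal{E}_\infty(\sigma_{01})=\rho_{12}$ drops out of the definition of the intermediate exponent in the $t=\infty$ pants decomposition, because $\rho_{12}$ only sees the parametrix at $\{tq^{\pm\theta_t},\infty\}$, where the monodromy is diagonal with exponents $\pm\sigma_{01}$. The M\"obius formula for $s_{01}$ in terms of $\rho_{24}$ then follows by evaluating the relevant entries of the product of the two parametrices at $z=x_2$ and $z=x_4$ and ratioing the first and second columns, using the standard connection formulas for ${}_2\phi_1$. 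The prefactor $c_{01}$ is the ratio of asymptotic normalisations of the two $q$-hypergeometric parametrices and is a direct $\theta_t\leftrightarrow\theta_1$ analogue of $c_{0t}$ in Theorem~\ref{thm:generic_asymp_zero}; the sign and $(-t_0)^{2\sigma_{01}}$ prefactor in $s_{01}$ come from tracking the principal branches of the complex powers $z^{\log_q(\cdot)}$ in \eqref{eq:linear_sys_solutions} through the factorisation.

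The main obstacle I expect is the small-norm estimate for the reduced RHP as $t_m\to\infty$ along $q^\mathbb{Z}t_0$ with $m\to-\infty$: one needs uniform decay of the off-diagonal entries (built from $q$-hypergeometric ratios evaluated on a fixed contour) in the face of the oscillatory/exponential diagonal factors present along the $q$-spiral, and this is where the $\infty$-genericity assumptions $0<\Re\sigma_{01}<\tfrac{1}{2}$ and the non-resonance conditions in Definition~\ref{def:genericinfty} are used crucially, to ensure that the small parameter $\delta$ genuinely tends to zero and that no resonant cancellation takes place between the two pieces of the pants decomposition. A secondary technical point is verifying that the two branches $\pm\sigma_{01}$ produced by the reflection symmetry $\mathcal{E}_\infty(-\sigma)=\mathcal{E}_\infty(\sigma)$ are not independent asymptotic channels but encode the same data under $k\mapsto -k$; this is taken care of automatically by the sum over $k$ in the ansatz.
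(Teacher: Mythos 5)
Your plan is mathematically viable, but it is a genuinely different route from the one the paper takes. The paper does \emph{not} redo the Riemann--Hilbert analysis at $t=\infty$: in Section \ref{sec:extract_asymp_infty} it derives Theorem \ref{thm:generic_asymp_infty} entirely by applying the symmetry $v_{t1}$ (which sends $\theta_t\leftrightarrow\theta_1$, $t_0\mapsto t_0^{-1}$, $f\mapsto f/t$, $g\mapsto 1/(q\overline{g})$) to the already-proved Theorem \ref{thm:generic_asymp_zero}. The only genuinely new work in the paper's proof is computing the action of $v_{t1}$ on the monodromy data (Section \ref{sec:symrt1}), which shows $\rho_{34}\mapsto\rho_{12}$ and $\rho_{24}\mapsto 1/\rho_{24}$, so that the exponent equation $\mathcal{E}_0(\sigma)=\rho_{34}$ becomes $\mathcal{E}_\infty(\sigma)=\rho_{12}$, $0$-genericity of the image point is equivalent to $\infty$-genericity of $\eta$, and all formulas ($c_{01}$, $M_{01}$, $s_{01}$, the coefficients in Remark \ref{remark:coefficients}) are obtained by literal substitution. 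Your proposal instead carries out the full programme — factorised RHP via decomposition II, local Heine parametrices, reduction to a single time-independent contour, small-norm Neumann series, extraction of $(f,g)$ from $A_{12}$ and $A_{22}$ — which is exactly what the paper does for $t\to 0$ and for which it even sets up the machinery (Propositions \ref{prop:algebraicdecominfty}, Lemma \ref{lem:twist_s01}, RHP \ref{rhp:decomII}) without using it in this proof. What your route buys is a self-contained proof independent of the symmetry computation; what it costs is repeating all the small-norm estimates and the explicit hypergeometric bookkeeping with the new critical data $\widehat{\Xi}^e,\widehat{\Xi}^i$, including the normalisation $c_{01}$ and the M\"obius formula for $s_{01}$, which the symmetry argument delivers for free. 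If you pursue your route, the one point to be careful about is the branch/orientation bookkeeping for $m\to-\infty$ (the interior and exterior curves swap roles relative to decomposition I, cf.\ Figure \ref{fig:analytic_decomp_II}, and the diagonal conjugation is by $(-t_m)^{-\sigma_{01}\sigma_3}$ rather than $(-t_m)^{+\sigma_{0t}\sigma_3}$), since this is where the signs and the factor $(-t_0)^{2\sigma_{01}}$ in $s_{01}$ originate.
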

\begin{remark}
The integration constants $\{\sigma_{01},s_{01}\}$ form a pair of local coordinates on $\mathcal{F}(\Theta,t_0)$. Geometrically, $\sigma_{01}$ plays the role of the intermediate exponent and $s_{01}$ plays the role of a twist parameter in a Mano decomposition of the monodromy, as detailed in Section \ref{sec:dec_algebraicII}. We further note that, in the above theorem, we have related these integration constants to $\eta$ using the intermediate pair of Tyurin ratios $\{\rho_{12},\rho_{24}\}$, which also define local coordinates on the monodromy manifold.
\end{remark}

\begin{remark}\label{remark:coefficients}
The coefficients in the series expansions around $t=\infty$, see equation \eqref{eq:coefficients_infty}, and the coefficients in the series expansions around $t=0$, see equation \eqref{eq:coefficients_zero}, are interesting combinatorical objects in themselves, related by
\begin{equation*}
   \dot{F}_{n,k}=F_{n,k}|_{\theta_t\leftrightarrow \theta_1,\sigma_{0t}\mapsto \sigma_{01}},\quad
   \dot{G}_{n,k}=q^{1+n+2k\sigma_{01}}G_{n,k}|_{\theta_t\leftrightarrow \theta_1,\sigma_{0t}\mapsto \sigma_{01}},
\end{equation*}
for $-n\leq k\leq n$ and $n\geq 1$.
\end{remark}
\begin{remark}
The series representations in Theorem \ref{thm:generic_asymp_infty} remain valid when $\Re \sigma_{01}=0$, as long as $\sigma_{01}\not \equiv 0,\tfrac{1}{2}\tau$ modulo $\Lambda_\tau$. In this case, the leading order behaviours of $f(t_m)$ and $t_m/g(t_m)$ are oscillatory as $t_m\rightarrow \infty$.
\end{remark}
\begin{remark}\label{rem:mano_compare}
We compared the leading order terms in the series expansions in Theorems \ref{thm:generic_asymp_zero} and \ref{thm:generic_asymp_infty} with the corresponding expansions in Mano \cite{manoqpvi}*{Theorem 5.1}. After identifying the parameters by
\begin{align*}
\theta_1^M&=q^{+\theta_0}, & \theta_2^M&=q^{-\theta_0}, &a_1^M&=q^{+\theta_t}, & a_2^M=q^{-\theta_t},  \\
\kappa_1^M&=q^{-\theta_\infty}, & \kappa_2^M&=q^{+\theta_\infty}, &a_3^M&=q^{+\theta_1}, & a_4^M=q^{-\theta_1},  
\end{align*}
they match, with the integration constants related by
\begin{align*}
    \sigma_1&=-q^{+\sigma_{0t}}, & \sigma_2&=-q^{-\sigma_{0t}}, & \frac{\widetilde{w}_0}{w_0}&=(-1)^{1+2\sigma_{0t}}(q^{2\sigma_{0t}}-1)F_{1,1} r_{0t}, \\
   \overline{\sigma}_1&=-q^{+\sigma_{01}}   & \overline{\sigma}_2&=-q^{-\sigma_{01}}, & \frac{\widetilde{\overline{w}}_0}{\overline{w}_0}&=(-1)^{1+2\sigma_{01}}(q^{2\sigma_{01}}-1)\dot{F}_{1,1}r_{01}.
\end{align*}
\end{remark}

When $\Re \sigma_{01}=\tfrac{1}{2}$, the expansions in Theorem \ref{thm:generic_asymp_infty} are no longer asymptotic expansions. To cover this case, we have the dual series representations given in the following theorem, proven in Section \ref{sec:extract_asymp_infty}.
\begin{theorem}\label{thm:generic_asymp_infty_dual}
Take any $\infty$-generic point $\eta$ in the affine Segre surface $\mathcal{F}(\Theta,t_0)$ and let $(f,g)$ denote the corresponding solution of $q\Psix(\Theta,t_0)$. Recall the notation in equation \eqref{eq:coefficients_infty} for the coefficients of the series in Theorem \ref{thm:generic_asymp_infty}.
Denote $t_m=q^mt_0$, $m\in\mathbb{Z}$, then $1/f$ and $g/t$ have complete asymptotic expansions as $t_m\rightarrow \infty$, of the form,
\begin{align*}
    \frac{1}{f(t_m)}&=\sum_{n=1}^\infty\sum_{k=-n}^n \dot{F}_{n,k}(\widehat{\Theta},\widehat{\sigma}_{01})\widehat{r}_{01}^k(- t_m)^{-(n+2k\widehat{\sigma}_{01})},\\
    \frac{q^{\frac{3}{2}}g(t_m)}{t_m}&=\sum_{n=1}^\infty\sum_{k=-n}^n \dot{G}_{n,k}(\widehat{\Theta},\widehat{\sigma}_{01})\widehat{r}_{01}^k(- t_m)^{-(n+2k\widehat{\sigma}_{01})},
\end{align*}
which are absolutely convergent for large enough $m\geq 0$,
where
\begin{equation*}
\widehat{\theta}_0=\theta_\infty-\tfrac{1}{2},\quad
\widehat{\theta}_t=\theta_1,\quad
\widehat{\theta}_1=\theta_t,\quad
\widehat{\theta}_\infty=\theta_0+\tfrac{1}{2},
\end{equation*}
the branches of the complex powers are principal and the pair of integration constants $\{\widehat{\sigma}_{01},\widehat{r}_{01}\}$ is related to the pair of integration constants $\{\sigma_{01},r_{01}\}$ in Theorem \ref{thm:generic_asymp_infty}, by
\begin{equation*}
    \widehat{\sigma}_{01}=\tfrac{1}{2}-\sigma_{01},\quad \widehat{r}_{01}=-\frac{\dot{F}_{1,-1}(\Theta,\sigma_{01})\dot{F}_{1,-1}(\widehat{\Theta},\widehat{\sigma}_{01})}{r_{01}}=-\frac{\dot{F}_{1,-1}(\Theta,\sigma_{01})\dot{F}_{1,-1}(\widehat{\Theta},\widehat{\sigma}_{01})}{c_{0t}s_{0t}}.
\end{equation*}
The latter pair is in turn related to $\eta$ via
 the pair of dual Tyurin ratios $\{\widetilde{\rho}_{34},\widetilde{\rho}_{13}\}$, by
\begin{equation*}
   \frac{\vartheta_\tau(\sigma_{01}-\theta_1+\theta_0,\sigma_{01}+\theta_1-\theta_0)}{\vartheta_\tau(\sigma_{01}-\theta_1-\theta_0,\sigma_{01}+\theta_1+\theta_0)}= \widetilde{\rho}_{34},
\end{equation*}
and
\begin{equation}\label{eq:twist01alt}
s_{01}=-(-t_0)^{2\sigma_{01}}\widehat{M}_{01}(\widetilde{\rho}_{13}),
\end{equation}
where $\widehat{M}_{0t}(\cdot)$ is the M\"obius transformation
\begin{equation*}
    \widehat{M}_{01}(Z)=\frac{\vartheta_\tau(\theta_1+\theta_0+\sigma_{01})\theta_q(q^{\theta_t-\theta_0+\sigma_{01}}t_0)-Z
\vartheta_\tau(\theta_1-\theta_0+\sigma_{01})\theta_q(q^{\theta_t+\theta_0+\sigma_{01}}t_0)}{\vartheta_\tau(\theta_1+\theta_0-\sigma_{01})\theta_q(q^{\theta_t-\theta_0-\sigma_{01}}t_0)-Z
\vartheta_\tau(\theta_1-\theta_0-\sigma_{01})\theta_q(q^{\theta_t+\theta_0-\sigma_{01}}t_0)}.
\end{equation*}
\end{theorem}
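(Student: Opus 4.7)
The plan is to deduce Theorem \ref{thm:generic_asymp_infty_dual} from Theorem \ref{thm:generic_asymp_infty} by means of a Bäcklund-type symmetry of $q\Psix$ that implements the parameter shift $\Theta \mapsto \widehat{\Theta}$. Since this shift coincides with the one already appearing in the $t = 0$ dual Theorem \ref{thm:generic_asymp_zero_dual}, the same construction should apply simultaneously at both critical points, and I would reuse whichever symmetry underpins the proof there. Intuitively this symmetry arises, at the level of the Jimbo--Sakai linear problem, from the combination of the substitution $z \mapsto z/t$ (which permutes the zeros $q^{\pm\theta_t}t$ and $q^{\pm\theta_1}$ of $\det A$, hence exchanges $\theta_t \leftrightarrow \theta_1$) with a half-integer gauge at $z = 0, \infty$ that accounts for the shifts of $\theta_0$ and $\theta_\infty$. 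At the nonlinear level it produces a solution $(\tilde f, \tilde g)$ of $q\Psix(\widehat{\Theta}, t_0)$ with $\tilde f$ proportional to $t/f$ up to a normalisation fixed by the canonical form of \eqref{eq:qpvi}, and $\tilde g$ an analogous rational function of $g$ and $t$, verifiable by direct substitution.

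Applying Theorem \ref{thm:generic_asymp_infty} to the Bäcklund-transformed solution then immediately produces convergent series at $t = \infty$ for $\tilde f / t$ and $1/\tilde g$, which translate directly into the required expansions of $1/f$ and $q^{3/2} g / t$ with coefficients $\dot{F}_{n,k}(\widehat{\Theta}, \widehat{\sigma}_{01})$ and $\dot{G}_{n,k}(\widehat{\Theta}, \widehat{\sigma}_{01})$. The identification $\widehat{\sigma}_{01} = \tfrac{1}{2} - \sigma_{01}$ emerges from the elliptic equation $\mathcal{E}_\infty(\widehat{\sigma}_{01}; \widehat{\Theta}) = \rho_{12}^{(\widehat{\Theta})}$ governing the hat expansion: substituting $\widehat{\theta}_t = \theta_1$ and $\widehat{\theta}_\infty = \theta_0 + \tfrac{1}{2}$ and applying the quasi-periodicity \eqref{eq:qtheta_identities} for $\vartheta_\tau$, this equation reduces to $\mathcal{E}_\infty(\tfrac{1}{2} - \widehat{\sigma}_{01}; \Theta) = \rho_{12}$, which coincides with the equation for $\sigma_{01}$ of Theorem \ref{thm:generic_asymp_infty} and forces the claimed identification by uniqueness in the domain \eqref{eq:domain0}. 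The product formula for $\widehat{r}_{01}$ in terms of $r_{01}$ then drops out by matching the $k = \pm 1$ leading coefficients of the two expansions, using the explicit expressions for $\dot{F}_{1,\pm 1}$.

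The remaining task is to convert the identification of $r_{01}$ with $\rho_{24}$, supplied by Theorem \ref{thm:generic_asymp_infty}, into an identification with the dual Tyurin ratio $\widetilde{\rho}_{13}$, and similarly to re-express the defining equation of $\sigma_{01}$ through $\widetilde{\rho}_{34}$. For this I would invoke Lemma \ref{lem:duality}, which gives explicit M\"obius-theoretic formulas relating ordinary to dual Tyurin ratios. Substituting the relevant conversion into $s_{01} = -(-t_0)^{2\sigma_{01}} M_{01}(\rho_{24})$ and collapsing the composed M\"obius transformations by means of $\theta_q$-addition identities produces the stated formula \eqref{eq:twist01alt}, together with the consistency identity $M_{01}(\rho_{24}) = \widehat{M}_{01}(\widetilde{\rho}_{13})$.

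I expect the principal obstacle to lie in this last collapsing step: verifying $M_{01}(\rho_{24}) = \widehat{M}_{01}(\widetilde{\rho}_{13})$ as a rational identity on $\widehat{\mathcal{F}}(\Theta, t_0)$ requires a lengthy chain of $\theta_q$-product manipulations based on \eqref{eq:qtheta_identities} and the explicit M\"obius formulas of Lemma \ref{lem:duality}, and is by far the longest computation. A secondary, bookkeeping-level difficulty is specifying the Bäcklund transformation precisely, including the overall scalar normalisations of $\tilde f$ and $\tilde g$, so that the dual Tyurin ratios of the transformed monodromy coincide with the claimed re-indexing of those of $(f, g)$ under the action of $z \mapsto z/t$ on the connection matrix.
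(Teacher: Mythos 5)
Your proposal matches the paper's proof: the symmetry implementing $\Theta\mapsto\widehat{\Theta}$ is exactly $v_{0\infty}$ of Table \ref{table:symmetries} (the same one used for Theorem \ref{thm:generic_asymp_zero_dual}), which sends $(f,g)\mapsto(t/f,\,t/(q^{3/2}g))$, and the paper likewise applies Theorem \ref{thm:generic_asymp_infty} to the transformed solution, reads off $\widehat{\sigma}_{01}=\tfrac12-\sigma_{01}$ from the elliptic equation, and converts $\rho_{24}$ to $\widetilde{\rho}_{13}$ via Lemma \ref{lem:duality} to obtain \eqref{eq:twist01alt}. One small correction to your heuristic: the underlying substitution is the inversion $z\mapsto qt/z$ (which swaps $0\leftrightarrow\infty$ and hence can exchange $\theta_0,\theta_\infty$ after the half-integer gauge), not $z\mapsto z/t$, which would instead give $v_{t1}$; this does not affect the validity of your plan since you defer to the construction already used at $t=0$.
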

\begin{remark}\label{remark:weakeninginftydual}
The asymptotic expansions in Theorem \ref{thm:generic_asymp_infty_dual} remain valid when $\Re \sigma_{01}=\frac{1}{2}$, as long as $\sigma_{01}\not\equiv \tfrac{1}{2}, \tfrac{1}{2}+\tfrac{\tau}{2}$ modulo $\Lambda_\tau$. In this case, the leading order behaviours of $t_m/f(t_m)$ and $g(t_m)$ are oscillatory as $t_m\rightarrow \infty$. 
\end{remark}


\subsection{Asymptotics on lines}\label{subsec:asymplines}
In the following proposition, proven in Section \ref{sec:extract_asymp}, the asymptotics of solutions on the line $\widetilde{\mathcal{L}}_2^\infty$ are given, under the assumption $\Re (\theta_t-\theta_0)<\tfrac{1}{2}$.
\begin{proposition}[Asymptotics on $\widetilde{\mathcal{L}}_2^\infty$] \label{prop:asymptotics_lineLd2i}
Assume $\theta_t-\theta_0\notin \frac{1}{2}\Lambda_\tau$ and $\Re (\theta_t-\theta_0)<\tfrac{1}{2}$.
Take a point $\eta$ in the affine Segre surface $\mathcal{F}(\Theta,t_0)$ which lies on the line $\widetilde{\mathcal{L}}_2^\infty$ and let $(f,g)$ denote the corresponding solution of $q\Psix(\Theta,t_0)$. Denote $t_m=q^mt_0$, $m\in\mathbb{Z}$, then $f$ and $g$ have complete and convergent asymptotic expansions as $t_m\rightarrow 0$ of the form,
\begin{align*}
    f(t_m)&=\sum_{n=1}^\infty\sum_{k=-n}^0 F_{n,k}r_{0t}^k(- t_m)^{n+2k(\theta_t-\theta_0)},\\
    g(t_m)&=\sum_{n=1}^\infty\sum_{k=-n}^0 G_{n,k}r_{0t}^k(- t_m)^{n+2k(\theta_t-\theta_0)},
\end{align*}
where
\begin{align*}
    F_{1,-1}&=-\frac{\bigl(q^{\theta_t}-q^{-\theta_t}\bigr)\bigl(q^{\theta_1+\theta_\infty-\theta_0+\theta_t}-1\bigr)}{\bigl(q^{\theta_1+\theta_\infty+\theta_0-\theta_t}-1\bigr)\bigl(q^{2\theta_0-2\theta_t}-1\bigr)},  & G_{1,-1}&=-q^{-1+\theta_t-\theta_0}F_{1,-1},\\
    F_{1,0}&=\frac{q^{\theta_0}-q^{-\theta_0}}{q^{\theta_t-\theta_0}-q^{\theta_0-\theta_t}}, &
    G_{1,0}&=q^{-1}\frac{q^{\theta_t}-q^{-\theta_t}}{q^{\theta_0-\theta_t}-q^{\theta_t-\theta_0}},
\end{align*}
and the higher order coefficients may be computed recursively via the $q\Psix$ equation.
Here the branches of the complex powers are principal and the integration constant $r_{0t}$ is related to  $\rho_{24}$ via the same formulas as in Theorem \ref{thm:generic_asymp_zero} with $\sigma_{0t}=\theta_t-\theta_0$.
\end{proposition}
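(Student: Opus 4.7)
The plan is to obtain the proposition as a boundary specialisation of Theorem \ref{thm:generic_asymp_zero}, in which the intermediate exponent takes the distinguished value $\sigma_{0t} = \theta_t - \theta_0$, and to show that at this value the generic two-sided expansion automatically truncates to nonpositive $k$.

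The first step is to pin down which value of the intermediate exponent corresponds to the line $\widetilde{\mathcal{L}}_2^\infty$. Being on $\widetilde{\mathcal{L}}_2^\infty$ means $\widetilde{\rho}_2 = \infty$, hence $\widetilde{\rho}_{12} = \widetilde{\rho}_1/\widetilde{\rho}_2 = 0$. Plugging this into the formula from Theorem \ref{thm:generic_asymp_zero_dual},
\begin{equation*}
\frac{\vartheta_\tau(\sigma_{0t}-\theta_t+\theta_0,\sigma_{0t}+\theta_t-\theta_0)}{\vartheta_\tau(\sigma_{0t}-\theta_t-\theta_0,\sigma_{0t}+\theta_t+\theta_0)} = \widetilde{\rho}_{12} = 0,
\end{equation*}
forces $\sigma_{0t} \equiv \pm(\theta_t-\theta_0) \pmod{\Lambda_\tau}$. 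The reflection symmetry $\mathcal{E}_0(-\sigma)=\mathcal{E}_0(\sigma)$ of the elliptic equation for $\sigma_{0t}$, together with the assumption $\Re(\theta_t-\theta_0) < \tfrac{1}{2}$, singles out $\sigma_{0t} = \theta_t-\theta_0$ as the canonical representative; the exponents $n+2k(\theta_t-\theta_0)$ then reproduce exactly the powers of $(-t_m)$ appearing in the proposed expansion.

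The second step is to verify that, at this specialised exponent, the generic series of Theorem \ref{thm:generic_asymp_zero} truncates to $k\le 0$. The explicit formula for $F_{1,+1}$ in Theorem \ref{thm:generic_asymp_zero} contains the factor $(q^{\theta_t-\theta_0-\sigma_{0t}} - 1)$, which vanishes at $\sigma_{0t} = \theta_t-\theta_0$; since $G_{1,+1}=-q^{-1-\sigma_{0t}}F_{1,+1}$ the same holds for $G_{1,+1}$. The values $F_{1,-1}, F_{1,0}, G_{1,-1}, G_{1,0}$ obtained by substituting $\sigma_{0t}=\theta_t-\theta_0$ into the formulas of Theorem \ref{thm:generic_asymp_zero} match those in the statement of the proposition. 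To extend the vanishing to all $k>0$ and $n\ge 1$, I would revisit the RHP construction behind Theorem \ref{thm:generic_asymp_zero}: at $\sigma_{0t} = \theta_t - \theta_0$ the intermediate pant in the $q$-pairs of pants decomposition near $t=0$ becomes reducible, so the corresponding local parametrix is gauge-equivalent to a lower triangular matrix. This triangular structure, pushed through the small-$t$ perturbative solution of the global RHP, forces the vanishing of every coefficient of a positive power of $r_{0t}(-t_m)^{2(\theta_t-\theta_0)}$, which is exactly the truncation claimed.

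Convergence of the truncated series for sufficiently large $m$ follows from the same small-norm argument used in Theorem \ref{thm:generic_asymp_zero}, applied to the reduced (triangular) local model, since the estimates only improve under reduction. The main obstacle I anticipate is in translating the reducibility of the pant at $\sigma_{0t}=\theta_t-\theta_0$ into the precise statement that \emph{all} positive-$k$ coefficients vanish while the remaining series is still genuinely convergent; this requires a careful gauge choice for the local parametrix (so that its off-diagonal element truly vanishes rather than becoming small), and a corresponding reorganisation of the perturbative expansion of the factorised RHP that respects the triangular structure. Once this reduction is in place, the recursive consistency with the $q\Psix$ equations is immediate from the fact that the space of series of the prescribed truncated form is preserved under the $q\Psix$ dynamics.
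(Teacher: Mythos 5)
Your proposal is correct and follows essentially the same route as the paper: fix $\sigma_{0t}=\theta_t-\theta_0$, observe that the internal pant and its parametrix become reducible (in the paper's conventions $C^i$, $\Psi^i_\infty$ and $\Psi^i_0$ are in fact upper-triangular here, not lower-triangular, though this does not change your correct conclusion about which coefficients vanish), and rerun the perturbative solution of the factorised RHP, whose jump matrix then loses all $R_n^+$ terms so that only $k\le 0$ survives and the small-norm argument goes through under the weakened hypothesis $\Re(\theta_t-\theta_0)<\tfrac{1}{2}$. The paper additionally checks that varying $r_{0t}$ sweeps out the entire line $\widetilde{\mathcal{L}}_2^\infty$, but for the statement as phrased your identification of the forced exponent via $\widetilde{\rho}_{12}=0$ suffices.
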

\begin{remark}
If we consider $r_{0t}$ as a free variable, then $\eta=\eta(r_{0t})$ traces out the line $\widetilde{\mathcal{L}}_2^\infty$ in the Segre surface $\widehat{\mathcal{F}}$. The value $r_{0t}=\infty$ corresponds to the intersection point of this line with the line $\widetilde{\mathcal{L}}_1^0$, see Corollary \ref{coro:intersectionanalytic}. The value $r_{0t}=0$ corresponds to the intersection point of the line with the hyperplane section at infinity, $X=\widehat{\mathcal{F}}\setminus \mathcal{F}$, and is thus excluded.
\end{remark}
\begin{remark}
Upon letting $\sigma_{0t}\rightarrow\theta_t-\theta_0$ in the generic asymptotic expansion around $t=0$, given in Theorem \ref{thm:generic_asymp_zero}, the coefficients $F_{n,k},G_{n,k}$, with $k>0$, vanish and the expansion coincides with the series in the above proposition. In other words, the generic asymptotic expansion truncates on the line $\widetilde{\mathcal{L}}_2^\infty$ when $\Re (\theta_t-\theta_0)<\tfrac{1}{2}$.
\end{remark}
\begin{remark}
When $\Re (\theta_t-\theta_0)>\tfrac{1}{2}$, the truncation on the line $\widetilde{\mathcal{L}}_2^\infty$, described by Proposition \ref{prop:asymptotics_lineLd2i}, does not occur. Namely, let $l\in\mathbb{N}^*$ be such that
\begin{equation*}
    -\tfrac{1}{2}<\Re(\theta_t-\theta_0-l)<\tfrac{1}{2},
\end{equation*}
then the generic asymptotic expansion in Theorem \ref{thm:generic_asymp_zero} continues to hold
 with $\sigma_{0t}=\theta_t-\theta_0-l$ and the formula for $r_{0t}$ unchanged, noting that $c_{0t}$ remains a well-defined nonzero scalar. Using Mathematica, we computed higher-order coefficients in the expansion, with $-n\leq k\leq n$ and $n$ going up to $7$. None of them vanish when $\sigma_{0t}$ is set equal $\theta_t-\theta_0-l$, with other parameters generic. Similarly, we did not note any truncation in the dual asymptotic expansion around $t=0$ in Theorem \ref{thm:generic_asymp_zero_dual}, nor in the asymptotic expansions around $t=\infty$ in Theorems \ref{thm:generic_asymp_infty} and \ref{thm:generic_asymp_infty_dual}.
\end{remark}
The following corollary is proven in Section \ref{sec:extract_asymp}.
\begin{corollary}\label{coro:intersectionanalytic}
Suppose $\theta_0-\theta_t\notin \frac{1}{2}\Lambda_\tau$. Then the intersection point of the lines $\widetilde{\mathcal{L}}_1^0$ and $\widetilde{\mathcal{L}}_2^\infty$ is finite, and the solution $(f,g)$ corresponding to
this intersection point is given by a convergent power series expansion around $t_m=0$,
\begin{equation*}
    f(t_m)=\sum_{n=1}^\infty F_{n,0}(- t_m)^{n},\quad
    g(t_m)=\sum_{n=1}^\infty G_{n,0}(- t_m)^{n},
\end{equation*}
where
\begin{equation*}
     F_{1,0}=\frac{q^{\theta_0}-q^{-\theta_0}}{q^{\theta_t-\theta_0}-q^{\theta_0-\theta_t}}, \quad
    G_{1,0}=q^{-1}\frac{q^{\theta_t}-q^{-\theta_t}}{q^{\theta_0-\theta_t}-q^{\theta_t-\theta_0}},
\end{equation*}
and the higher order terms may be computed recursively via the $q\Psix$ equation.
\end{corollary}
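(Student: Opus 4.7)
The plan is to derive the corollary as the limiting case $r_{0t}\to\infty$ of Proposition \ref{prop:asymptotics_lineLd2i}. First I would verify that, under the hypothesis $\theta_0-\theta_t\notin\tfrac{1}{2}\Lambda_\tau$, the intersection point $p:=\widetilde{\mathcal{L}}_1^0\cap \widetilde{\mathcal{L}}_2^\infty$ is a finite point of $\mathcal{F}(\Theta,t_0)$ rather than lying on the hyperplane section $X$ at infinity. Using the explicit linear equations for the two lines, to be established in Section \ref{sec:lines}, one checks that the only non-vanishing divisors required for the resulting linear system to have a unique finite affine solution are precisely those ruled out by $\theta_0-\theta_t\notin\tfrac{1}{2}\Lambda_\tau$; this is strictly weaker than the full list of conditions in Theorem \ref{thm:intersection}, which was needed to ensure finiteness of all $40$ intersection points simultaneously.

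Next, assume $\Re(\theta_t-\theta_0)<\tfrac{1}{2}$, so that $\sigma_{0t}=\theta_t-\theta_0$ sits in the fundamental strip. Proposition \ref{prop:asymptotics_lineLd2i} then yields the convergent representation
\begin{equation*}
f(t_m)=\sum_{n=1}^\infty\sum_{k=-n}^{0}F_{n,k}\,r_{0t}^k\,(-t_m)^{n+2k(\theta_t-\theta_0)},
\end{equation*}
together with an analogous expansion for $g(t_m)$, parametrising $\widetilde{\mathcal{L}}_2^\infty$ by $r_{0t}\in\mathbb{CP}^1$; the remark following that proposition identifies $r_{0t}=\infty$ with the point $p$. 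Every summand with $k<0$ carries a factor $r_{0t}^k$ and formally vanishes in this limit, leaving precisely $\sum_{n\geq 1}F_{n,0}(-t_m)^n$, whose leading term matches the value of $F_{1,0}$ already computed in the proposition, and similarly for $g$. To justify the passage to the limit rigorously, I would revisit the Riemann-Hilbert analysis of Section \ref{sec:extract_asymp} underlying that proposition and extract estimates of the form $|F_{n,k}\,r_{0t}^k\,(-t_m)^{2k(\theta_t-\theta_0)}|\leq C\,A^n|t_m|^n$, uniform in $r_{0t}$ outside a bounded neighbourhood of $0$ and for $m$ sufficiently large; such bounds permit interchanging the limit with the double sum and simultaneously deliver absolute convergence of the resulting single-index power series on the truncated $q$-spiral.

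The complementary range $\Re(\theta_t-\theta_0)>\tfrac{1}{2}$, still allowed by the hypothesis, is handled via the dual parametrisation of $\widetilde{\mathcal{L}}_2^\infty$ obtained from Theorem \ref{thm:generic_asymp_zero_dual}: there $\widehat{\sigma}_{0t}=\tfrac{1}{2}-(\theta_t-\theta_0)$ lies in the required strip, and the same limiting procedure $\widehat{r}_{0t}\to\infty$ again annihilates all $k\neq 0$ terms, with bijectivity of the Riemann-Hilbert map (Proposition \ref{prop:bijective}) forcing the resulting series to agree with the one from the first case. The main obstacle I anticipate is the uniform control of the $k<0$ tail during this limit: the factors $(-t_m)^{n+2k(\theta_t-\theta_0)}$ are typically larger in modulus than $(-t_m)^n$, so only the compensating powers of $r_{0t}$ render the $k<0$ terms negligible, and producing tail estimates that are uniform in both $k$ and $r_{0t}$ is the delicate quantitative step of the argument.
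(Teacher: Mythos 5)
Your opening step is fine: the finiteness of $\widetilde{\mathcal{L}}_1^0\cap\widetilde{\mathcal{L}}_2^\infty$ under $\theta_0-\theta_t\notin\frac{1}{2}\Lambda_\tau$ does follow from item (4)' in the proof of Theorem \ref{thm:intersection}, and the condition there is exactly $q^{-2\theta_0}x_1/x_2=q^{2(\theta_t-\theta_0)}\notin q^{\mathbb{Z}}$. Your limiting strategy $r_{0t}\to\infty$ is also a reasonable idea in the regime $\Re(\theta_t-\theta_0)<\tfrac12$. However, it is not what the paper does, and it has a genuine gap in the complementary regime. The paper never takes a limit along the line: at the intersection point the internal gauge parameter $w^i=r_{0t}^{-1}$ vanishes, so $C^i(z)$ and the internal parametrix $\Psi^i_\infty$ are \emph{diagonal}, hence commute with $(-t_m)^{\sigma_{0t}\sigma_3}$; the jump matrix $J^{\B{m}}$ on $\gamma_{0t}$ is then already a pure power series in $t_m$, the small-norm analysis produces a power series directly, and — crucially — the hypothesis on $\Re(\theta_t-\theta_0)$ can be dropped altogether.

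The concrete gap is your treatment of $\Re(\theta_t-\theta_0)\ge\tfrac12$. The dual expansion of Theorem \ref{thm:generic_asymp_zero_dual} does not truncate on $\widetilde{\mathcal{L}}_2^\infty$: Theorem \ref{thm:line_truncations} shows it truncates only on $\mathcal{L}_3^{0},\mathcal{L}_3^{\infty},\mathcal{L}_4^{0},\mathcal{L}_4^{\infty}$, and the remark following Proposition \ref{prop:asymptotics_lineLd2i} explicitly records that no truncation of the dual expansion was observed on this line. Moreover $\widehat{r}_{0t}\propto 1/r_{0t}$, so $r_{0t}\to\infty$ forces $\widehat{r}_{0t}\to 0$, not $\infty$; applied to a non-truncated double sum this limit does not annihilate the $k\neq 0$ terms, and $\Re\widehat{\sigma}_{0t}=\tfrac12-\Re(\theta_t-\theta_0)$ even leaves the admissible strip once $\Re(\theta_t-\theta_0)\ge 1$. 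The correct patch inside your framework is to switch to the other line: on $\widetilde{\mathcal{L}}_1^0$ the expansion truncates with $\sigma_{0t}=\theta_0-\theta_t$ whenever $\Re(\theta_0-\theta_t)<\tfrac12$, and the intersection point sits at $r_{0t}=0$ there (cf.\ Proposition \ref{prop:asymptotics_line_tilde01}); the two parametrisations together cover all parameter values allowed by the hypothesis.

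A second, smaller issue: even with uniform-in-$r_{0t}$ estimates, term-by-term passage to the limit only shows that the \emph{values of the series} converge; you must still identify that limit with the solution that the Riemann--Hilbert correspondence attaches to the intersection point, which requires continuity of the inverse monodromy map in $r_{0t}$ up to $r_{0t}=\infty$, uniformly along the truncated $q$-spiral. The natural way to prove that is to note that the jump matrix depends continuously on $w^i=r_{0t}^{-1}$ at $w^i=0$ and that the small-norm solution depends continuously on the jump — which is, in effect, the paper's direct argument at the intersection point, rendering the limiting detour superfluous.
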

Laurent series solutions of $q\Psix$ like the one in the above corollary were first derived by Ohyama \cite{ohyamamero}. We discuss them in greater detail in the next subsection.

In Section \ref{sec:symmetries}, we derive the action of six symmetries of $q\Psix$ on monodromy data, which we denote by  $r_0,r_t,r_1,r_\infty, v_{t1}$ and $v_{0\infty}$. Their definition and the results are given in Table \ref{table:symmetries}. We note that each of these symmetries is an involution, and
\begin{equation}\label{eq:conjugation_sym}
   r_1=v_{t1} r_t v_{t1},\quad r_\infty=v_{0\infty}r_0v_{0\infty}.
\end{equation}
These symmetries act transitively on the sixteen lines of the Segre surface, as graphically displayed in Figure \ref{figure:action_on_lines}. This means that we can deduce truncations of asymptotics on each of the sixteen lines, by application of these symmetries to the results in Proposition \ref{prop:asymptotics_lineLd2i}. This yields the following theorem, proven in Section \ref{sec:lines_asymp}.

\begin{theorem}\label{thm:line_truncations}
Let $\eta\in\mathcal{F}(\Theta,t_0)$ and $(f,g)$ denote the corresponding solution of $q\Psix(\Theta,t_0)$.
  The asymptotic expansions around $t=0$, in Theorem \ref{thm:generic_asymp_zero}, take the truncated form
  \begin{equation*}
    f(t_m)=\sum_{n=1}^\infty\sum_{k=-n}^0 F_{n,k}r_{0t}^k(- t_m)^{n+2k\sigma_{0t}},\quad
    g(t_m)=\sum_{n=1}^\infty\sum_{k=-n}^0 G_{n,k}r_{0t}^k(- t_m)^{n+2k\sigma_{0t}},
\end{equation*}
when
\begin{align*}
    \eta\in \widetilde{\mathcal{L}}_1^0\text{ with } \sigma_{0t}&=+\theta_0-\theta_t, &\text{ or}  & &
    \eta\in \widetilde{\mathcal{L}}_1^\infty\text{ with } \sigma_{0t}&=-\theta_0-\theta_t,\text{ or}\\
    \eta\in \widetilde{\mathcal{L}}_2^0\text{ with } \sigma_{0t}&=+\theta_0+\theta_t, &\text{ or} & &
    \eta\in \widetilde{\mathcal{L}}_2^\infty\text{ with } \sigma_{0t}&=-\theta_0+\theta_t,
\end{align*}
under the parameter assumptions $\Re \sigma_{0t}<\tfrac{1}{2}$, $\sigma_{0t}\notin \tfrac{1}{2}\Lambda_\tau$, in each case.\\
  The dual asymptotic expansions around $t=0$, in Theorem \ref{thm:generic_asymp_zero_dual}, take the truncated form
\begin{equation*}
    \frac{t_m}{f(t_m)}=\sum_{n=1}^\infty\sum_{k=-n}^0 \widehat{F}_{n,k}\widehat{r}_{0t}^k(- t_m)^{n+2k\widehat{\sigma}_{0t}},\quad
    \frac{t_m}{q^{\frac{3}{2}}g(t_m)}=\sum_{n=1}^\infty\sum_{k=-n}^0 \widehat{G}_{n,k}\widehat{r}_{0t}^k(- t_m)^{n+2k\widehat{\sigma}_{0t}},
\end{equation*}
when
\begin{align*}
    \eta\in \mathcal{L}_3^0\text{ with } \widehat{\sigma}_{0t}&=+\tfrac{1}{2}-\theta_\infty+\theta_1, &\text{ or}  & &
    \eta\in \mathcal{L}_3^\infty\text{ with } \widehat{\sigma}_{0t}&=-\tfrac{1}{2}+\theta_\infty+\theta_1,\text{ or}\\
    \eta\in \mathcal{L}_4^0\text{ with } \widehat{\sigma}_{0t}&=+\tfrac{1}{2}-\theta_\infty-\theta_1, &\text{ or} & &
    \eta\in \mathcal{L}_4^\infty\text{ with } \widehat{\sigma}_{0t}&=-\tfrac{1}{2}+\theta_\infty-\theta_1,
\end{align*}
under the parameter assumptions $\Re \widehat{\sigma}_{0t}<\tfrac{1}{2}$, $\widehat{\sigma}_{0t}\notin \tfrac{1}{2}\Lambda_\tau$, in each case.\\
  The asymptotic expansions around $t=\infty$, in Theorem \ref{thm:generic_asymp_infty}, take the truncated form
\begin{equation*}
    \frac{f(t_m)}{t_m}=\sum_{n=1}^\infty\sum_{k=-n}^0\dot{F}_{n,k}r_{01}^k(- t_m)^{-(n+2k\sigma_{01})},\quad
    \frac{1}{g(t_m)}=\sum_{n=1}^\infty\sum_{k=-n}^0\dot{G}_{n,k}r_{01}^k(- t_m)^{-(n+2k\sigma_{01})},
\end{equation*}
when
\begin{align*}
    \eta\in \widetilde{\mathcal{L}}_3^0\text{ with } \sigma_{01}&=+\theta_0-\theta_1, &\text{ or}  & &
    \eta\in \widetilde{\mathcal{L}}_3^\infty\text{ with } \sigma_{01}&=-\theta_0-\theta_1,\text{ or}\\
    \eta\in \widetilde{\mathcal{L}}_4^0\text{ with } \sigma_{01}&=+\theta_0+\theta_1, &\text{ or} & &
    \eta\in \widetilde{\mathcal{L}}_4^\infty\text{ with } \sigma_{01}&=-\theta_0-\theta_1,
\end{align*}
under the parameter assumptions $\Re \sigma_{01}<\tfrac{1}{2}$, $\sigma_{01}\notin \tfrac{1}{2}\Lambda_\tau$, in each case.\\
  The dual asymptotic expansions around $t=\infty$, in Theorem \ref{thm:generic_asymp_infty_dual}, take the truncated form
\begin{align*}
    \frac{1}{f(t_m)}&=\sum_{n=1}^\infty\sum_{k=-n}^0 \widehat{\dot{F}}_{n,k}\widehat{r}_{01}^k(- t_m)^{-(n+2k\widehat{\sigma}_{01})},\\
    \frac{q^{\frac{3}{2}}g(t_m)}{t_m}&=\sum_{n=1}^\infty\sum_{k=-n}^0 \widehat{\dot{G}}_{n,k}\widehat{r}_{01}^k(- t_m)^{-(n+2k\widehat{\sigma}_{01})},
\end{align*}
when
\begin{align*}
    \eta\in \mathcal{L}_1^0\text{ with } \widehat{\sigma}_{01}&=-\tfrac{1}{2}+\theta_\infty-\theta_t, &\text{ or}  & &
    \eta\in \mathcal{L}_1^\infty\text{ with } \widehat{\sigma}_{01}&=+\tfrac{1}{2}-\theta_\infty-\theta_t,\text{ or}\\
    \eta\in \mathcal{L}_2^0\text{ with } \widehat{\sigma}_{01}&=-\tfrac{1}{2}+\theta_\infty+\theta_t, &\text{ or} & &
    \eta\in \mathcal{L}_2^\infty\text{ with } \widehat{\sigma}_{01}&=\tfrac{1}{2}-\theta_\infty+\theta_t,
\end{align*}
under the parameter assumptions $\Re \widehat{\sigma}_{01}<\tfrac{1}{2}$, $\widehat{\sigma}_{01}\notin \tfrac{1}{2}\Lambda_\tau$, in each case.\\

\end{theorem}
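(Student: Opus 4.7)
The strategy I would take is to prove this entirely by symmetry reduction, using Proposition \ref{prop:asymptotics_lineLd2i} as the single base case and propagating it around the graph of Figure \ref{figure:action_on_lines} via the six involutions $r_0,r_t,r_1,r_\infty,v_{t1},v_{0\infty}$ tabulated in Table \ref{table:symmetries}. The paragraph just preceding the theorem already tells us that these six involutions act transitively on the sixteen lines, so every line is reachable from $\widetilde{\mathcal{L}}_2^\infty$, and the proposition's truncation (the $k>0$ terms all vanish) is formally preserved by any such symmetry because each symmetry acts as a rational change of $(f,g,\Theta,t)$ together with a compatible action on $(\sigma,r,\eta)$.

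Concretely, I would proceed in four passes, one per group in the theorem. First, keeping the expansion point at $t=0$ and keeping the Tyurin (not dual) parametrisation, compose $\widetilde{\mathcal{L}}_2^\infty$ with $\{\mathrm{id},r_0,r_t,r_0r_t\}$; by Figure \ref{figure:action_on_lines} this sweeps out the four lines $\{\widetilde{\mathcal{L}}_1^0,\widetilde{\mathcal{L}}_1^\infty,\widetilde{\mathcal{L}}_2^0,\widetilde{\mathcal{L}}_2^\infty\}$, and using $r_0:\theta_0\mapsto-\theta_0$, $r_t:\theta_t\mapsto-\theta_t$ the base-case exponent $\sigma_{0t}=\theta_t-\theta_0$ is transported to the four values $\pm\theta_0\pm\theta_t$ as stated. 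Second, applying $v_{0\infty}$ (which, as read off from Table \ref{table:symmetries} and the definition of $\widehat{\Theta}$ in Theorem \ref{thm:generic_asymp_zero_dual}, realises exactly the shift $(\theta_0,\theta_\infty)\mapsto(\theta_\infty-\tfrac12,\theta_0+\tfrac12)$ interchanging primal and dual expansions) translates the first group into the dual-at-zero expansions on $\mathcal{L}_j^\diamond$, $j\in\{3,4\}$, giving Group 2 with the prescribed values of $\widehat{\sigma}_{0t}$. Third, applying $v_{t1}$ (which sends $\theta_t\leftrightarrow\theta_1$ and, via the involution $t\leftrightarrow t^{-1}$ encoded in Table \ref{table:symmetries}, converts a $t\to0$ expansion into a $t\to\infty$ expansion) to Group 1 produces the primal-at-infinity expansions on $\widetilde{\mathcal{L}}_j^\diamond$, $j\in\{3,4\}$, which is Group 3. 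Finally, composing $v_{t1}\circ v_{0\infty}$, or equivalently using the conjugation relations $r_1=v_{t1}r_tv_{t1}$, $r_\infty=v_{0\infty}r_0v_{0\infty}$ in \eqref{eq:conjugation_sym}, delivers Group 4 on the remaining lines $\mathcal{L}_j^\diamond$, $j\in\{1,2\}$.

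In each pass I would check three things. (i) The line-to-line assignment matches Figure \ref{figure:action_on_lines}; this is automatic once one reads off the action of each involution on the Tyurin ratios $\rho_{ij}, \widetilde{\rho}_{ij}$ and hence on the sets defined in \eqref{eq:linesdefi}--\eqref{eq:linesdualdefi}. (ii) Under the symmetry the integration constants $(\sigma,r)$ transform in such a way that the $k$-index range $\{-n,\ldots,n\}$ in Theorems \ref{thm:generic_asymp_zero}--\ref{thm:generic_asymp_infty_dual} is rigidly mapped to itself, so that a one-sided truncation ($k\leq 0$) in the base case is inherited after symmetry. The key observation here is that $r\mapsto r$ or $r\mapsto \text{const}/r$ (depending on the involution), with $\sigma\mapsto\pm\sigma+\text{shift}$; the $k\to-k$ reflection of the second possibility is the reason the theorem always states the truncation as $k\leq 0$ rather than $k\geq 0$. (iii) The explicit leading coefficients $F_{1,-1},F_{1,0}$ etc.\ in Proposition \ref{prop:asymptotics_lineLd2i} transform into the coefficients predicted by Theorems \ref{thm:generic_asymp_zero}--\ref{thm:generic_asymp_infty_dual} evaluated at the specialised $\sigma$ -- a direct substitution using the explicit formulas for $F_{1,\pm1}, G_{1,\pm1}$ given there.

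The one genuinely delicate step, and the main obstacle, is item (ii) for the dual symmetry $v_{0\infty}$: because this symmetry mixes $\theta$-parameters with the half-integer shifts appearing in $\widehat{\Theta}$, the transformation law of the twist parameter $r_{0t}\leftrightarrow\widehat{r}_{0t}$ given in Theorem \ref{thm:generic_asymp_zero_dual} must be verified to agree with the action of $v_{0\infty}$ on the monodromy coordinates in Table \ref{table:symmetries}, including the sign and the factor $F_{1,-1}(\Theta,\sigma_{0t})F_{1,-1}(\widehat{\Theta},\widehat{\sigma}_{0t})$. Once this compatibility is established for one pair (say $\widetilde{\mathcal{L}}_2^\infty \leftrightarrow \mathcal{L}_3^0$), the analogous compatibility for all other $\widetilde{\mathcal{L}}\leftrightarrow\mathcal{L}$ pairs follows by conjugating with the already-verified $r_0,r_t$ actions, so the work is effectively done at a single representative edge of the graph in Figure \ref{figure:action_on_lines}.
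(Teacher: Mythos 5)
Your proposal is correct and follows essentially the same route as the paper: the paper's proof of Theorem \ref{thm:line_truncations} also takes Proposition \ref{prop:asymptotics_lineLd2i} on $\widetilde{\mathcal{L}}_2^\infty$ as the sole base case, applies $r_0$ and $r_t$ to cover the four lines $\widetilde{\mathcal{L}}_{1,2}^{0,\infty}$, and then applies $v_{0\infty}$, $v_{t1}$ and $v_{t1}v_{0\infty}$ to obtain the remaining three groups. The compatibility check you single out as delicate (the transformation law of the twist parameter under $v_{0\infty}$) is already established in the paper's proof of Theorem \ref{thm:generic_asymp_zero_dual}, so it does not need to be re-verified here.
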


\begingroup
\renewcommand{\arraystretch}{1.8}
\begin{table}[t]
\centering
\begin{tabular}{c || c | c | c | c || c | c || c || c | c || c}

& $\theta_0$ & $\theta_t$ & $\theta_1$ & $\theta_\infty$ & $f$ & $g$ & $t$ & $\rho_i$ & $\widetilde{\rho}_i$ &  $\eta_{ij}$\\
 \hline\hline
$r_0$ & $-\theta_0$ & $\theta_t$ & $\theta_1$ & $\theta_\infty$ & $f$ & $g$ & $t$ &  $\rho_i$ & $1/\widetilde{\rho}_i$ &  $\eta_{ij}$\\
 \hline
$r_t$ & $\theta_0$ & $-\theta_t$ & $\theta_1$ & $\theta_\infty$ & $f$ & $g$ & $t$ &  $\rho_{\alpha_t(i)}$ & $\widetilde{\rho}_{\alpha_t(i)}$ &  $\eta_{\alpha_t(i)\alpha_t(j)}$\\
 \hline
 $r_1$ & $\theta_0$ & $\theta_t$ & $-\theta_1$ & $\theta_\infty$ & $f$ & $g$ & $t$ & $\rho_{\alpha_1(i)}$ & $\widetilde{\rho}_{\alpha_1(i)}$ &  $\eta_{{\alpha_1(i)}{\alpha_1(j)}}$\\
  \hline
 $r_{\infty}$ & $\theta_0$ & $\theta_t$ & $\theta_1$ & $1-\theta_\infty$ & $f$ & $g$ & $t$ & $\frac{x_i^2}{t_0}\rho_i^{-1}$ & $\widetilde{\rho}_i$ &  $\eta_{kl}$\\
  \hline\hline
 $v_{t1}$ & $\theta_0$ & $\theta_1$ & $\theta_t$ & $\theta_\infty$ & $\frac{f}{t}$ & $\frac{1}{q\overline{g}}$ & $t^{-1}$ & $\rho_{\alpha_{t1}(i)}$ & $\widetilde{\rho}_{\alpha_{t1}(i)}$ &  $\eta_{{\alpha_{t1}(i)}{\alpha_{t1}(j)}}$\\
   \hline
 $v_{0\infty}$ & $\theta_\infty-\frac{1}{2}$ & $\theta_1$ & $\theta_t$ & $\theta_0+\frac{1}{2}$ & $\frac{t}{f}$ & $q^{-\frac{3}{2}}\frac{t}{g}$ & $t$ & $t_0^{\frac{1}{2}}\frac{x_{\beta(i)}^{-1}}{\widetilde{\rho}_{\beta(i)}}$ & $t_0^{-\frac{1}{2}}\frac{x_{\beta(i)}}{\rho_{\beta(i)}}$ &  \text{-}\\
\end{tabular}
\caption{Some symmetries of $q\Psix$ and their actions on (dual) Tyurin parameters and the $\eta$-coordinates, for any labelling $\{i,j,k,l\}=\{1,2,3,4\}$, where the $\alpha$'s and $\beta$ denote different permutations in $S_4$:\hspace{2mm}
 $\alpha_t=(1\;2),\hspace{2mm}\alpha_1=(3\;4),\hspace{2mm} \alpha_{t1}=(1\;3)\;(2\;4),\hspace{2mm}\beta=(1\;4)\;(2\;3).$ The bottom-right entry is missing as we did not compute it.}
\label{table:symmetries}
\end{table}
\endgroup

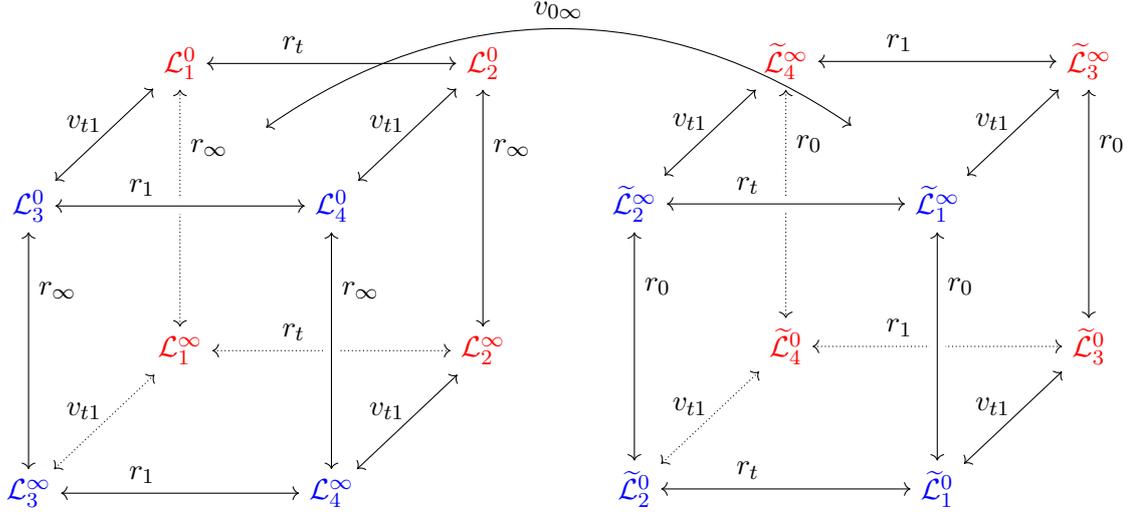
\begin{figure}[t]
\begin{tikzpicture}
  \matrix (m) [matrix of math nodes, row sep=3em,
    column sep=3em]{
    & {\color{red}\mathcal{L}_1^0}& & {\color{red}\mathcal{L}_2^0}  & & \color{red}\widetilde{\mathcal{L}}_4^\infty &  & \color{red}\widetilde{\mathcal{L}}_3^\infty\\
    {\color{blue}\mathcal{L}_3^0} & & {\color{blue}\mathcal{L}_4^0}  &  & \color{blue}\widetilde{\mathcal{L}}_2^\infty &  & \color{blue}\widetilde{\mathcal{L}}_1^\infty &\\
    & {\color{red}\mathcal{L}_1^\infty} & & {\color{red}\mathcal{L}_2^\infty}  &  & \color{red}\widetilde{\mathcal{L}}_4^0 &  & \color{red}\widetilde{\mathcal{L}}_3^0\\
    {\color{blue}\mathcal{L}_3^\infty} & & {\color{blue}\mathcal{L}_4^\infty} &  & {\color{blue}\widetilde{\mathcal{L}}_2^0} &  & \color{blue}\widetilde{\mathcal{L}}_1^0 & \\};
  \path[-stealth]
    (m-1-2) edge[<->] node[above,xshift=-5mm] {$r_t$} (m-1-4) edge[<->] node[above left,xshift=1mm,yshift=-1mm] {$v_{t1}$} (m-2-1)
            edge[<->]   [densely dotted] node[right,yshift=8mm] {$r_\infty$} (m-3-2)
    (m-1-4) edge[<->] node[right,yshift=8mm] {$r_\infty$} (m-3-4) edge[<->] node[above left,xshift=1mm,yshift=-1mm] {$v_{t1}$} (m-2-3)
    (m-2-1) edge[<->] [-,line width=6pt,draw=white] (m-2-3)
            edge[<->] node[above,xshift=-5mm] {$r_1$} (m-2-3) edge[<->] node[right,yshift=8mm] {$r_\infty$} (m-4-1)
    (m-3-2) edge[<->] [densely dotted] node[above,xshift=-5mm] {$r_t$} (m-3-4)
            edge[<->] [densely dotted] node[above left,xshift=1mm,yshift=-1mm] {$v_{t1}$} (m-4-1)
    (m-4-1) edge[<->] node[above,xshift=-5mm] {$r_1$} (m-4-3)
    (m-3-4) edge[<->] node[above left,xshift=1mm,yshift=-1mm] {$v_{t1}$} (m-4-3)
    (m-2-3) edge[<->]  [-,line width=6pt,draw=white] (m-4-3)
            edge[<->] node[right,yshift=8mm] {$r_{\infty}$} (m-4-3);
    \path[-stealth]
    (m-1-6) edge[<->] node[above,xshift=-5mm] {$r_1$} (m-1-8) edge[<->] node[above left,xshift=1mm,yshift=-1mm] {$v_{t1}$} (m-2-5)
            edge[<->]   [densely dotted] node[right,yshift=8mm] {$r_0$} (m-3-6)
    (m-1-8) edge[<->] node[right,yshift=8mm] {$r_0$} (m-3-8) edge[<->] node[above left,xshift=1mm,yshift=-1mm] {$v_{t1}$} (m-2-7)
    (m-2-5) edge[<->] [-,line width=6pt,draw=white] (m-2-7)
            edge[<->] node[above,xshift=-5mm] {$r_t$} (m-2-7) edge[<->] node[right,yshift=8mm] {$r_0$} (m-4-5)
    (m-3-6) edge[<->] [densely dotted] node[above,xshift=-5mm] {$r_1$} (m-3-8)
            edge[<->] [densely dotted] node[above left,xshift=1mm,yshift=-1mm] {$v_{t1}$} (m-4-5)
    (m-4-5) edge[<->] node[above,xshift=-5mm] {$r_t$} (m-4-7)
    (m-3-8) edge[<->] node[above left,xshift=1mm,yshift=-1mm] {$v_{t1}$} (m-4-7)
    (m-2-7) edge[<->]  [-,line width=6pt,draw=white] (m-4-7)
            edge[<->] node[right,yshift=8mm] {$r_{0}$} (m-4-7);

\node (c1) at (m-1-2) {};
\node (c2) at (m-2-3) {};
\node (c12) at ($(c1)!0.5!(c2)$) {};

\node (d1) at (m-1-6) {};
\node (d2) at (m-2-7) {};
\node (d12) at ($(d1)!0.5!(d2)$) {};

\path[<->]  (c12)  edge   [bend left=35]   node[above] {$v_{0\infty}$} (d12);
\end{tikzpicture}
\caption{Illustration of the action of the symmetries $r_0,r_t,r_1,v_{t1}$ and $v_{0\infty}$ on the sixteen lines. Here, all vertical and diagonal edges correspond respectively to application of $r_{0}$ and $v_{t1}$. All horizontal edges on the frontal faces of the cubes correspond to application of $r_1$, all other horizontal edges to application of $r_t$. Finally, the action of $v_{0\infty}$ interchanges the vertices of the two cubes, as realised by simply swapping them.}
\label{figure:action_on_lines}
\end{figure}

\subsection{Asymptotics on intersection points}\label{subsection:intersect}
In corollary \ref{coro:intersectionanalytic}, it is shown that the unique intersection point of the lines $\widetilde{\mathcal{L}}_1^0$ and $\widetilde{\mathcal{L}}_2^\infty$ corresponds to a solution which admits a convergent power series expansion around $t=0$.

We refer to solutions which admit a convergent Laurent series expansion in $t$, around $t=0$ or $t=\infty$, as Kaneko-Ohyama solutions, since Kaneko \cite{kanekomero} classified such solutions for $\Psix$ and Ohyama \cite{ohyamamero} classified such solutions for $q\Psix$.
For generic parameters, there exist precisely eight Kaneko-Ohyama solutions. 
We correspondingly call the eight intersection points in $\widehat{\mathcal{F}}$, coloured in red or blue in Figure \ref{fig:lines_intersection}, Kaneko-Ohyama points.

In the following theorem, proven in Section \ref{sec:asymptotics_intersection}, it is shown that the Kaneko-Ohyama solutions correspond exactly to the Kaneko-Ohyama points under the Riemann-Hilbert correspondence.

\begin{theorem}\label{thm:kanekoohyama}
Under the Riemann-Hilbert correspondence, Kaneko-Ohyama solutions of $q\Psix$ are in one-to-one correspondence with Kaneko-Ohyama points on the affine Segre surface $\mathcal{F}(\Theta,t_0)$, as detailed below, where we use the notation $\diamond_+=\infty$ and $\diamond_-=0$.
\begin{enumerate}
\item If $\theta_0\pm\theta_t\notin \frac{1}{2}\Lambda_\tau$, then the intersection point of $\widetilde{\mathcal{L}}_1^{\diamond_{\pm}}$ and $\widetilde{\mathcal{L}}_2^{\diamond_{\mp}}$ is finite and the corresponding solution $(f,g)$ admits a convergent power series expansion around $t=0$, of the form
\begin{align*}
    f(t_m)&=\sum_{n=1}^\infty F_{n}(- t_m)^{n}, &
    g(t_m)&=\sum_{n=1}^\infty G_{n}(- t_m)^{n},\\
    F_{1}&=-\frac{q^{\theta_0}-q^{-\theta_0}}{q^{\pm\theta_t+\theta_0}-q^{\mp\theta_t-\theta_0}}, &
    G_{1}&=-q^{-1}\frac{q^{\pm\theta_t}-q^{\mp\theta_t}}{q^{\pm\theta_t+\theta_0}-q^{\mp\theta_t-\theta_0}}.
\end{align*}
\item If $\theta_\infty\pm\theta_1\notin \frac{1}{2}\Lambda_\tau$, then the intersection point of $\mathcal{L}_3^{\diamond_{\pm}}$ and $\mathcal{L}_4^{\diamond_{\mp}}$ is finite and the corresponding solution $(f,g)$ admits a convergent power series expansion around $t=0$, of the form
\begin{align*}
    f(t_m)&=\sum_{n=0}^\infty F_{n}(- t_m)^{n}, &
    g(t_m)&=\sum_{n=0}^\infty G_{n}(- t_m)^{n},\\
    F_{0}&=\frac{q^{\theta_\infty\pm\theta_1}-q^{1-\theta_\infty\mp\theta_1}}{q^{\theta_\infty}-q^{1-\theta_\infty}}, &
    G_{0}&=q^{-1}\frac{q^{\theta_\infty\pm\theta_1}-q^{1-\theta_\infty\mp\theta_1}}{q^{\pm\theta_1}-q^{\mp\theta_1}}.
\end{align*}
\item If $\theta_\infty\pm\theta_t\notin \frac{1}{2}\Lambda_\tau$, then the intersection point of $\mathcal{L}_1^{\diamond_{\pm}}$ and $\mathcal{L}_2^{\diamond_{\mp}}$ is finite and the corresponding solution $(f,g)$ admits a convergent Laurent series expansion around $t=\infty$, of the form
\begin{align*}
    f(t_m)&=\sum_{n=-1}^\infty F_{n}(- t_m)^{-n}, &
    g(t_m)&=\sum_{n=0}^\infty G_{n}(- t_m)^{-n},\\
    F_{-1}&=-\frac{q^{\theta_\infty\pm\theta_t}-q^{1-\theta_\infty\mp\theta_t}}{q^{\theta_\infty}-q^{1-\theta_\infty}}, &
    G_{0}&=\frac{q^{\pm\theta_t}-q^{\mp\theta_t}}{q^{\theta_\infty\pm\theta_t}-q^{1-\theta_\infty\mp\theta_t}}.
\end{align*}
\item If $\theta_0\pm\theta_1\notin \frac{1}{2}\Lambda_\tau$, then the intersection point of $\widetilde{\mathcal{L}}_3^{\diamond_{\pm}}$ and $\widetilde{\mathcal{L}}_4^{\diamond_{\mp}}$ is finite and the corresponding solution $(f,g)$ admits a convergent Laurent series expansion around $t=\infty$, of the form
\begin{align*}
    f(t_m)&=\sum_{n=0}^\infty F_{n}(- t_m)^{-n}, &
    g(t_m)&=\sum_{n=-1}^\infty G_{n}(- t_m)^{-n},\\
    F_{0}&=\frac{q^{\theta_0}-q^{-\theta_0}}{q^{\pm\theta_1+\theta_0}-q^{\mp\theta_1-\theta_0}}, &
    G_{-1}&=-q^{-1}\frac{q^{\pm\theta_1+\theta_0}-q^{\mp\theta_1-\theta_0}}{q^{\pm\theta_1}-q^{\mp\theta_1}}.
\end{align*}
\end{enumerate}
Furthermore, in each case, the parameter assumption is not only sufficient but also necessary for the existence of the solution and finiteness of the corresponding Kaneko-Ohyama point.
\end{theorem}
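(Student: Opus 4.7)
The plan is to combine Corollary \ref{coro:intersectionanalytic}, which handles one base intersection, with the line-truncation results in Theorem \ref{thm:line_truncations} and the symmetries tabulated in Table \ref{table:symmetries}, to obtain all eight Kaneko–Ohyama solutions as well as the necessity.

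First I would establish case (1) with the $-$ sign (intersection of $\widetilde{\mathcal{L}}_1^0$ and $\widetilde{\mathcal{L}}_2^\infty$), which is essentially Corollary \ref{coro:intersectionanalytic}. Conceptually, this is a double truncation: by Theorem \ref{thm:line_truncations}, lying on $\widetilde{\mathcal{L}}_1^0$ with $\sigma_{0t}=+\theta_0-\theta_t$ eliminates the modes $k>0$ in the expansion of Theorem \ref{thm:generic_asymp_zero}; in the opposite parametrisation $\sigma_{0t}=-\theta_0+\theta_t$, the same condition eliminates the modes $k<0$. Lying simultaneously on $\widetilde{\mathcal{L}}_2^\infty$ eliminates the modes $k>0$ in the latter parametrisation. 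Only the $k=0$ slice survives, yielding a pure power series in $(-t_m)$, whose leading coefficients come from substituting $\sigma_{0t}=\pm(\theta_0-\theta_t)$ into the formulas for $F_{1,0},G_{1,0}$ in Theorem \ref{thm:generic_asymp_zero}. A short computation verifies that these match the stated $F_1,G_1$.

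The remaining seven Kaneko–Ohyama correspondences would then be produced by applying the symmetries of Table \ref{table:symmetries}, whose action on the sixteen lines is depicted in Figure \ref{figure:action_on_lines}. The involutions $r_0,r_t$ flip sign choices within case (1). The symmetry $v_{0\infty}$, which implements the involution $\beta=(1\;4)(2\;3)$ between $\rho$- and $\widetilde{\rho}$-coordinates, maps case (1) (pure power series at $t=0$) to case (2), since it sends $f\mapsto t/f$, so a solution vanishing linearly at $t=0$ is mapped to one with a finite nonzero value. The symmetry $v_{t1}$ inverts $t$ and implements $\alpha_{t1}=(1\;3)(2\;4)$, carrying case (1) at $t=0$ into case (4) at $t=\infty$; a further application of either $v_{0\infty}$ or $v_{t1}$ produces case (3). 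Within each case, the two sign choices are connected by $r_0$ (or $r_\infty$) together with $r_t$ (or $r_1$). Matching the explicit leading coefficients in cases (2)–(4) against the transformed coefficients of case (1) is then a routine check.

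Finiteness of each intersection point under the stated parameter conditions $\theta_i\pm\theta_j\notin\tfrac{1}{2}\Lambda_\tau$ is immediate from Theorem \ref{thm:intersection}, and the same condition ensures that $\pm(\theta_i\pm\theta_j)$ are inequivalent modulo $\Lambda_\tau$, so that the generic expansions of Theorem \ref{thm:generic_asymp_zero} remain valid and genuinely truncate. The main obstacle will be the necessity claim. To treat it, I would argue contrapositively using the bijectivity of the Riemann–Hilbert correspondence from Theorem \ref{thm:main_affine}: a convergent Laurent series solution of the stated form has monodromy forced to lie on two specific lines; if the corresponding parameter condition fails, then either the two lines coincide (so that the point, and hence the monodromy, is not rigidly determined), their intersection slips into the hyperplane section $X$ at infinity (so the correspondence is broken), or the Mano decomposition becomes logarithmic in the sense of Remark \ref{remark:weakeningzero}, each possibility obstructing a Laurent solution of the claimed shape. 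Combined with Ohyama's classification \cite{ohyamamero}, this yields the announced eight-point correspondence.
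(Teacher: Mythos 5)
Your proposal is correct and follows essentially the same route as the paper: the base case is exactly Corollary \ref{coro:intersectionanalytic}, the remaining seven points and solutions are obtained by transporting it along the symmetry orbit of the Kaneko--Ohyama points (the paper's Figure \ref{figure:kaneko_ohyama_orbit}), and finiteness/necessity come from the individual intersection criteria established in the proof of Theorem \ref{thm:intersection}, transported by the same symmetries. The only cosmetic quibble is that in your necessity sketch the ``two lines coincide'' alternative never actually occurs on the smooth Segre surface -- the relevant failure mode is precisely the intersection point escaping to the hyperplane section at infinity.
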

\begin{remark}
In a recent talk, Ohyama \cite{ohyama_connection_talk} presented a characterisation of the Tyurin parameters associated to one of the solutions of $q\Psix$ analytic at $t=0$, and a corresponding paper is in preparation \cite{ohyama_connection}.
\end{remark}

Note that any of the Kaneko-Ohyama solutions of $q\Psix(\Theta,t_0)$ admits an obvious uniformisation in $t_0$, realised by simply replacing $t_m$ by $t$ in the corresponding Laurent series in Theorem \ref{thm:kanekoohyama}. The corresponding solutions $f=f(t)$ and $g=g(t)$ then form meromorphic functions on $\mathbb{C}^*$.
For example, the solution $(f,g)$ corresponding to the intersection point $\{N_*\}=\widetilde{\mathcal{L}}_1^{\infty}\cap\widetilde{\mathcal{L}}_2^{0}$, is given by the convergent power series
\begin{align*}
    f(t)&=\sum_{n=1}^\infty F_{n}(- t)^{n}, &
    g(t)&=\sum_{n=1}^\infty G_{n}(- t)^{n},\\
    F_{1}&=-\frac{q^{\theta_0}-q^{-\theta_0}}{q^{\theta_t+\theta_0}-q^{-\theta_t-\theta_0}}, &
    G_{1}&=-q^{-1}\frac{q^{+\theta_t}-q^{-\theta_t}}{q^{+\theta_t+\theta_0}-q^{-\theta_t-\theta_0}},
\end{align*}
which defines a meromorphic solution of $q\Psix$ on $\mathbb{C}$. (This is also the original definition in \cite{ohyamamero}.)
Geometrically, this uniformisation is possible since the intersection point $N_*=N_*(t_0)\in\mathbb{C}^6$ is an analytic and $q$-periodic function in $t_0\in\mathbb{C}^*$.

The solutions corresponding to the remaining $40-8=32$ intersection points also admit such a uniformisation, on $\mathbb{C}^*$ minus one $q$-line of points.
Before we get there, we note that the action of the symmetries in Table \ref{table:symmetries}, induces an action on the intersection points of lines on the Segre surface. The intersection points make up three orbits, one orbit consisting of the eight Kaneko-Ohyama points, see Figure \ref{figure:kaneko_ohyama_orbit}. A second given by the orbit of $\widetilde{\mathcal{L}}_1^0\cap\mathcal{L}_1^0$, containing $16$ elements, see Figure \ref{figure:orbit_I}. The third given by the orbit of $\widetilde{\mathcal{L}}_1^0\cap\widetilde{\mathcal{L}}_3^\infty$, also containing $16$ elements, see Figure \ref{figure:orbit_II}.
We will only discuss one solution, corresponding to the representative $\widetilde{\mathcal{L}}_1^0\cap\widetilde{\mathcal{L}}_3^\infty$ of the third orbit. Its properties, under suitable parameter conditions, are described in the following theorem, which is proven in Section \ref{sec:asymptotics_intersection}, 

\begin{theorem}\label{thm:orbitII}
Let $\Theta\in\mathbb{C}^4$ be such that the parameter conditions \eqref{eq:param_assumptions_1} are satisfied and further assume that
\begin{equation*}
    \Re(\theta_0-\theta_t)<\tfrac{1}{2},\quad
    \Re(-\theta_0-\theta_1)<\tfrac{1}{2}.
\end{equation*}
Define
\begin{equation*}
    U:=\mathbb{CP}^1\setminus q^{\mathbb{Z}-2\theta_0+\theta_t-\theta_1},
\end{equation*}
then the solution $(f,g)$ of $q\Psix$, corresponding to the intersection point
\begin{equation*}
\{N_*(t_0)\}=\widetilde{\mathcal{L}}_1^0\cap\widetilde{\mathcal{L}}_3^\infty,
\end{equation*}
admits an asymptotic expansion as $t\rightarrow 0$, of the form
\begin{align*}
    f(t)&=\frac{q^{\theta_0}-q^{-\theta_0}}{q^{\theta_0-\theta_t}-q^{\theta_t-\theta_0}}\;t+t E_0(t)+\sum_{n=2}^\infty\sum_{k=0}^n f_{n,k}t^n E_0(t)^k,\\
    g(t)&=-q^{-1}\frac{q^{\theta_t}-q^{-\theta_t}}{q^{\theta_0-\theta_t}-q^{\theta_t-\theta_0}}\;t-q^{-1+\theta_0-\theta_t}t E_0(t)+\sum_{n=2}^\infty\sum_{k=0}^n g_{n,k}t^n E_0(t)^k,
\end{align*}
uniformly convergent for small enough $t\in K\setminus\{0\}$, for any compact subset $ K\subseteq U$ with $q K=K$, and simultaneously an asymptotic expansion as $t\rightarrow \infty$, of the form
\begin{align*}
    f(t)&=\frac{q^{\theta_0}-q^{-\theta_0}}{q^{\theta_0+\theta_1}-q^{-\theta_0-\theta_1}}+ E_\infty(t)+t\sum_{n=2}^\infty\sum_{k=0}^{n} \dot{f}_{n,k}t^{-n} E_\infty(t)^{k},\\
    \frac{1}{g(t)}&=\left(q^{-1}\frac{q^{\theta_0+\theta_1}-q^{-\theta_0-\theta_1}}{q^{\theta_1}-q^{\theta_1}}\right)^{-1}t^{-1}-q^{1+\theta_0+\theta_1}t^{-1}E_\infty(t)+\sum_{n=2}^\infty\sum_{k=0}^n \dot{g}_{n,k}t^{-n} E_\infty(t)^k,
\end{align*}
uniformly convergent for large enough $t\in K\setminus\{\infty\}$, for any compact subset $ K\subseteq U$ with $q K=K$, where $E_0(t)$ and $E_\infty(t)$ are the meromorphic functions
\begin{equation}\label{eq:e0infdefi}
E_0(t)=c_0\frac{\theta_q(q^{-\theta_t-\theta_1}t)}{\theta_q(q^{-2\theta_0+\theta_t-\theta_1}t)},\qquad
E_\infty(t)=c_\infty\frac{\theta_q(q^{-\theta_t-\theta_1}t^{-1})}{\theta_q(q^{+2\theta_0-\theta_t+\theta_1}t^{-1})},
\end{equation}
with
\begin{align*}
c_0&=-q^{-2\theta_0+\theta_t}\frac{\Gamma_q(1+2\theta_t,+2\theta_0-2\theta_t,1-\theta_0+\theta_t+\theta_1-\theta_\infty,-\theta_0+\theta_t+\theta_1+\theta_\infty)}{\Gamma_q(1-2\theta_0+2\theta_t)^2\Gamma_q(+2\theta_0,+\theta_0-\theta_t+\theta_1+\theta_\infty,1+\theta_0-\theta_t+\theta_1-\theta_\infty)},\\
c_\infty&=-q^{+2\theta_0+\theta_1}\frac{\Gamma_q(1+2\theta_1,-2\theta_0-2\theta_1,1+\theta_0+\theta_t+\theta_1-\theta_\infty,-\theta_0+\theta_t+\theta_1+\theta_\infty)}{\Gamma_q(1+2\theta_0+2\theta_1)^2\Gamma_q(-2\theta_0,-\theta_0-\theta_t-\theta_1+\theta_\infty,1-\theta_0+\theta_t-\theta_1-\theta_\infty)}.
\end{align*}
Here the non-leading coefficients can be determined recursively in each of the two expansions by direct substitution into the $q\Psix$ equation. In particular, $f$ and $g$ are meromorphic functions on $\mathbb{C}^*\setminus q^{\mathbb{Z}-2\theta_0+\theta_t-\theta_1}$.
Furthermore, when $t\in q^{\mathbb{Z}-2\theta_0+\theta_t-\theta_1}$, the intersection point $N_*(t)$ lies in the hyperplane section at infinity of the Segre surface $\widehat{\mathcal{F}}(\Theta,t)$ and the solution is thus ill-defined there.
\end{theorem}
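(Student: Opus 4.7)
The plan is to derive the two asymptotic expansions by applying Theorem~\ref{thm:line_truncations} to each of the two lines meeting at $N_*$, and then to reorganise the resulting series into the stated form using the elliptic factors $E_0$ and $E_\infty$. Under the hypotheses, $N_*$ lies on $\widetilde{\mathcal{L}}_1^0$ with $\sigma_{0t}=\theta_0-\theta_t$ and on $\widetilde{\mathcal{L}}_3^\infty$ with $\sigma_{01}=-\theta_0-\theta_1$, and both values have real part strictly less than $\tfrac{1}{2}$. Theorem~\ref{thm:line_truncations} therefore supplies two truncated expansions of the solution $(f,g)$ corresponding to $N_*$: one at $t=0$ of the form $\sum_{n\geq 1}\sum_{k=-n}^{0}F_{n,k}r_{0t}^{k}(-t_m)^{n+2k\sigma_{0t}}$, and one at $t=\infty$ of analogous shape driven by $r_{01}$ and $\sigma_{01}$.

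A direct computation from $\theta_q(qz)=-z^{-1}\theta_q(z)$ shows that $E_0$ and $E_\infty$ satisfy the quasi-periodicities $E_0(qt)=q^{-2\sigma_{0t}}E_0(t)$ and $E_\infty(qt)=q^{+2\sigma_{01}}E_\infty(t)$. Consequently, along any $q$-spiral $t_m=q^{m}t_0$, the monomial $t_m^{n}E_0(t_m)^{k}$ carries precisely the scaling of $(-t_m)^{n-2k\sigma_{0t}}$, up to a scalar depending only on $t_0$, and analogously for $E_\infty$. Setting $k'=-k$ in the truncated series and absorbing these $t_0$-dependent scalars recasts them into the form $\sum_{n\geq 1,\,0\leq k\leq n}f_{n,k}\,t^{n}E_0(t)^{k}$ at $t=0$ and an analogous form at $t=\infty$. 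The freedom in the prefactors $c_0,c_\infty$ is then used to make this reorganisation globally compatible, and they are uniquely pinned down by requiring that the recast series reproduces the explicit leading two terms displayed in the theorem.

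The explicit determination of $c_0$ is the main technical step. Using $r_{0t}=c_{0t}s_{0t}$ from Theorem~\ref{thm:generic_asymp_zero} together with the M\"obius formula for $s_{0t}$, and evaluating at the specific Tyurin ratios that cut out $\widetilde{\mathcal{L}}_1^0\cap\widetilde{\mathcal{L}}_3^\infty$ (extracted from \eqref{eq:linesdefi}--\eqref{eq:linesdualdefi} and Lemma~\ref{lem:duality}), one obtains a product of theta functions and $q$-Gammas which, after simplification via $\vartheta_\tau(x)=(1-q)(q;q)_\infty^{2}/(\Gamma_q(x)\Gamma_q(1-x))$ and the standard reflection relations for $\Gamma_q$, reduces to the stated formula; the constant $c_\infty$ is obtained in parallel from Theorem~\ref{thm:generic_asymp_infty}. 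With leading terms in place, the higher coefficients $f_{n,k}, g_{n,k}$ are forced by direct substitution of the ansatz into $q\Psix$; absolute convergence of the truncated series on each individual spiral (Theorem~\ref{thm:line_truncations}) combined with the holomorphic dependence of $E_0(t)$, $E_\infty(t)$ and all coefficients on $t\in U$ yields the claimed uniform convergence on any compact $K\subseteq U$ with $qK=K$; and tracing the $\eta$-coordinates of $N_*(t)$ through \eqref{eq:linesdefi}--\eqref{eq:linesdualdefi} shows that on the excluded spiral $q^{\mathbb{Z}-2\theta_0+\theta_t-\theta_1}$ the point $N_*(t)$ lies in the hyperplane section at infinity of $\widehat{\mathcal{F}}(\Theta,t)$, whence the solution is genuinely ill-defined there. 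The principal obstacle is thus the calculation of $c_0$ and $c_\infty$; once these are in hand, every remaining step is a direct application of material established earlier in the paper.
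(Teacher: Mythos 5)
Your overall route is the one the paper takes: apply the truncated line asymptotics at $t=0$ (for $\widetilde{\mathcal{L}}_1^0$, with $\sigma_{0t}=\theta_0-\theta_t$) and at $t=\infty$ (for $\widetilde{\mathcal{L}}_3^\infty$, with $\sigma_{01}=-\theta_0-\theta_1$), use the vanishing of the complementary dual Tyurin ratio ($\widetilde{\rho}_{23}=0$ on $\widetilde{\mathcal{L}}_3^\infty$, fed into the dual expression for the twist parameter, cf. \eqref{eq:twist0talt}) to turn $r_{0t}(t_0)\,(-t_0)^{2(\theta_t-\theta_0)}$ into an explicit ratio of $q$-theta functions, and thereby uniformise the truncated series with the meromorphic multiplier $E_0(t)$; similarly at $t=\infty$. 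The identification of the exponents, the quasi-periodicities of $E_0$ and $E_\infty$, and the role of $c_0,c_\infty$ as the constants pinning down the leading terms are all correct and match the paper's proof.

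The genuine gap is the uniform-convergence step. Theorem \ref{thm:line_truncations} only gives, for each fixed $t_0$, absolute convergence of the double series for $m$ large enough, i.e.\ for the one value $|E_0(t_0)|$ of the twist realised on that spiral, and ``large enough'' depends on $t_0$. Spiral-wise convergence together with holomorphy of $E_0$ does \emph{not} imply uniform convergence of $\sum_{n,k}f_{n,k}\,t^nE_0(t)^k$ on a compact $q$-invariant $K$: what is needed is $\sum_{n,k}|f_{n,k}|\,|t|^n\bigl(C\,|t|^{\Re(2\theta_t-2\theta_0)}\bigr)^k<\infty$ for $C=\sup_{t\in K,\,|q|\le|t|\le 1}|E_0(t)|\,|t|^{-\Re(2\theta_t-2\theta_0)}$, which can be arbitrarily large as $K$ approaches the pole spiral $q^{\mathbb{Z}-2\theta_0+\theta_t-\theta_1}$ of $E_0$. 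The paper closes this by choosing $t_0$ arbitrarily close to that pole spiral, so that $|E_0(t_0)|$ is arbitrarily large while the series still converges for small $t$ on that spiral; hence the dominating series converges for \emph{every} $C>0$, and the bound $|E_0(t)|\le C\,|t|^{\Re(2\theta_t-2\theta_0)}$ on $K$ (established on a fundamental annulus and propagated by the $q$-difference relation for $E_0$) gives uniform absolute convergence by domination. Two smaller points: the coefficients $f_{n,k},g_{n,k}$ depend only on $\Theta$, not on $t$, so ``holomorphic dependence of all coefficients on $t$'' is not the operative mechanism; and meromorphy of $(f,g)$ on all of $\mathbb{C}^*\setminus q^{\mathbb{Z}-2\theta_0+\theta_t-\theta_1}$, rather than just near the critical points, still requires analytic continuation via the $q\Psix$ recursion, which you should state explicitly.
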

It is unclear to the author whether the isolated singularities of the solution in Theorem \ref{thm:orbitII}, in the $q$-line $q^{\mathbb{Z}-2\theta_0+\theta_t-\theta_1}$, are essential singularities or poles. They cannot be apparent, as that would contradict Proposition \ref{prop:bijective}.

In Figure \ref{fig:numericsorbitII}, the function $f$ in Theorem \ref{thm:orbitII} is plotted on the negative real line for some values of the parameters.

\begin{figure}[t]
\centering
\includegraphics[width=0.8\textwidth]{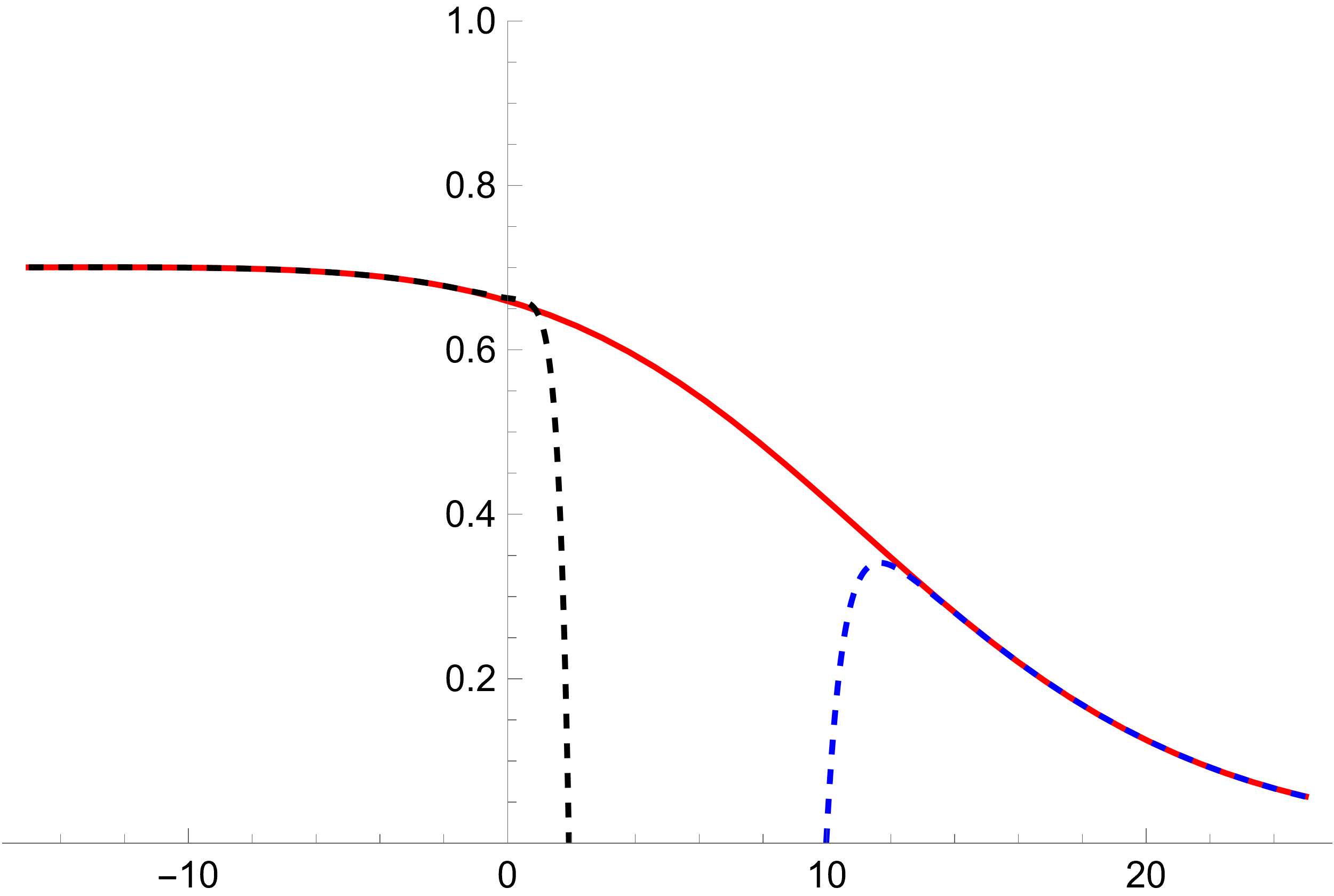}\\
    \caption{\label{fig:numericsorbitII} A plot of the function $f(-q^r)$ in red, with $f=f(t)$ defined in Theorem \ref{thm:orbitII}, where $r$ ranges in the interval $(-15,25)$, with parameter values $\theta_0=\tfrac{1}{3}$, $\theta_t=\tfrac{1}{5}$, $\theta_1=\tfrac{1}{7}$, $\theta_\infty=\tfrac{1}{11}$ and $q=\exp(-\tfrac{1}{4})$. Furthermore, in respectively dashed blue and black the series expansions around $t=0$ and $t=\infty$, in Theorem \ref{thm:orbitII}, are plotted, showing their limited range of convergence. The numerics of the true solution were obtained by analytic continuation via the $q\Psix$ equation of the convergent series solution around $t=\infty$. }
\end{figure}

\subsection{The nonlinear connection problem and uniformisation in $\boldsymbol{t_0}$}\label{subsec:nonlinear_con}
Theorem \ref{thm:generic_asymp_zero} gives the asymptotics of the general (i.e. generic) solution of $q\Psix(\Theta,t_0)$ as $t_m\rightarrow 0$, expressed in terms of the pair of integration constants $\{\sigma_{0t},r_{0t}\}$, which are related to monodromy via the pair of Tyurin ratios $\{\rho_{34},\rho_{24}\}$. On the other hand, Theorem \ref{thm:generic_asymp_infty} gives the asymptotics of the general solution as $t_m\rightarrow \infty$,
expressed in terms of the pair of integration constants $\{\sigma_{01},r_{01}\}$, which are related to monodromy via the pair of Tyurin ratios $\{\rho_{12},\rho_{24}\}$.

The nonlinear connection problem for the general solution of $q\Psix(\Theta,t_0)$ is now reduced to relating the pairs of Tyurin ratios $\{\rho_{34},\rho_{24}\}$ and  $\{\rho_{12},\rho_{24}\}$. The solution to the latter problem is given by the quadratic equation \eqref{eq:quadratic}, which, after division by $\rho_2\rho_4$, becomes
\begin{equation}\label{eq:nonlinear_connection}
T_{13}\rho_{12}\rho_{34}+T_{14}\rho_{12}+T_{23}\rho_{34}+T_{12}\rho_{12}\rho_{24}+T_{34}\rho_{34}/\rho_{24}+T_{24}=0.
\end{equation}

With the nonlinear connection problem solved on $q$-lines, we can study the uniformisation of the asymptotic expansions around $t=0$ and $t=\infty$ in Theorems \ref{thm:generic_asymp_zero} and \ref{thm:generic_asymp_infty} to open domains. Here, `uniformisation' signifies uniform in $t_0$. 
Mano \cite{manoqpvi} constructed two complex parameter families of solutions on open connected domains, invariant on multiplication by $q$, whose asymptotics are described (after a translation) by the series expansions in Theorem \ref{thm:generic_asymp_zero} near $t=0$, with $\sigma_{0t}$ and $r_{0t}$ \textit{true constants}, that is, $\sigma_{0t}$ and $r_{0t}$ are independent of $t_0$ rather than merely $q$-periodic in $t_0$. Here we note that Jimbo et al. \cite{jimbonagoyasakai} derive analogous expansions, with all coefficients given explicitly, near $t=0$ for the corresponding tau-function. Mano \cite{manoqpvi} also constructed two complex parameter families of solutions on open domains as above, whose asymptotics are described by the series expansions in Theorem \ref{thm:generic_asymp_infty} near $t=\infty$ with $\sigma_{01}$ and $r_{01}$ true constants. It is further suggested in \cite{manoqpvi}*{Theorem 5.1},
to consider solutions which admit expansions around $t=0$ and $t=\infty$ with both pairs of integration constants given by true constants simultaneously. 
Somewhat surprisingly, there exist at most two such solutions (counting branches), when the parameters satisfy assumptions \eqref{eq:param_assumptions_1}, as shown in the following corollary.
\begin{corollary}
Assume parameter conditions \eqref{eq:param_assumptions_1} and let $U\subseteq\mathbb{C}^*$ be a (non-empty) open, connected domain, invariant under multiplication by $q$. Suppose a solution admits, simultaneously on $U$, the asymptotic expansion in Theorem \ref{thm:generic_asymp_zero} as $t\rightarrow 0$, with $\{\sigma_{0t},r_{0t}\}$ true constants (i.e. independent of $t_0$), as well as the asymptotic expansion in Theorem \ref{thm:generic_asymp_infty} as $t\rightarrow \infty$, with $\{\sigma_{01},r_{01}\}$ true constants. Then
\begin{equation*}
    \sigma_{0t}=\sigma_{01}=\tfrac{1}{4},
\end{equation*}
and
\begin{equation*}
    s_{0t}=\pm iq^{\frac{1}{2}(1-\theta_t-\theta_1)},\quad s_{01}=\mp iq^{\frac{1}{2}(1-\theta_t-\theta_1)}.
\end{equation*}
\end{corollary}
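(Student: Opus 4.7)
The plan is to translate the rigidity of all four integration constants with respect to $t_0$ into explicit constraints on the Tyurin ratios, then match them against each other. Constancy of $\sigma_{0t}$, together with \eqref{eq:elliptic_equation0} and the fact that the elliptic function $\mathcal{E}_0$ depends only on $\Theta$, forces $\rho_{34}$ to be a $t_0$-independent value $a$. Analogously, constancy of $\sigma_{01}$ gives $\rho_{12}$ a $t_0$-independent value $b$. Substituting $a$ and $b$ into the nonlinear connection formula \eqref{eq:nonlinear_connection} and multiplying through by $\rho_{24}$, one obtains the quadratic
\begin{equation*}
T_{12}(t_0)\,b\,\rho_{24}^2 + \bigl[T_{13}(t_0)\,ab + T_{14}(t_0)\,b + T_{23}(t_0)\,a + T_{24}(t_0)\bigr]\rho_{24} + T_{34}(t_0)\,a = 0,
\end{equation*}
whose coefficients are $q$-theta functions of $t_0$; this gives a first, algebraic determination of $\rho_{24}(t_0)$ (up to the two-valued choice of root).

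Next I would extract two further, transcendental determinations of $\rho_{24}(t_0)$ from the constancy of $s_{0t}=r_{0t}/c_{0t}$ and $s_{01}=r_{01}/c_{01}$. Inverting the Möbius transformations in Theorems \ref{thm:generic_asymp_zero} and \ref{thm:generic_asymp_infty} yields
\begin{equation*}
\rho_{24}(t_0) = M_{0t}^{-1}\bigl(-(-t_0)^{2\sigma_{0t}}s_{0t}\bigr) = M_{01}^{-1}\bigl(-(-t_0)^{-2\sigma_{01}}s_{01}\bigr),
\end{equation*}
where both sides are ratios of $q$-theta expressions in $t_0^{\pm 1}$, rational in the powers $(-t_0)^{\pm 2\sigma}$. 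Equating the two expressions and clearing denominators produces a theta function identity in $t_0$ that must hold on the open $q$-invariant domain $U$, and hence on all of $\mathbb{C}^*$ by analytic continuation.

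The crux of the argument is the analysis of this master identity, and this is the step I expect to be the main obstacle. I would compare divisors and multiplicative quasi-periodicity on $\mathbb{C}^*$: since $\theta_q$ has simple zeros only on $q^\mathbb{Z}$ and transforms via \eqref{eq:qtheta_identities}, matching zero sets and the $q$-shift factors of the two sides forces the combined power $(-t_0)^{2\sigma_{0t}+2\sigma_{01}}$ to reduce to a pure $q$-periodic factor, pinning $\sigma_{0t}+\sigma_{01}$ to one value modulo $\Lambda_\tau$, together with extra linear conditions on the theta arguments. Careful bookkeeping of $q$-shift factors is routine but lengthy, and I expect that combined with the nondegeneracy of $M_{0t},M_{01}$ on generic monodromy this reduces the identity to a finite system of scalar equations determining $s_{0t}$ and $s_{01}$ up to a common sign.

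Once the $\sigma$-constraint is obtained, consistency with the algebraic quadratic from the first paragraph should fix the sign ambiguity in the symmetric way $s_{0t}=-s_{01}$, yielding the claimed square-root values $s_{0t}=\pm i\,q^{(1-\theta_t-\theta_1)/2}$ and $s_{01}=\mp i\,q^{(1-\theta_t-\theta_1)/2}$. To finally collapse $r_{0t}=c_{0t}s_{0t}$ to $\tfrac14$, I would use the $\Gamma_q$-reflection identity
\begin{equation*}
\Gamma_q(x)\Gamma_q(1-x)=\frac{(1-q)(q;q)_\infty^2}{\vartheta_\tau(x)},
\end{equation*}
a direct consequence of the definition of $\vartheta_\tau$, to rewrite the $\Gamma_q$-ratio in $c_{0t}$ as a ratio of $\vartheta_\tau$-values at the specialised $\sigma_{0t}$; the product symmetries \eqref{eq:qtheta_identities} for $\vartheta_\tau$ should then cancel the theta factors against the $q^{(1-\theta_t-\theta_1)/2}$ in $s_{0t}$, producing the constant $\tfrac14$, with the identical computation for $r_{01}=c_{01}s_{01}$ proceeding symmetrically.
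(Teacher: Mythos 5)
Your setup is the right one: equating the two Möbius-inverted expressions $M_{0t}^{-1}\bigl(-s_{0t}(-t_0)^{2\sigma_{0t}}\bigr)=\rho_{24}(t_0)=M_{01}^{-1}\bigl(-s_{01}(-t_0)^{-2\sigma_{01}}\bigr)$ and clearing denominators does produce the master identity (a four-term relation of the schematic form $a\,s_{0t}(-t_0)^{2\sigma_{0t}+2\sigma_{01}}+b\,s_{0t}s_{01}(-t_0)^{2\sigma_{0t}}+c\,(-t_0)^{2\sigma_{01}}+d\,s_{01}=0$ with $\theta_q$-coefficients), and extending it from $U$ to all of $\mathbb{C}^*$ is fine. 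The detour through the quadratic \eqref{eq:nonlinear_connection} in your first paragraph is harmless but unnecessary. The genuine gap is the step you yourself flag as the crux and then defer. A ``divisor and quasi-periodicity comparison on $\mathbb{C}^*$'' is not well-posed here: for generic $\sigma$'s the terms $(-t_0)^{2\sigma}$ are multivalued, so the identity is not an identity of theta-quotients on $\mathbb{C}^*$ and one cannot match zero sets term by term. The correct mechanism is to work on the universal cover and subtract the identity from its image under $t_0\mapsto e^{2\pi i}t_0$: the coefficients $a,b,c,d$ are single-valued, so the difference isolates factors $\bigl(e^{4\pi i(\sigma_{0t}+\sigma_{01})}-1\bigr)$, $\bigl(e^{4\pi i\sigma_{0t}}-1\bigr)$, $\bigl(e^{4\pi i\sigma_{01}}-1\bigr)$ in front of terms with distinct monodromy exponents, and since $2\sigma_{0t},2\sigma_{01}\notin\mathbb{Z}$ this forces \emph{both} $\sigma_{0t}+\sigma_{01}\in\tfrac12\mathbb{Z}$ and $\sigma_{0t}-\sigma_{01}\in\tfrac12\mathbb{Z}$. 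Your sketch only constrains the sum, and ``pinning $\sigma_{0t}+\sigma_{01}$ to one value modulo $\Lambda_\tau$'' is not the conclusion one needs; combined with the normalisation $0\le\Re\sigma_{0t},\Re\sigma_{01}<\tfrac12$ the two half-integrality conditions give $\sigma_{0t}=\sigma_{01}=\tfrac14$, which is the actual content of the first displayed equation of the corollary (the $r$'s there should be read as $\sigma$'s; $r_{0t}=c_{0t}s_{0t}$ is manifestly not $\tfrac14$ for generic $\Theta$).

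This omission propagates. The specialisation $\sigma_{0t}=\sigma_{01}=\tfrac14$ is precisely what makes $d(t_0)=-t_0\,a(t_0)$ and $c(t_0)=-q^{1-\theta_t-\theta_1}b(t_0)$, collapsing the four-term identity to the two scalar equations $s_{0t}+s_{01}=0$ and $s_{0t}s_{01}=q^{1-\theta_t-\theta_1}$, from which $s_{0t}=\pm i\,q^{(1-\theta_t-\theta_1)/2}=-s_{01}$ follows by solving a quadratic --- not by a symmetry or sign-consistency argument as you propose. Your final paragraph, attempting to evaluate $r_{0t}=c_{0t}s_{0t}$ to $\tfrac14$ via the $\Gamma_q$-reflection formula, is chasing the typo: that product genuinely depends on $\Theta$ and no amount of theta-function cancellation will make it a universal constant, so this step would fail if carried out.
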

\begin{proof}
Since $r_{0t}$ and $r_{01}$ are constants, the twist parameters $s_{0t}$ and $s_{01}$ must be constants. Then, from Theorems \ref{thm:generic_asymp_zero} and \ref{thm:generic_asymp_infty}, we obtain two different formulas for the Tyurin ratio $\rho_{24}=\rho_{24}(t_0)$, namely
\begin{equation*}
    M_{0t}^{-1}[-s_{0t}(-t_0)^{+2\sigma_{0t}s_{0t}}]=\rho_{24}(t_0)=
    M_{01}^{-1}[-s_{01}(-t_0)^{-2\sigma_{01}s_{01}}],
\end{equation*}
where the M\"obius transforms $M_{0t}[\cdot]$ and  $M_{01}[\cdot]$ are given explicitly in the theorems.
Equating the formulas on the left and right and some simplification yields
\begin{equation}\label{eq:uniformisationnot}
a(t_0)s_{0t}(-t_0)^{2\sigma_{0t}+2\sigma_{01}}+b(t_0)s_{0t}s_{01}(-t_0)^{2\sigma_{0t}}+c(t_0)(-t_0)^{2\sigma_{01}}+d(t_0)s_{01}=0,
\end{equation}
where
\begin{align*}
    a(t_0)=&q^{-2\sigma_{0t}}\vartheta_\tau(\theta_t+\theta_1-\sigma_{0t}+\sigma_{01})\theta_q(q^{\sigma_{0t}+\sigma_{01}}t_0),\\
    b(t_0)=&q^{-2\sigma_{0t}} \vartheta_\tau(\theta_t+\theta_1-\sigma_{0t}-\sigma_{01})\theta_q(q^{\sigma_{0t}-\sigma_{01}}t_0),\\    c(t_0)=&\vartheta_\tau(\theta_t+\theta_1+\sigma_{0t}+\sigma_{01})\theta_q(q^{-\sigma_{0t}+\sigma_{01}}t_0),\\
     d(t_0)=&\vartheta_\tau(\theta_t+\theta_1+\sigma_{0t}-\sigma_{01})\theta_q(q^{-\sigma_{0t}-\sigma_{01}}t_0),
\end{align*}
which holds identically on $U$, and thus identically on the universal covering of $\mathbb{C}^*$. Since none of the coefficients in \eqref{eq:uniformisationnot} have monodromy as $t_0\mapsto e^{2\pi i}t_0$, taking the difference of the left-hand side of \eqref{eq:uniformisationnot} with itself after shifting $t_0\mapsto e^{2\pi i}t_0$, yields
\begin{gather*}
    a(t_0)(e^{4\pi i(\sigma_{0t}+\sigma_{01})}-1)s_{0t}+
    b(t_0)(e^{4\pi i\sigma_{0t}}-1)s_{0t}s_{01}(-t_0)^{-2\sigma_{01}}\\+
    c(t_0)(e^{4\pi i\sigma_{01}}-1)(-t_0)^{-2\sigma_{0t}}=0.
\end{gather*}
Now, $2\sigma_{0t}\notin\mathbb{Z}$ and $2\sigma_{01}\notin\mathbb{Z}$, as else the asymptotic expansions around $t=0$ or $t=\infty$ respectively are ill-defined. The above equation can thus only be satisfied identically if 
\begin{equation*}
    \sigma_{0t}+\sigma_{01},\sigma_{0t}-\sigma_{01}\in \tfrac{1}{2}\mathbb{Z}.
\end{equation*}
Recalling further that $0\leq \Re\sigma_{0t},\Re\sigma_{0t}<\tfrac{1}{2}$, it follows that $\sigma_{0t}=\sigma_{01}=\tfrac{1}{4}$. This in turn simplifies equation \eqref{eq:uniformisationnot} substantially, as now $d(t_0)=-t_0a(t_0)$ and $c(t_0)=-q^{1-\theta_t-\theta_1}b(t_0)$, so that equation \eqref{eq:uniformisationnot} reduces to the following two algebraic equations among the twist parameters,
\begin{equation*}
    s_{0t}+s_{01}=0,\quad s_{0t}s_{01}=q^{1-\theta_t-\theta_1},
\end{equation*}
from which the corollary follows.
\end{proof}
As the corollary shows, forcing all four integration constants in the asymptotic expansions to be true constants, heavily restricts the solution. It might, however, be interesting to try and construct solutions such that only the exponents $\{\sigma_{0t},\sigma_{01}\}$ are true constants. In other words, $r_{0t}$ and $r_{01}$ are allowed to vary $q$-periodically with $t_0$. This way, the values of the Tyurin ratios $\rho_{34}$ and $\rho_{12}$ are fixed by equations \eqref{eq:elliptic_equation0} and \eqref{eq:elliptic_equationinf}, and $\rho_{24}$ then is required to solve the quadratic \eqref{eq:nonlinear_connection}, uniformly in $t_0$. Starting with some initial point $\rho_{24}(t_0^*)$  that solves the quadratic at $t_0=t_0^*$, this problem has a locally unique solution as long as one stays away from points where the discriminant of \eqref{eq:nonlinear_connection}, with respect to $\rho_{24}$, has a zero or pole of odd multiplicity. If the solution $\rho_{24}$ of the quadratic thus obtained by analytic continuation, satisfies $\rho_{24}(qt_0)=\rho_{24}(t_0)$, then this gives a corresponding solution of $q\Psix$, with generally a finite number of $q$-lines  where the solution has square root singularities, as well as a finite number of $q$-lines on which the solution is ill-defined, as the twist parameters becomes zero or infinite at such points\footnote{Technically the Riemann-Hilbert correspondence is ill-defined at these points as $\eta=\eta(t_0)$ lies in the hyperplane at infinity there.}, and meromorpic elsewhere. The solution in Theorem \ref{thm:orbitII} is an example of this.

\subsection{The continuum limit and announcement of results}\label{subsec:continuumlimit}
By letting $q\rightarrow 1$ in $q\Psix$, under the assumption that $f\rightarrow u$ and $g\rightarrow (u-t)/(u-1)$, a formal computation shows that $q\Psix$ reduces to $\Psix$:
\begin{align*}
u_{tt}=&\left(\frac{1}{u}+\frac{1}{u-1}+\frac{1}{u-t}\right)\frac{u_t^2}{2}-\left(\frac{1}{t}+\frac{1}{t-1}+\frac{1}{u-t}\right)u_t\\
&+\frac{u(u-1)(u-t)}{t^2(t-1)^2}\left(\alpha+\frac{\beta t}{u^2}+\frac{\gamma(t-1)}{(u-1)^2}+\frac{\delta t(t-1)}{(u-t)^2}\right),
\end{align*}
where
\begin{equation*}
    \alpha=\tfrac{1}{2}(2\theta_\infty-1)^2,\quad \beta=-2\theta_0^2,\quad \gamma =2\theta_1^2,\quad \delta=\tfrac{1}{2}-2\theta_t^2.
\end{equation*}

By applying this limit to the leading order behaviours near $t=0$ and $t=\infty$ in Theorems \ref{thm:generic_asymp_zero} and \ref{thm:generic_asymp_infty}, and using that
\begin{equation*}
  \Gamma_q(\beta)\rightarrow \Gamma(\beta),\qquad   \frac{\theta_q(q^\alpha)}{\theta_q(q^\beta)}\rightarrow \frac{\sin \pi \alpha}{\sin \pi \beta},\qquad \frac{\theta_q(q^\alpha t_0)}{\theta_q(q^\beta t_0)}\rightarrow (-t_0)^{\beta-\alpha}\qquad (\arg(-t_0)<\pi),
\end{equation*}
as $q\uparrow 1$, for $\beta\notin\mathbb{Z}$, we formally obtain
\begin{align*}
u(t)&\sim +\frac{(\theta_0+\theta_t+\sigma_{0t}\hspace{0.5mm})(-\theta_0+\theta_t+\sigma_{0t}\hspace{0.5mm})(\theta_\infty+\theta_1+\sigma_{0t}\hspace{0.5mm})}{4(\theta_\infty+\theta_1-\sigma_{0t}\hspace{0.5mm})\sigma_{0t}^2}\frac{1}{r_{0t}}(-t)^{1-2\sigma_{0t}} &(t\rightarrow 0),\\
&\sim -\frac{(\theta_0+\theta_1+\sigma_{01})(-\theta_0+\theta_1+\sigma_{01})(\theta_\infty+\theta_t+\sigma_{01})}{4(\theta_\infty+\theta_t-\sigma_{01})\sigma_{01}^2}\frac{1}{r_{01}}(-1/t)^{-2\sigma_{01}}
&(t\rightarrow \infty),
\end{align*}
where
\begin{equation*}
    r_{0t}=c_{0t}\times s_{0t},\quad r_{01}=c_{01}\times s_{01},
\end{equation*}
with
\begin{align*}
 c_{0t}&=\frac{\Gamma(1-2\sigma_{0t}\hspace{0.5mm})^2}{\Gamma(1+2\sigma_{0t}\hspace{0.5mm})^2}\prod_{\epsilon=\pm1}\frac{ \Gamma(1+\theta_t+\epsilon\,\theta_0+\sigma_{0t}\hspace{0.5mm})  \Gamma(1+\theta_1+\epsilon\,\theta_\infty+\sigma_{0t}\hspace{0.5mm})}{ \Gamma(1+\theta_t+\epsilon\,\theta_0-\sigma_{0t}\hspace{0.5mm})  \Gamma(1+\theta_1+\epsilon\,\theta_\infty-\sigma_{0t}\hspace{0.5mm})},\\  
 c_{01}&=\frac{\Gamma(1-2\sigma_{01})^2}{\Gamma(1+2\sigma_{01})^2}\prod_{\epsilon=\pm1}\frac{ \Gamma(1+\theta_1+\epsilon\,\theta_0+\sigma_{01})  \Gamma(1+\theta_t+\epsilon\,\theta_\infty+\sigma_{01})}{ \Gamma(1+\theta_1+\epsilon\,\theta_0-\sigma_{01})  \Gamma(1+\theta_t+\epsilon\,\theta_\infty-\sigma_{01})},
\end{align*}
and
\begin{align*}
s_{0t}&=\frac{\sin(\pi(-\theta_\infty+\theta_1+\sigma_{0t}))-\dot{\rho}_{24}\sin(\pi(+\theta_\infty+\theta_1+\sigma_{0t}))}{\sin(\pi(+\theta_\infty-\theta_1+\sigma_{0t}))-\dot{\rho}_{24}\sin(\pi(-\theta_\infty-\theta_1+\sigma_{0t}))},\\
    s_{01}&=\frac{\sin(\pi(+\theta_\infty+\theta_t+\sigma_{01}))-\dot{\rho}_{24}\sin(\pi(-\theta_\infty+\theta_t+\sigma_{01}))}{\sin(\pi(-\theta_\infty-\theta_t+\sigma_{01}))-\dot{\rho}_{24}\sin(\pi(+\theta_\infty-\theta_t+\sigma_{01}))}.
\end{align*}
Here $\dot{\rho}_{24}:=(-t_0)^{-2\theta_\infty}\rho_{24}$, equations \eqref{eq:elliptic_equation0} and \eqref{eq:elliptic_equationinf} become
\begin{align*}
    \rho_{34}&=\frac{\sin(\pi(+\theta_\infty-\theta_1+\sigma_{0t}))\sin(\pi(-\theta_\infty+\theta_1+\sigma_{0t}))}{\sin(\pi(+\theta_\infty+\theta_1+\sigma_{0t}))\sin(\pi(-\theta_\infty-\theta_1+\sigma_{0t}))},\\
    \rho_{12}&=\frac{\sin(\pi(+\theta_\infty-\theta_t+\sigma_{01}))\sin(\pi(-\theta_\infty+\theta_t+\sigma_{01}))}{\sin(\pi(+\theta_\infty+\theta_t+\sigma_{01}))\sin(\pi(-\theta_\infty-\theta_t+\sigma_{01}))},
\end{align*}
and equation \eqref{eq:nonlinear_connection} reduces to the following algebraic relation among them,
\begin{equation}\label{eq:nonlinear_connectionlim}
\mathcal{T}_{13}\rho_{12}\rho_{34}+\mathcal{T}_{14}\rho_{12}+\mathcal{T}_{23}\rho_{34}+\mathcal{T}_{0}\left(\rho_{12}\dot{\rho}_{24}+\rho_{34}/\dot{\rho}_{24}\right)+\mathcal{T}_{24}=0,
\end{equation}
where
\begin{align*}
\mathcal{T}_{0}&=+\sin(2\pi \theta_t)\sin(2\pi \theta_1),\\
\mathcal{T}_{13}&=-\sin(\pi(\theta_0+\theta_t+\theta_1+\theta_\infty))\sin(\pi(-\theta_0+\theta_t+\theta_1+\theta_\infty)),\\
\mathcal{T}_{24}&=-\sin(\pi(\theta_0-\theta_t-\theta_1+\theta_\infty))\sin(\pi(-\theta_0-\theta_t-\theta_1+\theta_\infty)),\\
\mathcal{T}_{23}&=+\sin(\pi(\theta_0-\theta_t+\theta_1+\theta_\infty))\sin(\pi(-\theta_0-\theta_t+\theta_1+\theta_\infty)),\\
\mathcal{T}_{14}&=+\sin(\pi(\theta_0+\theta_t-\theta_1+\theta_\infty))\sin(\pi(-\theta_0+\theta_t-\theta_1+\theta_\infty)).
\end{align*}

These formulae are in exact agreement with the solution by Jimbo \cite{jimbo1982}\footnote{We note the following typo in \cite{jimbo1982}: the last `$\mp$' at the bottom of page 1141 should have been a `$\pm$', as pointed out in \cite{guzzettitabulation}, see also \cite{boalchklein}.},  to the nonlinear connection problem between $t=0$ and $t=\infty$ on the domain $\arg(-t)<\pi$ for the general solution of $\Psix$. 
Namely, according to Jimbo's formulae in \cite{jimbo1982}, modified appropriately for the domain $\arg(-t)<\pi$ by application of an element of the braid group \cites{dubmazzocco,lisovyyalgebraic}, the triple of trace coordinates $\{X_{0t},X_{t1},X_{01}\}$,
\begin{equation*}
  X_{0t}:=2\cos(2\pi \sigma_{0t}),\quad X_{01}:=2\cos(2\pi \sigma_{01}),
\end{equation*}
and $X_{t1}$ defined via the twist parameter $s_{0t}$ by
\begin{align*}
    s_{0t}&=\frac{a+b\; e^{2\pi i\sigma_{0t}}}{d},\\
    a&=\tfrac{1}{2}i \sin(2\pi \sigma_{0t})X_{t1}+\cos(2\pi\theta_0)\cos(2\pi\theta_1)+\cos(2\pi\theta_\infty)\cos(2\pi\theta_t),\\
    b&=\tfrac{1}{2}i \sin(2\pi \sigma_{0t})X_{01}+\cos(2\pi\theta_t)\cos(2\pi\theta_1)+\cos(2\pi\theta_\infty)\cos(2\pi\theta_0),\\
    d&=4\prod_{\epsilon=\pm 1}\sin[\pi(\theta_0+\epsilon(\theta_t-\sigma_{0t}))]\sin[\pi(\theta_\infty+\epsilon(\theta_1-\sigma_{0t}))],
\end{align*}
satisfy the cubic equation
\begin{equation}\label{eq:pvicubic}
X_{0t}X_{01}X_{t1}
+X_{0t}^2+X_{01}^2+X_{t1}^2
-\omega_1 X_{0t}-\omega_2X_{01}-\omega_3 X_{t1}+\omega_4=0,
\end{equation}
where
\begin{gather*}
    \omega_1=v_0v_t+v_1v_\infty,\quad
    \omega_2=v_0v_1+v_tv_\infty,\quad
    \omega_3=v_0v_\infty+v_t v_1,\\
    \omega_4=v_0^2+v_t^2+v_1^2+v_\infty^2+v_0v_tv_1v_\infty-4,    
\end{gather*}
with $v_k:=2\cos(2\pi \theta_k)$ for $k=0,t,1,\infty$.

The explicit relations between the triples $\{X_{0t},X_{t1},X_{01}\}$ and $\{\sigma_{0t},s_{0t},\sigma_{01}\}$ on the one hand, and the explicit relations between $\{\sigma_{0t},s_{0t},\sigma_{01}\}$ and $\{\rho_{34},\dot{\rho}_{24},\rho_{12}\}$ on the other hand, give
\begin{equation*}
\rho_{34}=\frac{X_{0t}-2\cos(2\pi(\theta_\infty-\theta_1))}{X_{0t}-2\cos(2\pi(\theta_\infty+\theta_1))},\qquad
\rho_{12}=\frac{X_{01}-2\cos(2\pi(\theta_\infty-\theta_t))}{X_{01}-2\cos(2\pi(\theta_\infty+\theta_t))},
\end{equation*}
and
\begin{equation*}
 \dot{\rho}_{24}=\frac{2X_{t1}\cos(\pi(\theta_\infty-\theta_1))-R_-(X_{0t},X_{01})}{2X_{t1}\cos(\pi(\theta_\infty+\theta_1))-R_+(X_{0t},X_{01})}, 
\end{equation*}
with $R_{\pm}=R_{\pm}(X_{0t},X_{01})$ given by
\begin{equation*}
R_{\pm}=r_{\pm}-e^{-\pi i(\theta_1\pm\theta_\infty)}X_{0t}X_{01}
-2i X_{0t}\sin(\pi(\theta_1+2\theta_t\mp\theta_\infty))+2iX_{01}\sin(\pi(\theta_1\pm\theta_\infty)),
\end{equation*}
where the constant term reads
\begin{gather*}
    r_{\pm}=2e^{-\pi i(\theta_1\pm \theta_\infty)}\cos(2\pi(\theta_t\mp\theta_\infty))+2i e^{2\pi i(\theta_1\pm \theta_\infty)}
\sin(\pi(\theta_1+2\theta_t\mp\theta_\infty))\\
+2\cos(\pi(\theta_1\pm\theta_\infty))\left(e^{2\pi i(\theta_t-\theta_1)}+2e^{\mp 2\pi i\theta_\infty}\cos(2\pi\theta_0))\right).
\end{gather*}

A direct computation shows that this bi-rational relation between the triple of Tyurin parameters $\{\rho_{34},\dot{\rho}_{24},\rho_{12}\}$, and the triple of trace coordinates $\{X_{0t},X_{t1},X_{01}\}$, maps a generic point on the Jimbo-Fricke cubic \eqref{eq:pvicubic} to a point on \eqref{eq:nonlinear_connectionlim} and vice versa, showing the exact match between Jimbo's nonlinear connection formulae and ours for the general solution.

Furthermore, we can similarly take the formal limit of the Segre surface equations \eqref{eq:eta_equations}, yielding
\begin{align*}
&\eta_{12}+\eta_{13}+\eta_{14}+\eta_{23}+\eta_{24}+\eta_{34}=0,\\
&\eta_{12}+\alpha_{13}\eta_{13}+\alpha_{14}\eta_{14}+\alpha_{23}\eta_{23}+\alpha_{24}\eta_{24}+\eta_{34}+\alpha_\infty=0,\\
    &\eta_{13}\eta_{24}-\beta_1\eta_{12}\eta_{34}=0,\\ 
    &\eta_{14}\eta_{23}-\beta_2\eta_{12}\eta_{34}=0,
\end{align*}
where $\alpha_\infty=-\sin(\pi\theta_0)^{-2}$ and the other coefficients are given by
\begin{align*}
    &\alpha_{13}=\prod_{\epsilon=\pm 1} \frac{\sin(\pi\left(\theta_t+\theta_1+\theta_\infty\right))}{\sin(\pi\left(\epsilon \,\theta_0+\theta_t+\theta_1+\theta_\infty\right))}, &
    &\alpha_{24}=\prod_{\epsilon=\pm 1}\frac{\sin(\pi\left(-\theta_t-\theta_1+\theta_\infty\right))}{\sin(\pi\left(\epsilon \,\theta_0-\theta_t-\theta_1+\theta_\infty\right))},\\
    &\alpha_{14}=\prod_{\epsilon=\pm 1}\frac{\sin(\pi\left(\theta_t-\theta_1+\theta_\infty\right))}{\sin(\pi\left(\epsilon \,\theta_0+\theta_t-\theta_1+\theta_\infty\right))}, &
    &\alpha_{23}=\prod_{\epsilon=\pm 1}\frac{\sin(\pi\left(-\theta_t+\theta_1+\theta_\infty\right))}{\sin(\pi\left(\epsilon \,\theta_0-\theta_t+\theta_1+\theta_\infty\right))},
\end{align*}
and
\begin{align*}
    \beta_1&=\prod_{\epsilon=\pm 1} \frac{\sin(\pi\left(\epsilon\, \theta_0-\theta_t-\theta_1+\theta_\infty\right))\sin(\pi\left(\epsilon\, \theta_0+\theta_t+\theta_1+\theta_\infty\right))}{\sin(2\pi\theta_t)\sin(2\pi\theta_1)},\\
    \beta_2&=
    \prod_{\epsilon=\pm 1} \frac{\sin(\pi\left(\epsilon \, \theta_0-\theta_t+\theta_1+\theta_\infty\right))\sin(\pi\left(\epsilon \, \theta_0+\theta_t-\theta_1+\theta_\infty\right))}{\sin(2\pi\theta_t)\sin(2\pi\theta_1)}.
\end{align*}
The defining equations of the $\eta$-coordinates, equations \eqref{eq:eta_defi}, also have well-defined formal limits as $q\rightarrow 1$, so that the bi-rational equivalence between the $\eta$-coordinates and the triple $\{\rho_{34},\dot{\rho}_{24},\rho_{12}\}$, gives an explicit bi-rational mapping between the limiting Segre surface and the cubic surface \eqref{eq:pvicubic}. In a forthcoming work \cite{joshimazzoccoroffelsen}, this mapping is characterised algebraically and shown to map the affine part of the limiting Segre surface biholomorphically onto the affine part of the cubic surface. 

Furthermore, the author \cite{roffelsencontinuum} is currently preparing a manuscript in which the formal limit described above is made rigorous on the level of solutions by proving convergence of the solutions of associated Riemann-Hilbert problems as $q\uparrow 1$.


\section{Geometry of the Segre surface}\label{sec:geometrysegre}
In this section we prove all the statements made on the Segre surface $\widehat{\mathcal{F}}$ in Section \ref{subsec:geometry}. In Subsection \ref{sec:smoothness} we show that the Segre surface smooth, proving Proposition \ref{lemma:smooth}. Then in Subsection \ref{sec:tyurinquotients} we study the (dual) Tyurin ratios, in particular proving Lemmas \ref{lem:duality} and \ref{lemma:global_mero}. In Subsection \ref{sec:lines} we derive explicit expressions for the sixteen lines on the Segre surface and in Subsection \ref{sec:intersectionpoints} we study their points of intersection, proving in particular Theorem \ref{thm:intersection}.

\subsection{Smoothness of the Segre surface}\label{sec:smoothness}
In this section, we show that the Segre surface $\widehat{\mathcal{F}}$ is smooth. It follows from \cite{roffelsenjoshiqpvi}*{Theorem 2.17}, that there can be no singularities in the affine part $\mathcal{F}$ under the parameter assumptions \eqref{eq:param_assumptions_1} and \eqref{eq:param_assumptions_2}, but this leaves open the possibility for a singularity in the hyperplane section at infinity. In what follows, we  give a direct proof of the smoothness of $\widehat{\mathcal{F}}$ by elementary means.

\begin{proof}[Proof of Proposition \ref{lemma:smooth}]
Consider the Jacobian matrix corresponding to equations \eqref{segre:projective},
\begin{equation*}
    J=\begin{pmatrix}
        1 & 1 & 1 & 1 & 1 & 1 & 0\\
        a_{12} & a_{13} & a_{14} & a_{23} & a_{24} & a_{34} & a_\infty\\
        -b_1 N_{34} & N_{24} & 0 & 0 & N_{13} & -b_1 N_{12} & 0\\
        -b_2 N_{34} & 0 & N_{23} & N_{14} & 0 & -b_2 N_{12} & 0\\
    \end{pmatrix}.
\end{equation*}
To prove the lemma, it is enough to show that there exists no point $N\in \widehat{\mathcal{F}}$ such that $J$ has row-rank less than 4. By subtracting $a_{13}$ times the first row of $J$ from the second row and similarly subtracting $N_{24}$ times the first row from the third, we find that $J$ is row-equivalent to
\begin{equation*}
\begin{pmatrix}
        1 & 1 & 1 & 1 & 1 & 1 & 0\\
        a_{12}-a_{13} & 0 & a_{14}-a_{13} & a_{23}-a_{13} & a_{24}-a_{13} & a_{34}-a_{13} & a_\infty\\
        -b_1 N_{34}-N_{24} & 0 & -N_{24} & -N_{24} & N_{13}-N_{24} & -b_1 N_{12}-N_{24} & 0\\
        -b_2 N_{34} & 0 & N_{23} & N_{14} & 0 & -b_2 N_{12} & 0\\
    \end{pmatrix}.
\end{equation*}
Since $a_\infty\neq 0$, it follows that $J$ has rank less than four if and only if the following two row vectors are linearly dependent,
\begin{equation*}
\begin{array}{lcccccr}
    v_1=( & -b_1 N_{34}-N_{24}   & -N_{24} & -N_{24} & N_{13}-N_{24} & -b_1 N_{12}-N_{24} & ), \\
    v_2=( & -b_2 N_{34}  & N_{23} & N_{14} & 0 & -b_2 N_{12} & ).
\end{array}
\end{equation*}

Suppose that $N\in \widehat{\mathcal{F}}$ is such that these two vectors are linearly dependent. We proceed to derive a contradiction, yielding the lemma.

Firstly, we note that neither $v_1$ nor $v_2$ can be zero. For, if $v_1=0$, then
\begin{equation*}
    N_{12}=N_{13}=N_{24}=N_{34}=0.
\end{equation*}
Equations \eqref{eq:eta_equationsaproj}, \eqref{eq:eta_equationsbproj} and  \eqref{eq:eta_equationsdproj} then respectively reduce to
\begin{align*}
 &N_{14}+N_{23}=0,\\
&a_{14}N_{14}+a_{23}N_{23}+a_{\infty}N_\infty=0,\\
&N_{14}N_{23}=0,   
\end{align*}
showing that also $N_{14}=N_{23}=N_\infty=0$ and we have arrived at a contradiction. Similarly, it follows that $v_2$ cannot equal zero.

Therefore, for $v_1$ and $v_2$ to be linearly dependent, there must exist a $\lambda\in\mathbb{C}^*$ such that $v_2=\lambda v_1$. We may now solve for some of the variables in terms of $N_{34}$, giving
\begin{align*}
N_{13}&=+(b_2-b_1\lambda) N_{34}\lambda^{-1},\\
N_{14}&=-(b_2-b_1\lambda) N_{34},\\
N_{23}&=-(b_2-b_1\lambda) N_{34},\\
N_{24}&=+(b_2-b_1\lambda) N_{34}\lambda^{-1},
\end{align*}
and the equation $v_2-\lambda v_1=0$ then reduces to
\begin{equation*}
    (b_2-b_1\lambda)(N_{12}-N_{34})=0.
\end{equation*}
If $b_2-b_1\lambda=0$, then we deduce that all the entries of $N$ are zero, as before, yielding a contradiction. Therefore, we must have $N_{12}=N_{34}$.

If $N_{34}=0$, then again we see that all the entries of $N$ are zero, yielding a contradiction. So $N_{34}\neq 0$ and Equations \eqref{eq:eta_equationsaproj}, \eqref{eq:eta_equationscproj} and  \eqref{eq:eta_equationsdproj} now reduce to the following quadratic equations respectively,
\begin{align*}
    &b_1\lambda^2-(b_1+b_2-1)\lambda+b_2=0,\\
    &b_1(b_1-1)\lambda^2-2b_1b_2\lambda+b_2^2=0,\\
    &b_1\lambda^2-2b_1b_2\lambda+b_2(b_2-1)=0.
\end{align*}
These three equations admit a common solution $\lambda$ if and only if the following common factor, in their mutual resultants, vanishes,
\begin{equation}\label{eq:b1b2vanishing}
 (b_1-b_2)^2-2(b_1+b_2)+1=0.
\end{equation}

By substituting the defining equations for $b_1$ and $b_2$, we see that the left-hand side equals
\begin{equation*}
 (b_1-b_2)^2-2(b_1+b_2)+1=\frac{|H|}{T_{12}^2T_{34}^2},
\end{equation*}
where $H$ is the Hessian matrix of $T(\rho)$,
\begin{equation}\label{eq:hessian}
    H=\begin{pmatrix}
    0 & T_{12} & T_{13} & T_{14}\\
    T_{12} &0  & T_{23} & T_{24}\\
    T_{13} & T_{23} & 0 & T_{34}\\
    T_{14} & T_{24} & T_{34} & 0\\
    \end{pmatrix}.
\end{equation}
The determinant of this Hessian was computed in \cite{roffelsenjoshiqpvi}*{Eq. (5.27)} and reads
\begin{equation*}
    |H|=[q^{-\theta_0+\theta_t+\theta_1+\theta_\infty}\vartheta_\tau(2\theta_0,2\theta_t,2\theta_1,2\theta_\infty)\theta_q(q^{\theta_t+\theta_1}t_0,q^{\theta_t-\theta_1}t_0,q^{-\theta_t+\theta_1}t_0,q^{-\theta_t-\theta_1}t_0)]^2.
\end{equation*}
In particular, it cannot vanish due to the parameter assumptions \eqref{eq:param_assumptions_1} and \eqref{eq:param_assumptions_2}. Thus equation \eqref{eq:b1b2vanishing} cannot hold true and we have arrived at a contradiction. The proposition follows.
\end{proof}

\subsection{Ratios of Tyurin parameters}\label{sec:tyurinquotients}
In this section, we study the (dual) Tyurin ratios defined in equation \eqref{eq:def_rhoij}, and prove Lemma \ref{lem:duality}, Lemma \ref{lemma:global_mero} and Remark \ref{rem:duality}. In fact, we will prove them in a slightly more symmetric setting.
Namely, we consider any $2\times 2$ matrix function $C(z)$, which is analytic on $\mathbb{C}^*$,  satisfies a $q$-difference equation
\begin{equation*}
        C(qz)=\frac{t_0}{z^2}q^{\theta_0\sigma_3}C(z)q^{\theta_\infty\sigma_3},
    \end{equation*}
and whose determinant equals
    \begin{equation*}
    |C(z)|=c\, \theta_q\left(z/x_1,z/x_2,z/x_3,z/x_4\right),
    \end{equation*}
for some nonzero scalar $c\in\mathbb{C}^*$, where $x_k$, $1\leq k \leq 4$, are nonzero complex numbers that satisfy $x_1x_2x_3x_4=t_0^2$, and
\begin{equation}\label{eq:parameterassumptionsx}
     \frac{x_j}{x_k}\notin q^{\mathbb{Z}},\qquad \frac{q^{\epsilon_0 \theta_0+\epsilon_\infty \theta_\infty}t_0}{x_j x_k}\notin q^{\mathbb{Z}}\qquad (1\leq j<k\leq 4,\epsilon_{0,\infty}\in\{\pm 1\}).
\end{equation}
The first conditions in the above equation, plus $2\theta_0,2\theta_\infty\notin \Lambda_\tau$, are the non-resonance conditions, and the second conditions are the non-splitting conditions, in \cites{ohyamaramissualoy,roffelsenjoshiqpvi}. Under the parameter identification
\begin{equation}\label{eq:identification}
    x_1=q^{+\theta_t}t_0,\quad x_2=q^{-\theta_t}t_0,\quad x_3=q^{+\theta_1},\quad x_4=q^{-\theta_1},
\end{equation}
these conditions coincide with the union of the conditions in equations \eqref{eq:param_assumptions_1} and \eqref{eq:param_assumptions_2}.

We define Tyurin and dual Tyurin parameters,
\begin{equation*}
    \rho_j=\pi[C(x_j)],\quad \widetilde{\rho}_j=\pi[C(x_j)^T]\qquad (1\leq j\leq 4),
\end{equation*}
as well as their ratios,
\begin{equation*}
\rho_{jk}=\frac{\rho_j}{\rho_k},\quad \widetilde{\rho}_{jk}=\frac{\widetilde{\rho}_j}{\widetilde{\rho}_k}\qquad (1\leq j,k\leq 4,j\neq k),
\end{equation*}
which are always well-defined elements of $\mathbb{CP}^1$, due to the non-splitting conditions.

The Tyurin and dual Tyurin parameters satisfy the homogeneous quadratic equations,
\begin{equation}\label{eq:tyurinquadratics}
    \sum_{1\leq i<j\leq 4}T_{ij}(\theta_0,\theta_\infty) \rho_i\rho_j=0,\quad
    \sum_{1\leq i<j\leq 4}T_{ij}(\theta_\infty,\theta_0) \widetilde{\rho}_i \widetilde{\rho}_j=0,
\end{equation}
where, for any labeling $\{i,j,k,l\}=\{1,2,3,4\}$,
\begin{equation*}
    T_{ij}(\theta_0,\theta_\infty)=\frac{x_2x_4}{t_0q^{\theta_0+\theta_\infty}}x_i x_l\theta_q\left(\frac{x_i}{x_j},\frac{x_k}{x_l},\frac{x_ix_j}{t_0q^{+\theta_0-\theta_\infty}},\frac{x_kx_l}{t_0q^{+\theta_0+\theta_\infty}}\right).
\end{equation*}
We will now prove Lemma \ref{lem:duality}, which relates the Tyurin and dual Tyurin ratios bi-rationally, in this symmetric setting.
\begin{proof}[Proof of Lemma \ref{lem:duality}]
We derive an explicit formula for $\widetilde{\rho}_{12}$ in terms of $\rho_{13}$ and $\rho_{23}$, in the symmetric setting described in the first part of this subsection. This will yield formula \eqref{eq:rhoduality1} in Lemma \ref{lem:duality}, with $(i,j,k,l)=(1,2,3,4)$, and the others then follow by symmetry.

We start with the fact that the space of scalar analytic functions $c(z)$, satisfying
\begin{equation*}
    c(qz)=\frac{t_0}{z^2}q^{\epsilon_0 \theta_0+\epsilon_\infty \theta_\infty}c(z),
\end{equation*}
is a vector space of dimension two, with a basis given by
\begin{equation*}
    c_1(z)=\theta_q(z/x_1)\theta_q\left(\frac{x_1 z}{t_0 q^{\epsilon_0 \theta_0+\epsilon_\infty \theta_\infty}}\right),\quad
    c_2(z)=\theta_q(z/x_1)\theta_q\left(\frac{x_2 z}{t_0 q^{\epsilon_0 \theta_0+\epsilon_\infty \theta_\infty}}\right),
\end{equation*}
for any $\epsilon_{0,\infty}\in\{\pm 1\}$.
It follows that we may write
\begin{align*}
C(z)=&\theta_q(z/x_1)\begin{pmatrix} u_{11} \theta_q\big(\frac{x_1 z}{t_0 q^{+\theta_0+ \theta_\infty}}\big) & u_{12} \theta_q\big(\frac{x_1 z}{t_0 q^{+\theta_0- \theta_\infty}}\big)   \\
u_{21} \theta_q\big(\frac{x_1 z}{t_0 q^{-\theta_0+ \theta_\infty}}\big) & u_{22} \theta_q\big(\frac{x_1 z}{t_0 q^{-\theta_0- \theta_\infty}}\big) 
\end{pmatrix}+\\
&\theta_q(z/x_2)\begin{pmatrix} v_{11} \theta_q\big(\frac{x_2 z}{t_0 q^{+\theta_0+ \theta_\infty}}\big) & v_{12} \theta_q\big(\frac{x_2 z}{t_0 q^{+\theta_0- \theta_\infty}}\big)   \\
v_{21} \theta_q\big(\frac{x_2 z}{t_0 q^{-\theta_0+ \theta_\infty}}\big) & v_{22} \theta_q\big(\frac{x_2 z}{t_0 q^{-\theta_0- \theta_\infty}}\big) 
\end{pmatrix},
\end{align*}
for some coefficients $u_{ij}, v_{ij}$, $1\leq i,j\leq 2$, which cannot all be zero.

Next, we are going to derive homogeneous linear relations for the $u_{ij}, v_{ij}$, $1\leq i,j\leq 2$, with coefficients involving the Tyurin parameters and dual Tyurin parameters.
Firstly, the equations
\begin{equation*}
    \pi[C(x_2)]=\rho_2,\quad \pi[C(x_2)^T]=\widetilde{\rho}_2,
\end{equation*}
yield four linear homogeneous equations for the $u_{ij}$, $1\leq i,j\leq 2$. Using homogeneous coordinates,
\begin{equation*}
    \rho_k=(\rho_k^x:\rho_k^y)\in\mathbb{CP}^1,\quad \widetilde{\rho}_k=(\widetilde{\rho}_k^x:\widetilde{\rho}_k^y)\in\mathbb{CP}^1\qquad (1\leq k\leq 4),
\end{equation*}  
they can be written as
\begin{equation}\label{eq:Mueq}
    M_u \cdot (u_{11},u_{12},u_{21},u_{22})^T=0,
\end{equation}
where
\begin{equation*}
M_u=\begin{pmatrix}
-\rho_2^y\theta_q\big(\frac{x_1x_2}{q^{+\theta_0+\theta_\infty}t_0}\big) & \rho_2^x\theta_q\big(\frac{x_1x_2}{q^{+\theta_0-\theta_\infty}t_0}\big) & 0 & 0\\
-\widetilde{\rho}_2^y\theta_q\big(\frac{x_1x_2}{q^{+\theta_0+\theta_\infty}t_0}\big) & 0 &
\widetilde{\rho}_2^x\theta_q\big(\frac{x_1x_2}{q^{-\theta_0+\theta_\infty}t_0}\big) & 0\\
0 & -\widetilde{\rho}_2^y\theta_q\big(\frac{x_1x_2}{q^{+\theta_0-\theta_\infty}t_0}\big) & 0 & \widetilde{\rho}_2^x\theta_q\big(\frac{x_1x_2}{q^{-\theta_0-\theta_\infty}t_0}\big)\\
0 & 0 & -\rho_2^y\theta_q\big(\frac{x_1x_2}{q^{-\theta_0+\theta_\infty}t_0}\big) & \widetilde{\rho}_2^x\theta_q\big(\frac{x_1x_2}{q^{-\theta_0-\theta_\infty}t_0}\big)
\end{pmatrix}.
\end{equation*}
Note, furthermore, that this matrix is not invertible, so that \eqref{eq:Mueq} indeed admits non-trivial solutions.

Similarly, the equations
\begin{equation*}
    \pi[C(x_1)]=\rho_1, \quad \pi[C(x_1)^T]=\widetilde{\rho}_1,
\end{equation*}
yield four linear homogeneous equations for the $v_{ij}$, $1\leq i,j\leq 2$, given by
\begin{equation}\label{eq:Mveq}
    M_v \cdot (v_{11},v_{12},v_{21},v_{22})^T=0,
\end{equation}
where
\begin{equation*}
M_v=\begin{pmatrix}
-\rho_1^y\theta_q\big(\frac{x_1x_2}{q^{+\theta_0+\theta_\infty}t_0}\big) & \rho_1^x\theta_q\big(\frac{x_1x_2}{q^{+\theta_0-\theta_\infty}t_0}\big) & 0 & 0\\
-\widetilde{\rho}_1^y\theta_q\big(\frac{x_1x_2}{q^{+\theta_0+\theta_\infty}t_0}\big) & 0 &
\widetilde{\rho}_1^x\theta_q\big(\frac{x_1x_2}{q^{-\theta_0+\theta_\infty}t_0}\big) & 0\\
0 & -\widetilde{\rho}_1^y\theta_q\big(\frac{x_1x_2}{q^{+\theta_0-\theta_\infty}t_0}\big) & 0 & \widetilde{\rho}_1^x\theta_q\big(\frac{x_1x_2}{q^{-\theta_0-\theta_\infty}t_0}\big)\\
0 & 0 & -\rho_1^y\theta_q\big(\frac{x_1x_2}{q^{-\theta_0+\theta_\infty}t_0}\big) & \widetilde{\rho}_1^x\theta_q\big(\frac{x_1x_2}{q^{-\theta_0-\theta_\infty}t_0}\big)
\end{pmatrix}.
\end{equation*}

Finally, the equation
\begin{equation*}
    \pi[C(x_3)]=\rho_3,
\end{equation*}
gives two more homogeneous equations for the coefficients,
\begin{align*}
&\rho_3^x\left(u_{12}\theta_q(x_3/x_1)\theta_q\left(\frac{x_1x_3}{q^{+\theta_0-\theta_\infty}t_0}\right)+v_{12}\theta_q(x_3/x_2)\theta_q\left(\frac{x_2x_3}{q^{+\theta_0-\theta_\infty}t_0}\right)\right)-\\
&\rho_3^y\left(u_{11}\theta_q(x_3/x_1)\theta_q\left(\frac{x_1x_3}{q^{+\theta_0+\theta_\infty}t_0}\right)+v_{11}\theta_q(x_3/x_2)\theta_q\left(\frac{x_2x_3}{q^{+\theta_0+\theta_\infty}t_0}\right)\right)=0,\\
&\rho_3^x\left(u_{22}\theta_q(x_3/x_1)\theta_q\left(\frac{x_1x_3}{q^{-\theta_0-\theta_\infty}t_0}\right)+v_{22}\theta_q(x_3/x_2)\theta_q\left(\frac{x_2x_3}{q^{-\theta_0-\theta_\infty}t_0}\right)\right)-\\
&
\rho_3^y\left(u_{21}\theta_q(x_3/x_1)\theta_q\left(\frac{x_1x_3}{q^{-\theta_0+\theta_\infty}t_0}\right)+v_{21}\theta_q(x_3/x_2)\theta_q\left(\frac{x_2x_3}{q^{-\theta_0+\theta_\infty}t_0}\right)\right)=0.
\end{align*}

In summary, we have a system of four homogeneous linear equations for the $(u_{ij})$,  equation \eqref{eq:Mueq}, whose rank is less than four. We have a system of four homogeneous linear equations for the $(v_{ij})$,  equation \eqref{eq:Mveq}, whose rank is less than four. And we have two homogeneous linear equations for both sets of variables, $(u_{ij})$ and $(v_{ij})$. By taking the two homogeneous linear equations for both sets of variables and adding the first three equations of \eqref{eq:Mueq} and the last three equations of \eqref{eq:Mveq}, we arrive at a homogeneous linear system of eight equations for the eight variables $(u_{ij})$ and $(v_{ij})$. Since $C(z)\not\equiv 0$, the corresponding determinant of the homogeneous linear system must be zero. By computing this determinant and setting it equal to zero, we obtain
\begin{align*}
&\widetilde{\rho}_1^x\widetilde{\rho}_2^y\bigg(\rho_2^x\rho_3^y\theta_q\left(\frac{x_1x_2}{q^{-\theta_0-\theta_\infty}t_0}\right)\theta_q\left(\frac{x_1x_3}{q^{-\theta_0+\theta_\infty}t_0}\right)-
\rho_2^y\rho_3^x\theta_q\left(\frac{x_1x_2}{q^{-\theta_0+\theta_\infty}t_0}\right)\theta_q\left(\frac{x_1x_3}{q^{-\theta_0-\theta_\infty}t_0}\right)\bigg)\\
&\bigg(\rho_1^x\rho_3^y\theta_q\left(\frac{x_1x_2}{q^{+\theta_0-\theta_\infty}t_0}\right)\theta_q\left(\frac{x_2x_3}{q^{+\theta_0+\theta_\infty}t_0}\right)-
\rho_1^y\rho_3^x\theta_q\left(\frac{x_1x_2}{q^{+\theta_0+\theta_\infty}t_0}\right)\theta_q\left(\frac{x_2x_3}{q^{+\theta_0-\theta_\infty}t_0}\right)\bigg)=\\
&\widetilde{\rho}_1^y\widetilde{\rho}_2^x
\bigg(\rho_2^x\rho_3^y\theta_q\left(\frac{x_1x_2}{q^{+\theta_0-\theta_\infty}t_0}\right)\theta_q\left(\frac{x_1x_3}{q^{+\theta_0+\theta_\infty}t_0}\right)-
\rho_2^y\rho_3^x\theta_q\left(\frac{x_1x_2}{q^{+\theta_0+\theta_\infty}t_0}\right)\theta_q\left(\frac{x_1x_3}{q^{+\theta_0-\theta_\infty}t_0}\right)\bigg)\\
&\bigg(\rho_1^x\rho_3^y\theta_q\left(\frac{x_1x_2}{q^{-\theta_0-\theta_\infty}t_0}\right)\theta_q\left(\frac{x_2x_3}{q^{-\theta_0+\theta_\infty}t_0}\right)-
\rho_1^y\rho_3^x\theta_q\left(\frac{x_1x_2}{q^{-\theta_0+\theta_\infty}t_0}\right)\theta_q\left(\frac{x_2x_3}{q^{-\theta_0-\theta_\infty}t_0}\right)\bigg),
\end{align*}
This gives equation \eqref{eq:rhoduality1}, with $(i,j,k,l)=(1,2,3,4)$,  after several applications of the formula
\begin{equation*}
    \theta_q\left(\frac{x_ix_j}{q^{-\theta_0\pm\theta_\infty}t_0}\right)=-\frac{x_ix_j}{q^{-\theta_0\pm\theta_\infty}t_0}\theta_q\left(\frac{x_kx_l}{q^{+\theta_0\mp\theta_\infty}t_0}\right)\qquad
    (\{i,j,k,l\}=\{1,2,3,4\}).
\end{equation*}

By permuting indices, we consequently obtain equation \eqref{eq:rhoduality1} in general. Furthermore, by replacing $C(z)\mapsto C(z)^T$ in the above, we interchange $\theta_0\leftrightarrow \theta_\infty$ and $\rho_j\leftrightarrow \widetilde{\rho}_j$, $1\leq j\leq 4$. thus yielding the dual equations \eqref{eq:rhoduality2}. The lemma follows.
\end{proof}

Next, we prove Lemma \ref{lemma:global_mero} and Remark \ref{rem:duality}.

\begin{proof}[Proof of Lemma \ref{lemma:global_mero} and Remark \ref{rem:duality}]
We start by proving that the Tyurin ratios $\rho_{ij}$, $1\leq i,j\leq 4, i\neq j$, define analytic functions from $\widehat{\mathcal{F}}$ to $\mathbb{CP}^1$. 

Consider the ratio $\rho_{12}$, which admits the following rational expressions on $\widehat{\mathcal{F}}$, see equation \eqref{eq:rho12},
\begin{equation*}
    \rho_{12}=\frac{T_{23}\,N_{13}}{T_{13}\,N_{23}}=\frac{T_{24}\,N_{14}}{T_{14}\,N_{24}}.
\end{equation*}
The only potential obstacle to $\rho_{12}$ defining an analytic function from $\widehat{\mathcal{F}}$ to $\mathbb{CP}^1$, would be a point $N\in \widehat{\mathcal{F}}$ with
\begin{equation*}
    N_{13}=N_{23}=N_{14}=N_{24}=0.
\end{equation*}
But such points do not exist, as, given these four equalities, equations \eqref{eq:eta_equationsaproj}, \eqref{eq:eta_equationsbproj} and \eqref{eq:eta_equationscproj} reduce to
\begin{align*}
    &N_{12}+N_{34}=0,\\
    &a_{12}N_{12}+a_{34}N_{34}+a_\infty N_\infty=0,\\
    &N_{12}N_{34}=0.
\end{align*}
which yield $N_{12}=N_{34}=N_{\infty}=0$. Similarly, we prove that the other Tyurin ratios $\rho_{ij}$, $1\leq i,j\leq 4, i\neq j$, define analytic functions from $\widehat{\mathcal{F}}$ to $\mathbb{CP}^1$.

Now, to prove that the dual Tyurin ratios define analytic functions from $\widehat{\mathcal{F}}$ to $\mathbb{CP}^1$, and thus Lemma \ref{lem:duality}, we start by proving Remark \ref{rem:duality}. Without loss of generality, we will restrict ourselves to proving all the statements for $\widetilde{\rho}_{12}$. We start by writing down the two formulas that Lemma \ref{lem:duality} provides for $\widetilde{\rho}_{12}$,
\begin{align}
   \widetilde{\rho}_{12}&=\frac{x_2}{x_1}M_{(1,2,3,4)}\left( \rho_{13};\theta_0,\theta_\infty\right)M_{(2,4,3,1)}\left(\rho_{23},\theta_0,\theta_\infty\right)\label{eq:rhoduality12a}\\
   &=\frac{x_2}{x_1}M_{(1,2,4,3)}\left( \rho_{14};\theta_0,\theta_\infty\right)M_{(2,3,4,1)}\left(\rho_{24},\theta_0,\theta_\infty\right).\label{eq:rhoduality12b}
\end{align}
These formulas show that $\widetilde{\rho}_{12}$ is a rational function on $\widehat{\mathcal{F}}$, since the Tyurin ratios on the right-hand sides are.

The only points in $\widehat{\mathcal{F}}$ where $\widetilde{\rho}_{12}$ might be singular, are points where, for each of the two right-hand sides, one of the M\"obius transforms evaluates to $0$, and the other to $\infty$. In other words, at such points, either
\begin{itemize}
    \item[(1)] $M_{(1,2,3,4)}(\rho_{13})=0$ and $M_{(2,4,3,1)}(\rho_{23})=\infty$, or
    \item[(2)] $M_{(1,2,3,4)}(\rho_{13})=\infty$ and $M_{(2,4,3,1)}(\rho_{23})=0$,
\end{itemize}
and either
\begin{itemize}
    \item[(i)] $M_{(1,2,4,3)}(\rho_{14})=\infty$ and $M_{(2,3,4,1)}(\rho_{24})=0$, or
    \item[(ii)] $M_{(1,2,4,3)}(\rho_{14})=0$ and $M_{(2,3,4,1)}(\rho_{24})=\infty$.
\end{itemize}
This gives us four cases to consider.

Let us first show the combinations (1) plus (ii) and (2) plus (i) can never occur, as they contradict the equality
\begin{equation}\label{eq:rhotrivial}
    \rho_{13} \rho_{24}=\rho_{23}\rho_{14}.
\end{equation}
Indeed, if (1) plus (ii) holds, then (1) specifies the values of $\rho_{13}$ and $\rho_{23}$ and (ii) specifies the values of $\rho_{14}$ and $\rho_{24}$, giving
\begin{align*}
    0=&(\rho_{13} \rho_{24}-\rho_{23}\rho_{14})\;\theta_q\left(\frac{x_3x_4}{t_0q^{+\theta_0-\theta_\infty}}\right)^{-2}\\
    =&
    \theta_q\bigg(\frac{x_1x_3}{t_0q^{+\theta_0-\theta_\infty}},\frac{x_1x_4}{t_0q^{+\theta_0+\theta_\infty}},\frac{x_2x_3}{t_0q^{+\theta_0+\theta_\infty}},\frac{x_2x_4}{t_0q^{+\theta_0-\theta_\infty}}\bigg)-\\
    &\theta_q\bigg(\frac{x_1x_3}{t_0q^{+\theta_0+\theta_\infty}},\frac{x_1x_4}{t_0q^{+\theta_0-\theta_\infty}},\frac{x_2x_3}{t_0q^{+\theta_0-\theta_\infty}},\frac{x_2x_4}{t_0q^{+\theta_0+\theta_\infty}}\bigg)\\
    &=\frac{x_2x_4}{t_0q^{+\theta_0+\theta_\infty}}\theta_q\bigg(q^{-2\theta_0},q^{2\theta_\infty},\frac{x_1}{x_4},\frac{x_3}{x_2}\bigg),
\end{align*}
where the last equality follows from standard manipulations of $q$-theta functions. But the expression on the last line cannot equal zero, due to the non-resonance conditions, yielding a contradiction. Similarly, the combination (2) plus (i) cannot occur, as it similarly implies
\begin{equation*}
    0=\frac{x_2x_4}{t_0q^{+\theta_0+\theta_\infty}}\theta_q\bigg(q^{-2\theta_0},q^{2\theta_\infty},\frac{x_2}{x_4},\frac{x_3}{x_4}\bigg).
\end{equation*}
We are left to consider the two combinations (1) plus (i) and (2) plus (ii).

Considering combination (1) plus (i), (1) specifies the values of $\rho_{13}$ and $\rho_{23}$, (ii) specifies the values of $\rho_{14}$ and $\rho_{24}$, and
equation \eqref{eq:rhotrivial} is trivially satisfied. Furthermore, by application of equation \eqref{eq:rhoduality1}, with $(i,j,k,l)=(2,3,4,1)$, we find that $\widetilde{\rho}_{23}=0$. Similarly, application of equation \eqref{eq:rhoduality1}, with $(i,j,k,l)=(2,4,3,1)$, gives $\widetilde{\rho}_{24}=\infty$, so $\widetilde{\rho}_{42}=0$. In order to determine the value of $\widetilde{\rho}_{12}$, we note that the second quadratic equation in \eqref{eq:tyurinquadratics} gives
\begin{align}
0=&T_{12}(\theta_\infty,\theta_0)\widetilde{\rho}_{12}\widetilde{\rho}_{23}+
T_{13}(\theta_\infty,\theta_0)\widetilde{\rho}_{12}+
T_{14}(\theta_\infty,\theta_0)\widetilde{\rho}_{12}\widetilde{\rho}_{23}\widetilde{\rho}_{42}+\label{eq:quadraticrhotilde}\\
&T_{23}(\theta_\infty,\theta_0)+
T_{24}(\theta_\infty,\theta_0)\widetilde{\rho}_{23}\widetilde{\rho}_{42}+
T_{34}(\theta_\infty,\theta_0)\widetilde{\rho}_{42},\nonumber
\end{align}
and, using $\widetilde{\rho}_{23}=\widetilde{\rho}_{42}=0$, we then find
\begin{equation*}
    \widetilde{\rho}_{12}=-\frac{T_{23}(\theta_\infty,\theta_0)}{T_{13}(\theta_\infty,\theta_0)}.
\end{equation*}

Similarly, combination (2) plus (ii) yields $\widetilde{\rho}_{24}=\widetilde{\rho}_{32}=0$ and consequently
\begin{equation*}
    \widetilde{\rho}_{12}=-\frac{T_{24}(\theta_\infty,\theta_0)}{T_{14}(\theta_\infty,\theta_0)}.
\end{equation*}
This proves the statement of Remark \ref{rem:duality} for $\widetilde{\rho}_{12}$. By permuting $\{1,2,3,4\}$, we obtain Remark \ref{rem:duality} in general.

We now return to the issue of analyticity of the dual Tyurin ratio $\widetilde{\rho}_{12}$. As we had already shown, $\widetilde{\rho}_{12}$ is analytic on all of $\widehat{\mathcal{F}}$ where combination (1) plus (i) and combination (2) plus (ii) do not occur.

Now, let us take an $N^*\in \widehat{\mathcal{F}}$ and suppose that the corresponding Tyurin ratios satisfy (1) plus (i). We are going to use equation \eqref{eq:quadraticrhotilde} to show that $\widetilde{\rho}_{12}$ is locally analytic around $N=N^*$. Equation \eqref{eq:quadraticrhotilde} gives  the following expression for $\widetilde{\rho}_{12}$,
\begin{equation}\label{eq:rho24alt}
    \widetilde{\rho}_{12}=-\frac{T_{23}(\theta_\infty,\theta_0)+
T_{24}(\theta_\infty,\theta_0)\widetilde{\rho}_{23}\widetilde{\rho}_{42}+
T_{34}(\theta_\infty,\theta_0)\widetilde{\rho}_{42}}{T_{12}(\theta_\infty,\theta_0)\widetilde{\rho}_{23}+T_{13}(\theta_\infty,\theta_0)+T_{14}(\theta_\infty,\theta_0)\widetilde{\rho}_{23}\widetilde{\rho}_{42}}.
\end{equation}
Furthermore, equation \eqref{eq:rhoduality1}, with
$(i,j,k,l)=(2,3,4,1)$ and $(i,j,k,l)=(2,4,3,1)$ respectively, show that $\widetilde{\rho}_{23}=\widetilde{\rho}_{23}(N)$ and $\widetilde{\rho}_{42}=\widetilde{\rho}_{42}(N)$ are locally analytic around $N=N^*$ with
\begin{equation*}
    \widetilde{\rho}_{23}(N^*)=\widetilde{\rho}_{42}(N^*)=0.
\end{equation*}
It now follows from equation \eqref{eq:rho24alt} that $\widetilde{\rho}_{12}$ is locally analytic around $N=N^*$.

It is similarly shown that $\widetilde{\rho}_{12}$ is analytic near any point $N^*\in \widehat{\mathcal{F}}$ where combination (2) plus (ii) occur. It follows that $\widetilde{\rho}_{12}$ is an analytic function to $\mathbb{CP}^1$ on the whole of $\widehat{\mathcal{F}}$. By permuting indices, Lemma \ref{lemma:global_mero} now follows.
\end{proof}

\subsection{Lines on the Segre surface}\label{sec:lines}
In this section, we give explicit expressions for all the lines on the Segre surface $\widehat{\mathcal{F}}$. Firstly, let us consider $\mathcal{L}_1^0$, defined in equation \eqref{eq:linesdefi}. Using the defining equation for the $\eta$-coordinates \eqref{eq:eta_defi}, it follows directly that this set can equally be described by
\begin{align*}
\mathcal{L}_1^0=&\{N\in\widehat{\mathcal{F}}:N_{12}=N_{13}=N_{14}=0\}\\
=&\{N\in\mathbb{CP}^6:N_{12}=N_{13}=N_{14}=0,N_{23}+N_{24}+N_{34}=0,\\
    &\hspace{2.6cm}a_{23}N_{23}+a_{24}N_{24}+a_{34}N_{34}+a_{\infty}N_\infty=0\}.
\end{align*}
Note, in particular, that this is clearly a line.
Similarly, we have
\begin{align*}
    \mathcal{L}_2^0=\{N\in\mathbb{CP}^6:N_{12}=N_{23}=N_{24}=0,N_{13}+N_{14}+N_{34}=0&,\\
    a_{13}N_{13}+a_{14}N_{14}+a_{34}N_{34}+a_{\infty}N_\infty=0&\},\\
    \mathcal{L}_3^0=\{N\in\mathbb{CP}^6:N_{13}=N_{23}=N_{34}=0,N_{12}+N_{14}+N_{24}=0&,\\
    a_{12}N_{12}+a_{14}N_{14}+a_{24}N_{24}+a_{\infty}N_\infty=0&\},\\
    \mathcal{L}_4^0=\{N\in\mathbb{CP}^6:N_{14}=N_{24}=N_{34}=0,N_{12}+N_{13}+N_{23}=0&,\\
    a_{12}N_{12}+a_{13}N_{13}+a_{23}N_{23}+a_{\infty}N_\infty=0&\},
    \end{align*}
    and
    \begin{align*}
    \mathcal{L}_1^\infty=\{N\in\mathbb{CP}^6:N_{23}=N_{24}=N_{34}=0,N_{12}+N_{13}+N_{14}=0&,\\
    a_{12}N_{12}+a_{13}N_{13}+a_{14}N_{14}+a_{\infty}N_\infty=0&\},\\
    \mathcal{L}_2^\infty=\{N\in\mathbb{CP}^6:N_{13}=N_{14}=N_{34}=0,N_{12}+N_{23}+N_{24}=0&,\\
    a_{12}N_{12}+a_{23}N_{23}+a_{24}N_{24}+a_{\infty}N_\infty=0&\},\\
    \mathcal{L}_3^\infty=\{N\in\mathbb{CP}^6:N_{12}=N_{14}=N_{24}=0,N_{13}+N_{23}+N_{34}=0&,\\
    a_{13}N_{13}+a_{23}N_{23}+a_{34}N_{34}+a_{\infty}N_\infty=0&\},\\
    \mathcal{L}_4^\infty=\{N\in\mathbb{CP}^6:N_{12}=N_{13}=N_{23}=0,N_{14}+N_{24}+N_{34}=0&,\\
    a_{14}N_{14}+a_{24}N_{24}+a_{34}N_{34}+a_{\infty}N_\infty=0&\}.
\end{align*}

We proceed to derive explicit expressions for the remaining eight lines. Let $1\leq k\leq 4$ and consider an $N\in \widetilde{\mathcal{L}}_k^0$. Take any $i,j,l$ such that $\{i,j,k,l\}=\{1,2,3,4\}$, then $\widetilde{\rho}_{ki}=\widetilde{\rho}_{kj}=\infty$ and thus equation \eqref{eq:rhoduality2} gives
\begin{equation*}
\rho_{ij}=\frac{x_j}{x_i}M_{(i,j,k,l)}\left( \infty;\theta_\infty,\theta_0\right)M_{(j,l,k,i)}\left(\infty,\theta_\infty,\theta_0\right).
\end{equation*}
In other words,
\begin{equation*}
    \rho_{ij}=P_{ij}^{(k)},\quad P_{ij}^{(k)}:=\frac{\theta_q\left(\frac{x_i x_k}{t_0}\frac{q^{-\theta_\infty}}{q^{\theta_0}} \right)\theta_q\left(\frac{x_j x_k}{t_0}\frac{q^{\theta_\infty}}{q^{\theta_0}} \right)}{\theta_q\left(\frac{x_j x_k}{t_0}\frac{q^{-\theta_\infty}}{q^{\theta_0}} \right)\theta_q\left(\frac{x_i x_k}{t_0}\frac{q^{\theta_\infty}}{q^{\theta_0}} \right)},
\end{equation*}
for any $i,j,l$ such that $\{i,j,k,l\}=\{1,2,3,4\}$.

Let us consider specifically the case $k=1$. Then, we obtain explicit values for
\begin{equation*}
    \rho_{23}=P_{23}^{(1)},\quad \rho_{24}=P_{24}^{(1)},\quad \rho_{34}=P_{34}^{(1)}.
\end{equation*}
These three equations yield a system of six linear relations among the $N$-variables, given by
\begin{align*}
    T_{13} N_{12}-P_{23}^{(1)}T_{12}N_{13}&=0, & 
    T_{14} N_{12}-P_{24}^{(1)}T_{12}N_{14}&=0, &
    T_{14} N_{13}-P_{34}^{(1)}T_{13}N_{14}&=0,\\
    T_{34} N_{24}-P_{23}^{(1)}T_{24}N_{34}&=0, &
    T_{34} N_{23}-P_{24}^{(1)}T_{23}N_{34}&=0, &
    T_{24} N_{23}-P_{34}^{(1)}T_{23}N_{24}&=0,
\end{align*}
whose rank is $4$. It thus defines a plane in the ambient space $\{N\in\mathbb{CP}^6\}$. Next, we consider the intersection of this plane with $\widehat{\mathcal{F}}(\Theta,t_0)$. It is easy to see that \eqref{eq:eta_equationsa}, \eqref{eq:eta_equationsc} and \eqref{eq:eta_equationsd} are identically satisfied on this plane. Equation \eqref{eq:eta_equationsb}, however,
cannot also be identically satisfied on this plane, else the whole plane would lie in $\widehat{\mathcal{F}}(\Theta,t_0)$, so that $\widehat{\mathcal{F}}(\Theta,t_0)$ would be reducible and thus not smooth, contradiction Proposition \ref{lemma:smooth}.
Therefore, equation \eqref{eq:eta_equationsb}
specifies a unique line in the plane, and thus in $\widehat{\mathcal{F}}(\Theta,t_0)$. 
Conversely, by inverting the computations, it is easy to see that any point on this line lies in $\widetilde{\mathcal{L}}_1^0$. We conclude that
\begin{align*}
    \widetilde{\mathcal{L}}_1^0=\{N\in\mathbb{CP}^6:\,&T_{13}N_{12}-P_{23}^{(1)}T_{12}N_{13}=0,T_{14}N_{13}-P_{34}^{(1)}T_{13}N_{14}=0,\\
    &T_{34} N_{24}-P_{23}^{(1)}T_{24}N_{34}=0,T_{34} N_{23}-P_{24}^{(1)}T_{23}N_{34}=0,\\
    &a_{12}N_{12}+a_{13}N_{13}+a_{14}N_{14}+a_{23}N_{23}+a_{24}N_{24}+a_{34}N_{34}+a_{\infty}N_\infty=0\}.
    \end{align*}

Similarly, we obtain the following explicit descriptions of the remaining lines,
\begin{align*}
    \widetilde{\mathcal{L}}_2^0=\{N\in\mathbb{CP}^6:\,&T_{23}N_{12}-P_{13}^{(2)}T_{12}N_{23}=0,T_{24}N_{23}-P_{34}^{(2)}T_{23}N_{24}=0,\\
    &T_{34} N_{14}-P_{13}^{(2)}T_{14}N_{34}=0,T_{34} N_{13}-P_{14}^{(2)}T_{13}N_{34}=0,\\
    &a_{12}N_{12}+a_{13}N_{13}+a_{14}N_{14}+a_{23}N_{23}+a_{24}N_{24}+a_{34}N_{34}+a_{\infty}N_\infty=0\},\\
    \widetilde{\mathcal{L}}_3^0=\{N\in\mathbb{CP}^6:\,&T_{13}N_{23}-P_{21}^{(3)}T_{23}N_{13}=0,T_{34}N_{13}-P_{14}^{(3)}T_{13}N_{34}=0,\\
    &T_{14} N_{24}-P_{21}^{(3)}T_{24}N_{14}=0,T_{14} N_{12}-P_{24}^{(3)}T_{12}N_{14}=0,\\
    &a_{12}N_{12}+a_{13}N_{13}+a_{14}N_{14}+a_{23}N_{23}+a_{24}N_{24}+a_{34}N_{34}+a_{\infty}N_\infty=0\},\\
    \widetilde{\mathcal{L}}_4^0=\{N\in\mathbb{CP}^6:\,&T_{34}N_{24}-P_{23}^{(4)}T_{24}N_{34}=0,T_{14}N_{34}-P_{31}^{(4)}T_{34}N_{14}=0,\\
    &T_{13} N_{12}-P_{23}^{(4)}T_{12}N_{13}=0,T_{13} N_{23}-P_{21}^{(4)}T_{23}N_{13}=0,\\
    &a_{12}N_{12}+a_{13}N_{13}+a_{14}N_{14}+a_{23}N_{23}+a_{24}N_{24}+a_{34}N_{34}+a_{\infty}N_\infty=0\},
    \end{align*}
and
\begingroup
\allowdisplaybreaks
\begin{align*}
    \widetilde{\mathcal{L}}_1^\infty=\{N\in\mathbb{CP}^6:\,&T_{13}N_{12}-P_{23}^{(4)}T_{12}N_{13}=0,T_{14}N_{13}-P_{34}^{(2)}T_{13}N_{14}=0,\\
    &T_{34} N_{24}-P_{23}^{(4)}T_{24}N_{34}=0,T_{34} N_{23}-P_{24}^{(3)}T_{23}N_{34}=0,\\
    &a_{12}N_{12}+a_{13}N_{13}+a_{14}N_{14}+a_{23}N_{23}+a_{24}N_{24}+a_{34}N_{34}+a_{\infty}N_\infty=0\},\\
    \widetilde{\mathcal{L}}_2^\infty=\{N\in\mathbb{CP}^6:\,&T_{23}N_{12}-P_{13}^{(4)}T_{12}N_{23}=0,T_{24}N_{23}-P_{34}^{(1)}T_{23}N_{24}=0,\\
    &T_{34} N_{14}-P_{13}^{(4)}T_{14}N_{34}=0,T_{34} N_{13}-P_{14}^{(3)}T_{13}N_{34}=0,\\
    &a_{12}N_{12}+a_{13}N_{13}+a_{14}N_{14}+a_{23}N_{23}+a_{24}N_{24}+a_{34}N_{34}+a_{\infty}N_\infty=0\},\\
    \widetilde{\mathcal{L}}_3^\infty=\{N\in\mathbb{CP}^6:\,&T_{13}N_{23}-P_{21}^{(4)}T_{23}N_{13}=0,T_{34}N_{13}-P_{14}^{(2)}T_{13}N_{34}=0,\\
    &T_{14} N_{24}-P_{21}^{(4)}T_{24}N_{14}=0,T_{14} N_{12}-P_{24}^{(1)}T_{12}N_{14}=0,\\
    &a_{12}N_{12}+a_{13}N_{13}+a_{14}N_{14}+a_{23}N_{23}+a_{24}N_{24}+a_{34}N_{34}+a_{\infty}N_\infty=0\},\\
    \widetilde{\mathcal{L}}_4^\infty=\{N\in\mathbb{CP}^6:\,&T_{34}N_{24}-P_{23}^{(1)}T_{24}N_{34}=0,T_{14}N_{34}-P_{31}^{(2)}T_{34}N_{14}=0,\\
    &T_{13} N_{12}-P_{23}^{(1)}T_{12}N_{13}=0,T_{13} N_{23}-P_{21}^{(3)}T_{23}N_{13}=0,\\
    &a_{12}N_{12}+a_{13}N_{13}+a_{14}N_{14}+a_{23}N_{23}+a_{24}N_{24}+a_{34}N_{34}+a_{\infty}N_\infty=0\}.
    \end{align*}
\endgroup

\subsection{Intersection points}\label{sec:intersectionpoints}
In this section we consider intersection points among the different lines on the Segre surface $\widehat{\mathcal{F}}$ and prove Theorem \ref{thm:intersection}.
\begin{proof}[Proof of Theorem \ref{thm:intersection}]
It follows from the explicit formulas for the $16$ lines on $\widehat{\mathcal{F}}$, given in Section \ref{sec:lines}, that none of the lines lie entirely in the hyperplane section at infinity.

We know from the general theory of (smooth) Segre surfaces that any line intersect exactly five others. In Section \ref{sec:lines}, each of the $16$ lines in $\mathbb{CP}^6$ is described by $5$ linearly independent equations. Using these descriptions, the problem of determining whether two lines intersect has been reduced to linear algebra.

Let us consider which lines $\mathcal{L}_1^0$ intersects with. Firstly, note that $\mathcal{L}_1^0$ and $\mathcal{L}_1^\infty$ cannot possible intersect, as, if they would at some point $N\in \mathbb{CP}^6$, then we immediately find
\begin{equation}\label{eq:contradiction}
    N_{12}=N_{13}=N_{14}=N_{23}=N_{24}=N_{34}=N_{\infty}=0.
\end{equation}
Similarly, if $\mathcal{L}_1^0$ and $\mathcal{L}_2^0$ intersect at a point $N\in \mathbb{CP}^6$, then we must have
\begin{equation*}
    N_{12}=N_{13}=N_{14}=N_{23}=N_{24}=0,
\end{equation*}
and, by either employing equations \eqref{eq:eta_equationsaproj} and \eqref{eq:eta_equationsbproj}, or by using the further linear equations describing $\mathcal{L}_1^0$ and $\mathcal{L}_2^0$, we find $N_{34}=N_{\infty}=0$, yielding a  contradiction. Analogously, we find that $\mathcal{L}_1^0$ does not intersect with $\mathcal{L}_3^0$ and $\mathcal{L}_4^0$.

On the other hand, a direct calculation shows that $\mathcal{L}_1^0$ and $\mathcal{L}_2^\infty$ do intersect and the intersection point is given by
\begin{equation*}
  N_{12}=N_{13}=N_{14}=N_{34}=0,\quad N_{23}=1,\quad N_{24}=-1,\quad N_\infty=(a_{24}-a_{23})/a_\infty.
\end{equation*}
Furthermore, this intersection point is finite (i.e. lies in $\mathcal{F}$), if and only if $a_{23}\neq a_{24}$, which holds if and only if
\begin{equation*}
    q^{-2\theta_\infty} \frac{x_1}{x_2}\notin q^{\mathbb{Z}}.
\end{equation*}

In general, for any $1\leq i,j\leq 4$, with $i\neq j$, analogous computations show that
\begin{enumerate}
    \item $\mathcal{L}_i^0$ does not intersect with $\mathcal{L}_i^\infty$,
    \item $\mathcal{L}_i^0$ does not intersect with $\mathcal{L}_j^0$,
    \item $\mathcal{L}_i^\infty$ does not intersect with $\mathcal{L}_j^\infty$,
    \item $\mathcal{L}_i^0$ and $\mathcal{L}_j^\infty$ intersect and the intersection point is finite if and only if $a_{jk}\neq a_{jl}$, where $k,l$ are such that $\{i,j,k,l\}=\{1,2,3,4\}$, which holds if and only if
    \begin{equation*}
    q^{-2\theta_\infty} \frac{x_i}{x_j}\notin q^{\mathbb{Z}}.
\end{equation*}
\end{enumerate}
Then, by duality, i.e. by considering the dual Segre surface $\widehat{F}(\widetilde{\Theta},t_0)$, where $\widetilde{\Theta}=\Theta|_{\theta_0\leftrightarrow \theta_\infty}$, related to the dual Tyurin parameters $\widetilde{\rho}$, so that e.g. the line $\mathcal{L}_3^0(\widetilde{\Theta})$ on $\widehat{F}(\widetilde{\Theta},t_0)$ corresponds to the line $\widetilde{\mathcal{L}}_3^0(\Theta)$ on $\widehat{F}(\Theta,t_0)$ , we immediately find that, for any $1\leq i,j\leq 4$, with $i\neq j$,
\begin{enumerate}
    \item[(1)'] $\widetilde{\mathcal{L}}_i^0$ does not intersect with $\widetilde{\mathcal{L}}_i^\infty$,
    \item[(2)'] $\widetilde{\mathcal{L}}_i^0$ does not intersect with $\widetilde{\mathcal{L}}_j^0$,
    \item[(3)'] $\widetilde{\mathcal{L}}_i^\infty$ does not intersect with $\widetilde{\mathcal{L}}_j^\infty$,
    \item[(4)'] $\widetilde{\mathcal{L}}_i^0$ and $\widetilde{\mathcal{L}}_j^\infty$ intersect and the intersection point is finite if and only if $\widetilde{a}_{jk}\neq \widetilde{a}_{jl}$, where $\widetilde{a}_{jk}=a_{jk}|_{\theta_0\leftrightarrow \theta_\infty}$, $\widetilde{a}_{jl}=a_{jl}|_{\theta_0\leftrightarrow \theta_\infty}$ and
    $k,l$ are such that $\{i,j,k,l\}=\{1,2,3,4\}$, which holds if and only if
    \begin{equation*}
    q^{-2\theta_0} \frac{x_i}{x_j}\notin q^{\mathbb{Z}}.
\end{equation*}
\end{enumerate}
Alternatively, one can also verify these by direct calculation.

Let us return now to the line $\mathcal{L}_1^0$. We already know that it intersects $\mathcal{L}_2^\infty$, $\mathcal{L}_3^\infty$, $\mathcal{L}_4^\infty$, and does not intersect $\mathcal{L}_1^\infty$, $\mathcal{L}_2^0$, $\mathcal{L}_3^0$ and $\mathcal{L}_4^0$. Suppose now that $\mathcal{L}_1^0$ would intersect with $\widetilde{\mathcal{L}}_2^0$ at some point $N\in \mathbb{CP}^6$. As $N\in \mathcal{L}_1^0$, we know that $N_{12}=N_{13}=N_{14}$ and by simply setting these three variables equal to zero in the linear system describing $\widetilde{\mathcal{L}}_2^0$, we find $N_{23}=N_{24}=N_{34}=N_{\infty}=0$ and we have arrived at a contradiction. By similar simple arguments, we find that $\mathcal{L}_1^0$ does not intersect with $\widetilde{\mathcal{L}}_2^0$, $\widetilde{\mathcal{L}}_3^0$, $\widetilde{\mathcal{L}}_4^0$, nor $\widetilde{\mathcal{L}}_2^\infty$, $\widetilde{\mathcal{L}}_3^\infty$, $\widetilde{\mathcal{L}}_4^\infty$. More generally, for any $1\leq i,j\leq 4$, $i\neq j$,
\begin{enumerate}
    \item[(a)] $\mathcal{L}_i^0$ and $\widetilde{\mathcal{L}}_j^0$ do not intersect,
    \item[(b)] $\mathcal{L}_i^0$ and $\widetilde{\mathcal{L}}_j^\infty$ do not intersect,
    \item[(c)] $\mathcal{L}_i^\infty$ and $\widetilde{\mathcal{L}}_j^0$ do not intersect,
    \item[(d)] $\mathcal{L}_i^\infty$ and $\widetilde{\mathcal{L}}_j^\infty$ do not intersect.
\end{enumerate}

We return again to $\mathcal{L}_1^0$. We know that $\mathcal{L}_1^0$ must intersect with $5$ lines in $\widetilde{\mathcal{F}}$ and it thus follows that $\mathcal{L}_1^0$ must intersect with $\widetilde{\mathcal{L}}_1^0$ and $\widetilde{\mathcal{L}}_1^\infty$. We proceed to show this explicitly. Firstly, by combining the equations defining the lines $\mathcal{L}_1^0$ and $\widetilde{\mathcal{L}}_1^0$, we find
\begin{align*}
 \mathcal{L}_1^0\cap\widetilde{\mathcal{L}}_1^0=\{N\in\mathbb{CP}^6:\,&N_{12}=N_{13}=N_{14}=0,N_{23}+N_{24}+N_{34}=0,\\ 
    &T_{34} N_{24}-P_{23}^{(1)}T_{24}N_{34}=0,T_{34} N_{23}-P_{24}^{(1)}T_{23}N_{34}=0,\\
&a_{23}N_{23}+a_{24}N_{24}+a_{34}N_{34}+a_{\infty}N_\infty=0\}.
    \end{align*}
This intersection is non-empty if and only if
\begin{equation*}
    \begin{vmatrix}
    1 & 1 & 1 & 0\\
    T_{34} & 0 & -P_{24}^{(1)}T_{23} &0\\
    0 & T_{34} & -P_{23}^{(1)}T_{24} &0\\
    a_{23} & a_{24} & a_{34} & a_\infty
    \end{vmatrix}= 0,
\end{equation*}
which holds if and only if
\begin{equation*}
\begin{vmatrix}
    1 & 1 & 1 \\
    T_{34} & 0 & -P_{24}^{(1)}T_{23} \\
    0 & T_{34} & -P_{23}^{(1)}T_{24} \\
    \end{vmatrix}= 0.
\end{equation*}
The determinant on the left-hand side is given by
\begin{equation*}
    T_{34}\;\theta_q\left(\frac{x_1x_2}{t_0q^{+\theta_0+\theta_\infty}}\right)\theta_q\left(\frac{x_1x_2}{t_0q^{+\theta_0-\theta_\infty}}\right)^{-1}\Delta,
\end{equation*}
where
\begin{align*}
    \Delta=&x_1\;\theta_q\left(\frac{x_3}{x_1},\frac{x_2}{x_4},\frac{x_1x_3}{t_0q^{+\theta_0-\theta_\infty}},\frac{x_2x_4}{t_0q^{+\theta_0-\theta_\infty}}\right)+\\
     &x_2\;\theta_q\left(\frac{x_1}{x_2},\frac{x_3}{x_4},\frac{x_1x_2}{t_0q^{+\theta_0-\theta_\infty}},\frac{x_3x_4}{t_0q^{+\theta_0-\theta_\infty}}\right)+\\
    &x_3\;\theta_q\left(\frac{x_2}{x_3},\frac{x_1}{x_4},\frac{x_2x_3}{t_0q^{+\theta_0-\theta_\infty}},\frac{x_1x_4}{t_0q^{+\theta_0-\theta_\infty}}\right),
\end{align*}
and we are left to show that $\Delta=0$. An easy way to prove this, is by considering $\Delta=\Delta(x_1)$ as a function in $x_1$, and forgetting about the constraint $x_1x_2x_3x_4=t_0^2$. Then $\Delta(x_1)$ is an analytic in $x_1\in\mathbb{C}^*$ and satisfies
\begin{equation*}
    \Delta(qx_1)=q^{\theta_0-\theta_\infty}t_0 x_1^{-2}\Delta(x_1).
\end{equation*}
So it is either a degree two theta function or identically equal to zero. In particular, either $\Delta(x_1)$ has precisely two zeros (counting multiplicity) in $\mathbb{C}^*$ modulo multiplication by $q$, or it is identically equal to zero. By direct calculation, setting $x_1=x_2$ yields
\begin{equation*}
\Delta(x_2)=\theta_q\left(\frac{x_2}{x_4},\frac{x_2x_3}{t_0q^{+\theta_0-\theta_\infty}},\frac{x_2x_4}{t_0q^{+\theta_0-\theta_\infty}}\right)\left(x_2\theta_q\left(\frac{x_3}{x_2}\right)+x_3\theta_q\left(\frac{x_2}{x_3}\right)\right)=0,
\end{equation*}
where the second equality follows from the identity $\theta_q(1/z)=-z^{-1}\theta_q(z)$. Similarly, we find $\Delta(x_3)=\Delta(x_4)=0$, and thus $\Delta(x_1)$ has at least three inequivalent zeros modulo multiplication by $q$, $x_1=x_2,x_3,x_4$. It follows that $\Delta(x_1)\equiv 0$. Thus $\mathcal{L}_1^0$ and $\widetilde{\mathcal{L}}_1^0$ indeed intersect at a (unique) point $N^*\in\mathbb{CP}^6$.

This point is infinite if and only if $N_\infty^*= 0$, which is equivalent to
\begin{align*}
0=&a_{23}N_{23}^*+a_{24}N_{24}^*+a_{34}N_{34}^*\\
=&a_{23}\frac{T_{23}}{T_{34}}P_{24}^{(1)}N_{34}^*+a_{24}\frac{T_{24}}{T_{34}}P_{23}^{(1)}N_{34}^*+a_{34}N_{34}^*\\
=&\frac{N_{34}^*}{T_{34}}\left(a_{23}T_{23}P_{24}^{(1)}+a_{24}T_{24}P_{23}^{(1)}+a_{34}T_{34}\right).
\end{align*}
A lengthy calculation using standard manipulations of $q$-theta functions, shows that the factor in brackets in the last line is equal to
\begin{gather*}
    a_{23}T_{23}P_{24}^{(1)}+a_{24}T_{24}P_{23}^{(1)}+a_{34}T_{34}=\\
q^{-2\theta_0-\theta_\infty}t_0\frac{\theta_q(q^{\theta_0})^2\theta_q(q^{2\theta_\infty})\theta_q\left(\frac{x_2}{x_3},\frac{x_4}{x_2},\frac{x_3}{x_4}\right)}{\theta_q\left(\frac{x_1x_2}{t_0q^{+\theta_0-\theta_\infty}},\frac{x_1x_3}{t_0q^{+\theta_0+\theta_\infty}},\frac{x_1x_4}{t_0q^{+\theta_0+\theta_\infty}}\right)}\theta_q\left(\frac{x_1^2}{t_0q^{+\theta_0+\theta_\infty}}\right).
\end{gather*}
Under the parameter conditions \eqref{eq:parameterassumptionsx}, only the last factor in the above expression can possible vanish. Thus, the intersection point of $\mathcal{L}_1^0$ and $\widetilde{\mathcal{L}}_1^0$ is finite if and only if
\begin{equation*}
    q^{+\theta_0+\theta_\infty} \frac{t_0}{x_1^2}\notin q^{\mathbb{Z}}.
\end{equation*}

A similar computation shows that $\mathcal{L}_1^0$ and $\widetilde{\mathcal{L}}_1^\infty$ intersect at a unique point, and this point is finite if and only if 
\begin{equation*}
    q^{-\theta_0+\theta_\infty} \frac{t_0}{x_1^2}\notin q^{\mathbb{Z}}.
\end{equation*}
By similar computations and permutations of the indices, we find the following statements. For any $1\leq i\leq 4$,
\begin{enumerate}
    \item[(i)] $\mathcal{L}_i^0$ and $\widetilde{\mathcal{L}}_i^0$ intersect and the intersection point is finite if and only if
    \begin{equation*}
    q^{+\theta_0+\theta_\infty} \frac{t_0}{x_i^2}\notin q^{\mathbb{Z}}.
\end{equation*}
    \item[(ii)] $\mathcal{L}_i^0$ and $\widetilde{\mathcal{L}}_i^\infty$ intersect and the intersection point is finite if and only if
    \begin{equation*}
    q^{-\theta_0+\theta_\infty} \frac{t_0}{x_i^2}\notin q^{\mathbb{Z}}.
\end{equation*}
    \item[(iii)] $\mathcal{L}_i^\infty$ and $\widetilde{\mathcal{L}}_i^0$ intersect and the intersection point is finite if and only if
    \begin{equation*}
    q^{+\theta_0-\theta_\infty} \frac{t_0}{x_i^2}\notin q^{\mathbb{Z}}.
\end{equation*}
    \item[(iv)] $\mathcal{L}_i^\infty$ and $\widetilde{\mathcal{L}}_i^\infty$ intersect and the intersection point is finite if and only if
    \begin{equation*}
    q^{-\theta_0-\theta_\infty} \frac{t_0}{x_i^2}\notin q^{\mathbb{Z}}.
\end{equation*}
\end{enumerate}
The intersection diagram in Figure \ref{fig:lines_intersection} now follows from enumerations (1)-(4),(1)'-(4)',(a)-(d) and (i)-(iv) above. Furthermore, we know that all intersection points are finite if and only if, for all $1\leq i,j\leq 4$, $i\neq j$ and all $\epsilon_0,\epsilon_\infty\in \{\pm 1\}$,
\begin{equation*}
       q^{-2\theta_\infty} \frac{x_i}{x_j}\notin q^{\mathbb{Z}},\quad
        q^{-2\theta_0} \frac{x_i}{x_j}\notin q^{\mathbb{Z}},\quad
        q^{\epsilon_0\theta_0+\epsilon\theta_\infty} \frac{t_0}{x_i^2}\notin q^{\mathbb{Z}}.
\end{equation*}
The theorem now follows through the identification of parameters in \eqref{eq:identification}.
\end{proof}

We finish this section with a heuristic dimension count. Consider pairs $(S,H)$, where $S$ a smooth (complex) Segre surface, embedded in $\mathbb{CP}^4$, and $H$ a corresponding hyperplane, so that $S\setminus H$ forms an affine Segre surface. The natural automorphism group $PGL_5(\mathbb{C})$ of the ambient space $\mathbb{CP}^4$ acts on such pairs in the usual way. The space of smooth Segre surfaces modulo $PGL_5(\mathbb{C})$ is two dimensional, as can for example be derived by putting them into diagonal form, see Mabuchi and Mukai \cite{mabuchi}. A heuristic dimension count then shows that the space of affine Segre surfaces, moduli the action by $PGL_5(\mathbb{C})$, has complex dimension six. This fits nicely with the fact that the affine Segre surface $\mathcal{F}$ involves six parameters, $\theta_0,\theta_t,\theta_1,\theta_\infty, t_0$ and $q$. It would interesting to see whether inequivalent parameter values (under the extended affine Weyl group symmetry of $q\Psix$) yield inequivalent affine Segre surfaces, and whether the space of affine Segre surfaces $\{\mathcal{F}\}$, defined by varying all the parameters such that conditions \eqref{eq:param_assumptions_1} and \eqref{eq:param_assumptions_2} are satisfied, is open and/or dense.

\section{Mano decompositions}\label{sec:qpairspants}
Decompositions of monodromy of isomonodromic systems along pairs of pants decompositions of underlying Riemann surfaces are very useful in the exact and asymptotic study of their solutions. For example, they were used by Gavrylenko and Lisovyy \cite{gavlis2018} to derive an explicit Fredholm determinant representation for the general solution of $\Psix$. In that context, one considers pairs of pants decompositions of $\mathbb{CP}^1\setminus\{0,t,1,\infty\}$ and correspondingly decomposes global monodromy of a $4$-point Fuchsian system into the monodromies of two local hypergeometric systems. By correspondingly decomposing the Riemann-Hilbert problem (RHP) and constructing local parametrices out of solutions of the two hypergeometric systems, one can recast the original RHP into one with a single jump on a closed contour. The latter RHP is very suitable to then extract asymptotics of the underlying solution of $\Psix$ \cites{itslisovyy,gavlis2018}.
We will apply a similar method to $q\Psix$, close to the approach by Its et al. \cite{itslisovyy} in methodology.

Evidence that a $q$-analog of decomposition of monodromy, along pairs-of-pants decompositions,
 might also be feasible for $q$-difference linear systems first appeared in Mano \cite{manoqpvi}, where it was shown that, generically, the global connection matrix of the Jimbo-Sakai linear problem factorises into the connection matrices of two local $q$-hypergeometric systems. A similar factorisation was also obtained by the author \cite{phdroffelsen}, for the linear problem related to $q\text{P}(A_1)$, devised by Yamada \cite{yamadalax}.

Ohyama et al. \cite{ohyamaramissualoy} established the general factorisation of the connection matrix of the Jimbo-Sakai linear problem into the connection matrices of local hypergeometric systems, which they call Mano decompositions.

Heuristically, it is tempting to think of such factorisations along decompositions of the complex cylinder with four marked points,
\begin{equation*}
    \left(\mathbb{CP}^1\setminus\{0,\infty\},\{q^{\theta_t}t,q^{-\theta_t}t,q^{\theta_1},q^{-\theta_1}\}\right),
\end{equation*}
into two copies of the complex cylinder, each with two marked points, for example
\begin{equation*}
    \left(\mathbb{CP}^1\setminus\{0',\infty\},\{q^{\theta_1},q^{-\theta_1}\}\right)\quad \text{and}\quad 
    \left(\mathbb{CP}^1\setminus\{0,\infty'\},\{q^{\theta_t}t,q^{-\theta_t}t\}\right),
\end{equation*}
or
\begin{equation*}
    \left(\mathbb{CP}^1\setminus\{0'',\infty\}, \{q^{\theta_t}t,q^{-\theta_t}t\}\right)\quad \text{and}\quad 
    \left(\mathbb{CP}^1\setminus \{0,\infty''\},\{q^{\theta_1},q^{-\theta_1}\}\right),
\end{equation*}
see Figure \ref{fig:pairspants}.
 There are ${{4}\choose{2}}=6$ inequivalent such decompositions, but we will only need two of them, decompositions I and II, as pictorially displayed in Figure \ref{fig:pairspants}.

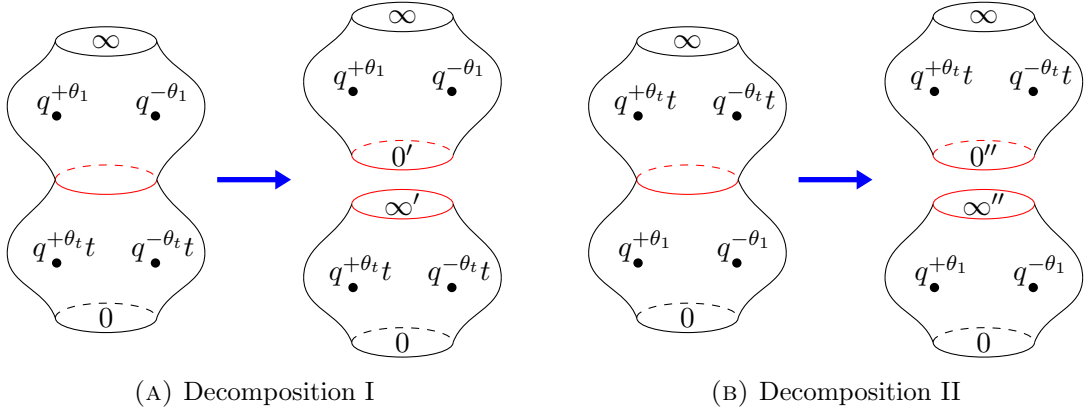
\begin{figure}[t]
     \centering
     \begin{minipage}{15cm}
     \begin{subfigure}[b]{0.49\textwidth}
         \centering
        \begin{tikzpicture}[scale=0.65]
	\tikzstyle{star}  = [circle, minimum width=3.5pt, fill, inner sep=0pt];
	\tikzstyle{starsmall}  = [circle, minimum width=3.5pt, fill, inner sep=0pt];

\draw (0,{3*cos(20)}) ellipse ({3*sin(20)} and 0.3);

\draw ({-3*sin(20)},{-3*cos(20)}) arc (180:360:{3*sin(20)} and 0.3);
\draw [dashed] ({-3*sin(20)},{-3*cos(20)}) arc (180:360:{3*sin(20)} and -0.3);

\draw [red] ({-3*sin(20)},0) arc (180:360:{3*sin(20)} and 0.3);
\draw [dashed,red] ({-3*sin(20)},0) arc (180:360:{3*sin(20)} and -0.3);

\draw ({3*sin(20)},{3*cos(20)}) to[out=-70,in=90] (2,1.5);
\draw ({3*sin(20)},{-3*cos(20)}) to[out=70,in=-90] (2,-1.5);

\draw ({3*sin(20)},{0}) to[out=70,in=-90] (2,1.5);
\draw ({3*sin(20)},{0}) to[out=-70,in=90] (2,-1.5);
 
\draw ({-3*sin(20)},{3*cos(20)}) to[out=-110,in=90] (-2,1.5);
\draw ({-3*sin(20)},{-3*cos(20)}) to[out=110,in=-90] (-2,-1.5);

\draw ({-3*sin(20)},{0}) to[out=110,in=-90] (-2,1.5);
\draw ({-3*sin(20)},{0}) to[out=-110,in=90] (-2,-1.5);

\node at (0,{3*cos(20)}) {$\infty$};
\node at (0,{-3*cos(20)}) {$0$};

\node[starsmall] (x1) at (-1,-1.7) {};
\node at ($(x1)+(0.15,0.35)$) {$q^{+\theta_t}t$};

\node[starsmall] (x2) at (1,-1.7) {};
\node at ($(x2)+(0.15,0.35)$) {$q^{-\theta_t}t$};

\node[starsmall] (x3) at (-1,1.3) {};
\node at ($(x3)+(0.15,0.35)$) {$q^{+\theta_1}$};

\node[starsmall] (x4) at (1,1.3) {};
\node at ($(x4)+(0.15,0.35)$) {$q^{-\theta_1}$};

\draw (6,{0.5+3*cos(20)}) ellipse ({3*sin(20)} and 0.3);

\draw ({6-3*sin(20)},{-0.5-3*cos(20)}) arc (180:360:{3*sin(20)} and 0.3);
\draw [dashed] ({6-3*sin(20)},{-0.5-3*cos(20)}) arc (180:360:{3*sin(20)} and -0.3);

\draw [red] ({6-3*sin(20)},0.5) arc (180:360:{3*sin(20)} and 0.3);
\draw [dashed,red] ({6-3*sin(20)},0.5) arc (180:360:{3*sin(20)} and -0.3);

\draw [red] ({6-3*sin(20)},-0.5) arc (180:360:{3*sin(20)} and 0.3);
\draw [red] ({6-3*sin(20)},-0.5) arc (180:360:{3*sin(20)} and -0.3);

\draw ({6+3*sin(20)},{3*cos(20)+0.5}) to[out=-70,in=90] (6+2,1.5+0.5);
\draw ({6+3*sin(20)},{-3*cos(20)-0.5}) to[out=70,in=-90] (6+2,-1.5-0.5);

\draw ({6+3*sin(20)},{0.5}) to[out=70,in=-90] (6+2,1.5+0.5);
\draw ({6+3*sin(20)},{-0.5}) to[out=-70,in=90] (6+2,-1.5-0.5);
 
\draw ({6-3*sin(20)},{3*cos(20)+0.5}) to[out=-110,in=90] (6-2,1.5+0.5);
\draw ({6-3*sin(20)},{-3*cos(20)-0.5}) to[out=110,in=-90] (6-2,-1.5-0.5);

\draw ({6-3*sin(20)},{0.5}) to[out=110,in=-90] (6-2,1.5+0.5);
\draw ({6-3*sin(20)},{-0.5}) to[out=-110,in=90] (6-2,-1.5-0.5);

\draw[-{Triangle[blue,width=7pt,length=6pt]},blue, line width=2pt](2.25,0) -- (3.75, 0);

\node at (6,{3*cos(20)+0.5}) {$\infty$};
\node at (6,{-3*cos(20)-0.5}) {$0$};

\node at (6,{0.5}) {$0'$};
\node at (6,{-0.5}) {$\infty'$};

\node[starsmall] (x1sh) at (6-1,-1.7-0.5) {};
\node at ($(x1sh)+(0.15,0.35)$) {$q^{+\theta_t}t$};

\node[starsmall] (x2sh) at (6+1,-1.7-0.5) {};
\node at ($(x2sh)+(0.15,0.35)$) {$q^{-\theta_t}t$};

\node[starsmall] (x3sh) at (6-1,1.3+0.5) {};
\node at ($(x3sh)+(0.15,0.35)$) {$q^{+\theta_1}$};

\node[starsmall] (x4sh) at (6+1,1.3+0.5) {};
\node at ($(x4sh)+(0.15,0.35)$) {$q^{-\theta_1}$};

\end{tikzpicture}
         \caption{Decomposition I}
         \label{fig:pairspantsI}
     \end{subfigure}
\hfill
     \begin{subfigure}[b]{0.49\textwidth}
         \centering
         
         \begin{tikzpicture}[scale=0.65]
	\tikzstyle{star}  = [circle, minimum width=3.5pt, fill, inner sep=0pt];
	\tikzstyle{starsmall}  = [circle, minimum width=3.5pt, fill, inner sep=0pt];

\draw (0,{3*cos(20)}) ellipse ({3*sin(20)} and 0.3);

\draw ({-3*sin(20)},{-3*cos(20)}) arc (180:360:{3*sin(20)} and 0.3);
\draw [dashed] ({-3*sin(20)},{-3*cos(20)}) arc (180:360:{3*sin(20)} and -0.3);

\draw [red] ({-3*sin(20)},0) arc (180:360:{3*sin(20)} and 0.3);
\draw [dashed,red] ({-3*sin(20)},0) arc (180:360:{3*sin(20)} and -0.3);

\draw ({3*sin(20)},{3*cos(20)}) to[out=-70,in=90] (2,1.5);
\draw ({3*sin(20)},{-3*cos(20)}) to[out=70,in=-90] (2,-1.5);

\draw ({3*sin(20)},{0}) to[out=70,in=-90] (2,1.5);
\draw ({3*sin(20)},{0}) to[out=-70,in=90] (2,-1.5);
 
\draw ({-3*sin(20)},{3*cos(20)}) to[out=-110,in=90] (-2,1.5);
\draw ({-3*sin(20)},{-3*cos(20)}) to[out=110,in=-90] (-2,-1.5);

\draw ({-3*sin(20)},{0}) to[out=110,in=-90] (-2,1.5);
\draw ({-3*sin(20)},{0}) to[out=-110,in=90] (-2,-1.5);

\node at (0,{3*cos(20)}) {$\infty$};
\node at (0,{-3*cos(20)}) {$0$};

\node[starsmall] (x1) at (-1,1.3) {};
\node at ($(x1)+(0.15,0.35)$) {$q^{+\theta_t}t$};

\node[starsmall] (x2) at (1,1.3) {};
\node at ($(x2)+(0.15,0.35)$) {$q^{-\theta_t}t$};

\node[starsmall] (x3) at (-1,-1.7) {};
\node at ($(x3)+(0.15,0.35)$) {$q^{+\theta_1}$};

\node[starsmall] (x4) at (1,-1.7) {};
\node at ($(x4)+(0.15,0.35)$) {$q^{-\theta_1}$};

\draw (6,{0.5+3*cos(20)}) ellipse ({3*sin(20)} and 0.3);

\draw ({6-3*sin(20)},{-0.5-3*cos(20)}) arc (180:360:{3*sin(20)} and 0.3);
\draw [dashed] ({6-3*sin(20)},{-0.5-3*cos(20)}) arc (180:360:{3*sin(20)} and -0.3);

\draw [red] ({6-3*sin(20)},0.5) arc (180:360:{3*sin(20)} and 0.3);
\draw [dashed,red] ({6-3*sin(20)},0.5) arc (180:360:{3*sin(20)} and -0.3);

\draw [red] ({6-3*sin(20)},-0.5) arc (180:360:{3*sin(20)} and 0.3);
\draw [red] ({6-3*sin(20)},-0.5) arc (180:360:{3*sin(20)} and -0.3);

\draw ({6+3*sin(20)},{3*cos(20)+0.5}) to[out=-70,in=90] (6+2,1.5+0.5);
\draw ({6+3*sin(20)},{-3*cos(20)-0.5}) to[out=70,in=-90] (6+2,-1.5-0.5);

\draw ({6+3*sin(20)},{0.5}) to[out=70,in=-90] (6+2,1.5+0.5);
\draw ({6+3*sin(20)},{-0.5}) to[out=-70,in=90] (6+2,-1.5-0.5);
 
\draw ({6-3*sin(20)},{3*cos(20)+0.5}) to[out=-110,in=90] (6-2,1.5+0.5);
\draw ({6-3*sin(20)},{-3*cos(20)-0.5}) to[out=110,in=-90] (6-2,-1.5-0.5);

\draw ({6-3*sin(20)},{0.5}) to[out=110,in=-90] (6-2,1.5+0.5);
\draw ({6-3*sin(20)},{-0.5}) to[out=-110,in=90] (6-2,-1.5-0.5);

\draw[-{Triangle[blue,width=7pt,length=6pt]},blue, line width=2pt](2.25,0) -- (3.75, 0);

\node at (6,{3*cos(20)+0.5}) {$\infty$};
\node at (6,{-3*cos(20)-0.5}) {$0$};

\node at (6,{0.5}) {$0''$};
\node at (6,{-0.5}) {$\infty''$};

\node[starsmall] (x1sh) at (6-1,+1.3+0.5) {};
\node at ($(x1sh)+(0.15,0.35)$) {$q^{+\theta_t}t$};

\node[starsmall] (x2sh) at (6+1,+1.3+0.5) {};
\node at ($(x2sh)+(0.15,0.35)$) {$q^{-\theta_t}t$};

\node[starsmall] (x3sh) at (6-1,-1.7-0.5) {};
\node at ($(x3sh)+(0.15,0.35)$) {$q^{+\theta_1}$};

\node[starsmall] (x4sh) at (6+1,-1.7-0.5) {};
\node at ($(x4sh)+(0.15,0.35)$) {$q^{-\theta_1}$};

\end{tikzpicture}

         \caption{Decomposition II}
         \label{fig:pairspantsII}
     \end{subfigure}
\end{minipage}

        \caption{Graphical illustration of two decompositions relevant for the asymptotic analysis of $q\Psix$.}
        \label{fig:pairspants}
\end{figure}

In Section \ref{sec:rhp} we recall the definition of the general Riemann-Hilbert problem of $q\Psix$. In Section \ref{sec:hyp_model}, we discuss the Heine hypergeometric system, which forms the local model for the different factors in the Mano decompositions.

Then, in Section \ref{sec:dec_algebraic} we discuss the Mano decomposition I of the monodromy, from an algebraic point of view, following Ohyama et al. \cite{ohyamaramissualoy}. This is followed by Section \ref{sec:dec_analytic}, in which we describe the associated factorisation of the Riemann-Hilbert problem. Then, in Sections \ref{sec:dec_algebraicII} and \ref{sec:dec_analyticII}, we discuss the analogous algebraic and analytic realisations of Mano decomposition II.

\subsection{The $q\Psix$ Riemann-Hilbert Problem}
\label{sec:rhp}

Take a point $\eta$ on the affine Segre surface $\mathcal{F}(\Theta,t_0)$ and let $C(z)\in\mathfrak{C}(\Theta,t_0)$ be a corresponding connection matrix. For $m\in\mathbb{Z}$, let $\gamma^{\B{m}}$ be a Jordan curve in $\mathbb{C}$ such that, denoting by $D_-^{\B{m}}$ and $D_+^{\B{m}}$ its inside and outside within $\mathbb{C}$ respectively, we have $0\in D_-^{\B{m}}$ and
\begin{align*}
q^{\mathbb{Z}_{>0}}\cdot \{q^{\theta_t} t_m,q^{-\theta_t} t_m,q^{\theta_1},q^{-\theta_1}\}&\subseteq D_-^{\B{m}},\\
q^{\mathbb{Z}_{\leq 0}}\cdot \{q^{\theta_t} t_m,q^{-\theta_t} t_m,q^{\theta_1},q^{-\theta_1}\}&\subseteq D_+^{\B{m}}, 
\end{align*}
where we use the notation $U\cdot V=\{uv:u\in U,v\in V\}$ for compatible sets $U$ and $V$, and
    \begin{equation*}
     D_-^{\B{m+1}}\subseteq D_-^{\B{m}},
    \end{equation*}
 see Figure \ref{fig:contours}.

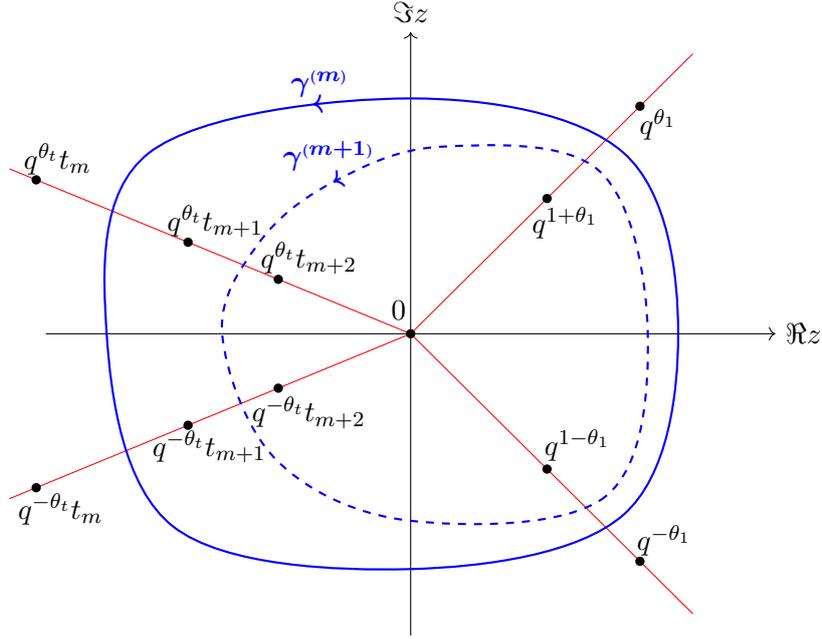
\begin{figure}[ht]
	\centering
	\begin{tikzpicture}[scale=0.8]
	\draw[->] (-6,0)--(6,0) node[right]{$\Re{z}$};
	\draw[->] (0,-5)--(0,5) node[above]{$\Im{z}$};
	\tikzstyle{star}  = [circle, minimum width=3.5pt, fill, inner sep=0pt];
	\tikzstyle{starsmall}  = [circle, minimum width=3.5pt, fill, inner sep=0pt];

	\draw[domain=-1.1:6,smooth,variable=\x,red] plot ({exp(-\x*ln(2))*2*5/3*cos((-3*pi/8+5*pi/4+\x*0) r)},{-exp(-\x*ln(2))*2*5/3*sin((-3*pi/8+5*pi/4+\x*0) r)});	
    \draw[domain=-1.1:6,smooth,variable=\x,red] plot ({exp(-\x*ln(2))*2*5/3*cos((-3*pi/8+5*pi/4+\x*0) r)},{exp(-\x*ln(2))*2*5/3*sin((-3*pi/8+5*pi/4+\x*0) r)});

	\draw[domain=-1.3:6,smooth,variable=\x,red] plot ({exp(-\x*ln(2))*2*4/3*cos((pi/4) r)},{exp(-\x*ln(2))*2*4/3*sin((pi/4) r)});	
	\draw[domain=-1.3:6,smooth,variable=\x,red] plot ({exp(-\x*ln(2))*2*4/3*cos((pi/4) r)},{-exp(-\x*ln(2))*2*4/3*sin((pi/4) r)});
	
	\draw [blue,thick,decoration={markings, mark=at position 0.6 with {\arrow{<}}},
	postaction={decorate}] plot [smooth cycle,tension=0.6] coordinates {(4.4,0) (3.5,-3) (0,-3.9) (-3.9,-3.2) (-5,0) (-4.3,3.0) (0,3.9) (3.5,3) };
	
	\draw [blue,dashed,thick,decoration={markings, mark=at position 0.53 with {\arrow{<}}},
	postaction={decorate}] plot [smooth cycle,tension=0.6] coordinates {(3.9,0) (3.2,-2.7) (0,-3.1) (-2.2,-2.1) (-3.1,0) (-2.5,1.5) (-1.3,2.5) (0.5,3.1) (3.1,2.7) };
	
	\node[starsmall]     (or) at ({0},{0} ) {};
	\node     at ($(or)+(-0.2,0.4)$) {$0$};
	
	\node[starsmall]     (qk1) at ({sqrt(sqrt(2))*2*4/3*cos((-pi/4) r)},{-sqrt(sqrt(2))*2*4/3*sin((-pi/4) r)} ) {};
	\node     at ($(qk1)+(0.3,-0.4)$) {$q^{1+\theta_1}$};
	\node[star]     (k1) at ({2*2*4/3*cos((-pi/4) r)},{-2*2*4/3*sin((-pi/4) r)} ) {};
	\node     at ($(k1)+(0.3,-0.3)$) {$q^{\theta_1}$};
	
	\node[starsmall]     (qkm1) at ({sqrt(sqrt(2))*2*4/3*cos((-pi/4) r)},{sqrt(sqrt(2))*2*4/3*sin((-pi/4) r)} ) {};
	\node     at ($(qkm1)+(0.5,0.4)$) {$q^{1-\theta_1}$};
	\node[star]     (km1) at ({2*2*4/3*cos((-pi/4) r)},{2*2*4/3*sin((-pi/4) r)} ) {};
	\node     at ($(km1)+(0.4,0.35)$) {$q^{-\theta_1}$};
	
	\node[star]     (qqkt) at ({sqrt(2)*5/3*cos((-3*pi/8+5*pi/4+2/4*0) r)},{sqrt(2)*5/3*sin((-3*pi/8+5*pi/4+2/4*0) r)} ) {};
	\node     at ($(qqkt)+(0.5,0.35)$) {$q^{\theta_t}t_{m+2}$};
	\node[star]     (qkt) at ({sqrt(sqrt(2))*2*5/3*cos((-3*pi/8+5*pi/4-1/4*0) r)},{sqrt(sqrt(2))*2*5/3*sin((-3*pi/8+5*pi/4-1/4*0) r)} ) {};
	\node     at ($(qkt)+(0.45,0.35)$) {$q^{\theta_t}t_{m+1}$};	
	\node[star]     (kt) at ({2*2*5/3*cos((-3*pi/8+5*pi/4-0) r)},{2*2*5/3*sin((-3*pi/8+5*pi/4-0) r)} ) {};
	\node     at ($(kt)+(0.35,0.35)$) {$q^{\theta_t}t_{m}$};

	\node[star]     (qqktm) at ({sqrt(2)*5/3*cos((-3*pi/8+5*pi/4) r)},{-sqrt(2)*5/3*sin((-3*pi/8+5*pi/4) r)} ) {};
	\node     at ($(qqktm)+(+0.5,-0.4)$) {$q^{-\theta_t}t_{m+2}$};
	\node[star]     (qktm) at ({sqrt(sqrt(2))*2*5/3*cos((-3*pi/8+5*pi/4) r)},{-sqrt(sqrt(2))*2*5/3*sin((-3*pi/8+5*pi/4) r)} ) {};
	\node     at ($(qktm)+(0.35,-0.35)$) {$q^{-\theta_t}t_{m+1}$};
	\node[star]     (ktm) at ({2*2*5/3*cos((-3*pi/8+5*pi/4) r)},{-2*2*5/3*sin((-3*pi/8+5*pi/4) r)} ) {};
	\node     at ($(ktm)+(0.4,-0.35)$) {$q^{-\theta_t}t_{m}$};

	\node[blue]     at (-1.5,4.15) {$\boldsymbol{\gamma^{\B{m}}}$};
	
	\node[blue]     at (-1.35,2.95) {$\boldsymbol{\gamma^{\B{m+1}}}$};

	
	
	\end{tikzpicture}
	\caption{An example of contours $\gamma^{\B{m}}$ and $\gamma^{\B{m+1}}$ , where $t_m=q^mt_0$ and the red lines denote the four half-lines $q^{\mathbb{R}}\cdot x$, $x\in\{q^{\pm \theta_t}t_0,q^{\pm \theta_1}\}$.}
	\label{fig:contours}
\end{figure}

Then the main Riemann-Hilbert problem for $q\Psix$ reads as follows.

\begin{rhprob}\label{rhp:main}
Find a $2\times 2$ matrix-valued function $\Psi(z)=\Psi^{\B{m}}(z)$, which satisfies the following conditions.
  \begin{enumerate}[label={{\rm (\roman *)}}]
  \item $\Psi^{\B{m}}(z)$ is analytic on $\mathbb{C}\setminus\gamma^{\B{m}}$.
    \item $\Psi^{\B{m}}(z)$ has continuous boundary values $\Psi_+^{\B{m}}(z)$ and $\Psi_-^{\B{m}}(z)$ as $z$ approaches $\gamma^{\B{m}}$ from its respective out and inside, related by
		\begin{equation*}
		\Psi_+^{\B{m}}(z)=\Psi_-^{\B{m}}(z)z^m C(z),\quad z\in \gamma^{\B{m}}.
              \end{equation*}
            \item $\Psi^{\B{m}}(z)$ satisfies
              \begin{equation*}
		\Psi^{\B{m}}(z)=I+\mathcal{O}\left(z^{-1}\right)\quad z\rightarrow \infty.
              \end{equation*}
              \end{enumerate}
\end{rhprob}

In \cite{roffelsenjoshiqpvi}, it is shown that this Riemann-Hilbert problem is solvable. Namely, let $\mathfrak{M}$ denote the set of integers $m\in\mathbb{Z}$ such that the solution $\Psi^{\B{m}}(z)$ of the above Riemann-Hilbert problem exists. Then $\mathfrak{M}\neq \emptyset$ and, in fact, for any $m\in\mathbb{Z}$, if $m\notin \mathfrak{M}$, then $m\pm 1\in \mathfrak{M}$.

Take an $m\in \mathfrak{M}$ and denote by $\Psi_\infty^{\B{m}}(z)$ the analytic continuation of $\Psi^{\B{m}}(z)$ from the outside $D_+^{\B{m}}$ to $\mathbb{C}^*$. Similarly, let $\Psi_0^{\B{m}}(z)$  denote the analytic continuation of $\Psi^{\B{m}}(z)$ from the inside  $D_-^{\B{m}}$ to $\mathbb{C}$. Then
\begin{equation}
   \Psi_\infty^{\B{m}}(z)=\Psi_0^{\B{m}}(z)z^mC(z) \label{eq:globaljump},
\end{equation}
and
\begin{subequations}\label{eq:true_sol}
    \begin{align}
Y_0(z,t_m)&:=z^{\log_q(t_m)}\Psi_0^{\B{m}}(z)z^{\theta_0\sigma_3},\\
Y_\infty(z,t_m)&:=z^{\log_q(z/q)}\Psi_\infty^{\B{m}}(z) z^{-\theta_\infty\sigma_3},
\end{align}
\end{subequations}
define solutions of the same linear system
\begin{equation}\label{eq:linear_system}
    Y(qz)=A(z,t_m)Y(z),
\end{equation}
which is of the form \eqref{eq:linear_problem}. By construction, this defines an isomonodromic family in $t_m$ and therefore yields an associated solution $(f,g)$ of $q\Psix(\Theta,t_0)$. In this way, solving the Riemann-Hilbert problem is equivalent to the point-wise inversion of the monodromy mapping defined in Definition \ref{def:monodromy_manifold}.

We further note that the values of $m$ for which the Riemann-Hilbert problem is not solvable, i.e. $m\in\mathbb{Z}\setminus \mathfrak{M}$, correspond precisely to the times $t_m$ for which $(f(t_m),g(t_m))=(\infty,q^{-\theta_\infty})$.

In Section \ref{sec:dec_analytic} we recast this Riemann-Hilbert problem in a factorised form of local problems solvable using Heine's hypergeometric functions. In the next section, we discuss the class of such local problems.

\subsection{The Heine Hypergeometric System}\label{sec:hyp_model}
Consider a linear system
\begin{align}
    Y(qz)&=A(z)Y(z),\label{eq:hypsystem}\\
    A(z)&=A_0+z A_1,\nonumber
\end{align}
with $A_0$ and $A_1$ constant $2\times 2$ invertible matrices.  In this section, we discuss how such a system can be solved in terms of Heine's basic hypergeometric functions and furthermore recall the explicit description of the corresponding connection matrix.

In the following we denote
\begin{itemize}
    \item the eigenvalues of $A_0$ by $\{\sigma_{1},\sigma_{2}\}$,
    \item the eigenvalues of $A_1$ by $\{\mu_{1}^{-1},\mu_2^{-1}\}$,
    \item the roots of the determinant $|A(z)|$ by $\{x_1,x_2\}$,
\end{itemize}
so that
\begin{equation*}
    \sigma_1\sigma_2\mu_1\mu_2=x_1x_2.
\end{equation*}
These parameters form the critical data of the linear system and we will impose that there is no resonance, i.e.
\begin{equation*}
\frac{\sigma_1}{\sigma_2},\frac{\mu_1}{\mu_2},\frac{x_1}{x_2}\notin q^\mathbb{Z}.
\end{equation*}
We further assume that the linear system is irreducible, which amounts to
\begin{equation}\label{eq:irred_assumption}
    -\frac{\sigma_1\mu_1}{x_1},-\frac{\sigma_1\mu_2}{x_1},-\frac{\sigma_2\mu_1}{x_1},-\frac{\sigma_2\mu_2}{x_1}\notin q^\mathbb{Z}.
\end{equation}
Here, irreducible means that the linear system cannot be brought into triangular form by a rational gauge transform, or, equivalently, that the monodromy of the linear system is not triangular or anti-triangular. We comment on this in more detail further below.

Through application of a constant gauge transform $Y\mapsto G Y$, we may diagonalise $A_1$ so that the coefficients of $A$ are in the standard form
\begin{equation*}
    A_0=\begin{pmatrix}
    \alpha & \beta\, w\\
    \gamma\, w^{-1} & \delta\\
    \end{pmatrix},\quad
      A_1=\begin{pmatrix}
    1/\mu_1 & 0\\
    0 & 1/\mu_2\\
    \end{pmatrix},
\end{equation*}
where
\begin{align*}
    \alpha=&\frac{x_1+x_2+\mu_2(\sigma_1+\sigma_2)}{\mu_2-\mu_1}, &    \beta=&\frac{(x_1+\sigma_1\mu_2)(x_2+\sigma_1\mu_1)}{\sigma_1\mu_2(\mu_2-\mu_1)},\\
    \delta=&\frac{x_1+x_2+\mu_1(\sigma_1+\sigma_2)}{\mu_1-\mu_2}, &
    \gamma=&\frac{(x_1+\sigma_1\mu_1)(x_2+\sigma_1\mu_2)}{\sigma_1\mu_1(\mu_1-\mu_2)},
\end{align*}
and $w$ is a remaining unknown parameter related to the freedom of gauging the linear system by a constant diagonal matrix. We refer to $w$ as the gauge parameter and note that the normalised linear system is completely fixed by the critical data and gauge parameter. 

Further normalisation of the linear system is possible by rescaling
\begin{equation*}
    Y(z)\mapsto z^{\log_q(s)}Y(cz),\quad (c,s\in\mathbb{C}^*),
\end{equation*}
so that $A(z)\mapsto s A(cz)$, which allows each of the products $\sigma_1\sigma_2$, $\mu_1\mu_2$ and $x_1x_2$ to be scaled to equal $1$. This will however not be necessary for our purposes in this section.

We proceed to introduce canonical solutions of the linear system near $z=\infty$ and $z=0$ and recall the classical connection formulas between them.

Near $z=\infty$, a canonical solution is given by
\begin{align}
    Y_\infty(z)&=z^{\frac{1}{2}\log_q(z/q)}\Psi_\infty(z) \begin{pmatrix} z^{-\log_q(\mu_1)} & 0\\
    0 & z^{-\log_q(\mu_2)}
    \end{pmatrix},\label{eq:hypersolinfbranch}\\
    \Psi_\infty(z)&=\widehat{\Psi}_\infty(z) \begin{pmatrix} \left(\frac{q x_1}{z};q\right)_\infty & 0\\
    0 & \left(\frac{q x_2}{z};q\right)_\infty
    \end{pmatrix},\label{eq:hypersolinf}
\end{align}
where
\begin{equation*}
 \widehat{\Psi}_\infty(z)=\begin{pmatrix}
\hspace{2.1mm}\;_{2}\phi_1 \left[\begin{matrix} 
-\frac{\sigma_1\mu_1}{x_1}, -\frac{\sigma_2\mu_1}{x_1} \\ 
\frac{\mu_2}{\mu_1} \end{matrix} 
; q,\frac{qx_1}{z} \right] & \frac{w\, r_1}{z}  \;_{2}\phi_1 \left[\begin{matrix} 
-\frac{q\sigma_1\mu_2}{x_2}, -\frac{q\sigma_2\mu_2}{x_2} \\ 
q^2\frac{\mu_1}{\mu_2} \end{matrix} 
; q,\frac{qx_2}{z} \right]\\
\frac{r_2}{w\,z} \;_{2}\phi_1 \left[\begin{matrix} 
-\frac{q\sigma_1\mu_1}{x_1}, -\frac{q\sigma_2\mu_1}{x_1} \\ 
q^2\frac{\mu_2}{\mu_1} \end{matrix} 
; q,\frac{qx_1}{z} \right] & \hspace{3.35mm} \;_{2}\phi_1 \left[\begin{matrix} 
-\frac{\sigma_1\mu_2}{x_2}, -\frac{\sigma_2\mu_2}{x_2} \\ 
\frac{\mu_1}{\mu_2} \end{matrix} 
; q,\frac{qx_2}{z} \right]\\
 \end{pmatrix},
\end{equation*}
with
\begin{equation*}
    r_1=\frac{q \mu_1\mu_2\beta}{\mu_1-q\mu_2},\qquad r_2=\frac{q \mu_1\mu_2\gamma}{\mu_2-q\mu_1}.
\end{equation*}

Near $z=0$, we have a canonical solution
\begin{align}
    Y_0(z)&=\Psi_0(z) \begin{pmatrix} z^{\log_q(\sigma_1)} & 0\\
    0 & z^{\log_q(\sigma_2)}
    \end{pmatrix},\label{eq:hypersolzerobranch}\\
    \Psi_0(z)&=\widehat{\Psi}_0(z)\begin{pmatrix} \big(\frac{z}{x_1};q\big)_\infty^{-1} & 0\\
    0 & \big(\frac{z}{x_2};q\big)_\infty^{-1}
    \end{pmatrix}, \label{eq:hypersolzero}
\end{align}
where
\begin{equation*}
 \widehat{\Psi}_0(z)=\begin{pmatrix}
 h_{11}\;_{2}\phi_1 \left[\begin{matrix} 
-\frac{\sigma_1\mu_1}{x_1}, -\frac{q\sigma_1\mu_2}{x_1} \\ 
\frac{q\sigma_1}{\sigma_2} \end{matrix} 
; q,\frac{z}{x_2} \right] & h_{12}  \;_{2}\phi_1 \left[\begin{matrix} 
-\frac{\sigma_2\mu_1}{x_2}, -\frac{q\sigma_2\mu_2}{x_2} \\ 
\frac{q\sigma_2}{\sigma_1} \end{matrix} 
; q,\frac{z}{x_1} \right]\\
h_{21} \;_{2}\phi_1 \left[\begin{matrix} 
-\frac{\sigma_1\mu_2}{x_1}, -\frac{q\sigma_1\mu_1}{x_1} \\ 
\frac{q\sigma_1}{\sigma_2} \end{matrix} 
; q,\frac{z}{x_2} \right] & h_{22} \;_{2}\phi_1 \left[\begin{matrix} 
-\frac{\sigma_2\mu_2}{x_2}, -\frac{q\sigma_2\mu_1}{x_2} \\ 
\frac{q\sigma_2}{\sigma_1} \end{matrix} 
; q,\frac{z}{x_1} \right]\\
 \end{pmatrix},
\end{equation*}
with
\begin{equation*}
h=\begin{pmatrix}
1+\frac{x_1}{\sigma_1\mu_2} & w(1+\frac{x_1}{\sigma_2\mu_2})\\
w^{-1}(1+\frac{x_1}{\sigma_1\mu_1}) & 1+\frac{x_1}{\sigma_2\mu_1}
\end{pmatrix}\begin{pmatrix}s_1^{-1} & 0\\
0 & s_2^{-1}\\
\end{pmatrix}.
\end{equation*}
Here $s_{1},s_2\in\mathbb{C}^*$ are two free scaling parameters and we note that $h$ diagonalises $A_0$, that is, $h^{-1}A_0 h=\operatorname{diag}(\sigma_1,\sigma_2)$.

It is a classical result, essentially due to Watson \cite{watson}, see also Le Caine \cite{lecaine}, that the corresponding connection matrix,
\begin{equation*}
    C_H(z):=\Psi_0(z)^{-1}\Psi_\infty(z),
\end{equation*}
is explicitly given by
\begin{equation}\label{eq:explicit_connection}
    C(z)=\begin{pmatrix} s_1 & 0\\
    0 & s_2\end{pmatrix}
    \begin{pmatrix}
   {\color{white} w^{-1}} c_{11}\,\theta_q(-\frac{z}{\sigma_1\mu_1}) & w\, c_{12}\,\theta_q(-\frac{z}{\sigma_1\mu_2})\\
    w^{-1} c_{21}\,\theta_q(-\frac{z}{\sigma_2\mu_1}) & {\color{white} w}\, c_{22}\,\theta_q(-\frac{z}{\sigma_2\mu_2})\\
    \end{pmatrix},
\end{equation}
where
\begin{align*}
    c_{11}&=\displaystyle
\frac{\left(-\frac{\sigma_2\mu_1}{x_1},-\frac{q\sigma_2\mu_1}{x_2};q\right)_\infty}{\left(\frac{\mu_1}{\mu_2},\frac{\sigma_2}{\sigma_1};q\right)_\infty}, &
\displaystyle
c_{12}&=\frac{\left(-\frac{\sigma_2\mu_2}{x_1},-\frac{q\sigma_2\mu_2}{x_2};q\right)_\infty}{\left(\frac{\mu_2}{\mu_1},\frac{\sigma_2}{\sigma_1};q\right)_\infty},\\
c_{21}&=\displaystyle\frac{\left(-\frac{\sigma_1\mu_1}{x_1},-\frac{q\sigma_1\mu_1}{x_2};q\right)_\infty}{\left(\frac{\mu_1}{\mu_2},\frac{\sigma_1}{\sigma_2};q\right)_\infty}, &
c_{22}&=\displaystyle
\frac{\left(-\frac{\sigma_1\mu_2}{x_1},-\frac{q\sigma_1\mu_2}{x_2};q\right)_\infty}{\left(\frac{\mu_2}{\mu_1},\frac{\sigma_1}{\sigma_2};q\right)_\infty}.
\end{align*}

Note that the connection matrix is determined by the critical data
\begin{equation*}
    \Xi:=\{\sigma_1,\sigma_2,x_1,x_2,\mu_1,\mu_2\},
\end{equation*}
the gauge parameter $w$ and the two free scaling parameters $s_{1,2}$.

The matrix functions $\Psi_\infty(z)$ and $\Psi_0(z)^{-1}$ are analytic on $\mathbb{CP}^1\setminus \{0\}$ and $\mathbb{CP}^1\setminus \{\infty\}$ respectively, and their determinants are given by
\begin{align*}
|\Psi_\infty(z)|&=\left(q x_1/z,qx_2/z;q\right)_\infty,\\
|\Psi_0(z)|&=\left(z/x_1,z/x_2;q\right)_\infty.
\end{align*}
Correspondingly, the connection matrix is analytically characterised by the following properties.
\begin{enumerate}[label=\textbf{S.\arabic*}]
    \item\label{item:s1} $C(z)$ is analytic on $\mathbb{C}^*$.
    \item\label{item:s2} $C(z)$ satisfies the $q$-difference equation
    \begin{equation*}
        C(qz)=z^{-1}\begin{pmatrix}
        \sigma_1 & 0\\
        0 & \sigma_2
        \end{pmatrix} C(z) \begin{pmatrix}
        \mu_1 & 0\\
        0 & \mu_2
        \end{pmatrix}.
    \end{equation*}
    \item\label{item:s3} The determinant of $C(z)$ is given by
    \begin{equation*}
        |C(z)|=c\, \theta_q\left(z/x_1,z/x_2\right),
    \end{equation*}
    for some nonzero multiplier $c\in\mathbb{C}^*$.
\end{enumerate}

Now, take any Jordan curve $\gamma_0\subset \mathbb{C}^*$, such that, denoting by $D_-$ and $D_+$ its inside and outside within $\mathbb{C}$,
\begin{align*}
    q^nx_j\in \begin{cases}D_+ & n\leq 0,\\
    D_- & n>0,\\
    \end{cases}
\end{align*}
for $n\in\mathbb{Z}$ and $j=1,2$. Then
\begin{equation*}
    \Psi(z)=\begin{cases}\Psi_\infty(z) & z\in D_+,\\
    \Psi_0(z) & z\in D_-,\\
    \end{cases}
\end{equation*}
defines the unique solution of the following Riemann-Hilbert problem.
\begin{rhprob}[The model Riemann-Hilbert problem]\label{rhp:model}
    Find a $2\times 2$ matrix-valued function $\Psi(z)$ which satisfies the following conditions.
\begin{enumerate}
    \item $\Psi(z)$ is analytic on $\mathbb{C}\setminus \gamma_0$.
    \item The matrix function $\Psi(z)$ has continuous boundary values $\Psi_+(z)$ and $\Psi_-(z)$ as $z$ approaches $\gamma_0$ from its respective out and inside respectively, related by the jump condition
    \begin{equation*}
        \Psi_+(z)=\Psi_-(z)C(z)\quad (z\in \gamma_0).
    \end{equation*}
    \item As $z\rightarrow \infty$,
    \begin{equation*}
        \Psi(z)=I+\mathcal{O}(z^{-1}).
    \end{equation*}
\end{enumerate}
\end{rhprob}

From the explicit formula for the connection matrix, it follows that its Tyurin data are given by
\begin{equation}\label{eq:tyurin}
     \rho_j=\pi(C(x_j))=\displaystyle\frac{1}{w}\frac{\big(\frac{\mu_2}{\mu_1},-\frac{\sigma_2\mu_1}{x_1},-\frac{\sigma_2\mu_1}{x_2};q\big)_\infty}{\big(\frac{\mu_1}{\mu_2},-\frac{\sigma_2\mu_2}{x_1},-\frac{\sigma_2\mu_2}{x_2};q\big)_\infty}\times  \frac{\theta_q\big(-\frac{x_j}{\sigma_1\mu_1}\big)}{\theta_q\big(-\frac{x_j}{\sigma_1\mu_2}\big)},
\end{equation}
for $j=1,2$. This immediately yields the following important formula,
\begin{equation}\label{eq:tyurinqoutient}
    \frac{\rho_1}{\rho_2}=\displaystyle\frac{\theta_q\big(-\frac{x_1}{\sigma_1\mu_1},-\frac{x_2}{\sigma_1\mu_2}\big)}{\theta_q\big(-\frac{x_2}{\sigma_1\mu_1},-\frac{x_1}{\sigma_1\mu_2}\big)},
\end{equation}
which relates the critical data of the linear system to the Tyurin data of the connection matrix.  Similarly, we have the following formula for the dual Tyurin data $\widetilde{\rho}_{1,2}:=\pi(C(x_{1,2})^T)$,
\begin{equation}\label{eq:tyurinqoutientdual}
    \frac{\widetilde{\rho}_1}{\widetilde{\rho}_2}=\displaystyle\frac{\theta_q\big(-\frac{x_1}{\sigma_1\mu_1},-\frac{x_2}{\sigma_2\mu_1}\big)}{\theta_q\big(-\frac{x_2}{\sigma_1\mu_1},-\frac{x_1}{\sigma_2\mu_1}\big)}.
\end{equation}
The right-hand sides of equations \eqref{eq:tyurin} and \eqref{eq:tyurinqoutientdual}, considered respectively as functions of $\sigma_1$ and $\mu_1$, are $q$-periodic functions which will play an important role in the Mano decompositions. They are derived in \cite{ohyamaramissualoy}*{\S 5.1.3 and \S 5.3.2} by geometric means.

Let us now consider a reducible case. All of the above formula remain valid when we set
\begin{equation*}
\mu_1=-q^{-k}\frac{x_1}{\sigma_1},
\end{equation*}
for some $k\in\mathbb{Z}$. When $k<0$, the connection matrix is lower-triangular and $\rho_2=\infty$. On the other hand, when $k\geq 0$, the connection matrix is upper-triangular and $\rho_1=0$. 

If we further specialise to $k=0$, then the coefficient matrix $A(z)$ and the matrix functions $\Psi_\infty(z)$ and $\Psi_0(z)$ are also upper-triangular. Explicitly, when $k=0$,
\begin{equation*}
    A(z)=\begin{pmatrix}
    \sigma_1 & \frac{w(x_1-x_2)}{\mu_2}\\
    0 & \sigma_2
    \end{pmatrix}+z\begin{pmatrix}
    1/\mu_1 & 0\\
    0 & 1/\mu_2\\
    \end{pmatrix},
\end{equation*}
and
\begin{equation*}
     \widehat{\Psi}_\infty(z)=\begin{pmatrix}
     1 & \frac{w\,r_1}{z} g_\infty(z)\\
     0 & 1
     \end{pmatrix},\qquad \widehat{\Psi}_0(z)=\begin{pmatrix}
     h_{11} & h_{12} \,g_0(z)\\
     0 & h_{22}
     \end{pmatrix},
\end{equation*}
where
\begin{equation*}
    g_\infty(z)=\;_{2}\phi_1 \left[\begin{matrix} 
q,\frac{q\sigma_1}{\sigma_2} \\ 
q^2\frac{\mu_1}{\mu_2} \end{matrix} 
; q,\frac{qx_2}{z} \right],\quad
    g_0(z)=\;_{2}\phi_1 \left[\begin{matrix} 
q,\frac{\mu_1}{\mu_1} \\ 
\frac{q\sigma_2}{\sigma_1} \end{matrix} 
; q,\frac{z}{x_1} \right].
\end{equation*}

Correspondingly, the connection matrix reads
\begin{equation*}
    C(z)=\begin{pmatrix} s_1 & 0\\
    0 & s_2\end{pmatrix}
    \begin{pmatrix}
     c_{11}\,\theta_q(\frac{z}{x_1}) & w\, c_{12}\,\theta_q(-\frac{z}{\sigma_1\mu_2})\\
   0 & {\color{white} w}\, c_{22}\,\theta_q(\frac{z}{x_2})\\
    \end{pmatrix},
\end{equation*}
and
\begin{equation*}
   \rho_1=0,\quad  \rho_2=-\frac{\mu_2}{\mu_1 w} \frac{\big(\frac{q x_1}{x_2},\frac{q\mu_2}{\mu_1};q\big)_\infty}{\big(q,q\frac{\sigma_1}{\sigma_2};q\big)_\infty}.
\end{equation*}
Note that, by setting $w=0$, we further obtain the diagonal case where $A$, $\Psi_{0}$, $\Psi_\infty$ and $C$ are all diagonal, with $\rho_1=0$ and $\rho_2=\infty$.

We finish this section with studying connection matrices of Heine hypergeometric type abstractly.

\begin{definition}
We call any $2\times 2$ matrix $C(z)$, satisfying properties \ref{item:s1}-\ref{item:s3}, for some choice of non-resonant critical data $\Xi$, a connection matrix of Heine hypergeometric type. We say $C(z)$ is reducible if and only if it is triangular or anti-triangular.
\end{definition}
\begin{definition}
We denote by $C_H(z;\Xi,w,s_{1},s_2)$ the connection matrix given in equation \eqref{eq:explicit_connection}, for any choice of non-resonant critical data $\Xi$, gauge parameter $w$ and scalars $s_{1,2}$.
\end{definition}
\begin{lemma}\label{lem:abstract_to_explicit}
Let $C(z)$ be any irreducible connection matrix of Heine hypergeometric type, for some critical data $\Xi$, then
\begin{equation*}
    C(z)=C_H(z;\Xi,w,s_{1},s_2),
\end{equation*}
is the connection matrix of a Heine hypergeometric system with critical data $\Xi$, for some values of the gauge parameter $w$ and scalars $s_1,s_2$. 
\end{lemma}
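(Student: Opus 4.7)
The plan is to build the matching between $C(z)$ and $C_H(z;\Xi,w,s_1,s_2)$ entry by entry, using first the q-difference equation and then the determinant condition.

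First, I would use \ref{item:s1} and \ref{item:s2}. Writing property \ref{item:s2} componentwise gives $C_{ij}(qz) = z^{-1}\sigma_i\mu_j\, C_{ij}(z)$ for $1\leq i,j\leq 2$. Any scalar function analytic on $\mathbb{C}^*$ satisfying such a q-difference equation is a constant multiple of $\theta_q(-z/(\sigma_i\mu_j))$, since the space of such solutions is one-dimensional. Hence
\begin{equation*}
C_{ij}(z) = \lambda_{ij}\,\theta_q\bigl(-z/(\sigma_i\mu_j)\bigr),\qquad \lambda_{ij}\in\mathbb{C}.
\end{equation*}
Irreducibility of $C(z)$ means that none of its entries vanish identically, so all four $\lambda_{ij}$ are nonzero. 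The next step is to attempt the identification by setting
\begin{equation*}
s_1 := \lambda_{11}/c_{11},\qquad s_2 := \lambda_{22}/c_{22},\qquad w := \lambda_{12}c_{11}/(\lambda_{11}c_{12}),
\end{equation*}
which are all nonzero provided the $c_{ij}$ are nonzero — and the latter holds precisely under the irreducibility hypothesis \eqref{eq:irred_assumption} on $\Xi$. With this choice, three of the four entries of $C(z)$ and $C_H(z;\Xi,w,s_1,s_2)$ agree automatically.

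The remaining obstacle, and the only substantive point, is to verify that the $(2,1)$-entries then also match. This reduces to the single scalar identity
\begin{equation*}
\frac{\lambda_{11}\lambda_{22}}{\lambda_{12}\lambda_{21}} = \frac{c_{11}c_{22}}{c_{12}c_{21}},
\end{equation*}
which I would derive from \ref{item:s3}. Indeed, $|C(x_1)| = 0$ by the determinant formula, and this vanishing yields
\begin{equation*}
\frac{\lambda_{12}\lambda_{21}}{\lambda_{11}\lambda_{22}} = \frac{\theta_q(-x_1/(\sigma_1\mu_1))\,\theta_q(-x_1/(\sigma_2\mu_2))}{\theta_q(-x_1/(\sigma_1\mu_2))\,\theta_q(-x_1/(\sigma_2\mu_1))}.
\end{equation*}
The same computation applied to $C_H(z;\Xi,w,s_1,s_2)$, which by construction also satisfies \ref{item:s1}--\ref{item:s3} with the same critical data, produces the identical right-hand side with $\lambda_{ij}$ replaced by the corresponding entry constants of $C_H$. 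Therefore the ratios coincide, the $(2,1)$-entries match, and the identification $C(z) = C_H(z;\Xi,w,s_1,s_2)$ is established. The argument is essentially linear-algebraic and the main (mild) care is tracking nonvanishing of the $c_{ij}$, which is precisely what the non-triviality assumptions \eqref{eq:irred_assumption} guarantee.
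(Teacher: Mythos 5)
Your proof is correct, but it takes a genuinely different route from the paper's. Both arguments start from the same observation that the $q$-difference equation \ref{item:s2} forces $C_{ij}(z)=\lambda_{ij}\,\theta_q(-z/(\sigma_i\mu_j))$, and that irreducibility makes all four $\lambda_{ij}$ nonzero. After that the paper does not match entries directly: it notes that the Tyurin ratio $\rho_1/\rho_2$ is determined by the critical data alone (equation \eqref{eq:tyurinqoutient}), scales the gauge parameter $w$ of $\widehat C=C_H$ via \eqref{eq:tyurin} so that the Tyurin data of $C$ and $\widehat C$ coincide, and then studies the quotient $D(z)=C(z)\widehat C(z)^{-1}$: the matching Tyurin data make $D$ analytic on all of $\mathbb{C}^*$, the conjugation-type $q$-difference equation it inherits forces it, by a Liouville argument, to be a constant diagonal matrix, and the constants are absorbed into $s_1,s_2$. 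You replace the quotient-and-Liouville step by the single determinant identity $|C(x_1)|=0$, which pins down $\lambda_{11}\lambda_{22}/(\lambda_{12}\lambda_{21})$ in terms of theta values of the critical data and hence forces the fourth entry to match once the other three are normalised. The two routes rest on the same underlying fact --- the rank-one condition at $x_1$ supplies exactly one relation among the four constants $\lambda_{ij}$ --- but yours is more elementary and self-contained, while the paper's recycles the Tyurin-parameter formulas it needs elsewhere and its quotient argument adapts more readily to the reducible cases of Lemma \ref{lem:reducible}. One detail to make explicit in your write-up: dividing by $\theta_q(-x_1/(\sigma_i\mu_j))$ in the determinant identity requires these values to be nonzero, which is again exactly condition \eqref{eq:irred_assumption} --- the same hypothesis you already invoke for the nonvanishing of the $c_{ij}$.
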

\begin{proof}
Let the critical data be given by $\Xi=\{\sigma_1,\sigma_2,x_1,x_2,\mu_1,\mu_2\}$ and denote by $(\rho_1,\rho_2)$ the Tyurin data of $C(z)$ at $(x_1,x_2)$. We start by proving equation \eqref{eq:tyurinqoutient} for this `abstract' connection matrix. To do this, note that
\begin{equation}
    C(z)=\begin{pmatrix}
   \gamma_{11}\,\theta_q(-\frac{z}{\sigma_1\mu_1}) &  \gamma_{12}\,\theta_q(-\frac{z}{\sigma_1\mu_2})\\
     \gamma_{21}\,\theta_q(-\frac{z}{\sigma_2\mu_1}) &  \gamma_{22}\,\theta_q(-\frac{z}{\sigma_2\mu_2})\\
    \end{pmatrix},
\end{equation}
for some constants $\gamma_{ij}, 1\leq i,j\leq 2$, all nonzero. It follows that
\begin{equation*}
    \rho_j=\frac{C_{11}(x_j)}{C_{12}(x_j)}=\frac{\gamma_{11}\,\theta_q(-\frac{x_j}{\sigma_1\mu_1})}{\gamma_{12}\,\theta_q(-\frac{x_j}{\sigma_1\mu_2})},
\end{equation*}
for $j=1,2$, from which equation \eqref{eq:tyurinqoutient} follows directly.

Next we consider the connection matrix 
\begin{equation*}
    \widehat{C}(z)=C_H(z;\Xi,w,s_{1},s_2).
\end{equation*}
Let $\widehat{\rho}_{1,2}$ denote its Tyurin data at $x_{1,2}$. Due to  equation \eqref{eq:tyurinqoutient}, we must have
\begin{equation*}
    \frac{\widehat{\rho}_1}{\widehat{\rho}_2}=\frac{\rho_1}{\rho_2},
\end{equation*}
and, using equation \eqref{eq:tyurin}, we can further scale the gauge parameter $w$ such that $\widehat{\rho}_{1,2}=\rho_{1,2}$. Now, consider the quotient
\begin{equation*}
    D(z)=C(z)\widehat{C}(z)^{-1}.
\end{equation*}
As the Tyurin data of $C(z)$ and $\widehat{C}(z)$ are the same, $D(z)$ is analytic at $z=x_{1,2}$ and thus analytic on $\mathbb{C}^*$. It, furthermore, satisfies
\begin{equation*}
    D(qz)=\begin{pmatrix}
     \sigma_1 & 0\\
     0 & \sigma_2
    \end{pmatrix}D(z)\begin{pmatrix}
     \sigma_1^{-1} & 0\\
     0 & \sigma_2^{-1}
    \end{pmatrix}.
\end{equation*}
So $D_{11}(z)$ and $D_{22}(z)$ are analytic functions which are $q$-periodic. By application of Liouville's theorem to $D_{11}(q^x)$ and $D_{22}(q^x)$, it follows that they must be constant. Similarly, $D_{12}(z)$ is analytic on $\mathbb{C}^*$ and it satisfies $D_{12}(qz)=r\,D_{12}(z)$, where $r=\frac{\sigma_1}{\sigma_2}\notin q^\mathbb{Z}$. It is easy to see that there is only one function satisfying these conditions, $D_{12}(z)\equiv 0$. Similarly, $D_{21}(z)\equiv 0$ and it follows that $D(z)=D$ is a constant matrix. So
\begin{equation*}
    C(z)=D\widehat{C}(z),
\end{equation*}
and by rescaling $s_1\mapsto s_1 D_{11}^{-1}$ and $s_2\mapsto s_2 D_{22}^{-1}$, we obtain $C(z)=C_H(z;\Xi,w,s_{1},s_2)$ and the lemma follows.
\end{proof}

\begin{lemma}\label{lem:reducible}
Let $C(z)$ be any connection matrix of Heine hypergeometric type, for some non-resonant critical data $\Xi$, and denote by
\begin{equation*}
    \rho_{1,2}=\pi(C(x_{1,2})),\quad \widetilde{\rho}_{1,2}=\pi(C(x_{1,2})^T),
\end{equation*}
its Tyurin data and dual Tyurin data respectively. Then, the following are equivalent,
\begin{enumerate}
    \item $C(z)$ is reducible,
    \item $\rho_1\in \{0,\infty\}$ or $\rho_2\in \{0,\infty\}$,
    \item $\widetilde{\rho}_1\in \{0,\infty\}$ or $\widetilde{\rho}_2\in \{0,\infty\}$,
    \item one of the parameter conditions in \eqref{eq:irred_assumption} is violated.
\end{enumerate}
Suppose now that $C(z)$ is reducible, so that one of the parameter conditions in \eqref{eq:irred_assumption} is violated.\\
If $-\frac{\sigma_1\mu_1}{x_1}=q^k\in q^{\mathbb{Z}}$, then one or both of the following hold true.
\begin{itemize}
    \item $(\rho_1,\widetilde{\rho}_2)=(0,\infty)$ and
\begin{equation*}
    C(z)=\begin{pmatrix} s_1 & 0\\
    0 & s_2\end{pmatrix}
    \begin{pmatrix}
     z^k\theta_q(\frac{z}{x_1}) & w\, \theta_q(-\frac{z}{\sigma_1\mu_2})\\
   0 & z^{-k} \theta_q(\frac{z}{x_2})\\
    \end{pmatrix},
\end{equation*}
    for some $s_{1,2}\in \mathbb{C}^*$ and $w\in\mathbb{C}$.
\item $(\widetilde{\rho}_1,\rho_2)=(0,\infty)$ and
\begin{equation*}
    C(z)=\begin{pmatrix} s_1 & 0\\
    0 & s_2\end{pmatrix}
    \begin{pmatrix}
     z^k\theta_q(\frac{z}{x_1}) & 0\\
   w^{-1}\, \theta_q(-\frac{z}{\sigma_2\mu_1}) & z^{-k} \theta_q(\frac{z}{x_2})\\
    \end{pmatrix},
\end{equation*}
    for some $s_{1,2}\in \mathbb{C}^*$ and $w^{-1}\in\mathbb{C}$.
\end{itemize}
If $-\frac{\sigma_2\mu_2}{x_1}=q^k\in q^{\mathbb{Z}}$, then one or both of the following hold true.
\begin{itemize}
    \item $(\widetilde{\rho}_1,\rho_2)=(\infty,0)$ and
\begin{equation*}
    C(z)=\begin{pmatrix} s_1 & 0\\
    0 & s_2\end{pmatrix}
    \begin{pmatrix}
     z^{-k}\theta_q(\frac{z}{x_2}) & w\, \theta_q(-\frac{z}{\sigma_1\mu_2})\\
   0 & z^{k} \theta_q(\frac{z}{x_1})\\
    \end{pmatrix},
\end{equation*}
    for some $s_{1,2}\in \mathbb{C}^*$ and $w\in\mathbb{C}$.
\item $(\rho_1, \widetilde{\rho}_2)=(\infty,0)$ and
\begin{equation*}
    C(z)=\begin{pmatrix} s_1 & 0\\
    0 & s_2\end{pmatrix}
    \begin{pmatrix}
     z^{-k}\theta_q(\frac{z}{x_2}) & 0\\
   w^{-1}\, \theta_q(-\frac{z}{\sigma_2\mu_1}) & z^{k} \theta_q(\frac{z}{x_1})\\
    \end{pmatrix},
\end{equation*}
    for some $s_{1,2}\in \mathbb{C}^*$ and $w^{-1}\in\mathbb{C}$.
\end{itemize}
If $-\frac{\sigma_2\mu_1}{x_1}=q^k\in q^{\mathbb{Z}}$, then one or both of the following hold true.
\begin{itemize}
    \item $(\widetilde{\rho}_1,\rho_2)=(\infty,\infty)$ and
\begin{equation*}
    C(z)=\begin{pmatrix} s_1 & 0\\
    0 & s_2\end{pmatrix}
    \begin{pmatrix}
     w\, \theta_q(-\frac{z}{\sigma_1\mu_1}) & z^{-k} \theta_q(\frac{z}{x_2}) \\
   z^{k}\theta_q(\frac{z}{x_1}) &  0\\
    \end{pmatrix},
\end{equation*}
    for some $s_{1,2}\in \mathbb{C}^*$ and $w\in\mathbb{C}$.
\item $(\rho_1, \widetilde{\rho}_2)=(0,0)$ and
\begin{equation*}
    C(z)=\begin{pmatrix} s_1 & 0\\
    0 & s_2\end{pmatrix}
    \begin{pmatrix}
     0 & z^{-k} \theta_q(\frac{z}{x_2}) \\
   z^{k}\theta_q(\frac{z}{x_1}) &  w^{-1} \theta_q(-\frac{z}{\sigma_2\mu_2})\\
    \end{pmatrix},
\end{equation*}
    for some $s_{1,2}\in \mathbb{C}^*$ and $w^{-1}\in\mathbb{C}$.
\end{itemize}
If $-\frac{\sigma_1\mu_2}{x_1}=q^k\in q^{\mathbb{Z}}$, then one or both of the following hold true.
\begin{itemize}
    \item $(\rho_1,\widetilde{\rho}_2)=(\infty,\infty)$ and
\begin{equation*}
    C(z)=\begin{pmatrix} s_1 & 0\\
    0 & s_2\end{pmatrix}
    \begin{pmatrix}
     w\, \theta_q(-\frac{z}{\sigma_1\mu_1}) & z^{k} \theta_q(\frac{z}{x_1}) \\
   z^{-k}\theta_q(\frac{z}{x_2}) &  0\\
    \end{pmatrix},
\end{equation*}
    for some $s_{1,2}\in \mathbb{C}^*$ and $w\in\mathbb{C}$.
\item $(\widetilde{\rho}_1,\rho_2)=(0,0)$ and
\begin{equation*}
    C(z)=\begin{pmatrix} s_1 & 0\\
    0 & s_2\end{pmatrix}
    \begin{pmatrix}
     0 & z^{k} \theta_q(\frac{z}{x_1}) \\
   z^{-k}\theta_q(\frac{z}{x_2}) &  w^{-1} \theta_q(-\frac{z}{\sigma_2\mu_2})\\
    \end{pmatrix},
\end{equation*}
    for some $s_{1,2}\in \mathbb{C}^*$ and $w^{-1}\in\mathbb{C}$.
\end{itemize}
\end{lemma}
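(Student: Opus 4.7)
The plan is to first fix the shape of every entry of $C(z)$ using the quasi-periodicity imposed by \ref{item:s2}, then deduce the four-way equivalence (1)--(4), and finally read off the explicit forms case by case.

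First, property \ref{item:s2} written component-wise is $C_{ij}(qz) = (\sigma_i\mu_j/z)\,C_{ij}(z)$ for each $1\leq i,j\leq 2$. The space of analytic functions on $\mathbb{C}^*$ satisfying this first-order theta-quasi-periodicity is one-dimensional: if $f,g$ are two nonzero solutions, each must have its zeros concentrated on $-\sigma_i\mu_j q^{\mathbb{Z}}$ (by a zero count on a fundamental annulus), so $f/g$ is analytic on $\mathbb{C}^*$ and $q$-periodic, hence constant by the usual Liouville-type argument on $\mathbb{C}^*/q^{\mathbb{Z}}$. Therefore each $C_{ij}(z)$ is either identically zero, or
\[
C_{ij}(z) = \gamma_{ij}\,\theta_q\!\left(-z/(\sigma_i\mu_j)\right),\qquad \gamma_{ij}\in\mathbb{C}^*.
\]
This normal form already tells us when $\rho_j$ and $\widetilde{\rho}_j$ can take values in $\{0,\infty\}$: namely, iff an entire column (respectively row) of $C(x_j)$ vanishes.

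Second, I would establish the equivalence of (1)--(4) by combining the normal form with \ref{item:s3}. Reducibility of a $2\times 2$ matrix means some entry vanishes identically. If $C_{21}\equiv 0$, then $C_{11}C_{22} = c\,\theta_q(z/x_1,z/x_2)$, and matching zero sets (using $x_1/x_2\notin q^{\mathbb{Z}}$) forces either $-\sigma_1\mu_1/x_1\in q^{\mathbb{Z}}$, or $-\sigma_1\mu_1/x_2\in q^{\mathbb{Z}}$; the latter is equivalent via $\sigma_1\sigma_2\mu_1\mu_2 = x_1x_2$ to $-\sigma_2\mu_2/x_1\in q^{\mathbb{Z}}$. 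The three remaining vanishings $C_{12}\equiv 0$, $C_{11}\equiv 0$, $C_{22}\equiv 0$ are handled by the symmetric arguments (with $|C(z)|=-C_{12}C_{21}$ in the anti-triangular cases), exhausting the four parameter conditions in \eqref{eq:irred_assumption}. Conversely, if one of those conditions is violated, the corresponding theta factor in the normal form vanishes at $z = x_j$ for the appropriate $j$, which together with the remaining non-resonance conditions forces an entire row or column of $C(x_j)$ to be zero, hence reducibility together with the precise Tyurin/dual-Tyurin specialization.

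Third, I would derive the explicit forms by directly rewriting the normal form. Assume $-\sigma_1\mu_1/x_1 = q^k$. Iterating $\theta_q(qw) = -w^{-1}\theta_q(w)$ yields the identity $\theta_q(q^{-k}w) = (-1)^k q^{-k(k+1)/2}w^k\theta_q(w)$, hence
\[
\theta_q\!\left(-z/(\sigma_1\mu_1)\right) = \theta_q\!\left(q^{-k}z/x_1\right) = \kappa_k\, z^k\,\theta_q(z/x_1),
\]
and analogously $\theta_q(-z/(\sigma_2\mu_2)) = \kappa_k'\, z^{-k}\,\theta_q(z/x_2)$. In the upper-triangular branch $C_{21}\equiv 0$, the determinant relation $C_{11}C_{22} = c\,\theta_q(z/x_1,z/x_2)$ is then automatic (absorbing $\kappa_k,\kappa_k'$ into the scalars $s_{1,2}$), while $C_{12}$ is free within its one-dimensional family, which after renaming $\gamma_{12}=s_1 w$ recovers the first sub-case of the lemma; direct evaluation at $x_1,x_2$ confirms $(\rho_1,\widetilde\rho_2)=(0,\infty)$. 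The lower-triangular branch $C_{12}\equiv 0$ gives the second sub-case by the symmetric argument. The wording ``one or both'' accommodates the degenerate situation where $w=0$ (respectively $w^{-1}=0$), in which $C$ is diagonal and both branches hold simultaneously. The remaining three parameter-violation cases follow the same template with the roles of the $\sigma,\mu$ labels or diagonal/antidiagonal structure exchanged. The main obstacle is purely bookkeeping: consistently tracking the $z^{\pm k}$ factors produced by the $\theta_q$-shift identities, and verifying in each of the eight resulting sub-cases that the Tyurin and dual Tyurin specializations stated in the lemma match direct evaluation of the normal form at $z=x_1$ and $z=x_2$.
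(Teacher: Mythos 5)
Your proposal follows essentially the same route as the paper: reduce each entry to the one-dimensional normal form $C_{ij}(z)=\gamma_{ij}\,\theta_q(-z/(\sigma_i\mu_j))$ forced by \ref{item:s2}, use the determinant \ref{item:s3} to match zero sets in the triangular/anti-triangular cases, and read off the eight explicit forms and their Tyurin data via the shift identity for $\theta_q$. That part, including the equivalence $(1)\Leftrightarrow(4)$ and the implications $(1)\Rightarrow(2)$, $(1)\Rightarrow(3)$, is sound. The one link you never close is $(2)\Rightarrow(1)$ (and likewise $(3)\Rightarrow(1)$): your first paragraph only restates the definition of $\pi$ (a Tyurin parameter lies in $\{0,\infty\}$ iff a column of $C(x_j)$ vanishes), but you do not argue that such a vanishing column forces an identically vanishing entry. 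The missing observation, which the paper makes explicitly, is that no two of the four conditions in \eqref{eq:irred_assumption} can fail simultaneously without violating non-resonance (any pair of failures yields $\sigma_1/\sigma_2$, $\mu_1/\mu_2$ or $x_1/x_2$ in $q^{\mathbb{Z}}$); hence from, say, $C_{11}(x_1)=C_{21}(x_1)=0$ and the normal form, at least one of $\gamma_{11},\gamma_{21}$ must be zero, which is reducibility. This is a one-line fix given your setup, but as written the four-way equivalence is not fully established.
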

\begin{proof}
The proof is straightforward but laborious. 
Let us start by assuming that $C(z)$ is reducible. Then, either $C(z)$ is triangular or anti-triangular. If $C(z)$ is triangular, then
\begin{equation*}
C_{11}(z)C_{22}(z)=|C(z)|=c\, \theta_q(z/x_1,z/x_2),
\end{equation*}
for some $c\in\mathbb{C}^*$ and hence
\begin{equation*}
    C_{11}(z)=c_1\,z^k\theta_q(z/x_i),\quad C_{22}(z)=c_2\,z^k\theta_q(z/x_j),
\end{equation*}
for some scalars $c_{1,2}$, $k\in\mathbb{Z}$ and $\{i,j\}=\{1,2\}$. This yields four separate cases, defined by
 $C(z)$ being upper or lower triangular (or both) and $(i,j)=(1,2)$ or $(i,j)=(2,1)$.\\
Similarly, if $C(z)$ is anti-triangular, then
\begin{equation*}
C_{12}(z)C_{21}(z)=-|C(z)|=c\, \theta_q(z/x_1,z/x_2),
\end{equation*}
for some $c\in\mathbb{C}^*$ and thus
\begin{equation*}
    C_{12}(z)=c_1\,z^k\theta_q(z/x_i),\quad C_{21}(z)=c_2\,z^k\theta_q(z/x_j),
\end{equation*}
for some scalars $c_{1,2}$, $k\in\mathbb{Z}$ and $\{i,j\}=\{1,2\}$. This yields four separate cases, defined by
 $C(z)$ being upper or lower anti-triangular (or both) and $(i,j)=(1,2)$ or $(i,j)=(2,1)$.
 
 This gives a total of $4+4=8$ different possible forms of $C(z)$, described explicitly in the lemma. The values of $\rho_{1,2}$ and $\widetilde{\rho}_{1,2}$ can be read directly from the formulas and, in particular, in each case (2) and (3) hold true. Finally, (4) follows from comparing the explicit formulas for $C(z)$ with the $q$-difference equation in \ref{item:s2}.
 
 Note also that at most one of the parameter conditions in equation \eqref{eq:irred_assumption} can be violated, as otherwise the critical data $\Xi$ are reducible. We have thus established the second part of the lemma and shown that (1) implies (2),(3) and (4).
 
 For the remainder of the proof, note that
 \begin{equation}\label{eq:cexlicit}
    C(z)=\begin{pmatrix}
   \gamma_{11}\,\theta_q(-\frac{z}{\sigma_1\mu_1}) &  \gamma_{12}\,\theta_q(-\frac{z}{\sigma_1\mu_2})\\
     \gamma_{21}\,\theta_q(-\frac{z}{\sigma_2\mu_1}) &  \gamma_{22}\,\theta_q(-\frac{z}{\sigma_2\mu_2})\\
    \end{pmatrix},
\end{equation}
 for some complex coefficients $\gamma_{ij}\in\mathbb{C}$, $1\leq i,j\leq 2$, by the $q$-difference equation in \ref{item:s2}.
 
 Next, assume (2) and, without loss of generality, that $\rho_1=0$. Then,  $C_{11}(x_1)=C_{21}(x_1)=0$.  It follows that either $\gamma_{11}=0$ or $-\frac{\sigma_1\mu_1}{x_1}\in q^{\mathbb{Z}}$. Similarly, either $\gamma_{21}(z)=0$ or $-\frac{\sigma_2\mu_1}{x_1}\in q^{\mathbb{Z}}$. Since both parameter conditions on the critical data cannot hold simultaneously, either $\gamma_{11}=0$ or $\gamma_{21}=0$ and thus $C(z)$ must be reducible. This proves that (1) and (2) are equivalent.
 
 Analogously, it is shown that (1) and (3) are equivalent. Finally, assume (4) and without loss of generality, consider $-\frac{\sigma_1\mu_1}{x_1}\in q^{\mathbb{Z}}$. If $C_{11}(z)\equiv 0$, then (1) holds and we are done. So suppose $C_{11}(z)\not\equiv 0$. Since $C_{11}(x_1)=0$ and $|C(x_1)|=0$, it follows that $C_{12}(x_1)=0$ or $C_{21}(x_1)=0$. By \eqref{eq:cexlicit}, and the fact that the critical data are non-resonant, it follows that $\gamma_{12}=0$ or $\gamma_{21}=0$. This finishes the proof of the lemma.
\end{proof}

\subsection{Algebraic Decomposition I of the monodromy}
\label{sec:dec_algebraic}
In this section we construct the first of two factorisations of the global connection matrix of the Jimbo-Sakai linear system into two local connection matrices coming from Heine hypergeometric systems, relevant in this paper.

The existence of such factorisations was proven in full generality by Ohyama et al. \cite{ohyamaramissualoy}[Theorem 5.13]. Here we only discuss the generic setting, excluding the so called logarithmic cases. On the other hand, we provide additional explicit formulas for all the different factors. 

In this section, we construct the factorisation, along the decomposition illustrated in Figure \ref{fig:pairspants}(A), and call it Mano decomposition I.  We start with the following lemma, which will be helpful for our purposes.
\begin{lemma}\label{lem:pi}
    Let $R$ and $N=(n_{ij})$ be $2\times 2$ matrices of rank $1$ and $2$ respectively, then
\begin{equation*}
  \pi(NR)=\pi(R),\quad  \pi(RN)=M_{N}(\pi(R)),\quad M_{N}(Z):=\frac{n_{11}Z+n_{21}}{n_{12}Z+n_{22}}.
\end{equation*}
\end{lemma}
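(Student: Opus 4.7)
The lemma is essentially an unpacking of the definition of $\pi$, so the plan is simply to verify each identity column-by-column, with a short accounting of the boundary cases $\pi(R)\in\{0,\infty\}$.

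Write $R=(R_1,R_2)$ in terms of its columns. For the first identity, I would observe that $NR=(NR_1,NR_2)$ and, assuming $\pi(R)\notin\{0,\infty\}$ so that $R_1=\pi(R)R_2$, apply $N$ on the left to obtain $NR_1=\pi(R)\,NR_2$. Since $N$ is invertible and $R_2\neq 0$, also $NR_2\neq 0$, so by definition $\pi(NR)=\pi(R)$. The cases $\pi(R)=0$ (i.e.\ $R_1=0$) and $\pi(R)=\infty$ (i.e.\ $R_2=0$) then follow immediately from $N R_j=0 \Leftrightarrow R_j=0$.

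For the second identity, the columns of $RN$ are
\begin{equation*}
(RN)_1=n_{11}R_1+n_{21}R_2, \qquad (RN)_2=n_{12}R_1+n_{22}R_2.
\end{equation*}
Substituting $R_1=\pi(R)R_2$ gives $(RN)_j=(n_{1j}\pi(R)+n_{2j})R_2$, and dividing yields $\pi(RN)=\frac{n_{11}\pi(R)+n_{21}}{n_{12}\pi(R)+n_{22}}=M_N(\pi(R))$, as claimed.

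For the boundary cases I would verify directly: if $\pi(R)=0$, then $R_1=0$ and the two columns of $RN$ become $n_{21}R_2$ and $n_{22}R_2$, giving $\pi(RN)=n_{21}/n_{22}=M_N(0)$; if $\pi(R)=\infty$, then $R_2=0$ and the two columns become $n_{11}R_1$ and $n_{12}R_1$, giving $\pi(RN)=n_{11}/n_{12}=M_N(\infty)$. Since $N$ is invertible, no denominator in $M_N$ can vanish simultaneously with its numerator, so $M_N$ is a well-defined M\"obius transformation of $\mathbb{CP}^1$ and the identity holds in all cases.

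There is no real obstacle here; the only thing to be a bit careful about is the $\mathbb{CP}^1$ bookkeeping at $0$ and $\infty$, which is handled by the case split above.
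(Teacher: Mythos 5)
Your proof is correct and takes the same route as the paper, which simply states that the identities "follow directly from the definition of $\pi(\cdot)$"; your column-by-column verification, including the bookkeeping at $0$ and $\infty$, is exactly the computation the paper leaves implicit.
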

\begin{proof}
These identities follow directly from the definition of $\pi(\cdot)$.
\end{proof}

The describe decomposition I, recall the definition of the elliptic function
\begin{equation*}
\mathcal{E}_0(\sigma)=\frac{\vartheta_\tau(\sigma-\theta_1+\theta_\infty,\sigma+\theta_1-\theta_\infty)}{\vartheta_\tau(\sigma+\theta_1+\theta_\infty,\sigma-\theta_1-\theta_\infty)}.
\end{equation*}
This is an elliptic function of degree $2$ with periods $1$ and $\tau$. It furthermore has the reflection symmetry $\mathcal{E}_0(-\sigma)=\mathcal{E}_0(\sigma)$. We have the following proposition, which corresponds to the generic case (1) in \cite{ohyamaramissualoy}[Theorem 5.13].

\begin{proposition}\label{prop:algebraicdecomzero}
Let $C(z)\in\mathfrak{C}(\Theta,t_0)$ be a connection matrix with corresponding Tyurin data $\rho$ and recall the notation $\rho_{34}:=\rho_3/\rho_4\in\mathbb{CP}^1$. If
\begin{equation}\label{eq:nonlogassumption}
    \rho_{34}\neq \mathcal{E}_0(\sigma), \text{ for } \sigma=0,\tfrac{1}{2},\tfrac{\tau}{2},\tfrac{1}{2}+\tfrac{\tau}{2},
\end{equation}
then, for any solution $\sigma_{0t}$ of $\mathcal{E}_0(\sigma)=\rho_{34}$, we can construct a factorisation of the form,
\begin{equation}\label{eq:factorisation}
    z^mC(z)=D_mC^i\left(z/t_m\right)(-t_m)^{\sigma_{0t} \sigma_3}C^e(z),
\end{equation}
valid for all $m\in\mathbb{Z}$, where 
\begin{itemize}
    \item $C^e(z)$, independent of $m$, is the connection matrix of a Heine hypergeometric system with critical data $\Xi^e$ given by
    \begin{align*}
    \sigma_1^e&=-q^{-\sigma_{0t}}, &  x_1^e&=q^{+\theta_1}, &   \mu_1^e&=q^{+\theta_\infty},\\
    \sigma_2^e&=-q^{+\sigma_{0t}}, &  x_2^e&=q^{-\theta_1}, & \mu_2^e&=q^{-\theta_\infty}.
    \end{align*}
    \item $C^i(z)$, independent of $m$, is the connection matrix of a Heine hypergeometric system with critical data $\Xi^i$ given by
    \begin{align*}
    \sigma_1^i&=q^{+\theta_0}, &  x_1^i&=q^{+\theta_t}, &   \mu_1^i&=-q^{+\sigma_{0t}},\\
    \sigma_2^i&=q^{-\theta_0}, &  x_2^i&=q^{-\theta_t}, & \mu_2^i&=-q^{-\sigma_{0t}}.
    \end{align*}
    \item $D_m$ is a diagonal matrix, explicitly given by
    \begin{equation*}
        D_m=(-t_0)^m q^{\frac{1}{2}m(m+1)}q^{m\theta_0\sigma_3}.
    \end{equation*}
    \end{itemize}
Let $\widetilde{\rho}$ denote the dual Tyurin data of $C(z)$, then we have the identifications
\begin{equation}\label{eq:tyurin_identification}
    (\rho_3,\rho_4)=(\rho_1^e,\rho_2^e),\qquad
    (\widetilde{\rho}_1,\widetilde{\rho}_2)=(\widetilde{\rho}_1^i,\widetilde{\rho}_2^i),
\end{equation}
where $(\rho_1^e,\rho_2^e)$ are the Tyurin data of $C^e(z)$ and $(\widetilde{\rho}_1^i,\widetilde{\rho}_2^i)$ are the dual Tyurin data of $C^i(z)$. In particular, the following hold true.
\begin{itemize}
    \item If $\rho_{3},\rho_4\notin\{0,\infty\}$, then $$\sigma_{0t}\not\equiv\pm (\theta_\infty+\theta_1),\pm (\theta_\infty-\theta_1)\mod{\Lambda_\tau},$$ and $C^e(z)$ is irreducible and given by
    $C^e(z)=C_H(z;\Xi^e,w^e,s_1^e,s_2^e)$, for some gauge parameter $w^e$ and scaling parameters $s_{1,2}^e$.\\ Else, $C^e(z)$ is reducible.
    \item If $\widetilde{\rho}_1,\widetilde{\rho}_2\notin\{0,\infty\}$, then $$\sigma_{0t}\not\equiv\pm (\theta_0+\theta_t),\pm (\theta_0-\theta_t)\mod{\Lambda_\tau},$$ and $C^i(z)$ is irreducible and given by
    $C^i(z)=C_H(z;\Xi^i,w^i,s_1^i,s_2^i)$,
    for some gauge parameter $w^i$ and scaling parameters $s_{1,2}^i$.\\ Else, $C^i(z)$ is reducible.
\end{itemize}
\end{proposition}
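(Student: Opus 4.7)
The plan is to first construct the external factor $C^e(z)$ as a Heine hypergeometric connection matrix with critical data $\Xi^e$ whose Tyurin data match $(\rho_3,\rho_4)$, then define $C^i$ as an appropriate quotient, and finally verify that $C^i$ has the Heine hypergeometric structure with critical data $\Xi^i$. The compatibility of this approach hinges on recognising that a direct $\theta$-function calculation using \eqref{eq:tyurinqoutient} and the definitions of $\mathcal{E}_0$ and $\vartheta_\tau$ gives
\begin{equation*}
\frac{\rho_1^e}{\rho_2^e}=\mathcal{E}_0(\sigma_{0t})
\end{equation*}
for any Heine hypergeometric connection matrix with critical data $\Xi^e$, so that the elliptic equation $\mathcal{E}_0(\sigma_{0t})=\rho_{34}$ is precisely the Tyurin-ratio compatibility. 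Under \eqref{eq:nonlogassumption}, $\sigma_{0t}$ is not a half-period modulo $\Lambda_\tau$, so $\Xi^e$ is non-resonant. When $\rho_3,\rho_4\notin\{0,\infty\}$ I would invoke Lemma \ref{lem:abstract_to_explicit} together with \eqref{eq:tyurin} (absorbing the overall scale into the gauge parameter $w^e$) to produce $C^e=C_H(z;\Xi^e,w^e,s_1^e,s_2^e)$ with $(\rho_1^e,\rho_2^e)=(\rho_3,\rho_4)$; in the reducible subcase I would use Lemma \ref{lem:reducible} to write down $C^e$ in the appropriate triangular or anti-triangular form.

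Setting $C^i(\zeta):=C(t_0\zeta)\,C^e(t_0\zeta)^{-1}\,(-t_0)^{-\sigma_{0t}\sigma_3}$, the determinant identity $|C^i(\zeta)|=c\,\theta_q(q^{+\theta_t}\zeta,q^{-\theta_t}\zeta)$ is immediate because the zeros of $|C^e|$ on $q^{\mathbb{Z}}\cdot q^{\pm\theta_1}$ cancel exactly those of $|C|$ at the same points. For the $q$-difference equation I would combine \eqref{eq:connectionqdif} with property \ref{item:s2} applied to $C^e$; all scalar prefactors and diagonal matrices telescope, producing
\begin{equation*}
C^i(q\zeta)=\zeta^{-1}q^{\theta_0\sigma_3}\,C^i(\zeta)\,\operatorname{diag}(-q^{+\sigma_{0t}},-q^{-\sigma_{0t}}),
\end{equation*}
which is exactly property \ref{item:s2} for critical data $\Xi^i$. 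The dual Tyurin identification $(\widetilde{\rho}_1,\widetilde{\rho}_2)=(\widetilde{\rho}_1^i,\widetilde{\rho}_2^i)$ then follows from Lemma \ref{lem:pi}, since $C^i(x_k^i)^T$ differs from $C(x_k)^T$ by left-multiplication by invertible matrices, which leaves $\pi$ unchanged.

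The main technical obstacle, I expect, will be verifying analyticity of $C^i$ on all of $\mathbb{C}^*$, i.e.\ showing that the a priori poles of $C^e(t_0\zeta)^{-1}$ are all cancelled. At the base points $\zeta=q^{\pm\theta_1}/t_0$ this works because the Tyurin matching $(\rho_1^e,\rho_2^e)=(\rho_3,\rho_4)$ forces the rank-one images of $C$ and $C^e$ to coincide, so $C\cdot\operatorname{adj}(C^e)$ vanishes there and $CC^{e,-1}$ has a removable singularity. To propagate this to the entire $q$-spirals of singularities of $C^{e,-1}$, the crucial observation is that both $C(qz)$ and $C^e(qz)$ have the same right factor $q^{\theta_\infty\sigma_3}$ in their respective $q$-difference equations, so the rank-one reductions of $C$ and $C^e$ transform identically under $z\mapsto qz$; matching at one point of each spiral propagates to all points. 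Once $C^i$ is established as above, the independence of the right-hand side of \eqref{eq:factorisation} from $m$ reduces to the elementary identity $D_{m+1}=D_m\cdot(-qt_m)\,q^{\theta_0\sigma_3}$, which I would verify directly from the definition of $D_m$. Finally, the dichotomies asserting (ir)reducibility of $C^e$ and $C^i$ in terms of the Tyurin data follow by applying Lemma \ref{lem:reducible} to each of the critical data $\Xi^e$ and $\Xi^i$ and translating the resonance conditions $-\sigma_j\mu_k/x_1\in q^{\mathbb{Z}}$ into the claimed congruences for $\sigma_{0t}$ modulo $\Lambda_\tau$.
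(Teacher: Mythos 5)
Your proposal is correct and follows essentially the same route as the paper's proof: construct $C^e$ via the explicit Tyurin formula \eqref{eq:tyurin} (the elliptic equation $\mathcal{E}_0(\sigma_{0t})=\rho_{34}$ being exactly the compatibility from \eqref{eq:tyurinqoutient}, and Lemma \ref{lem:reducible} handling the degenerate cases), define $C^i$ as the quotient, verify analyticity/determinant/$q$-difference structure, read off the dual Tyurin data from Lemma \ref{lem:pi}, and telescope the $q$-difference relation to pass from $m=0$ to general $m$. If anything you are more explicit than the paper on the propagation of analyticity of $C^i$ along the full $q$-spirals $t_0^{-1}q^{\mathbb{Z}\pm\theta_1}$ (the paper compresses this into ``it thus follows''), and your invocation of Lemma \ref{lem:abstract_to_explicit} is unnecessary since \eqref{eq:tyurin} already lets you fix $w^e$ directly, but neither point affects correctness.
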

\begin{proof} 
Suppose a factorisation as described in the proposition exists, then equation \eqref{eq:tyurin_identification} follows directly from Lemma \ref{lem:pi}. In particular, it follows from equation \eqref{eq:tyurinqoutient} that necessarily $\mathcal{E}_0(\sigma_{0t})=\rho_1^e/\rho_2^e=\rho_{34}$.

To prove the proposition, we start with the construction of a connection matrix $C^e(z)$  of Heine hypergeometric type whose Tyurin data are equal to those of $C(z)$ at $x_{1,2}^e$. If $\rho_3\in\{0,\infty\}$ or $\rho_4\in\{0,\infty\}$, then  $\sigma_{0t}\equiv \pm(\theta_\infty+\theta_1)$ or $\sigma_{0t}\equiv \sigma_{0t}\pm (\theta_\infty-\theta_1) \mod{\Lambda_\tau}$, for some choice of signs, and $C^e(z)$ has to be chosen reducible, by Lemma \ref{lem:reducible}. In any of the four cases, we can use Lemma \ref{lem:reducible} to construct a connection matrix $C^e(z)$, with critical data $\Xi^e$, that satisfies
\begin{equation}\label{eq:tyurinmatch}
    (\rho_1^e,\rho_2^e)=(\rho_3,\rho_4).
\end{equation}
On the other hand, if $\rho_3,\rho_4\notin\{0,\infty\}$ then
$\sigma_{0t}\not\equiv \pm(\theta_\infty+\theta_1),\pm (\theta_\infty-\theta_1) \mod{\Lambda_\tau}$, and we set $C^e(z)=C_H(z;\Xi^e,w^e,s_1^e,s_2^e)$, with gauge parameter
\begin{equation} \label{eq:we}
    w^e=-\frac{1}{q^{2\theta_\infty}\rho_4} \frac{\Gamma_q(1+2\theta_\infty)\Gamma_q(1+\theta_1-\theta_\infty-\sigma_{0t})\Gamma_q(1+\theta_1-\theta_\infty+ \sigma_{0t})}{\Gamma_q(1-2\theta_\infty)\Gamma_q(1+\theta_1+\theta_\infty-\sigma_{0t})\Gamma_q(1+\theta_1+\theta_\infty+ \sigma_{0t})}.
\end{equation}
The connection matrix $C^e(z)$ is irreducible and equation \eqref{eq:tyurinmatch} holds due to equations \eqref{eq:tyurin} and \eqref{eq:tyurinqoutient}.

We now simply define
\begin{equation*}
    C^i(\zeta)=C(t_0\zeta)C^e(t_0\zeta)^{-1}(-t_0)^{-\sigma_{0t} \sigma_3},
\end{equation*}
so that
\begin{equation}\label{eq:factorisationproof}
    C(z)=C^i\left(z/t_0\right)(-t_0)^{\sigma_{0t} \sigma_3}C^e(z),
\end{equation}
which is exactly equation \eqref{eq:factorisation} for $m=0$ with $D_0=I$.

We proceed to check that $C^i(z)$ is the connection matrix of a hypergeometric system with critical data $\Xi^i$. 
Firstly, note that the determinant of $C^e(z)$ vanishes only on the $q$-lines $q^{\pm\theta_1+\mathbb{Z}}$, so that $C^i(z)$ may only have poles on the $q$-lines $t_0^{-1}q^{\pm\theta_1+\mathbb{Z}}$. Furthermore, as $C(z)$ and $C^e(z)$ have the same Tyurin data at $q^{\pm \theta_1}$, the quotient $C(z)C^e(z)^{-1}$ is analytic at $z=q^{\pm \theta_1}$. It thus follows that $C^i(z)$ is analytic on $\mathbb{C}^*$.

Secondly, it follows by direct computation that
  \begin{equation}\label{eq:qdifCi}
        C^i(q\zeta)=\zeta^{-1}\begin{pmatrix}
        \sigma_1^i & 0\\
        0 & \sigma_2^i
        \end{pmatrix} C^i(\zeta) \begin{pmatrix}
        \mu_1^i & 0\\
        0 & \mu_2^i
        \end{pmatrix}.
    \end{equation}
Finally, since
\begin{align*}
    |C(z)|&=c\, \theta_q\left(q^{-\theta_t}\frac{z}{t_0},q^{+\theta_t}\frac{z}{t_0},q^{-\theta_1}z,q^{+\theta_1}z\right),\\
    |C^e(z)|&=c^e\, \theta_q\left(q^{-\theta_1}z,q^{+\theta_1}z\right),
    \end{align*}
for some multipliers $c,c^e\in\mathbb{C}^*$, we have
\begin{equation*}
       |C^i(z)|=c^i\, \theta_q\left(q^{-\theta_t}z,q^{+\theta_t}z\right),\quad c^i:=\frac{c}{c^e}.
\end{equation*}
It follows that $C^i(z)$ is the connection matrix of a Heine hypergeometric system with critical data $\Xi^i$. Furthermore, by Lemma \ref{lem:pi}, we must have 
\begin{equation*}
    (\widetilde{\rho}_1,\widetilde{\rho}_2)=(\widetilde{\rho}_1^i,\widetilde{\rho}_2^i),
\end{equation*}
where $(\widetilde{\rho}_1^i,\widetilde{\rho}_2^i)$ are the dual Tyurin data of $C^i(z)$. The final item in the proposition follows directly from this identification and Lemma \ref{lem:reducible}.

We have now proven the statement of the proposition with $m=0$. By $m$-fold application of equation \eqref{eq:qdifCi}, we obtain equation \eqref{eq:factorisation} for general $m$ and the proposition follows.
\end{proof}

Next, we introduce the twist parameter $s_{0t}$ corresponding to Mano decomposition I. Take a $0$-generic point $\eta\in\mathcal{F}(\Theta,t_0)$, according to Definition \ref{def:generic0}. Then, by Proposition \ref{prop:algebraicdecomzero}, we may write the corresponding connection matrix $C(z)$ as
\begin{equation}\label{eq:factorisationm0}
    C(z)=C^i\left(z/t_0\right)(-t_0)^{\sigma_{0t} \sigma_3}C^e(z),
\end{equation}
where
\begin{equation*}
   C^e(z)=C_H(z;\Xi^e,w^e,s_1^e,s_2^e), \quad C^i(z)=C_H(z;\Xi^i,w^i,s_1^i,s_2^i),
\end{equation*}
for some gauge and scaling parameters $w^e,w^i,s_{1,2}^e,s_{1,2}^i\in\mathbb{C}^*$, with external and internal critical data $\Xi^e$ and $\Xi^i$ as in the proposition.

The affine Segre surface is two dimensional, but so far we have only given one relation between the $\eta$-variables and the parameters in the decomposition, through the identity $\mathcal{E}_0(\sigma_{0t})=\rho_{34}$.

We proceed to derive a second identity which involves the twist parameter. To this end, we first consider how the different parameters of the decomposition are affected by scaling freedoms.
Firstly, the connection matrix $C(z)$ is only determined by the coordinates $\eta$ up to arbitrary left and right-multiplication by diagonal matrices. In terms of the above parameters, left-multiplication,
\begin{equation*}
    C(z)\mapsto B\,C(z),\quad B=\operatorname{diag}(b_1,b_2),
\end{equation*}
simply rescales $s_1^i\mapsto b_1s_1^i$ and $s_2^i\mapsto b_2s_2^i$. Similarly, right-multiplication,
\begin{equation*}
    C(z)\mapsto C(z)B,\quad B=\operatorname{diag}(b_1,b_2),
\end{equation*}
rescales
\begin{equation*}
    s_1^e\mapsto b_1 s_1^e,\quad s_2^e\mapsto b_2 s_2^e,\quad w^e\mapsto \frac{b_2}{b_1}w^e.
\end{equation*}
Then, there is the further freedom of internal scaling,
\begin{equation*}
    C^e(z)\mapsto B^{-1} C^e(z),\quad C^i(z)\mapsto C^i(z)B,\quad B=\operatorname{diag}(b_1,b_2),
\end{equation*}
which rescales
\begin{equation*}
  s_1^e\mapsto b_1^{-1} s_1^e,\quad s_2^e\mapsto b_2^{-1} s_2^e,\quad  s_1^i\mapsto b_1 s_1^i,\quad s_2^i\mapsto b_2 s_2^i,\quad w^i\mapsto \frac{b_2}{b_1}w^i.
\end{equation*}

Invariant under each of these three scalings, are the quantities
\begin{equation}\label{eq:twistdefi}
    r_{0t}:=\frac{s_1^ew^e}{s_2^ew^i},\quad s_{0t}:=r_{0t}/c_{0t},
\end{equation}
where $c_{0t}$ is defined by
\begin{equation*}
c_{0t}=\frac{\Gamma_q(1-2\sigma_{0t})^2}{\Gamma_q(1+2\sigma_{0t})^2}\prod_{\epsilon=\pm1}\frac{ \Gamma_q(1+\theta_t+\epsilon\,\theta_0+\sigma_{0t})  \Gamma_q(1+\theta_1+\epsilon\,\theta_\infty+\sigma_{0t})}{ \Gamma_q(1+\theta_t+\epsilon\,\theta_0-\sigma_{0t})  \Gamma_q(1+\theta_1+\epsilon\,\theta_\infty-\sigma_{0t})}.
\end{equation*}
We call $s_{0t}$ the twist parameter corresponding to Mano decomposition I.

\begin{lemma}\label{lem:twist_s0t}
The twist parameter is explicitly related to the $\eta$-coordinates via the formula
\begin{equation}\label{eq:twist0t}
 s_{0t}=-(-t_0)^{-2\sigma_{0t}}M_{0t}(\rho_{24}),
\end{equation}
where $M_{0t}(\cdot)$ is the M\"obius transformation
\begin{equation*}
    M_{0t}(Z)=\frac{\vartheta_\tau(\theta_1-\theta_\infty+\sigma_{0t})\theta_q(q^{\theta_t+\theta_\infty+\sigma_{0t}}t_0^{-1})-Z
\vartheta_\tau(\theta_1+\theta_\infty+\sigma_{0t})\theta_q(q^{\theta_t-\theta_\infty+\sigma_{0t}}t_0^{-1})}{\vartheta_\tau(\theta_1-\theta_\infty-\sigma_{0t})\theta_q(q^{\theta_t+\theta_\infty-\sigma_{0t}}t_0^{-1})-Z
\vartheta_\tau(\theta_1+\theta_\infty-\sigma_{0t})\theta_q(q^{\theta_t-\theta_\infty-\sigma_{0t}}t_0^{-1})},
\end{equation*}
and $\rho_{24}=\rho_{24}(\eta)$ is defined in equation \eqref{eq:def_rhoij}.
\end{lemma}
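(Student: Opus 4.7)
The plan is to read off $\rho_2$ and $\rho_4$ directly from the factorisation \eqref{eq:factorisationm0} by applying Lemma~\ref{lem:pi} at the two critical points $x_2$ and $x_4$, express the resulting quotient $\rho_{24}$ as a bilinear function of $r_{0t}=s_1^e w^e/(s_2^e w^i)$, and invert.

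First, I would specialise the factorisation at $z=x_2=q^{-\theta_t}t_0$. Since $x_2^i=q^{-\theta_t}$ is a root of $|C^i|$, the left factor $C^i(q^{-\theta_t})$ is of rank one with $\pi=\rho_2^i$, while $N:=(-t_0)^{\sigma_{0t}\sigma_3}C^e(q^{-\theta_t}t_0)$ is invertible. Lemma~\ref{lem:pi} then gives
\begin{equation*}
\rho_2 \;=\; \pi(C(x_2)) \;=\; M_N(\rho_2^i) \;=\; \frac{N_{11}\rho_2^i+N_{21}}{N_{12}\rho_2^i+N_{22}},\qquad N_{ij}=(-t_0)^{(-1)^{j+1}\sigma_{0t}}C^e_{ij}(q^{-\theta_t}t_0).
\end{equation*}
An analogous specialisation at $z=x_4=q^{-\theta_1}$ puts the rank-one factor on the right, and yields $\rho_4=\rho_2^e$ (one half of identification \eqref{eq:tyurin_identification}).

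Next, I would substitute the explicit Heine form \eqref{eq:explicit_connection} for $C^e$ into the $N_{ij}$, and use formula \eqref{eq:tyurin} for $\rho_2^e$ and for $\rho_2^i$ to isolate the dependence on $w^e$ and $w^i$ respectively. Forming the quotient $\rho_{24}=M_N(\rho_2^i)/\rho_2^e$, the homogeneous prefactors $s_1^e$ and $s_2^e$ cancel out of the numerator and denominator separately, and the combination $s_1^e w^e/(s_2^e w^i)=r_{0t}$ appears bilinearly. The result is a relation of the form
\begin{equation*}
\rho_{24}\;=\;\frac{\alpha\,r_{0t}+\beta}{\gamma\,r_{0t}+\delta},
\end{equation*}
with $\alpha,\beta,\gamma,\delta$ explicit products of $q$-theta and $q$-Pochhammer symbols in $\Theta,\sigma_{0t},t_0$. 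Solving linearly for $r_{0t}$ and dividing by $c_{0t}$ produces $s_{0t}$ as a Möbius transform of $\rho_{24}$.

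The main obstacle is recognising the coefficients. One must convert the ratios $c_{ij}^e/c_{kl}^e$, $c_{ij}^i/c_{kl}^i$ of $q$-Pochhammer symbols into the $\Gamma_q$ combinations defining $c_{0t}$, via $(q^\alpha;q)_\infty=(1-q)^{1-\alpha}(q;q)_\infty/\Gamma_q(\alpha)$, and then collapse the various $\theta_q(q^{\bullet-\theta_t\pm\theta_\infty\pm\sigma_{0t}}t_0)$ factors coming from $C^e(q^{-\theta_t}t_0)$, using $\theta_q(qz)=-z^{-1}\theta_q(z)=\theta_q(1/z)$ and $\vartheta_\tau(x)=\theta_q(q^x)$, into exactly the four products of $\vartheta_\tau(\cdot)$ and $\theta_q(q^{\bullet}t_0^{-1})$ appearing in $M_{0t}$. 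The overall sign $-$ and the power $(-t_0)^{-2\sigma_{0t}}$ in the final formula are forced by the central diagonal matrix $(-t_0)^{\sigma_{0t}\sigma_3}$ in the factorisation. Once the shape of the answer is anticipated, this is a bookkeeping exercise; the content of the lemma is carried by the structural step through Lemma~\ref{lem:pi}.
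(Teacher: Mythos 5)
Your proposal is correct and follows essentially the same route as the paper's proof: apply Lemma \ref{lem:pi} to the factorisation \eqref{eq:factorisationm0} at $z=x_2$ and $z=x_4$, use the explicit Heine formulas \eqref{eq:explicit_connection} and \eqref{eq:tyurin} so that only the gauge-invariant combination $r_{0t}$ survives in the resulting M\"obius relation between $\rho_{24}$ and $\rho_2^i$, and invert (the paper organises the same computation by first expressing $w^e$ and $w^i$ through $\rho_4$ and $\rho_2^i$, then solving the M\"obius relation for $\rho_2^i$). One small slip: the diagonal factor $(-t_0)^{\sigma_{0t}\sigma_3}$ multiplies $C^e(x_2)$ on the left and so scales its \emph{rows}, giving $N_{ij}=(-t_0)^{(-1)^{i+1}\sigma_{0t}}C^e_{ij}(x_2)$ rather than $(-1)^{j+1}$; with this correction the factor $(-t_0)^{2\sigma_{0t}}$ attaches to $\rho_2^i$ inside the M\"obius transform, which is exactly what produces the $(-t_0)^{-2\sigma_{0t}}$ prefactor in \eqref{eq:twist0t}.
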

\begin{proof}
To prove the lemma, we need explicit formulas for the gauge parameters $w^e$ and $w^i$. We already have a formula for $w^e$, equation \eqref{eq:we}. To compute $w^i$, we first apply equation \eqref{eq:tyurin}, with critical data $\Xi^i$ and Tyurin parameter $\rho_2^i:=\pi(C^i(x_2^i))$, to obtain
\begin{equation*}
    w^i=-\frac{1}{q^{2\sigma_{0t}}\rho_2^i} \frac{\Gamma_q(1+2\sigma_{0t})\Gamma_q(1+\theta_t+\theta_0-\sigma_{0t})\Gamma_q(1+\theta_t-\theta_0-\sigma_{0t})}{\Gamma_q(1-2\sigma_{0t})\Gamma_q(1+\theta_t+\theta_0+\sigma_{0t})\Gamma_q(1+\theta_t-\theta_0+\sigma_{0t})}.
\end{equation*}
By substituting this equation and equation \eqref{eq:we} into the defining equation of the twist parameter, equation \eqref{eq:twistdefi}, we get
\begin{align}\label{eq:twistintermediate}
    s_{0t}=&c\,q^{2(\sigma_{0t}-\theta_\infty)}\frac{s_1^e \rho_2^i}{s_2^e \rho_4},\\
c:=&\frac{\Gamma_q(1+2\sigma_{0t})\Gamma_q(1+2\theta_\infty)\Gamma_q(1+\theta_1-\theta_\infty-\sigma_{0t})^2}{\Gamma_q(1-2\sigma_{0t})\Gamma_q(1-2\theta_\infty)\Gamma_q(1+\theta_1+\theta_\infty+\sigma_{0t})^2}.\label{eq:c}
\end{align}

Next, we apply Lemma \ref{lem:pi} to equation \eqref{eq:factorisationm0}, with $z=q^{-\theta_t}t_0$, to obtain
\begin{equation*}
    \rho_2=\frac{C_{11}^e(q^{-\theta_1})(-t_0)^{2\sigma_{0t}}\rho_2^i+C_{21}^e(q^{-\theta_1})}{C_{12}^e(q^{-\theta_1})(-t_0)^{2\sigma_{0t}}\rho_2^i+C_{22}^e(q^{-\theta_1})}.
\end{equation*}
 By solving this equation for $\rho_2^i$ and substuting the explicit expressions for the coefficients of $C^e(z)$, we obtain
 \begin{equation*}
   \rho_2^i=-(-t_0)^{-2\sigma_{0t}}M_{0t}(\rho_{24})
   q^{2(\theta_\infty-\sigma_{0t})}\frac{s_2^e }{s_1^e} \rho_4 c^{-1},
 \end{equation*}
where $M_{0t}(\cdot)$ is the M\"obius transform in the lemma and the factor $c$ is given in equation \eqref{eq:c}. Combining this equation with equation \eqref{eq:twistintermediate} gives the explicit formula for $s_{0t}$ in the lemma.
\end{proof}

It is important to note that, given any $s_{0t}\in\mathbb{C}^*$ and $\sigma_{0t}\in\mathbb{C}$ satisfying
\begin{equation}\label{eq:dense_open}
    \sigma_{0t}\not\equiv 0,\tfrac{1}{2},\tfrac{\tau}{2},\tfrac{1}{2}+\tfrac{\tau}{2},\pm (\theta_0+\theta_t),\pm (\theta_0-\theta_t),\pm (\theta_\infty+\theta_1),\pm (\theta_\infty-\theta_1) \mod{\Lambda_\tau},
\end{equation}
Proposition \ref{prop:algebraicdecomzero} gives an explicit construction of a corresponding connection matrix, whose Tyurin data $\rho$ are entirely determined by $\mathcal{E}_0(\sigma_{0t})=\rho_{34}$, equation \eqref{eq:twist0t} and equation \eqref{eq:we}. They thus define a unique corresponding point $\eta=\eta(\sigma_{0t},s_{0t})$ on the affine Segre surface $\mathcal{F}(\Theta,t_0)$.

In particular, the pair $\{\sigma_{0t},s_{0t}\}$ form a set of local coordinates on the dense open subset of $\mathcal{F}(\Theta,t_0)$ defined by equation \eqref{eq:dense_open}.
This dense open subset contains all $0$-generic points, see Definition \ref{def:generic0}.

\subsection{Analytic Decomposition I of the main RHP}
\label{sec:dec_analytic}
Take a point $\eta\in\mathcal{F}(\Theta,t_0)$ and a corresponding connection matrix $C(z)\in\mathfrak{C}(\Theta,t_0)$.
Recall the main Riemann-Hilbert problem, RHP \ref{rhp:main}, for the corresponding solution $(f,g)$ of $q\Psix$.

Assuming conditions \eqref{eq:nonlogassumption}, Proposition \ref{prop:algebraicdecomzero} gives an algebraic decomposition of the connection matrix $C(z)$ into interior and exterior connection matrices $C^i(z)$ and $C^e(z)$, as detailed in the proposition. In this section, we describe how RHP \ref{rhp:main} can be decomposed correspondingly.

Let $\Psi(z)=\Psi^{\B{m}}(z)$ denote the solution of RHP \ref{rhp:main} and let $\Psi_\infty^{\B{m}}(z)$ and $\Psi_0^{\B{m}}(z)$ denote respectively the analytic continuation of $\Psi^{\B{m}}(z)$ from the outside $D_+^{\B{m}}$ and inside $D_-^{\B{m}}$ of the curve $\gamma^{\B{m}}$. Then, due to equations \eqref{eq:globaljump} and \eqref{eq:factorisation},
\begin{align*}
\Psi_\infty^{\B{m}}(z)&=\Psi_0^{\B{m}}(z)z^mC(z)\\
    &=\Psi_0^{\B{m}}(z)D_mC^i\left(z/t_m\right)(-t_m)^{\sigma_{0t} \sigma_3}C^e(z).\nonumber
\end{align*}

We now introduce an intermediate matrix function $\Psi_{0t}^{\B{m}}(z)$ by
\begin{align*}
   \Psi_\infty^{\B{m}}(z)&=\Psi_{0t}^{\B{m}}(z)C^e(z),\\
   \Psi_{0t}^{\B{m}}(z)&=\Psi_{0}^{\B{m}}D_m C^i\left(z/t_m\right)(-t_m)^{\sigma_{0t}\sigma_3}.
\end{align*}
Then, in addition to $Y_\infty(z)$ and $Y_0(z)$ defined in equations \eqref{eq:true_sol},
\begin{equation}\label{eq:explicitsolyot}
Y_{0t}(z,t_m):=(-1)^{\log_q(z)}z^{\frac{1}{2}\log_q(z/q)}\Psi_{0t}^{\B{m}}(z) z^{-\sigma_{0t}\sigma_3},
\end{equation}
also defines a solution to the linear system \eqref{eq:linear_system}.

The matrix function $\Psi_\infty^{\B{m}}(z)$ is analytic and invertible on $\mathbb{C}^*$, away from points in the discrete set
\begin{equation*}
    q^{\mathbb{Z}_{>0}}\cdot \{q^{\theta_t} t_m,q^{-\theta_t} t_m, q^{\theta_1},q^{-\theta_1}\},
\end{equation*}
the matrix function $\Psi_0^{\B{m}}(z)$ is analytic and invertible on $\mathbb{C}$, away from points in the discrete set
\begin{equation*}
q^{\mathbb{Z}_{\leq 0}}\cdot \{q^{\theta_t} t_m,q^{-\theta_t} t_m, q^{\theta_1},q^{-\theta_1}\},
\end{equation*}
and the matrix function $\Psi_{0t}^{\B{m}}(z)$ is analytic and invertible on $\mathbb{C}^*$, away from points in the discrete set
\begin{equation*}
\left(q^{\mathbb{Z}_{>0}}\cdot \{q^{\theta_t} t_m,q^{-\theta_t} t_m\}\right)\cup \left(q^{\mathbb{Z}_{\leq 0}}\cdot \{ q^{\theta_1},q^{-\theta_1}\}\right).
\end{equation*}


\begin{figure}[ht]
	\centering
	\begin{tikzpicture}[scale=0.8]
	\draw[->] (-6,0)--(6,0) node[right]{$\Re{z}$};
	\draw[->] (0,-6)--(0,6) node[above]{$\Im{z}$};
	\tikzstyle{star}  = [circle, minimum width=3.5pt, fill, inner sep=0pt];
	\tikzstyle{starsmall}  = [circle, minimum width=3.5pt, fill, inner sep=0pt];

	\draw[domain=-1.3:6,smooth,variable=\x,red] plot ({exp(-\x*ln(2))*2*4/3*cos((pi/8) r)},{exp(-\x*ln(2))*2*4/3*sin((pi/8) r)});	
	\draw[domain=-1.3:6,smooth,variable=\x,red] plot ({exp(-\x*ln(2))*2*4/3*cos((pi/8) r)},{-exp(-\x*ln(2))*2*4/3*sin((pi/8) r)});

 \draw[blue,thick,decoration={markings, mark=at position 0.21 with {\arrow{>}}},
	postaction={decorate}] (0,0) ellipse (5.4cm and 5.4cm);

 \draw[blue,thick,decoration={markings, mark=at position 0.21 with {\arrow{>}}},
	postaction={decorate}] (0,0) ellipse (2.75cm and 2.75cm);

 \draw[black,thick,,dashed,decoration={markings, mark=at position 0.21 with {\arrow{>}}},
	postaction={decorate}] (0,0) ellipse (4.5cm and 4.5cm);

    \node[starsmall]     (or) at ({0},{0} ) {};
	\node     at ($(or)+(0.2,0.4)$) {$0$};

    \node[starsmall]     (qk1) at ({sqrt(sqrt(2))*2*3/2*cos((-pi/8) r)},{-sqrt(sqrt(2))*2*3/2*sin((-pi/8) r)} ) {};
	\node     at ($(qk1)+(0.3,-0.4)$) {$q^{1+\theta_1}$};
	\node[star]     (k1) at ({2*2*3/2*cos((-pi/8) r)},{-2*2*3/2*sin((-pi/8) r)} ) {};
	\node     at ($(k1)+(0.3,-0.3)$) {$q^{\theta_1}$};
	
	\node[starsmall]     (qkm1) at ({sqrt(sqrt(2))*2*3/2*cos((-pi/8) r)},{sqrt(sqrt(2))*2*3/2*sin((-pi/8) r)} ) {};
	\node     at ($(qkm1)+(0.3,0.4)$) {$q^{1-\theta_1}$};
	\node[star]     (km1) at ({2*2*3/2*cos((-pi/8) r)},{2*2*3/2*sin((-pi/8) r)} ) {};
	\node     at ($(km1)+(0.4,0.35)$) {$q^{-\theta_1}$};

    \draw[domain=-0.95:6,smooth,variable=\x,red] plot ({exp(-\x*ln(2))*2*5/3*cos((-3.8*pi/8+5*pi/4+\x*0) r)},{-exp(-\x*ln(2))*2*5/3*sin((-3.8*pi/8+5*pi/4+\x*0) r)});	
    \draw[domain=-0.95:6,smooth,variable=\x,red] plot ({exp(-\x*ln(2))*2*5/3*cos((-3.8*pi/8+5*pi/4+\x*0) r)},{exp(-\x*ln(2))*2*5/3*sin((-3.8*pi/8+5*pi/4+\x*0) r)});

	\node[star]     (qkt) at ({sqrt(sqrt(2))*2*4/5*cos((-3.8*pi/8+5*pi/4-1/4*0) r)},{sqrt(sqrt(2))*2*4/5*sin((-3.8*pi/8+5*pi/4-1/4*0) r)} ) {};
	\node     at ($(qkt)+(0.53,0.4)$) {$q^{\theta_t}t_{m+1}$};	
	\node[star]     (kt) at ({2*2*0.83*cos((-3.8*pi/8+5*pi/4-0) r)},{2*2*0.83*sin((-3.8*pi/8+5*pi/4-0) r)} ) {};
	\node     at ($(kt)+(0.4,0.4)$) {$q^{\theta_t}t_{m}$};

	\node[star]     (qktm) at ({sqrt(sqrt(2))*2*4/5*cos((-3.8*pi/8+5*pi/4) r)},{-sqrt(sqrt(2))*2*4/5*sin((-3.8*pi/8+5*pi/4) r)} ) {};
	\node     at ($(qktm)+(0.48,-0.38)$) {$q^{-\theta_t}t_{m+1}$};
	\node[star]     (ktm) at ({2*2*0.83*cos((-3.8*pi/8+5*pi/4) r)},{-2*2*0.83*sin((-3.8*pi/8+5*pi/4) r)} ) {};
	\node     at ($(ktm)+(0.4,-0.35)$) {$q^{-\theta_t}t_{m}$};

 	\node[blue]     at ({5.8*cos((0.21*2*pi) r)},{5.8*sin((0.21*2*pi) r)}) {$\boldsymbol{\gamma_e}$};

 	\node[black]     at ({4.83*cos((0.21*2*pi) r)},{4.83*sin((0.21*2*pi) r)}) {$\boldsymbol{\gamma_{0t}}$};

 	\node[blue]     at ({3.2*cos((0.21*2*pi) r)+0.15},{3.2*sin((0.21*2*pi) r)})  {$\boldsymbol{\gamma_i^{\B{m}}}$};

 	\node[blue]     at ({5.70*cos((0.3*2*pi) r)},{5.70*sin((0.3*2*pi) r)}) {${\scriptstyle\boldsymbol{+}}$};

 	\node[black]     at ({4.77*cos((0.3*2*pi) r)},{4.77*sin((0.3*2*pi) r)}) {${\scriptstyle\boldsymbol{+}}$};

 	\node[blue]     at ({3.05*cos((0.3*2*pi) r)},{3.05*sin((0.3*2*pi) r)})  {${\scriptstyle\boldsymbol{+}}$};

 	\node[blue]     at ({5.14*cos((0.3*2*pi) r)},{5.14*sin((0.3*2*pi) r)}) {${\scriptstyle\boldsymbol{-}}$};

 	\node[black]     at ({4.3*cos((0.3*2*pi) r)},{4.3*sin((0.3*2*pi) r)}) {${\scriptstyle\boldsymbol{-}}$};

 	\node[blue]     at ({2.52*cos((0.3*2*pi) r)},{2.52*sin((0.3*2*pi) r)})  {${\scriptstyle\boldsymbol{-}}$};

	\end{tikzpicture}
	\caption{Topological representation of Jordan curves $\gamma_e$, $\gamma_{0t}$ and $\gamma_i^{\B{m}}$ relative to each other, the origin and points in the $q$-lines $q^{\mathbb{Z}}\cdot x$, $x\in\{q^{\pm \theta_t}t_0,q^{\pm \theta_1}\}$. For the sake of simplicity, the contours are displayed as circles, but we emphasise that they generally are not.}
	\label{fig:analytic_decomp_I}
\end{figure}

To formulate the factorised RHP corresponding to decomposition I, we take three analytic\footnote{From here on, we will impose that curves are analytic so that the Cauchy operator acts as a bounded operator on corresponding $L^2$ spaces.} Jordan curves, $\gamma_e$, $\gamma_{0t}$ and an $m$-dependent curve $\gamma_i^{\B{m}}=q^m\cdot \gamma_i^{(0)}$, as described pictorially in Figure \ref{fig:contours}, for $m\geq 0$.
Namely, $\gamma_i^{\B{m}}$ lies entirely in the inside of $\gamma_{0t}$, $\gamma_{0t}$ lies entirely in the inside of $\gamma_e$, and letting $D_0^{\B{m}}=q^m D_0^{\B{0}}$ denote the inside of $\gamma_i^{\B{m}}$ and $D_\infty$ denote the outside of $\gamma_e$,  and $D_{0t}^{\B{m}}$ denote the annulus-like region in between $\gamma_i^{(m)}$ and $\gamma_e$, then
\begin{equation*}
    0\in D_0^{\B{m}}\subseteq \mathbb{C}\setminus D_\infty,\qquad D_0^{\B{m}}\sqcup\gamma_i^{\B{m}}\sqcup D_{0t}^{\B{m}}\sqcup \gamma_e\sqcup D_\infty=\mathbb{C},\quad \gamma_{0t}\subseteq D_{0t}^{\B{m}},
\end{equation*}
and
\begin{align*}
q^{\mathbb{Z}_{>0}}\cdot \{q^{\theta_t} t_m,q^{-\theta_t} t_m\}&\subseteq D_0^{\B{m}},\\
q^{\mathbb{Z}_{\leq 0}}\cdot \{q^{\theta_t} t_m,q^{-\theta_t} t_m\}&\subseteq \mathbb{C}\setminus D_0^{\B{m}}, \\
q^{\mathbb{Z}_{>0}}\cdot \{q^{\theta_1},q^{-\theta_1}\}&\subseteq \mathbb{C}\setminus D_\infty,\\
q^{\mathbb{Z}_{\leq 0}}\cdot \{q^{\theta_1},q^{-\theta_1}\}&\subseteq D_\infty,
\end{align*}
for all $m\geq 0$.

We now define
\begin{equation*}
    \Phi^{\B{m}}(z)=\begin{cases}
    \Psi_\infty^{\B{m}}(z) & \text{if }z\in D_\infty,\\
    \Psi_{0t}^{\B{m}}(z) & \text{if }z\in D_{0t}^{\B{m}},\\
    \Psi_0^{\B{m}}(z) & \text{if }z\in D_0^{\B{m}},\\
    \end{cases}
\end{equation*}
so that $\Phi^{\B{m}}(z)$ solves the following RHP.

\begin{rhprob}\label{rhp:decomI}
For $m\geq 0$, find a $2\times 2$ matrix-valued function $\Phi(z)=\Phi^{\B{m}}(z)$, which satisfies the following conditions.
  \begin{enumerate}[label={{\rm (\roman *)}}]
  \item $\Phi^{\B{m}}(z)$ is analytic on $\mathbb{C}$ away from the curves $\gamma_e$ and $\gamma_i^{\B{m}}$.
    \item $\Phi^{\B{m}}(z)$ has continuous boundary values $\Phi_+^{\B{m}}(z)$ and $\Phi_-^{\B{m}}(z)$ as $z$ approaches $\gamma_e$ or $\gamma_i^{\B{m}}$ from their respective out and insides, related by
\begin{align*}
\Phi_+^{\B{m}}(z)&=\Phi_-^{\B{m}}(z)C^e(z) & &(z\in \gamma_e),\\
\Phi_+^{\B{m}}(z)&=\Phi_-^{\B{m}}(z)D_m C^i\left(z/t_m\right)(-t_m)^{\sigma_{0t}\sigma_3} &   &(z\in \gamma_i^{\B{m}}).
  \end{align*}  
 \item $\Phi^{\B{m}}(z)$ satisfies
              \begin{equation*}
		\Phi^{\B{m}}(z)=I+\mathcal{O}\left(z^{-1}\right)\quad z\rightarrow \infty.
              \end{equation*}
              \end{enumerate}
\end{rhprob}

\begin{remark}
We note that a RHP, similarly factorised to RHP \ref{rhp:decomI}, is the main RHP studied in Jimbo et al. \cite{jimbonagoyasakai}*{\S 3.1}.
\end{remark}

\subsection{Algebraic decomposition II of the monodromy}\label{sec:dec_algebraicII}
Analogous to Proposition \ref{prop:algebraicdecominfty}, we have the following factorisation of the monodromy, which we call Mano decomposition II, along the decomposition illustrated in Figure \ref{fig:pairspants}(B).

\begin{proposition}\label{prop:algebraicdecominfty}
Let $C(z)\in\mathfrak{C}(\Theta,t_0)$ be a connection matrix with corresponding Tyurin data $\rho$ and recall the notation $\rho_{12}:=\rho_1/\rho_2\in\mathbb{CP}^1$.
If
\begin{equation}\label{eq:nonlogassumptioninfty}
    \rho_{12}\neq \mathcal{E}_\infty(\sigma), \text{ for } \sigma=0,\tfrac{1}{2},\tfrac{\tau}{2},\tfrac{1}{2}+\tfrac{\tau}{2},
\end{equation}
then, for any solution $\sigma_{01}$ of $\mathcal{E}_\infty(\sigma)=\rho_{12}$, we can construct a factorisation of the form,
\begin{equation}\label{eq:factorisationinfty}
    z^m{C}(z)=\widehat{C}^i\left(z\right)(-t_m)^{-\sigma_{01} \sigma_3}\widehat{C}^e(z/t_m)\widehat{D}_m,
\end{equation}
valid for all $m\in\mathbb{Z}$, where 
\begin{itemize}
    \item $\widehat{C}^e(z)$, independent of $m$, is the connection matrix of a Heine hypergeometric system with critical data $\widehat{\Xi}^e$ given by
   \begin{align*}
    \widehat{\sigma}_1^e&=-q^{-\sigma_{01}}, &  \widehat{x}_1^e&=q^{+\theta_t}, &   \widehat{\mu}_1^e&=q^{+\theta_\infty},\\
    \widehat{\sigma}_2^e&=-q^{+\sigma_{01}}, &  \widehat{x}_2^e&=q^{-\theta_t}, & \widehat{\mu}_2^e&=q^{-\theta_\infty}.
    \end{align*}
    \item $\widehat{C}^i(z)$, independent of $m$, is the connection matrix of a Heine hypergeometric system with critical data $\widehat{\Xi}^i$ given by
    \begin{align*}
    \widehat{\sigma}_1^i&=q^{+\theta_0}, &  \widehat{x}_1^i&=q^{+\theta_1}, &   \widehat{\mu}_1^i&=-q^{+\sigma_{01}},\\
    \widehat{\sigma}_2^i&=q^{-\theta_0}, &  \widehat{x}_2^i&=q^{-\theta_1}, & \widehat{\mu}_2^i&=-q^{-\sigma_{01}},
    \end{align*}
    \item $\widehat{D}_m$ is a diagonal matrix, explicitly given by
    \begin{equation*}
        \widehat{D}_m=(-t_0)^m q^{\frac{1}{2}m(m+1)}q^{m\theta_\infty\sigma_3}.
    \end{equation*}
    \end{itemize}
Let $\widetilde{\rho}$ denote the dual Tyurin data of $C(z)$, then we have the identifications
\begin{equation*}
    (\rho_1,\rho_2)=(\rho_1^e,\rho_2^e),\qquad
    (\widetilde{\rho}_3,\widetilde{\rho}_4)=(\widetilde{\rho}_1^i,\widetilde{\rho}_2^i),
\end{equation*}
where $(\rho_1^e,\rho_2^e)$ are the Tyurin data of $C^e(z)$ and $(\widetilde{\rho}_1^i,\widetilde{\rho}_2^i)$ are the dual Tyurin data of $C^i(z)$. In particular, the following hold true.
\begin{itemize}
    \item If $\rho_{1},\rho_2\notin\{0,\infty\}$, then $$\sigma_{01}\not\equiv\pm (\theta_\infty+\theta_t),\pm (\theta_\infty-\theta_t)\mod{\Lambda_\tau},$$ and $\widehat{C}^e(z)$ is irreducible and given by
    $\widehat{C}^e(z)=C_H(z;\widehat{\Xi}^e,\widehat{w}^e,\widehat{s}_1^e,\widehat{s}_2^e)$, for some gauge parameter $\widehat{w}^e$ and scaling parameters $\widehat{s}_{1,2}^e$.\\ Else, $\widehat{C}^e(z)$ is reducible.
    \item If $\widetilde{\rho}_3,\widetilde{\rho}_4\notin\{0,\infty\}$, then $$\sigma_{01}\not\equiv\pm (\theta_0+\theta_1),\pm (\theta_0-\theta_1)\mod{\Lambda_\tau},$$ and $\widehat{C}^i(z)$ is irreducible and given by
    $\widehat{C}^i(z)=C_H(z;\widehat{\Xi}^i,\widehat{w}^i,\widehat{s}_1^i,\widehat{s}_2^i)$,
    for some gauge parameter $\widehat{w}^i$ and scaling parameters $\widehat{s}_{1,2}^i$.\\ Else, $\widehat{C}^i(z)$ is reducible.
\end{itemize}
\end{proposition}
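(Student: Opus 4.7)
The plan is to mirror the proof of Proposition \ref{prop:algebraicdecomzero} but with the roles of the ``internal'' and ``external'' pairs interchanged, reflecting the second $q$-pair of pants decomposition in Figure \ref{fig:pairspants}(B). In decomposition I, the external factor handles the fixed points $\{q^{\pm\theta_1}\}$ and the internal, $t$-dependent factor handles $\{q^{\pm\theta_t}t_m\}$. In decomposition II these roles swap: $\widehat{C}^i(z)$ carries the fixed points $\{q^{\pm\theta_1}\}$, while $\widehat{C}^e(z/t_m)$ carries the $t$-dependent points. Accordingly, we start by constructing $\widehat{C}^i$, use the factorisation to define $\widehat{C}^e$, and then check that it is Heine hypergeometric with the advertised critical data.

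First, as in the proof of Proposition \ref{prop:algebraicdecomzero}, suppose such a factorisation exists. Then Lemma \ref{lem:pi}, applied at $z=x^i_{1,2}=q^{\pm\theta_1}$ and at $z=x^e_{1,2}t_m=q^{\pm\theta_t}t_m$, gives the identifications $(\rho_1,\rho_2)=(\rho_1^e,\rho_2^e)$ and $(\widetilde\rho_3,\widetilde\rho_4)=(\widetilde\rho_1^i,\widetilde\rho_2^i)$. In particular equation \eqref{eq:tyurinqoutient} forces $\mathcal{E}_\infty(\sigma_{01})=\rho_{12}$. Assumption \eqref{eq:nonlogassumptioninfty} guarantees that at each of the two admissible roots $\sigma_{01}$, $\widehat{C}^i$ can be chosen to match the four possible reducibility patterns at $q^{\pm\theta_1}$: when $\widetilde\rho_3,\widetilde\rho_4\notin\{0,\infty\}$, we take $\widehat{C}^i(z)=C_H(z;\widehat{\Xi}^i,\widehat{w}^i,\widehat{s}_1^i,\widehat{s}_2^i)$ with $\widehat{w}^i$ fixed by the analog of formula \eqref{eq:we} (derived from \eqref{eq:tyurinqoutientdual} applied at $\widetilde\rho_3$); in the reducible cases we use Lemma \ref{lem:reducible} to pick a triangular or anti-triangular hypergeometric connection matrix with the required dual Tyurin data.

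Next, I would \emph{define}
\begin{equation*}
\widehat{C}^e(\zeta):=(-t_0)^{\sigma_{01}\sigma_3}\,\widehat{C}^i(t_0\zeta)^{-1}\,C(t_0\zeta),
\end{equation*}
so that by construction the $m=0$ case of the factorisation \eqref{eq:factorisationinfty} holds with $\widehat{D}_0=I$. The key verification is then that $\widehat{C}^e$ is itself a connection matrix of Heine hypergeometric type with the critical data $\widehat{\Xi}^e$. Analyticity of $\widehat{C}^e$ on $\mathbb{C}^*$ follows because the possible singularities lie on $q^{\mathbb{Z}\pm\theta_1}/t_0$, where they are removed by the fact that $\widehat{C}^i$ and $C$ share the same (dual) Tyurin data there. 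The $q$-difference equation \ref{item:s2} for $\widehat{C}^e$ with parameters $(\widehat{\sigma}^e_j,\widehat{\mu}^e_j)$ is a direct consequence of the $q$-difference equation \eqref{eq:connectionqdif} for $C$ combined with property \ref{item:s2} for $\widehat{C}^i$. The determinant \ref{item:s3} reduces via \eqref{eq:connectiondet} to $|\widehat{C}^e(\zeta)|=c^e\,\theta_q(q^{\pm\theta_t}\zeta)$. By Lemma \ref{lem:abstract_to_explicit}, $\widehat{C}^e(z)$ must then equal $C_H(z;\widehat{\Xi}^e,\widehat{w}^e,\widehat{s}_1^e,\widehat{s}_2^e)$ in the irreducible case, and by Lemma \ref{lem:reducible} it has one of the enumerated reducible shapes otherwise. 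Applying Lemma \ref{lem:pi} to the factorisation at $z=q^{\pm\theta_t}t_0$ gives the identification $(\rho_1,\rho_2)=(\rho^e_1,\rho^e_2)$, from which the reducibility dichotomy for $\widehat{C}^e$ follows via Lemma \ref{lem:reducible}.

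Finally, to pass from $m=0$ to general $m\in\mathbb{Z}$, I would iterate using property \ref{item:s2} of $\widehat{C}^e$, which under the change of variable $\zeta=z/t_m$ yields the shift law $\widehat{C}^e(q\zeta)\widehat{D}_{m+1}\widehat{D}_m^{-1}=z^{-1}q^{\theta_\infty\sigma_3}\widehat{C}^e(\zeta)\cdots$; tracking the scalar and diagonal factors produces exactly $\widehat{D}_m=(-t_0)^m q^{m(m+1)/2}q^{m\theta_\infty\sigma_3}$ and reproduces \eqref{eq:factorisationinfty} for all $m$. The only genuinely new bookkeeping relative to Proposition \ref{prop:algebraicdecomzero} is the asymmetric placement of $(-t_m)^{-\sigma_{01}\sigma_3}$ versus $(-t_m)^{+\sigma_{0t}\sigma_3}$ and of $\widehat{D}_m$ on the right rather than $D_m$ on the left; the main conceptual obstacle, namely the algebraic equation $\mathcal{E}_\infty(\sigma_{01})=\rho_{12}$ that identifies the intermediate exponent, has already been encoded via the explicit formula \eqref{eq:tyurinqoutient}, so the proof is essentially a transcription with the appropriate swaps $t\leftrightarrow 1$ and $\theta_t\leftrightarrow\theta_1$, $\theta_0\leftrightarrow\theta_\infty$.
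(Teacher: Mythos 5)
Your overall strategy -- transcribing the proof of Proposition \ref{prop:algebraicdecomzero} with the roles of the fixed and $t$-dependent singular points exchanged -- is exactly what the paper intends (its proof is literally ``proven analogously''), and your bookkeeping of the asymmetries (the placement of $(-t_m)^{-\sigma_{01}\sigma_3}$, the right-hand factor $\widehat{D}_m$, the recursion $\widehat{D}_{m+1}=(-t_0)q^{m+1}q^{\theta_\infty\sigma_3}\widehat{D}_m$) is correct. However, there is one genuine gap: you reverse the order of construction, and this order does not come for free.

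Concretely, you propose to build $\widehat{C}^i$ first so that its dual Tyurin data at $q^{\pm\theta_1}$ equal $(\widetilde{\rho}_3,\widetilde{\rho}_4)$, fixing the single gauge parameter $\widehat{w}^i$ by matching $\widetilde{\rho}_3$. But once the critical data $\widehat{\Xi}^i$ are fixed, equation \eqref{eq:tyurinqoutientdual} pins the \emph{ratio} $\widetilde{\rho}_1^i/\widetilde{\rho}_2^i$ to an explicit function of $\sigma_{01}$ alone; the gauge parameter only rescales both values simultaneously. So after matching $\widetilde{\rho}_3$, the value $\widetilde{\rho}_2^i$ is forced, and your construction needs the a priori identity
$\widetilde{\rho}_{34}=\vartheta_\tau(\sigma_{01}-\theta_1+\theta_0,\sigma_{01}+\theta_1-\theta_0)/\vartheta_\tau(\sigma_{01}-\theta_1-\theta_0,\sigma_{01}+\theta_1+\theta_0)$. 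This is not implied by the definition of $\sigma_{01}$, which is $\mathcal{E}_\infty(\sigma_{01})=\rho_{12}$ and says nothing directly about the dual Tyurin ratio $\widetilde{\rho}_{34}$ (via Lemma \ref{lem:duality}, $\widetilde{\rho}_{34}$ is a priori a function of $\rho_{31}$ and $\rho_{41}$ separately). If the identity fails, the left quotient $\widehat{C}^i(t_0\zeta)^{-1}C(t_0\zeta)$ acquires a pole at $\zeta=q^{-\theta_1}/t_0$ and $\widehat{C}^e$ is not of Heine hypergeometric type. In the paper this identity is a \emph{consequence} of the factorisation (it is how the relation for $\widetilde{\rho}_{34}$ in Theorem \ref{thm:generic_asymp_infty_dual} is derived), so assuming it here is circular. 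The fix is to mirror the decomposition-I proof faithfully: construct $\widehat{C}^e$ first, since its Tyurin ratio at $q^{\pm\theta_t}$ equals $\mathcal{E}_\infty(\sigma_{01})=\rho_{12}$ by the very definition of $\sigma_{01}$, so that tuning $\widehat{w}^e$ (the analogue of \eqref{eq:we}, now via \eqref{eq:tyurin}) matches both $\rho_1$ and $\rho_2$; then define $\widehat{C}^i(z)=C(z)\widehat{C}^e(z/t_0)^{-1}(-t_0)^{\sigma_{01}\sigma_3}$ as a \emph{right} quotient, whose analyticity at $q^{\pm\theta_t}t_0$ follows from the matched Tyurin data and whose dual Tyurin identification $(\widetilde{\rho}_3,\widetilde{\rho}_4)=(\widetilde{\rho}_1^i,\widetilde{\rho}_2^i)$ then comes automatically from Lemma \ref{lem:pi}. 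The remainder of your argument (determinant, $q$-difference equation, Lemmas \ref{lem:abstract_to_explicit} and \ref{lem:reducible}, and the $m$-iteration) goes through unchanged.
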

\begin{proof}
   The proposition is proven analogously to Proposition \ref{prop:algebraicdecomzero}.
\end{proof}

Next, we introduce the twist parameter $s_{01}$ corresponding to Mano decomposition II. Take a $\infty$-generic point $\eta\in\mathcal{F}(\Theta,t_0)$, according to Definition \ref{def:genericinfty}. Then, by Proposition \ref{prop:algebraicdecominfty}, we may write the corresponding connection matrix $C(z)$ as
\begin{equation}\label{eq:factorisationinftym0}
    C(z)=\widehat{C}^i\left(z\right)(-t_0)^{-\sigma_{01} \sigma_3}\widehat{C}^e(z/t_0),
\end{equation}
where
\begin{equation*}
\widehat{C}^e(z)=C_H(z;\widehat{\Xi}^e,\widehat{w}^e,\widehat{s}_1^e,\widehat{s}_2^e), \quad \widehat{C}^i(z)=C_H(z;\Xi^i,\widehat{w}^i,\widehat{s}_1^i,\widehat{s}_2^i),
\end{equation*}
for some gauge and scaling parameters $\widehat{w}^e,\widehat{w}^i,\widehat{s}_{1,2}^e,\widehat{s}_{1,2}^i\in\mathbb{C}^*$, with external and internal critical data $\widehat{\Xi}^e$ and $\widehat{\Xi}^i$ as in the proposition. Analogous to equation \eqref{eq:we}, we have the following explicit equation for the gauge parameter $\widehat{w}^e$,
\begin{equation} \label{eq:whe}
    \widehat{w}^e=-\frac{1}{q^{2\theta_\infty}\rho_2} \frac{\Gamma_q(1+2\theta_\infty)\Gamma_q(1+\theta_t-\theta_\infty-\sigma_{01})\Gamma_q(1+\theta_t-\theta_\infty+ \sigma_{01})}{\Gamma_q(1-2\theta_\infty)\Gamma_q(1+\theta_t+\theta_\infty-\sigma_{01})\Gamma_q(1+\theta_t+\theta_\infty+ \sigma_{01})}.
\end{equation}

We define the twist parameter $s_{01}$, corresponding to Mano decomposition II, by the formulas,
\begin{equation}\label{eq:twistdefi01}
    r_{01}:=\frac{\widehat{s}_1^e\widehat{w}^e}{\widehat{s}_2^e\widehat{w}^i},\quad s_{01}:=r_{01}/c_{01},
\end{equation}
where $c_{01}$ is defined by
\begin{equation*}
c_{01}=\frac{\Gamma_q(1-2\sigma_{01})^2}{\Gamma_q(1+2\sigma_{01})^2}\prod_{\epsilon=\pm1}\frac{ \Gamma_q(1+\theta_1+\epsilon\,\theta_0+\sigma_{01})  \Gamma_q(1+\theta_t+\epsilon\,\theta_\infty+\sigma_{01})}{ \Gamma_q(1+\theta_1+\epsilon\,\theta_0-\sigma_{01})  \Gamma_q(1+\theta_t+\epsilon\,\theta_\infty-\sigma_{01})}.
\end{equation*}

The twist parameter is explicitly related to the $\eta$-coordinates in the following lemma.

\begin{lemma}\label{lem:twist_s01}
The twist parameter $s_{01}$ is explicitly related to the $\eta$-coordinates via the formula
\begin{equation}\label{eq:twist01}
s_{01}=-(-t_0)^{2\sigma_{01}}M_{01}(\rho_{24}),
\end{equation}
where $M_{01}(\cdot)$ is the M\"obius transformation
\begin{equation*}
M_{01}(Z)=\frac{\vartheta_\tau(\theta_t+\theta_\infty+\sigma_{01})\theta_q(q^{\theta_1-\theta_\infty+\sigma_{01}}t_0)-Z
\vartheta_\tau(\theta_t-\theta_\infty+\sigma_{01})\theta_q(q^{\theta_1+\theta_\infty+\sigma_{01}}t_0)}{\vartheta_\tau(\theta_t+\theta_\infty-\sigma_{01})\theta_q(q^{\theta_1-\theta_\infty-\sigma_{01}}t_0)-Z
\vartheta_\tau(\theta_t-\theta_\infty-\sigma_{01})\theta_q(q^{\theta_1+\theta_\infty-\sigma_{01}}t_0)},
\end{equation*}
and $\rho_{24}=\rho_{24}(\eta)$ is defined in equation \eqref{eq:def_rhoij}.
\end{lemma}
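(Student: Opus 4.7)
The plan is to proceed in close analogy with the proof of Lemma \ref{lem:twist_s0t}, adapting it to the factorisation \eqref{eq:factorisationinftym0}, with the roles of ``interior'' and ``exterior'' essentially swapped by the $t_0$-dependence now sitting on the right-hand factor. First, I would derive an explicit expression for the interior gauge parameter $\widehat{w}^i$ by applying the general formula \eqref{eq:tyurin} to the Heine hypergeometric connection matrix $\widehat{C}^i(z)$, with critical data $\widehat{\Xi}^i$, evaluated via its Tyurin parameter $\widehat{\rho}_2^i := \pi\bigl(\widehat{C}^i(\widehat{x}_2^i)\bigr)=\pi\bigl(\widehat{C}^i(q^{-\theta_1})\bigr)$. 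Combined with the already explicit formula \eqref{eq:whe} for $\widehat{w}^e$ and substituted into the defining equation \eqref{eq:twistdefi01} of $s_{01}$, this will produce an intermediate identity of the form
\begin{equation*}
s_{01} = \widehat{c}\,q^{2(\sigma_{01}-\theta_\infty)}\,\frac{\widehat{s}_1^e\,\widehat{\rho}_2^i}{\widehat{s}_2^e\,\rho_2},
\end{equation*}
where $\widehat{c}$ is an explicit ratio of $\Gamma_q$-functions analogous to the constant $c$ appearing in the proof of Lemma \ref{lem:twist_s0t}.

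Next, I would apply Lemma \ref{lem:pi} to the factorisation \eqref{eq:factorisationinftym0} at the point $z = q^{-\theta_1}$. At this point $|\widehat{C}^i(q^{-\theta_1})|=0$ (since $\widehat{x}_2^i = q^{-\theta_1}$), so $\widehat{C}^i(q^{-\theta_1})$ has rank one, while the complementary factor $N := (-t_0)^{-\sigma_{01}\sigma_3}\widehat{C}^e(q^{-\theta_1}/t_0)$ is invertible under the non-splitting hypotheses. Lemma \ref{lem:pi} then gives
\begin{equation*}
\rho_4 = \pi\bigl(C(q^{-\theta_1})\bigr) = M_N\bigl(\widehat{\rho}_2^i\bigr).
\end{equation*}
Inserting the explicit matrix entries of $\widehat{C}^e$ from \eqref{eq:explicit_connection} and inverting this M\"obius relation expresses $\widehat{\rho}_2^i$ as an explicit M\"obius function of $\rho_4$, with the conjugation by $(-t_0)^{-\sigma_{01}\sigma_3}$ contributing the characteristic prefactor $(-t_0)^{2\sigma_{01}}$.

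Substituting the resulting expression for $\widehat{\rho}_2^i$ back into the intermediate formula for $s_{01}$, the dependence on $\widehat{s}_{1,2}^e$ cancels exactly, as it must since $s_{01}$ is scaling-invariant by construction. What remains is a formula in which $\rho_4$ and $\rho_2$ appear only through the ratio $\rho_{24} = \rho_2/\rho_4$, producing a single M\"obius transform in $\rho_{24}$ multiplied by $(-t_0)^{2\sigma_{01}}$. The main obstacle, as in the proof of Lemma \ref{lem:twist_s0t}, will be the bookkeeping required to identify the resulting coefficients with the specific combinations of $\theta_q$ and $\vartheta_\tau$ appearing in $M_{01}$; this will be handled by repeated use of the quasi-periodicity identities \eqref{eq:qtheta_identities}, the reflection symmetry of $\vartheta_\tau$, and the translation relation $\vartheta_\tau(x) = \theta_q(q^x)$, so that the $\Gamma_q$-factors produced by $\widehat{w}^e$, $\widehat{w}^i$ and $c_{01}$ combine into the $\vartheta_\tau$-prefactors of the numerator and denominator of $M_{01}$, yielding precisely the expression \eqref{eq:twist01}.
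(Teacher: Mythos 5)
Your proposal is correct and is essentially the paper's own argument: the paper proves Lemma \ref{lem:twist_s01} simply by declaring it "analogous to Lemma \ref{lem:twist_s0t}", and your three steps (computing $\widehat{w}^i$ from \eqref{eq:tyurin} via $\widehat{\rho}_2^i$, forming the intermediate identity from \eqref{eq:twistdefi01} and \eqref{eq:whe}, and relating $\widehat{\rho}_2^i$ to $\rho_4$ by applying Lemma \ref{lem:pi} to the factorisation \eqref{eq:factorisationinftym0} at $z=q^{-\theta_1}$) are precisely the correct unpacking of that analogy, including the sign flip in the $(-t_0)^{2\sigma_{01}}$ prefactor coming from the conjugation by $(-t_0)^{-\sigma_{01}\sigma_3}$. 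The only nitpick is terminological: the invertibility of the exterior factor at $z=q^{-\theta_1}$ rests on the non-resonance conditions $x_j/x_k\notin q^{\mathbb{Z}}$ from \eqref{eq:param_assumptions_2} rather than on the non-splitting conditions.
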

\begin{proof}
The lemma is proven analogously to Lemma \ref{lem:twist_s0t}.
\end{proof}

We note that, given any $s_{01}\in\mathbb{C}^*$ and $\sigma_{01}\in\mathbb{C}$ satisfying
\begin{equation}\label{eq:dense_open_infty}
    \sigma_{01}\not\equiv 0,\tfrac{1}{2},\tfrac{\tau}{2},\tfrac{1}{2}+\tfrac{\tau}{2},\pm (\theta_0+\theta_1),\pm (\theta_0-\theta_1),\pm (\theta_\infty+\theta_t),\pm (\theta_\infty-\theta_t) \mod{\Lambda_\tau},
\end{equation}
Proposition \ref{prop:algebraicdecominfty} gives an explicit construction of a corresponding connection matrix, whose Tyurin data $\rho$ are entirely determined by $\mathcal{E}_\infty(\sigma_{01})=\rho_{12}$, equation \eqref{eq:twist01} and equation \eqref{eq:whe}. They thus define a unique corresponding point $\eta=\eta(\sigma_{01},s_{01})$ on the affine Segre surface $\mathcal{F}(\Theta,t_0)$.

In particular, the pair $\{\sigma_{01},s_{01}\}$ form a set of local coordinates on the dense open subset of $\mathcal{F}(\Theta,t_0)$ defined by equation \eqref{eq:dense_open_infty}.
This dense open subset contains all $\infty$-generic points, see Definition \ref{def:genericinfty}.

\subsection{Analytic decomposition II of the main RHP}\label{sec:dec_analyticII} In this section, we describe how the main RHP can be decomposed corresponding to Mano decomposition II, analogously to the decomposition described in Section \ref{sec:dec_analytic}.

Take a point $\eta\in\mathcal{F}(\Theta,t_0)$ and a corresponding connection matrix $C(z)\in\mathfrak{C}(\Theta,t_0)$.
Recall the main Riemann-Hilbert problem, RHP \ref{rhp:main}, for the corresponding solution $(f,g)$ of $q\Psix$.

Assuming conditions \eqref{eq:nonlogassumptioninfty}, Proposition \ref{prop:algebraicdecomzero} gives an algebraic decomposition of the connection matrix $C(z)$ into interior and exterior connection matrices $\widehat{C}^i(z)$ and $\widehat{C}^e(z)$, as detailed in the proposition.

Let $\Psi(z)=\Psi^{\B{m}}(z)$ denote the solution of RHP \ref{rhp:main} and let $\Psi_\infty^{\B{m}}(z)$ and $\Psi_0^{\B{m}}(z)$ denote respectively the analytic continuation of $\Psi^{\B{m}}(z)$ from the outside $D_+^{\B{m}}$ and inside $D_-^{\B{m}}$ of the curve $\gamma^{\B{m}}$. Then, due to equations \eqref{eq:globaljump} and \eqref{eq:factorisationinfty},
\begin{equation*}
    \Psi_\infty^{\B{m}}(z)=\Psi_0^{\B{m}}(z)\widehat{C}^i\left(z\right)(-t_m)^{-\sigma_{01} \sigma_3}\widehat{C}^e(z/t_m)\widehat{D}_m.
\end{equation*}

We now introduce an intermediate matrix function $\Psi_{01}^{\B{m}}(z)$ by
\begin{align*}
   \Psi_\infty^{\B{m}}(z)&=\Psi_{01}^{\B{m}}(z)(-t_m)^{-\sigma_{01} \sigma_3}\widehat{C}^e(z/t_m)\widehat{D}_m,\\
   \Psi_{01}^{\B{m}}(z)&=\Psi_{0}^{\B{m}}(z)\widehat{C}^i\left(z\right).
\end{align*}
Then, in addition to $Y_\infty(z)$ and $Y_0(z)$ defined in equations \eqref{eq:true_sol},
\begin{equation*}
Y_{01}(z,t_m):=(-1)^{\log_q(z)}z^{\frac{1}{2}\log_q(z/q)+\log_q(t_m)}\Psi_{01}^{\B{m}}(z) z^{-\sigma_{01}\sigma_3},
\end{equation*}
also defines a solution to the linear system \eqref{eq:linear_system}.

The matrix function $\Psi_{01}^{\B{m}}(z)$ is analytic and invertible on $\mathbb{C}^*$, away from points in the discrete set
\begin{equation*}
\left(q^{\mathbb{Z}_{>0}}\cdot \{ q^{\theta_1},q^{-\theta_1}\} \right)\cup \left(q^{\mathbb{Z}_{\leq 0}}\cdot \{q^{\theta_t} t_m,q^{-\theta_t} t_m\} \right).
\end{equation*}

\begin{figure}[ht]
	\centering
	\begin{tikzpicture}[scale=0.8]
	\draw[->] (-6,0)--(6,0) node[right]{$\Re{z}$};
	\draw[->] (0,-6)--(0,6) node[above]{$\Im{z}$};
	\tikzstyle{star}  = [circle, minimum width=3.5pt, fill, inner sep=0pt];
	\tikzstyle{starsmall}  = [circle, minimum width=3.5pt, fill, inner sep=0pt];

	\draw[domain=-1.3:6,smooth,variable=\x,red] plot ({exp(-\x*ln(2))*2*4/3*cos((pi/8) r)},{exp(-\x*ln(2))*2*4/3*sin((pi/8) r)});	
	\draw[domain=-1.3:6,smooth,variable=\x,red] plot ({exp(-\x*ln(2))*2*4/3*cos((pi/8) r)},{-exp(-\x*ln(2))*2*4/3*sin((pi/8) r)});

 \draw[blue,thick,decoration={markings, mark=at position 0.21 with {\arrow{>}}},
	postaction={decorate}] (0,0) ellipse (5.4cm and 5.4cm);

 \draw[blue,thick,decoration={markings, mark=at position 0.21 with {\arrow{>}}},
	postaction={decorate}] (0,0) ellipse (2.6cm and 2.6cm);

 \draw[black,thick,,dashed,decoration={markings, mark=at position 0.21 with {\arrow{>}}},
	postaction={decorate}] (0,0) ellipse (3.54cm and 3.54cm);

    \node[starsmall]     (or) at ({0},{0} ) {};
	\node     at ($(or)+(0.2,0.4)$) {$0$};

    \node[starsmall]     (qk1) at ({sqrt(sqrt(2))*2*5/6*cos((-pi/8) r)},{-sqrt(sqrt(2))*2*5/6*sin((-pi/8) r)} ) {};
	\node     at ($(qk1)+(0.17,-0.35)$) {$q^{1+\theta_1}$};
	\node[star]     (k1) at ({2*2*6/6*cos((-pi/8) r)},{-2*2*6/6*sin((-pi/8) r)} ) {};
	\node     at ($(k1)+(0.3,-0.3)$) {$q^{\theta_1}$};
	
	\node[starsmall]     (qkm1) at ({sqrt(sqrt(2))*2*5/6*cos((-pi/8) r)},{sqrt(sqrt(2))*2*5/6*sin((-pi/8) r)} ) {};
	\node     at ($(qkm1)+(0.17,0.4)$) {$q^{1-\theta_1}$};
	\node[star]     (km1) at ({2*2*6/6*cos((-pi/8) r)},{2*2*6/6*sin((-pi/8) r)} ) {};
	\node     at ($(km1)+(0.4,0.35)$) {$q^{-\theta_1}$};

    \draw[domain=-0.95:6,smooth,variable=\x,red] plot ({exp(-\x*ln(2))*2*5/3*cos((-3.8*pi/8+5*pi/4+\x*0) r)},{-exp(-\x*ln(2))*2*5/3*sin((-3.8*pi/8+5*pi/4+\x*0) r)});	
    \draw[domain=-0.95:6,smooth,variable=\x,red] plot ({exp(-\x*ln(2))*2*5/3*cos((-3.8*pi/8+5*pi/4+\x*0) r)},{exp(-\x*ln(2))*2*5/3*sin((-3.8*pi/8+5*pi/4+\x*0) r)});

	\node[star]     (qkt) at ({sqrt(sqrt(2))*2*3.5/2*cos((-3.8*pi/8+5*pi/4-1/4*0) r)},{sqrt(sqrt(2))*2*3.5/2*sin((-3.8*pi/8+5*pi/4-1/4*0) r)} ) {};
	\node     at ($(qkt)+(0.53,0.5)$) {$q^{\theta_t}t_{m+1}$};	
	\node[star]     (kt) at ({2*2*3/2*cos((-3.8*pi/8+5*pi/4-0) r)},{2*2*3/2*sin((-3.8*pi/8+5*pi/4-0) r)} ) {};
	\node     at ($(kt)+(0.4,0.4)$) {$q^{\theta_t}t_{m}$};

	\node[star]     (qktm) at ({sqrt(sqrt(2))*2*3.5/2*cos((-3.8*pi/8+5*pi/4) r)},{-sqrt(sqrt(2))*2*3.5/2*sin((-3.8*pi/8+5*pi/4) r)} ) {};
	\node     at ($(qktm)+(0.48,-0.38)$) {$q^{-\theta_t}t_{m+1}$};
	\node[star]     (ktm) at ({2*2*3/2*cos((-3.8*pi/8+5*pi/4) r)},{-2*2*3/2*sin((-3.8*pi/8+5*pi/4) r)} ) {};
	\node     at ($(ktm)+(0.4,-0.35)$) {$q^{-\theta_t}t_{m}$};

 	\node[blue]     at ({5.8*cos((0.21*2*pi) r)+0.15},{5.8*sin((0.21*2*pi) r)}) {$\boldsymbol{\widehat{\gamma}_e^{\B{m}}}$};

 	\node[black]     at ({3.9*cos((0.21*2*pi) r)},{3.9*sin((0.21*2*pi) r)}) {$\boldsymbol{\gamma_{01}}$};

 	\node[blue]     at ({2.96*cos((0.21*2*pi) r)+0.15},{2.96*sin((0.21*2*pi) r)})  {$\boldsymbol{\widehat{\gamma}_i}$};

 	\node[blue]     at ({5.70*cos((0.3*2*pi) r)},{5.70*sin((0.3*2*pi) r)}) {${\scriptstyle\boldsymbol{+}}$};

 	\node[black]     at ({3.85*cos((0.3*2*pi) r)},{3.85*sin((0.3*2*pi) r)}) {${\scriptstyle\boldsymbol{+}}$};

 	\node[blue]     at ({2.85*cos((0.3*2*pi) r)},{2.85*sin((0.3*2*pi) r)})  {${\scriptstyle\boldsymbol{+}}$};

 	\node[blue]     at ({5.14*cos((0.3*2*pi) r)},{5.14*sin((0.3*2*pi) r)}) {${\scriptstyle\boldsymbol{-}}$};

 	\node[black]     at ({3.32*cos((0.3*2*pi) r)},{3.32*sin((0.3*2*pi) r)}) {${\scriptstyle\boldsymbol{-}}$};

 	\node[blue]     at ({2.4*cos((0.3*2*pi) r)},{2.4*sin((0.3*2*pi) r)})  {${\scriptstyle\boldsymbol{-}}$};

	\end{tikzpicture}
	\caption{Topological representation of Jordan curves $\widehat{\gamma}_i$, $\gamma_{01}$ and $\widehat{\gamma}_e^{\B{m}}$ relative to each other, the origin and points in the $q$-lines $q^{\mathbb{Z}}\cdot x$, $x\in\{q^{\pm \theta_t}t_0,q^{\pm \theta_1}\}$. For the sake of simplicity, the contours are displayed as circles, but we emphasise that they generally are not.}
	\label{fig:analytic_decomp_II}
\end{figure}
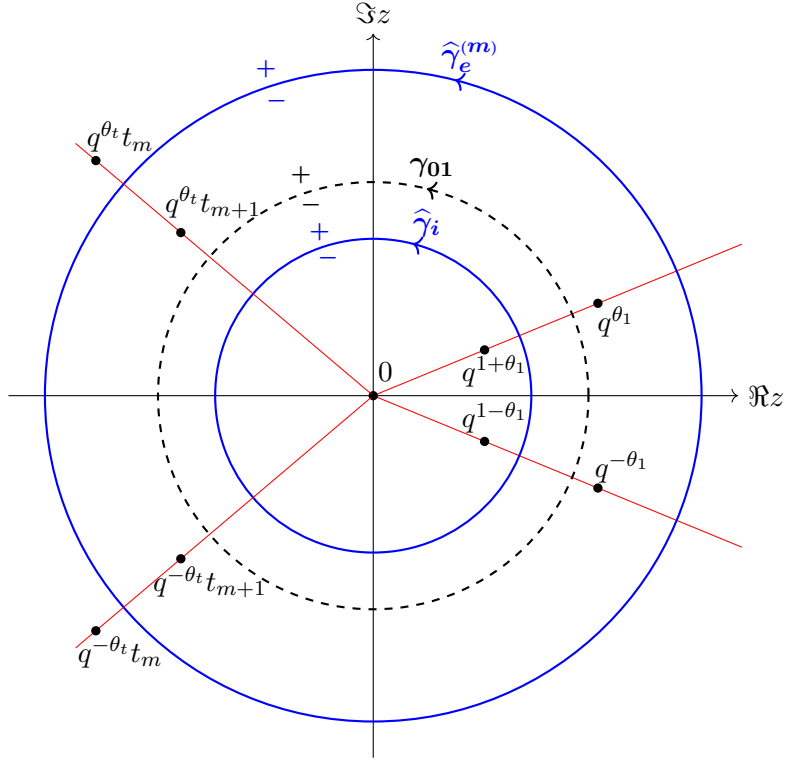

To formulate the factorised RHP corresponding to decomposition II, we take three analytic Jordan curves, $\widehat{\gamma}_i$, $\gamma_{01}$ and $\widehat{\gamma}_e^{\B{m}}=q^m\cdot \widehat{\gamma}_e^{(0)}$, as described pictorially in Figure \ref{fig:analytic_decomp_II}, for $m<0$.
Namely, $\widehat{\gamma}_i$ lies entirely on the inside of $\gamma_{01}$, $\gamma_{01}$ lies entirely in the inside of $\widehat{\gamma}_e^{\B{m}}$, and letting $\widehat{D}_0$ denote the inside of $\widehat{\gamma}_i$,
$\widehat{D}_\infty^{\B{m}}=q^m \widehat{D}_\infty^{\B{0}}$
 denote the outside of $\widehat{\gamma}_e^{\B{m}}$,  and $\widehat{D}_{01}^{\B{m}}$ denote the annulus like region in between $\widehat{\gamma}_i$ and $\widehat{\gamma}_e^{\B{m}}$, then
\begin{equation*}
    0\in \widehat{D}_0,\quad \widehat{D}_\infty^{\B{m}}\subseteq \mathbb{C}\setminus D_0,\qquad \widehat{D}_0\sqcup\widehat{\gamma}_i\sqcup \widehat{D}_{0t}^{\B{m}}\sqcup \widehat{\gamma}_e^{\B{m}}\sqcup \widehat{D}_\infty^{\B{m}}=\mathbb{C},\quad \gamma_{01}\subseteq \widehat{D}_{01}^{\B{m}},
\end{equation*}
and
\begin{align*}
q^{\mathbb{Z}_{>0}}\cdot \{q^{\theta_1},q^{-\theta_1}\}&\subseteq \widehat{D}_0,\\
q^{\mathbb{Z}_{\leq 0}}\cdot \{q^{\theta_1},q^{-\theta_1}\}&\subseteq \mathbb{C}\setminus \widehat{D}_0, \\
q^{\mathbb{Z}_{>0}}\cdot \{q^{\theta_t} t_m,q^{-\theta_t} t_m\}&\subseteq \mathbb{C}\setminus \widehat{D}_\infty^{\B{m}},\\
q^{\mathbb{Z}_{\leq 0}}\cdot \{q^{\theta_t} t_m,q^{-\theta_t} t_m\}&\subseteq \widehat{D}_\infty^{\B{m}},
\end{align*}
for all $m< 0$.

We now define
\begin{equation*}
    \widehat{\Phi}^{\B{m}}(z)=\begin{cases}
    \Psi_\infty^{\B{m}}(z) & \text{if }z\in \widehat{D}_\infty^{\B{m}},\\
    \Psi_{01}^{\B{m}}(z) & \text{if }z\in \widehat{D}_{01}^{\B{m}},\\
    \Psi_0^{\B{m}}(z) & \text{if }z\in \widehat{D}_0,\\
    \end{cases}
\end{equation*}
so that $\widehat{\Phi}^{\B{m}}(z)$ solves the following RHP.

\begin{rhprob}\label{rhp:decomII}
For $m<0$, find a $2\times 2$ matrix-valued function $\widehat{\Phi}(z)=\widehat{\Phi}^{\B{m}}(z)$, which satisfies the following conditions.
  \begin{enumerate}[label={{\rm (\roman *)}}]
  \item $\widehat{\Psi}^{\B{m}}(z)$ is analytic on $\mathbb{C}$ away from the curves $\widehat{\gamma}_e^{\B{m}}$ and $\widehat{\gamma}_i$.
    \item $\widehat{\Phi}^{\B{m}}(z)$ has continuous boundary values $\widehat{\Phi}_+^{\B{m}}(z)$ and $\widehat{\Phi}_-^{\B{m}}(z)$ as $z$ approaches $\widehat{\gamma}_e^{\B{m}}$ or $\widehat{\gamma}_i$ from their respective out and insides, related by
\begin{align*}
\widehat{\Phi}_+^{\B{m}}(z)&=\widehat{\Phi}_-^{\B{m}}(z)(-t_m)^{-\sigma_{01} \sigma_3}\widehat{C}^e(z/t_m)\widehat{D}_m & &(z\in \widehat{\gamma}_e^{\B{m}}),\\
\widehat{\Phi}_+^{\B{m}}(z)&=\widehat{\Phi}_-^{\B{m}}(z)\widehat{C}^i &   &(z\in \widehat{\gamma}_i).
  \end{align*}  
 \item $\widehat{\Phi}^{\B{m}}(z)$ satisfies
              \begin{equation*}
		\widehat{\Phi}^{\B{m}}(z)=I+\mathcal{O}\left(z^{-1}\right)\quad z\rightarrow \infty.
              \end{equation*}
              \end{enumerate}
\end{rhprob}

\section{Riemann-Hilbert analysis}\label{sec:rhanalysis}
In this section, we prove Theorem \ref{thm:generic_asymp_zero}, Proposition \ref{prop:asymptotics_lineLd2i} and Corollary \ref{coro:intersectionanalytic}.
We derive asymptotics of solutions near $t=0$ by analysing RHP \ref{rhp:decomI} in the large positive $m$ limit. We first consider the $0$-generic setting, see Definition \ref{def:generic0}. In Section \ref{sec:singlecontourI} we transform RHP \ref{rhp:decomI} to a RHP with a single jump, defined on the $m$-independent, closed contour $\gamma_{0t}$. In Section \ref{sec:rhp_perturb}, we derive a perturbative solution to this RHP for large positive $m$, and in Section \ref{sec:extract_asymp} we extract the asymptotics of the corresponding solution of $q\Psix$ as $t\rightarrow 0$, yielding a proof of Theorem \ref{thm:generic_asymp_zero}.

In Section \ref{sec:reducible_rhp}, we handle the non-generic case, when the monodromy coordinates lie on the line $\widetilde{\mathcal{L}}_2^\infty$, and prove Proposition \ref{prop:asymptotics_lineLd2i} and Corollary \ref{coro:intersectionanalytic}.

\subsection{RHP on a single closed contour}\label{sec:singlecontourI}
Assume that $\eta\in\mathcal{F}(\Theta,t_0)$ is $0$-generic, see Definition \ref{def:generic0}, so that we may specify the intermediate exponent $\sigma_{0t}$ uniquely by
\begin{equation}\label{eq:gen_int_exponent}
\mathcal{E}_0(\sigma_{0t})=\rho_{34},\quad  0< \Re\sigma_{0t}<\tfrac{1}{2},\quad \tfrac{1}{2}i\tau \leq \Im\sigma_{0t} <-\tfrac{1}{2}i\tau.
\end{equation}
By Proposition \ref{prop:algebraicdecomzero} and Lemma \ref{lem:twist_s0t}, the connection matrix $C(z)$ can be factorised as in equation \eqref{eq:factorisationm0}, where
\begin{equation}\label{eq:CeCidefi}
   C^e(z)=C_H(z;\Xi^e,w^e,s_1^e,s_2^e), \quad C^i(z)=C_H(z;\Xi^i,w^i,s_1^i,s_2^i),
\end{equation}
with critical data $\Xi^e$ and $\Xi^i$ as defined in Proposition \ref{prop:algebraicdecomzero}, for some gauge and scaling parameters $w^e,w^i,s_{1,2}^e,s_{1,2}^i\in\mathbb{C}^*$, which satisfy the single constraint
\begin{equation}\label{eq:gen_twist_parameter}
    \frac{s_1^e w^e}{s_2^e w^i}=r_{0t},
\end{equation}
where the twist parameter $s_{0t}=r_{0t}/c_{0t}$ is given by \eqref{eq:twist0t}.

Now, with this decomposition of the global connection matrix, consider RHP \ref{rhp:decomI}. In this section, we transform this RHP into a RHP with a single jump on the closed contour $\gamma_{0t}$. We achieve this by solving the two model RHPs, each defined by only considering one of the two jump conditions in RHP \ref{rhp:decomI}, and then quotienting the solution of RHP \ref{rhp:decomI} by the solutions to the model problems.

We start with the outer contour $\gamma_e$. So, we look for a pair of matrix-valued functions $\Psi_\infty^e(z)$ and $\Psi_0^e(z)$, analytic on the respective out and inside of $\gamma_e$, whose boundary values on $\gamma^e$ are well-defined and related by
\begin{equation*}
   \Psi_\infty^e(z)= \Psi_0^e(z)C^e(z)\quad (z\in \gamma_e),
\end{equation*}
with $\Psi_\infty^e(z)=(I+\mathcal{O}(z^{-1}))$ as $z\rightarrow \infty$.

This is precisely the model RHP \ref{rhp:model}, with $\gamma=\gamma^e$ and $C(z)=C^e(z)$.

We construct its unique solution, which we refer to as the external parametrix, by substituting the parameter values
    \begin{align*}
    \sigma_1^e&=-q^{-\sigma_{0t}}, &  x_1^e&=q^{+\theta_1}, &   \mu_1^e&=q^{+\theta_\infty},\\
    \sigma_2^e&=-q^{+\sigma_{0t}}, &  x_2^e&=q^{-\theta_1}, & \mu_2^e&=q^{-\theta_\infty},
    \end{align*} 
as well as gauge parameter $w^e$ and scaling parameters $s_{1,2}^e$, into equations \eqref{eq:hypersolinf} and \eqref{eq:hypersolzero}. This gives
\begin{align*}
    \Psi_\infty^e(z)&=\widehat{\Psi}_\infty^e(z) \begin{pmatrix} \big(q^{1+\theta_1}/z;q\big)_\infty & 0\\
    0 & \big(q^{1-\theta_1}/z;q\big)_\infty
    \end{pmatrix},\\    \Psi_0^e(z)&=\widehat{\Psi}_0^e(z)\begin{pmatrix} \big(q^{-\theta_1}z;q\big)_\infty^{-1} & 0\\
    0 & \big(q^{+\theta_1}z;q\big)_\infty^{-1}
    \end{pmatrix},
\end{align*}
where, recalling the definition of the function $F_q(\cdot)$ in equation \eqref{eq:defiFq},
\begin{align*}
\widehat{\Psi}_{\infty,{11}}^e(z)&=F_q\left(-\theta_1+\theta_\infty+\sigma_{0t},-\theta_1+\theta_\infty-\sigma_{0t};2 \theta_\infty;
\frac{q^{1+\theta_1}}{z}\right),\\
\widehat{\Psi}_{\infty,{12}}^e(z)&=\frac{r_1^e w^e}{z}F_q\left(1+\theta_1-\theta_\infty+\sigma_{0t},1+\theta_1-\theta_\infty-\sigma_{0t};2-2\theta_\infty;
\frac{q^{1-\theta_1}}{z}\right),\\
\widehat{\Psi}_{\infty,{21}}^e(z)&=\frac{r_2^e}{w^e \;z}F_q\left(1-\theta_1+\theta_\infty+\sigma_{0t},1-\theta_1+\theta_\infty-\sigma_{0t};2+2\theta_\infty;
\frac{q^{1+\theta_1}}{z}\right),\\
\widehat{\Psi}_{\infty,{22}}^e(z)&=F_q\left(\theta_1-\theta_\infty+\sigma_{0t},\theta_1-\theta_\infty-\sigma_{0t};-2\theta_\infty;
\frac{q^{1-\theta_1}}{z}\right),
\end{align*}
with
\begin{equation*}
    r_1^e=\frac{q\;\beta^e}{q^{\theta_\infty}-q^{-\theta_\infty}}, \qquad  r_2^e=\frac{q\;\gamma^e}{q^{-\theta_\infty}-q^{1+\theta_\infty}},
\end{equation*}
and
\begin{align*}
\widehat{\Psi}_{0,{11}}^e(z)&= h_{11}^eF_q\left(-\theta_1+\theta_\infty-\sigma_{0t},1-\theta_1-\theta_\infty-\sigma_{0t};1-2\sigma_{0t};
q^{+\theta_1}z\right),\\
\widehat{\Psi}_{0,{12}}^e(z)&= h_{12}^eF_q\left(+\theta_1+\theta_\infty+\sigma_{0t},1+\theta_1-\theta_\infty+\sigma_{0t};1+2\sigma_{0t};
q^{-\theta_1}z\right),\\
\widehat{\Psi}_{0,{21}}^e(z)&= h_{21}^eF_q\left(-\theta_1-\theta_\infty-\sigma_{0t},1-\theta_1+\theta_\infty-\sigma_{0t};1-2\sigma_{0t};
q^{+\theta_1}z\right),\\
\widehat{\Psi}_{0,{22}}^e(z)&= h_{22}^eF_q\left(+\theta_1-\theta_\infty+\sigma_{0t},1+\theta_1+\theta_\infty+\sigma_{0t};1+2\sigma_{0t};
q^{-\theta_1}z\right),
\end{align*}
with
\begin{equation}\label{eq:he}
h^e=\begin{pmatrix}
1-q^{\theta_1+\theta_\infty+\sigma_{0t}} & w^e(1-q^{\theta_1+\theta_\infty-\sigma_{0t}})\\
1/w^e(1-q^{\theta_1-\theta_\infty+\sigma_{0t}}) & 1-q^{\theta_1-\theta_\infty-\sigma_{0t}}
\end{pmatrix}\begin{pmatrix}1/s_1^e & 0\\
0 & 1/s_2^e\\
\end{pmatrix},
\end{equation}
and
\begin{align}
\beta^e&=\frac{(q^{\theta_1+\theta_\infty+\sigma_{0t}}-1)(q^{\theta_1+\theta_\infty-\sigma_{0t}}-1)}{q^{\theta_1}(q^{+\theta_\infty}-q^{-\theta_\infty})},\label{eq:betae}\\
\gamma^e&=\frac{(q^{\theta_1-\theta_\infty+\sigma_{0t}}-1)(q^{\theta_1-\theta_\infty-\sigma_{0t}}-1)}{q^{\theta_1}(q^{-\theta_\infty}-q^{+\theta_\infty})}.\nonumber
\end{align}

Furthermore, the corresponding linear system,
\begin{equation*}
    Y(qz)=A^e(z)Y(z),\quad A^e(z)=A_0^e+zA_1^e,
\end{equation*}
where
\begin{equation}\label{eq:A0e}
    A_0^e=\begin{pmatrix}
    \alpha^e & \beta^e\, w^e\\
    \gamma^e/w^e & \delta^e\\
    \end{pmatrix},\quad
      A_1=q^{-\theta_\infty \sigma_3},
\end{equation}
with
\begin{align*}
     \alpha^e&=\frac{q^{\theta_1}+q^{-\theta_1}-(q^{-\theta_\infty+\sigma_{0t}}+q^{-\theta_\infty-\sigma_{0t}})}{q^{-\theta_\infty}-q^{+\theta_\infty}},\\
     \delta^e&=\frac{q^{\theta_1}+q^{-\theta_1}-(q^{+\theta_\infty+\sigma_{0t}}+q^{+\theta_\infty-\sigma_{0t}})}{q^{+\theta_\infty}-q^{-\theta_\infty}},
\end{align*}
can be expressed in terms of $\Psi_\infty^e(z)$ and $\Psi_0^e(z)$ by the respective formulas,
\begin{align}\label{eq:Ae_psi}
    A^e(z)&=z\;\Psi_\infty^e(qz)q^{-\theta_\infty \sigma_3}\Psi_\infty^e(z)^{-1},\\
    &=-\Psi_0^e(qz)q^{-\sigma_{0t}\sigma_3}\Psi_0^e(z)^{-1},\nonumber
\end{align}
due to equations \eqref{eq:hypersolinfbranch} and \eqref{eq:hypersolzerobranch}. We refer to the pair $(\Psi_\infty^e(z),\Psi_0^e(z))$ as the exterior parametrix.

Next, we consider the model problem defined by the jump condition on the contour $\gamma_i^{\B{m}}$ in RHP \ref{rhp:decomI}. We first construct analytic matrix functions, $\Psi_\infty^i(\zeta)$ and $\Psi_0^i(\zeta)$, on the respective exterior and interior of the Jordan curve $t_0^{-1}\cdot \gamma_i^{\B{0}}$, whose boundary values on $t_0^{-1}\cdot \gamma_i^{\B{0}}$ are well-defined and related by
\begin{equation*}
   \Psi_\infty^i(\zeta)= \Psi_0^i(\zeta)C^i(\zeta)\qquad (\zeta\in t_0^{-1}\cdot \gamma_i^{\B{0}}),
\end{equation*}
and $\Psi_\infty^i(\zeta)=I+\mathcal{O}(\zeta^{-1})$ as $\zeta\rightarrow \infty$.

This is precisely model RHP \ref{rhp:model}, with $\gamma=t_0^{-1}\cdot\gamma^i$ and $C(z)=C^i(z)$.

As before, we construct its unique solution by substituting the parameter values
    \begin{align*}
    \sigma_1^i&=q^{+\theta_0}, &  x_1^i&=q^{+\theta_t}, &   \mu_1^i&=-q^{+\sigma_{0t}},\\
    \sigma_2^i&=q^{-\theta_0}, &  x_2^i&=q^{-\theta_t}, & \mu_2^i&=-q^{-\sigma_{0t}}.
    \end{align*}
as well as gauge parameter $w^i$ and scaling parameters $s_{1,2}^i$, into equations \eqref{eq:hypersolinf} and \eqref{eq:hypersolzero}. This gives
\begin{align}
    \Psi_\infty^i(\zeta)&=\widehat{\Psi}_\infty^i(\zeta) \begin{pmatrix} \big(q^{1+\theta_t}/\zeta;q\big)_\infty & 0\\
    0 & \big(q^{1-\theta_t}/\zeta;q\big)_\infty
    \end{pmatrix},\label{eq:internalpsiinf}\\    \Psi_0^i(\zeta)&=\widehat{\Psi}_0^i(\zeta)\begin{pmatrix} \big(q^{-\theta_t}\zeta;q\big)_\infty^{-1} & 0\\
    0 & \big(q^{+\theta_t}\zeta;q\big)_\infty^{-1}
    \end{pmatrix},\nonumber
\end{align}
where,
\begin{align*}
\widehat{\Psi}_{\infty,{11}}^i(\zeta)&=F_q\left(-\theta_t+\sigma_{0t}+\theta_0,-\theta_t+\sigma_{0t}-\theta_0;2 \sigma_{0t};
\frac{q^{1+\theta_t}}{\zeta}\right),\\
\widehat{\Psi}_{\infty,{12}}^i(\zeta)&=\frac{r_1^i w^i}{\zeta}F_q\left(1+\theta_t-\sigma_{0t}+\theta_0,1+\theta_t-\sigma_{0t}-\theta_0;2-2\sigma_{0t};
\frac{q^{1-\theta_t}}{\zeta}\right),\\
\widehat{\Psi}_{\infty,{21}}^i(\zeta)&=\frac{r_2^i}{w^i \;\zeta}F_q\left(1-\theta_t+\sigma_{0t}+\theta_0,1-\theta_1+\sigma_{0t}-\theta_0;2+2\sigma_{0t};
\frac{q^{1+\theta_t}}{\zeta}\right),\\
\widehat{\Psi}_{\infty,{22}}^i(\zeta)&=F_q\left(\theta_t-\sigma_{0t}+\theta_0,\theta_t-\sigma_{0t}-\theta_0;-2\sigma_{0t};
\frac{q^{1-\theta_t}}{\zeta}\right),
\end{align*}
with
\begin{equation*}
    r_1^i=\frac{q\;\beta^i}{q^{-\sigma_{0t}}-q^{\sigma_{0t}}}, \qquad  r_2^i=\frac{q\;\gamma^i}{q^{1+\sigma_{0t}}-q^{-\sigma_{0t}}},
\end{equation*}
and
\begin{align*}
\widehat{\Psi}_{0,{11}}^i(\zeta)&= h_{11}^iF_q\left(-\theta_t+\theta_0+\sigma_{0t},1-\theta_t+\theta_0-\sigma_{0t};1+2\theta_0;
q^{+\theta_t}\zeta\right),\\
\widehat{\Psi}_{0,{12}}^i(\zeta)&= h_{12}^iF_q\left(+\theta_t-\theta_0+\sigma_{0t},1+\theta_t-\theta_0-\sigma_{0t};1-2\theta_0;
q^{-\theta_t}\zeta\right),\\
\widehat{\Psi}_{0,{21}}^i(\zeta)&= h_{21}^iF_q\left(-\theta_t+\theta_0-\sigma_{0t},1-\theta_t+\theta_0+\sigma_{0t};1+2\theta_0;
q^{+\theta_t}\zeta\right),\\
\widehat{\Psi}_{0,{22}}^i(\zeta)&= h_{22}^iF_q\left(+\theta_t-\theta_0-\sigma_{0t},1+\theta_t+\sigma_{0t}-\theta_0;1-2\theta_0;
q^{-\theta_t}\zeta\right),
\end{align*}
with
\begin{equation*}
h^i=\begin{pmatrix}
1-q^{\theta_t-\theta_0+\sigma_{0t}} & w^i(1-q^{\theta_t+\theta_0+\sigma_{0t}})\\
1/w^i(1-q^{\theta_t-\theta_0-\sigma_{0t}}) & 1-q^{\theta_t+\theta_0-\sigma_{0t}}
\end{pmatrix}\begin{pmatrix}1/s_1^i & 0\\
0 & 1/s_2^i\\
\end{pmatrix},
\end{equation*}
and
\begin{align}
\beta^i&=\frac{(q^{\theta_t+\theta_0+\sigma_{0t}}-1)(q^{\theta_t-\theta_0+\sigma_{0t}}-1)}{q^{\theta_t}(q^{-\sigma_{0t}}-q^{+\sigma_{0t}})},\label{eq:betai}\\
\gamma^i&=\frac{(q^{\theta_t+\theta_0-\sigma_{0t}}-1)(q^{\theta_t-\theta_0-\sigma_{0t}}-1)}{q^{\theta_t}(q^{+\sigma_{0t}}-q^{-\sigma_{0t}})}.\nonumber
\end{align}

Furthermore, the corresponding linear system,
\begin{equation*}
    Y(q\zeta)=A^i(\zeta)Y(\zeta),\quad A^i(\zeta)=A_0^i+\zeta A_1^i,
\end{equation*}
where
\begin{equation*}
    A_0^i=\begin{pmatrix}
    \alpha^i & \beta^i\, w^i\\
    \gamma^i/w^i & \delta^i\\
    \end{pmatrix},\quad
      A_1=-q^{-\sigma_{0t} \sigma_3},
\end{equation*}
with
\begin{align*}
     \alpha^i&=\frac{q^{\theta_t}+q^{-\theta_t}-(q^{-\sigma_{0t}+\theta_0}+q^{-\sigma_{0t}-\theta_0})}{q^{+\sigma_{0t}}-q^{-\sigma_{0t}}},\\
     \delta^i&=\frac{q^{\theta_t}+q^{-\theta_t}-(q^{+\sigma_{0t}+\theta_0}+q^{+\sigma_{0t}-\theta_0})}{q^{-\sigma_{0t}}-q^{+\sigma_{0t}}},
\end{align*}
can be expressed in terms of $\Psi_\infty^i(\zeta)$ and $\Psi_0^i(\zeta)$ by the respective formulas,
\begin{align}
    A^i(\zeta)&=\zeta\;\Psi_\infty^i(q\zeta)q^{-\sigma_{0t} \sigma_3}\Psi_\infty^i(\zeta)^{-1},\nonumber\\
&=\Psi_0^i(q\zeta)q^{\theta_0\sigma_3}\Psi_0^i(\zeta)^{-1},\label{eq:aipsi0}
\end{align}
due to equations \eqref{eq:hypersolinfbranch} and \eqref{eq:hypersolzerobranch}.

We now set $\zeta=z/t_m$, so that
\begin{equation*}
   \Psi_\infty^i(z/t_m)= \Psi_0^i(z/t_m)C^i(z/t_m).
\end{equation*}
The jump matrix corresponding to the jump on $\gamma_i^{\B{m}}$ in RHP \ref{rhp:decomI}, is given by $D_mC^i(z/t_m) (-t_m)^{\sigma_{0t}\sigma_3}$, and we thus define
\begin{align*}
    U_\infty^{\B{m}}(z)&=(-t_m)^{-\sigma_{0t}\sigma_3}\Psi_\infty^i(z/t_m)(-t_m)^{\sigma_{0t}\sigma_3},\\
    U_0^{\B{m}}(z)&=(-t_m)^{-\sigma_{0t}\sigma_3}\Psi_0^i(z/t_m)D_m^{-1},
\end{align*}
so that
\begin{equation*}
  U_\infty^{\B{m}}(z)=U_0^{\B{m}}(z) D_mC^i(z/t_m) (-t_m)^{\sigma_{0t}\sigma_3},
\end{equation*}
and
\begin{equation*}
    U_\infty^{\B{m}}(z)=(I+\mathcal{O}(z^{-1}))\qquad (z\rightarrow \infty).
\end{equation*}
Furthermore, note that $U_\infty^{\B{m}}(z)$ and $U_0^{\B{m}}(z)$ are analytic and invertible on the respective exterior and interior of $\gamma_i^{\B{m}}$.
We refer to the pair $(U_\infty^{\B{m}}(z),U_0^{\B{m}}(z))$ as the interior parametrix.

We now return to RHP \ref{rhp:decomI}. We have constructed exterior and interior parametrices which solve respectively the jump conditions on $\gamma_e$ and $\gamma_i^{\B{m}}$ in RHP \ref{rhp:decomI}. We proceed to quotient the solution of RHP \ref{rhp:decomI} with these parametrices in different regions of the complex plane.

Let us write $D_{\text{ex}}(\gamma_{0t})$ and $D_{\text{in}}(\gamma_{0t})$ for the respective exterior and interior of the Jordan curve $\gamma_{0t}$. Since $\gamma_{0t}$ lies inside $D_{0t}^{\B{m}}$, it splits this annulus-like region into two, $D_{0t}^{\B{m}}\cap D_{\text{ex}}(\gamma_{0t})$ and $D_{0t}^{\B{m}}\cap D_{\text{in}}(\gamma_{0t})$,
for $m\geq 0$.

We now define quotient of the solution $\Phi^{\B{m}}(z)$ of RHP \ref{rhp:decomI}, with respect to the exterior and interior parametrices, as follows,
\begin{equation}\label{eq:omegapsi}
    \Omega^{\B{m}}(z)=\begin{cases}
    \Phi^{\B{m}}(z)\Psi_\infty^e(z)^{-1} & \text{if }z\in D_\infty,\\
    \Phi^{\B{m}}(z)\Psi_0^e(z)^{-1} & \text{if }z\in D_{0t}^{\B{m}}\cap D_{\text{ex}}(\gamma_{0t}),\\
    \Phi^{\B{m}}(z)U_\infty^{\B{m}}(z)^{-1}\Psi_0^e(z)^{-1} & \text{if }z\in D_{0t}^{\B{m}}\cap D_{\text{in}}(\gamma_{0t}),\\
    \Phi^{\B{m}}(z)U_0^{\B{m}}(z)^{-1}\Psi_0^e(z)^{-1} & \text{if }z\in D_0^{\B{m}}.
    \end{cases}
\end{equation}
It is clear that $\Omega^{\B{m}}(z)$ is analytic on $\mathbb{C}$ away from the Jordan curves $\gamma_e$, $\gamma_{0t}$ and $\gamma_i^{\B{m}}$, since each of the quotients in the above formula only involves matrix functions which are analytic and invertible on the relevant domains.

Furthermore, the quotients are chosen exactly such that the jumps of $\Omega^{\B{m}}(z)$ are trivial on $\gamma_e$ and $\gamma_i^{\B{m}}$. Indeed, for $z\in\gamma_e$,
\begin{align*}
  \Omega_+^{\B{m}}(z)&=\Phi_+^{\B{m}}(z)\Psi_\infty^e(z)^{-1} =\Phi_-^{\B{m}}(z)C^e(z)\Psi_\infty^e(z)^{-1}\\
  &=\Phi_-^{\B{m}}(z)\Psi_0^e(z)^{-1}=\Omega_-^{\B{m}}(z),
\end{align*}
and for $z\in\gamma_i^{\B{m}}$,
\begin{align*}
  \Omega_+^{\B{m}}(z)&=\Phi_+^{\B{m}}(z)U_\infty^{\B{m}}(z)^{-1}\Psi_0^e(z)^{-1}\\
  &=
  \Phi_-^{\B{m}}(z)D_mC^i(z/t_m) (-t_m)^{\sigma_{0t}\sigma_3}U_\infty^{\B{m}}(z)^{-1}\Psi_0^e(z)^{-1},\\
  &=\Phi_-^{\B{m}}(z)U_0^{\B{m}}(z)^{-1}\Psi_0^e(z)^{-1}=\Omega_-^{\B{m}}(z).
\end{align*}
It follows that $\Omega^{\B{m}}(z)$ is analytic on the full exterior and interior of $\gamma_{0t}$.

We compute the jump matrix, for the remaining jump on $\gamma_{0t}$, as follows. For $z\in \gamma_{0t}$,
\begin{align}
    J^{\B{m}}(z):&=\Omega_-^{\B{m}}(z)^{-1}\Omega_+^{\B{m}}(z)\nonumber\\
    &=\left(\Phi^{\B{m}}(z)U_\infty^{\B{m}}(z)^{-1}\Psi_0^e(z)^{-1}\right)^{-1} \Phi^{\B{m}}(z)\Psi_0^e(z)^{-1}\nonumber\\
    &=\Psi_0^e(z)U_\infty^{\B{m}}(z) \Psi_0^e(z)^{-1}\nonumber\\
    &=\Psi_0^e(z)(-t_m)^{-\sigma_{0t}\sigma_3}\Psi_\infty^i(z/t_m)(-t_m)^{\sigma_{0t}\sigma_3} \Psi_0^e(z)^{-1}.\label{eq:circlejumpmatrix}
\end{align}

Note, furthermore, that the asymptotic condition at $z=\infty$ in RHP \ref{rhp:decomI} is not altered by the quotienting, and thus $\Omega^{\B{m}}(z)$ is the unique solution of the following RHP.

\begin{figure}[ht]
	\centering
	\begin{tikzpicture}[scale=0.8]
	\draw[->] (-6,0)--(6,0) node[right]{$\Re{z}$};
	\draw[->] (0,-6)--(0,6) node[above]{$\Im{z}$};
	\tikzstyle{star}  = [circle, minimum width=3.5pt, fill, inner sep=0pt];
	\tikzstyle{starsmall}  = [circle, minimum width=3.5pt, fill, inner sep=0pt];
	\tikzstyle{dot}  = [circle, minimum width=2.5pt, fill, inner sep=0pt];

	\draw[domain=-1.3:6,smooth,variable=\x,red] plot ({exp(-\x*ln(2))*2*4/3*cos((pi/8) r)},{exp(-\x*ln(2))*2*4/3*sin((pi/8) r)});	
	\draw[domain=-1.3:6,smooth,variable=\x,red] plot ({exp(-\x*ln(2))*2*4/3*cos((pi/8) r)},{-exp(-\x*ln(2))*2*4/3*sin((pi/8) r)});

 \draw[black,thick,,dashed,decoration={markings, mark=at position 0.21 with {\arrow{>}}},
	postaction={decorate}] (0,0) ellipse (3.7cm and 3.7cm);

    \node[starsmall]     (or) at ({0},{0} ) {};
	\node     at ($(or)+(0.2,0.4)$) {$0$};

	\node[star]     (k1) at ({4.4*cos((-pi/8) r)},{-4.4*sin((-pi/8) r)} ) {};
	\node     at ($(k1)+(0.3,-0.3)$) {$q^{\theta_1}$};

	\node[star]     (km1) at ({4.4*cos((-pi/8) r)},{4.4*sin((-pi/8) r)} ) {};
	\node     at ($(km1)+(0.4,0.35)$) {$q^{-\theta_1}$};

	\node[star]     (qik1) at ({1.3*4.4*cos((-pi/8) r)},{-1.3*4.4*sin((-pi/8) r)} ) {};
	\node     at ($(qik1)+(0.3,-0.3)$) {$q^{-1+\theta_1}$};

	\node[star]     (qikm1) at ({1.3*4.4*cos((-pi/8) r)},{1.3*4.4*sin((-pi/8) r)} ) {};
	\node     at ($(qikm1)+(0.4,0.35)$) {$q^{-1-\theta_1}$};







    \draw[domain=-0.95:6,smooth,variable=\x,red] plot ({exp(-\x*ln(2))*2*5/3*cos((-3.8*pi/8+5*pi/4+\x*0) r)},{-exp(-\x*ln(2))*2*5/3*sin((-3.8*pi/8+5*pi/4+\x*0) r)});	
    \draw[domain=-0.95:6,smooth,variable=\x,red] plot ({exp(-\x*ln(2))*2*5/3*cos((-3.8*pi/8+5*pi/4+\x*0) r)},{exp(-\x*ln(2))*2*5/3*sin((-3.8*pi/8+5*pi/4+\x*0) r)});

	\node[star]     (qkt) at ({0.9*sqrt(sqrt(2))*2*4/5*cos((-3.8*pi/8+5*pi/4-1/4*0) r)},{0.9*sqrt(sqrt(2))*2*4/5*sin((-3.8*pi/8+5*pi/4-1/4*0) r)} ) {};
	\node     at ($(qkt)+(0.53,0.4)$) {$q^{\theta_t}t_{2}$};	
	\node[star]     (kt) at ({0.9*2*2*0.83*cos((-3.8*pi/8+5*pi/4-0) r)},{0.9*2*2*0.83*sin((-3.8*pi/8+5*pi/4-0) r)} ) {};
	\node     at ($(kt)+(0.4,0.4)$) {$q^{\theta_t}t_{1}$};

	\node[star]     (qktm) at ({0.9*sqrt(sqrt(2))*2*4/5*cos((-3.8*pi/8+5*pi/4) r)},{-0.9*sqrt(sqrt(2))*2*4/5*sin((-3.8*pi/8+5*pi/4) r)} ) {};
	\node     at ($(qktm)+(0.48,-0.38)$) {$q^{-\theta_t}t_{2}$};
	\node[star]     (ktm) at ({0.9*2*2*0.83*cos((-3.8*pi/8+5*pi/4) r)},{-0.9*2*2*0.83*sin((-3.8*pi/8+5*pi/4) r)} ) {};
	\node     at ($(ktm)+(0.4,-0.35)$) {$q^{-\theta_t}t_{1}$};







 	\node[black]     at ({4.07*cos((0.21*2*pi) r)+0.1},{4.07*sin((0.21*2*pi) r)}) {$\boldsymbol{\gamma_{0t}}$};

 	\node[black]     at ({4.06*cos((0.3*2*pi) r)},{4.06*sin((0.3*2*pi) r)}) {${\scriptstyle\boldsymbol{+}}$};

 	\node[black]     at ({3.45*cos((0.3*2*pi) r)},{3.45*sin((0.3*2*pi) r)}) {${\scriptstyle\boldsymbol{-}}$};

	\end{tikzpicture}
	\caption{Topological representation of Jordan curve $\gamma_{0t}$ with respect to the origin and points in the half $q$-lines 
 $q^{\mathbb{Z}_{\leq 0}} \cdot q^{\pm \theta_1}$ and
 $q^{\mathbb{Z}_{<0}} \cdot q^{\pm \theta_t}t_0$, with $t_m:=q^m t_0$.
 For the sake of simplicity, the contour is displayed as a circle, but we emphasise that it need not be one.}
	\label{fig:single_I}
\end{figure}
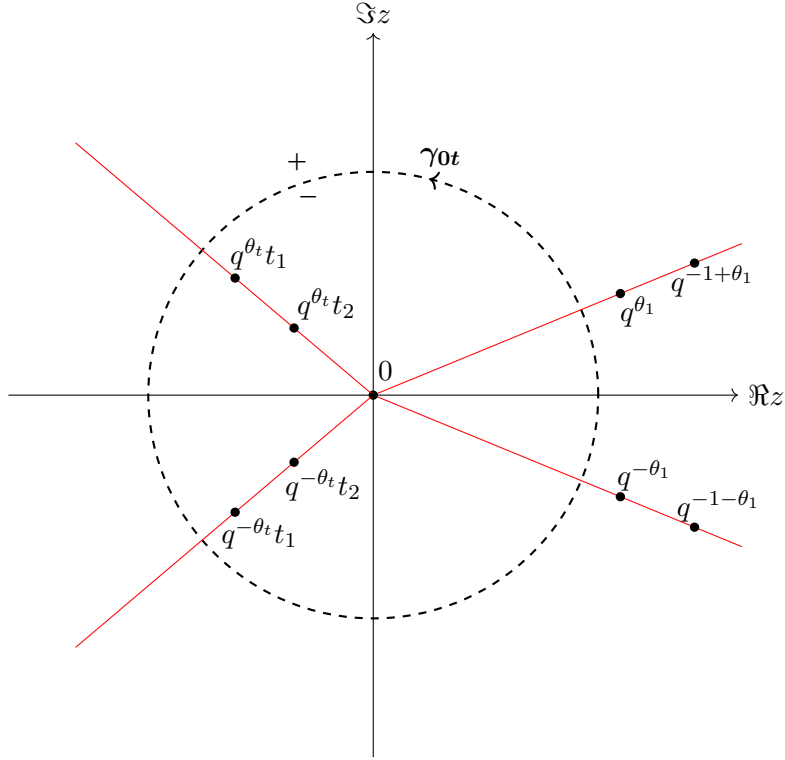

\begin{rhprob}\label{rhp:circleI}
For $m\geq 0$, find a matrix-valued function $\Omega(z)=\Omega^{\B{m}}(z)$, which satisfies the following conditions.
  \begin{enumerate}[label={{\rm (\roman *)}}]
  \item $\Omega^{\B{m}}(z)$ is analytic on $\mathbb{C}$ away from the Jordan curve $\gamma_{0t}$.
    \item $\Omega^{\B{m}}(z)$ has continuous boundary values $\Omega_+^{\B{m}}(z)$ and  $\Omega_-^{\B{m}}(z)$ as $z$ approaches $\gamma_{0t}$ from its respective exterior and interior, related by
\begin{equation*}
\Omega_+^{\B{m}}(z)=\Omega_-^{\B{m}}(z)J^{\B{m}}(z) \qquad (z\in \gamma_{0t}),
  \end{equation*}  
  where the jump matrix $J^{\B{m}}(z)$ is given in equation \eqref{eq:circlejumpmatrix}.
 \item $\Omega^{\B{m}}(z)$ satisfies
              \begin{equation*}
		\Omega^{\B{m}}(z)=I+\mathcal{O}\left(z^{-1}\right)\quad z\rightarrow \infty.
              \end{equation*}
              \end{enumerate}
\end{rhprob}
\begin{remark}\label{rem:gamma0t}
We note that, in the definition of the analytic Jordan curve $\gamma_{0t}$ in Section \ref{sec:dec_analytic}, the only topological conditions on $\gamma_{0t}$ are that the origin lies in the interior $D_{\text{in}}(\gamma_{0t})$ of $\gamma_{0t}$ and
\begin{align*}
q^{\mathbb{Z}_{>0}}\cdot \{q^{\theta_t}t_0,q^{-\theta_t}t_0\}&\subseteq D_{\text{in}}(\gamma_{0t}), \\ 
  q^{\mathbb{Z}_{\leq 0}}\cdot \{q^{\theta_1},q^{-\theta_1}\}&\subseteq D_{\text{ex}}(\gamma_{0t}),
\end{align*}
see Figure \ref{fig:single_I}. One can also consider RHP \ref{rhp:circleI} for negative $m$, say $n\leq m<0$, for some $n<0$, as long as one assures that 
\begin{equation*}
    q^{\mathbb{Z}_{>n}}\cdot \{q^{\theta_t}t_0,q^{-\theta_t}t_0\}\subseteq D_{\text{in}}(\gamma_{0t}).
\end{equation*}
\end{remark}

\subsection{A perturbative solution to RHP \ref{rhp:circleI}}\label{sec:rhp_perturb}
In this section, we will show that, for $m$ positive, large enough, RHP \ref{rhp:circleI} is solvable and its solution is given by a convergent, asymptotic expansion in $t_m=q^mt_0$ as $m\rightarrow+\infty$.

Consider the jump matrix $J^{\B{m}}(z)$. For $z$ in the compact curve $\gamma_{0t}$, we know that
\begin{equation*}
    \Psi_\infty^i(z/t_m)=I+\mathcal{O}(t_m),
\end{equation*}
as $m\rightarrow +\infty$. Recalling that  $0< \Re\sigma<\tfrac{1}{2}$, we thus find
\begin{align*}
    J^{\B{m}}(z)&=\Psi_0^e(z)(-t_m)^{-\sigma_{0t}\sigma_3}\Psi_\infty^i(z/t_m)(-t_m)^{\sigma_{0t}\sigma_3} \Psi_0^e(z)^{-1}\\
    &=\Psi_0^e(z)(-t_m)^{-\sigma_{0t}\sigma_3}((I+\mathcal{O}(t_m))(-t_m)^{\sigma_{0t}\sigma_3}
    \Psi_0^e(z)^{-1}\\
    &=\Psi_0^e(z)((I+\mathcal{O}((-t_m)^{1-2\sigma_{0t}}))
    \Psi_0^e(z)^{-1}\\
    &=I+\mathcal{O}((-t_m)^{1-2\sigma_{0t}}),    
\end{align*}
as $m\rightarrow +\infty$, uniform in $z\in\gamma_{0t}$. In other words, the jump matrix converges uniformly to the identity matrix as $m\rightarrow +\infty$. Using standard Riemann-Hilbert machinery, we will construct a corresponding perturbative solution $\Omega^{\B{m}}(z)$ around the identity to RHP \ref{rhp:circleI}.

Firstly, we give a complete, convergent series expansion of the jump matrix as $m\rightarrow +\infty$, in the following lemma.
\begin{lemma}\label{lem:jumpperturb} The jump matrix $J^{\B{m}}(z)$ admits a complete asymptotic expansion as $m\rightarrow +\infty$,
\begin{equation*}
    J^{\B{m}}(z)=I+\sum_{n=1}^\infty{r_{0t}^{-1}(-t_m)^{n-2\sigma_{0t}}R_n^-(z)+(-t_m)^{n}R_n^0(z)+r_{0t}^{+1}(-t_m)^{n+2\sigma_{0t}}R_n^+(z)},
\end{equation*}
which is absolutely convergent in the supremum norm on $\gamma_{0t}$, for $m\geq 0$ large enough, with coefficients $R_n^\diamond(z)$ that are analytic in a neighbourhood of $\gamma_{0t}$, for $n\geq 1$ and $\diamond\in\{-,0,+\}$. Furthermore, the coefficients only depend on the parameters $\Theta$, $\sigma_{0t}$ and $w^e$, with
\begin{equation*}
    R_n^\diamond(z)=(w^e)^{\frac{1}{2}\sigma_3}\widetilde{R}_n^\diamond(z)(w^e)^{-\frac{1}{2}\sigma_3},\quad
    \widetilde{R}_n^\diamond(z)=\widetilde{R}_n^\diamond(z;\Theta,\sigma_{0t}),
\end{equation*}
for $n\geq 1$ and $\diamond\in\{-,0,+\}$.
\end{lemma}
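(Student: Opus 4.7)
The plan is to substitute the convergent expansion of $\Psi_\infty^i(\zeta)$ at $\zeta=\infty$ into the defining formula \eqref{eq:circlejumpmatrix} for $J^{\B{m}}(z)$, and then to track how the four gauge and scaling parameters $w^e,w^i,s_1^e,s_2^e$ propagate through the various diagonal conjugations.

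First I would note that the $q$-Pochhammer and $q$-hypergeometric factors in \eqref{eq:internalpsiinf} furnish a convergent Laurent expansion
\begin{equation*}
\Psi_\infty^i(\zeta)=I+\sum_{n=1}^\infty\zeta^{-n}N_n^i,
\end{equation*}
valid for $|\zeta|$ larger than some $R>0$ depending only on $\Theta$ and $\sigma_{0t}$. Substituting $\zeta=z/t_m$ converts this into a power series in $t_m$ that converges absolutely and uniformly for $z\in\gamma_{0t}$ as soon as $m$ is large enough, since $\gamma_{0t}$ is compact and fixed while $t_m\to 0$. A direct inspection of \eqref{eq:internalpsiinf} shows that $w^i$ appears only through the conjugation $(w^i)^{\sigma_3/2}(\cdot)(w^i)^{-\sigma_3/2}$, so $N_n^i=(w^i)^{\sigma_3/2}\widetilde{N}_n^i(w^i)^{-\sigma_3/2}$ with $w^i$-independent $\widetilde{N}_n^i=\widetilde{N}_n^i(\Theta,\sigma_{0t})$.

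For the external parametrix, a similar inspection of \eqref{eq:he} and the formulas for $\Psi_0^e(z)$ gives
\begin{equation*}
\Psi_0^e(z)=(w^e)^{\sigma_3/2}\widetilde{\Psi}_0^e(z)(w^e)^{-\sigma_3/2}\operatorname{diag}(1/s_1^e,1/s_2^e)\Sigma_0(z),
\end{equation*}
where $\widetilde{\Psi}_0^e(z)=\widetilde{\Psi}_0^e(z;\Theta,\sigma_{0t})$ is independent of $w^e,s_{1,2}^e$ and $\Sigma_0(z):=\operatorname{diag}((q^{-\theta_1}z;q)_\infty^{-1},(q^{+\theta_1}z;q)_\infty^{-1})$. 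All factored-out matrices are diagonal and hence mutually commuting, and in particular commute with $(-t_m)^{\sigma_{0t}\sigma_3}$. Substituting everything into \eqref{eq:circlejumpmatrix} and sliding diagonal matrices past one another, the conjugation of each $\widetilde{N}_n^i$ reduces to multiplication of its $(1,2)$-entry by $\bigl(s_2^ew^i/s_1^ew^e\bigr)(-t_m)^{-2\sigma_{0t}}$ and its $(2,1)$-entry by $\bigl(s_1^ew^e/s_2^ew^i\bigr)(-t_m)^{+2\sigma_{0t}}$, while its diagonal entries are untouched. By the twist parameter identity \eqref{eq:gen_twist_parameter}, these factors collapse exactly to $r_{0t}^{-1}(-t_m)^{-2\sigma_{0t}}$ and $r_{0t}(-t_m)^{+2\sigma_{0t}}$; thus the scalars $w^e,w^i,s_1^e,s_2^e$ enter the final expansion only through the single combination $r_{0t}$ and through the outer $(w^e)^{\sigma_3/2}$-conjugation inherited from $\Psi_0^e(z)$ and $\Psi_0^e(z)^{-1}$.

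Collecting the diagonal, upper-triangular and lower-triangular contributions at each order $n\geq 1$ then yields the three-family expansion of the lemma, with $\widetilde{R}_n^\diamond(z)$ equal to a conjugation of the corresponding piece of $z^{-n}\widetilde{N}_n^i$ by $\widetilde{\Psi}_0^e(z)\Sigma_0(z)^{-1}$ (with its inverse on the other side). Since $\widetilde{\Psi}_0^e(z)$ is analytic and invertible and the entries of $\Sigma_0(z)^{\pm 1}$ are analytic on some fixed open neighbourhood of the compact curve $\gamma_{0t}$, each $\widetilde{R}_n^\diamond(z)$ is analytic there; absolute uniform convergence on $\gamma_{0t}$ for large $m$ then follows by the Weierstrass $M$-test, using that $|z|$ is bounded below on $\gamma_{0t}$. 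The main obstacle is entirely the algebraic bookkeeping in the preceding paragraph: one must verify carefully that the four parameters $w^e,w^i,s_1^e,s_2^e$ conspire to produce \emph{only} the twist parameter $r_{0t}$, with no residual individual dependence. The analyticity and convergence claims are then routine consequences of the explicit structure of the constituent $q$-hypergeometric series.
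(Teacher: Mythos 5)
Your proposal is correct and follows essentially the same route as the paper: expand $\Psi_\infty^i(\zeta)$ in its convergent series at $\zeta=\infty$, substitute $\zeta=z/t_m$, and track the diagonal conjugations to see that $w^e,w^i,s_1^e,s_2^e$ survive only through $r_{0t}$ and the outer $(w^e)^{\frac{1}{2}\sigma_3}$-conjugation, with convergence on the compact curve $\gamma_{0t}$ following from the radius of convergence. The only (inconsequential) quibble is that the residual conjugating factor should be $\Sigma_0(z)$ rather than $\Sigma_0(z)^{-1}$, since $\Sigma_0$ sits on the right of your factorisation of $\Psi_0^e$; as $\Sigma_0^{\pm1}$ is diagonal, analytic and non-vanishing near $\gamma_{0t}$, this changes nothing.
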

\begin{proof}
We start by considering the parameter dependence of the different factors in equation \eqref{eq:circlejumpmatrix}. we have
\begin{align*}
    \Psi_\infty^i(\zeta)&=(w^i)^{\frac{1}{2}\sigma_3}\widetilde{\Psi}_\infty^i(\zeta)
    (w^i)^{-\frac{1}{2}\sigma_3},  & \widetilde{\Psi}_\infty(\zeta)&=\widetilde{\Psi}_\infty(\zeta;\Theta,\sigma_{0t}),\\
    \Psi_0^e(z)&=(w^e)^{\frac{1}{2}\sigma_3} \widetilde{\Psi}_0^e(z)(w^e)^{-\frac{1}{2}\sigma_3}\begin{pmatrix}
        s_1^e & 0\\
        0 & s_2^e
    \end{pmatrix}^{-1},  &
\widetilde{\Psi}_0^e(z)&=\widetilde{\Psi}_0^e(z;\Theta,\sigma_{0t}).
\end{align*}
Recalling the definition of $r_{0t}$ in equation \eqref{eq:twistdefi}, it follows that
\begin{align}
    J^{\B{m}}(z)=&(w^e)^{\frac{1}{2}\sigma_3} \widetilde{\Psi}_0^e(z)(w^e)^{-\frac{1}{2}\sigma_3}\begin{pmatrix}
        s_1^e & 0\\
        0 & s_2^e
    \end{pmatrix}^{-1}(-t_m)^{\sigma_{0t}\sigma_3} 
    (w^i)^{\frac{1}{2}\sigma_3}\nonumber\\
    &\widetilde{\Psi}_\infty^i(z/t_m)
    (w^i)^{-\frac{1}{2}\sigma_3}
    (-t_m)^{\sigma_{0t}\sigma_3}
    \begin{pmatrix}
        s_1^e & 0\\
        0 & s_2^e
    \end{pmatrix}(w^e)^{\frac{1}{2}\sigma_3}
    \widetilde{\Psi}_0^e(z)^{-1}(w^e)^{-\frac{1}{2}\sigma_3},\nonumber\\
    =&(w^e)^{\frac{1}{2}\sigma_3} \widetilde{\Psi}_0^e(z)
    \left[r_{0t}(-t_m)^{2\sigma_{0t}}\right]^{-\frac{1}{2}\sigma_3}
    \widetilde{\Psi}_\infty^i(z/t_m)\nonumber\\
    &    \left[r_{0t}(-t_m)^{2\sigma_{0t}}\right]^{\frac{1}{2}\sigma_3}
    \widetilde{\Psi}_0^e(z)^{-1}(w^e)^{-\frac{1}{2}\sigma_3}.\label{eq:circle_jump_parameters}
\end{align}

Now, from the definition of $\Psi_\infty^i(\zeta)$ in equation \eqref{eq:internalpsiinf}, it follows that $\widetilde{\Psi}_\infty^i(\zeta)$ has a convergent power series expansion around $\zeta=\infty$,
\begin{equation}\label{eq:psi_power_series}
    \widetilde{\Psi}_\infty^i(\zeta)=I+\sum_{n=1}^\infty \zeta^{-k}\widetilde{u}_n,
\end{equation}
with coefficient matrices $\widetilde{u}_n=(\widetilde{u}_{n,ij})_{1\leq i,j\leq 2}$ that can be written explicitly in terms of $\theta_0,\theta_t,\sigma_{0t}$. By substituting this power series, with $\zeta=z/t_m$, in the expression \eqref{eq:circle_jump_parameters} for the jump matrix, we obtain the series expansion in the lemma, with
\begin{align*}
R_n^-(z)&=z^{-n}(w^e)^{\frac{1}{2}\sigma_3}\widetilde{\Psi}_0^e(z)\begin{pmatrix} 
    \hspace{2.8mm}0 & \hspace{2.8mm}\widetilde{u}_{n,12}\\
    \hspace{2.8mm}0 & \hspace{2.8mm}0
\end{pmatrix}
\widetilde{\Psi}_0^e(z)^{-1}(w^e)^{-\frac{1}{2}\sigma_3},\\
R_n^0(z)&=z^{-n}(w^e)^{\frac{1}{2}\sigma_3}\widetilde{\Psi}_0^e(z)\begin{pmatrix} 
    \widetilde{u}_{n,11} & 0\\
    0 & \widetilde{u}_{n,22}
\end{pmatrix}
\widetilde{\Psi}_0^e(z)^{-1}(w^e)^{-\frac{1}{2}\sigma_3},\\
R_n^+(z)&=z^{-n}(w^e)^{\frac{1}{2}\sigma_3}\widetilde{\Psi}_0^e(z)\begin{pmatrix} 
    0 & \hspace{3mm}0\hspace{1.2mm}{\color{white} 0}\\
    \widetilde{u}_{n,21} & \hspace{3mm}0\hspace{1.2mm}{\color{white} 0}
\end{pmatrix}
\widetilde{\Psi}_0^e(z)^{-1}(w^e)^{-\frac{1}{2}\sigma_3}.
\end{align*}
Let $M\geq 0$ be such that, for all $m\geq M$, $\sup_{z\in\gamma_{0t}}|t_m/z|$ is strictly bounded by the radius of convergence of the power series \eqref{eq:psi_power_series}, then the series expansion in the lemma is absolutely convergent in the supremum norm on $\gamma_{0t}$, for all $m\geq M$.
\end{proof}

The lemma above gives the complete perturbation of the jump matrix $J^{\B{m}}(z)$ around the identity matrix, with respect to the small parameter $t_m$. In the following proposition, we prove a similar result for the solution of RHP \ref{rhp:circleI}.

\begin{proposition}\label{prop:omega_perturb}
There exists an $M\geq 0$, such that the solution $\Omega^{\B{m}}(z)$ of RHP \ref{rhp:circleI} exists for all $m\geq M$ and has a complete asymptotic expansion as $m\rightarrow \infty$,
\begin{equation}\label{eq:omega_perturb}
\Omega^{\B{m}}(z)=I+\sum_{n=1}^\infty\sum_{k=-n}^n r_{0t}^k (-t_m)^{n+2k\sigma_{0t}}\Omega_{n,k}(z),
\end{equation}
which is uniformly absolutely convergent in $z\in\mathbb{CP}^1\setminus \gamma_{0t}$, with respect to the max norm, for some unique coefficients $\Omega_{n,k}(z)$, which are analytic matrix functions on $\mathbb{CP}^1\setminus \gamma_{0t}$. Furthermore, the coefficients have well-defined boundary values at $\gamma_{0t}$ from either side and the uniform absolute convergence of \eqref{eq:omega_perturb} extends to the boundary correspondingly. The coefficients only depend on the parameters $\Theta$, $\sigma_{0t}$ and $w^e$, with
\begin{equation}\label{eq:omega_parameter}
    \Omega_{n,k}(z)=(w^e)^{\frac{1}{2}\sigma_3}\widetilde{\Omega}_{n,k}(z)(w^e)^{-\frac{1}{2}\sigma_3},\quad
\widetilde{\Omega}_{n,k}(z)=\widetilde{\Omega}_{n,k}(z;\Theta,\sigma_{0t}),
\end{equation}
for $-n\leq k\leq n$ and $n\geq 1$.
\end{proposition}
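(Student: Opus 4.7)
The plan is to convert RHP \ref{rhp:circleI} into a singular integral equation on $\gamma_{0t}$ and solve it by a small-norm argument, then substitute the jump expansion from Lemma \ref{lem:jumpperturb} into a Neumann series. Let $\mathcal{C}_-$ denote the minus boundary value of the Cauchy operator on $\gamma_{0t}$, which is bounded on $L^2(\gamma_{0t})$ since the curve is analytic. The jump relation $\Omega_+^{\B{m}} = \Omega_-^{\B{m}} J^{\B{m}}$ together with normalisation at infinity is equivalent to the integral equation
\begin{equation*}
    (I - \mathcal{K}^{\B{m}})[\Omega_-^{\B{m}} - I] = \mathcal{C}_-[J^{\B{m}} - I], \qquad \mathcal{K}^{\B{m}}[f] := \mathcal{C}_-[f \cdot (J^{\B{m}} - I)].
\end{equation*}
By Lemma \ref{lem:jumpperturb}, $\|J^{\B{m}} - I\|_{L^\infty(\gamma_{0t})} = O(|t_m|^{1-2\Re\sigma_{0t}}) \to 0$, so for sufficiently large $m \geq M$ the operator $\mathcal{K}^{\B{m}}$ has operator norm strictly less than one, and $(I - \mathcal{K}^{\B{m}})^{-1}$ exists as a convergent Neumann series. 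Extending the resulting boundary density off the curve via $\Omega^{\B{m}}(z) := I + \mathcal{C}[\Omega_-^{\B{m}}(J^{\B{m}} - I)](z)$ then gives the unique solution of RHP \ref{rhp:circleI}.

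To extract the expansion \eqref{eq:omega_perturb}, I would iterate:
\begin{equation*}
    \Omega_-^{\B{m}} - I = \sum_{j = 0}^\infty (\mathcal{K}^{\B{m}})^j \, \mathcal{C}_-[J^{\B{m}} - I],
\end{equation*}
and plug in the three-piece expansion of $J^{\B{m}} - I$ from Lemma \ref{lem:jumpperturb}. Each $j$-fold iterate is a finite sum of Cauchy-operator compositions applied to $j+1$ factors of the form $r_{0t}^{\epsilon_l}(-t_m)^{n_l + 2\epsilon_l \sigma_{0t}} R_{n_l}^{\epsilon_l}$ with $\epsilon_l \in \{-1,0,+1\}$ and $n_l \geq 1$. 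Collecting contributions by total $t_m$-power, each one is proportional to $r_{0t}^k (-t_m)^{n + 2k\sigma_{0t}}$ with $n = \sum_l n_l \geq j+1$ and $k = \sum_l \epsilon_l$, so $|k| \leq j + 1 \leq n$, which gives exactly the index range $-n \leq k \leq n$ in \eqref{eq:omega_perturb}. Applying the off-curve Cauchy transform term by term defines the coefficients $\Omega_{n,k}(z)$ as analytic functions on $\mathbb{CP}^1 \setminus \gamma_{0t}$ with well-defined boundary values from either side, inherited from the analyticity of the $R_n^\diamond$ in a neighbourhood of $\gamma_{0t}$.

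The parameter structure \eqref{eq:omega_parameter} is immediate: conjugation by the constant diagonal matrix $(w^e)^{\sigma_3/2}$ commutes with all the operations involved (pointwise matrix products, Cauchy integration, and hence $\mathcal{K}^{\B{m}}$), and each $R_n^\diamond$ has the form $(w^e)^{\sigma_3/2} \widetilde{R}_n^\diamond (w^e)^{-\sigma_3/2}$, so this conjugation can be factored out of the whole Neumann series. The main technical obstacle is justifying uniform absolute convergence of the resulting \emph{double} series over $(n,k)$, rather than the much weaker convergence of the single Neumann series in $j$ that the small-norm bound supplies directly. To handle this, I would regroup terms by $(n,k)$, estimate the number of compositions $(j,\{n_l\},\{\epsilon_l\})$ producing a given $(n,k)$ by an elementary combinatorial counting argument, and use the geometric decay of the coefficients $\widetilde{u}_n$ in the Laurent expansion of $\widetilde{\Psi}_\infty^i$ at infinity (which has a finite positive radius of convergence) to produce a geometric majorant. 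This majorant will be convergent provided $|t_m|$ is small enough, yielding uniform absolute convergence on closed subsets of $\mathbb{CP}^1 \setminus \gamma_{0t}$, and, via the $L^2$ boundary-value machinery, convergence up to either side of $\gamma_{0t}$ as well.
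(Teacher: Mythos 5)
Your proposal follows essentially the same route as the paper: the singular integral equation on $\gamma_{0t}$, the small-norm/Neumann-series inversion for large $m$, substitution of the three-piece jump expansion from Lemma \ref{lem:jumpperturb}, regrouping by $(n,k)$ with the index count $|k|\leq j+1\leq n$, extension off the curve by the Cauchy transform, and factoring out the conjugation by $(w^e)^{\frac{1}{2}\sigma_3}$. Your worry about upgrading from convergence of the single Neumann series in $j$ to absolute convergence of the double series in $(n,k)$ is handled in the paper simply by noting that the expansion of $J^{\B{m}}-I$ is \emph{absolutely} convergent in $L_2^2(\gamma_{0t})$ (this is where the finite radius of convergence of $\widehat{\Psi}_\infty^i$ at $\zeta=\infty$ enters, as you anticipate), so the full multi-indexed family of composed terms is absolutely summable with majorant $\sum_j(\text{sum of norms})^j$ and may be rearranged freely; no separate combinatorial count is needed.

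The one step where your argument is genuinely insufficient is the final claim that ``the $L^2$ boundary-value machinery'' yields uniform absolute convergence of \eqref{eq:omega_perturb} up to $\gamma_{0t}$ from either side. The operators $\mathcal{C}_\pm$ are bounded on $L^2$, so that machinery only gives $L^2$ convergence of the boundary values of the partial sums, whereas the proposition asserts uniform absolute convergence in the max (sup) norm on the closed exterior and closed interior, boundary included. The Cauchy-kernel estimate $\|k_z\|_{L^2}\leq d(K)$ that underlies the interior uniform bound degenerates as $z$ approaches $\gamma_{0t}$, so it cannot be pushed to the boundary directly. The paper closes this gap with a contour-deformation argument: one re-runs the entire construction on a slightly shrunken analytic Jordan curve $\gamma_{0t}'\subset D_{\mathrm{in}}(\gamma_{0t})$ satisfying the same topological conditions, observes that the resulting coefficients agree with $\Omega_{n,k}(z)$ on the common domain of analyticity, and then applies the interior uniform estimate for $\gamma_{0t}'$ on the compact set $\gamma_{0t}\sqcup D_{\mathrm{ex}}(\gamma_{0t})\sqcup\{\infty\}$; a symmetric deformation outward handles the other side. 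You would need to add this (or an equivalent) argument to justify the boundary statement.
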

\begin{proof}
We start by introducing the Cauchy transform on $\gamma_{0t}$. Let $h\in L^2(\gamma_{0t})$, then
	\begin{equation}\label{eq:cauchyoperator}
	\mathcal{C}(h)(z)=\frac{1}{2\pi i}\oint_{\gamma_{0t}}\frac{h(\zeta)}{\zeta-z}d\zeta \quad (z \in \mathbb{C}\setminus \gamma_{0t}),
	\end{equation}
	defines an analytic function on $\mathbb{C}\setminus \gamma_{0t}$, where we integrate along the anti-clockwise orientation of $\gamma_{0t}$, see Figure \ref{fig:single_I}. For $z\in \gamma_{0t}$, the limits
	\begin{equation*}
	\mathcal{C}_\pm(h)(z):=\lim_{z'\rightarrow z_{\pm}} \mathcal{C}(h)(z')
	\end{equation*}
	exists almost everywhere, where $z'\rightarrow z_{\pm}$ denotes any non-tangential limit from the $\pm$ side of $\gamma_{0t}$. Furthermore, $\mathcal{C}_\pm(h)$ are measurable on $\gamma_{0t}$ with finite $L^2$ norm, so $\mathcal{C}_\pm$ define linear operators on $L^2(\gamma_{0t})$, which are bounded since $\gamma_{0t}$ is analytic \cite{carleson}, and related by
 \begin{equation*}
     \mathcal{C}_+=\mathcal{C}_-+\mathcal{I},
 \end{equation*}
 where $\mathcal{I}$ denotes the identity operator.

Let $L_2^2(\gamma_{0t})$ denote the Banach space of $2\times 2$ matrix-valued functions, with entries in $L^2(\gamma_{0t})$, and norm given by the maximum of the $L^2$-norms of the entries.
As usual, we turn our attention to the singular integral equation,
 \begin{equation}\label{eq:singular_integral}
     \mu=\mathcal{C}_-\left[\mu\cdot(J^{\B{m}}-I)\right]+\mathcal{C}_-\left[J^{\B{m}}-I\right],
 \end{equation}
 for $\mu\in L_2^2(\gamma_{0t})$, which is well-defined since  $J^{\B{m}}-I \in L^2(\gamma_{0t})\cap L^\infty(\gamma_{0t})$. To study this singular integral equation, we introduce the operator,
\begin{equation*}
    \mathcal{L}^{\B{m}}(\mu):=\mu-\mathcal{C}_-\left[\mu \cdot(J^{\B{m}}-I)\right],
\end{equation*}
so that the singular integral equation becomes
\begin{equation*}
    \mathcal{L}^{\B{m}}(\mu) =\mathcal{C}_-[J^{m}-I].
\end{equation*}
We further write
\begin{equation*}
    \mathcal{L}^{\B{m}}=\mathcal{I}-\mathcal{R}^{\B{m}},\quad \mathcal{R}^{\B{m}}(\mu):=\mathcal{C}_-\left[\mu \cdot(J^{\B{m}}-I)\right].
\end{equation*}

Since $\gamma_{0t}$ is compact, the uniform expansion of the jump matrix in Lemma \ref{eq:circlejumpmatrix} also holds in $L_2^2(\gamma_{0t})$. Namely, there exists an $M_1\geq 0$, such that, for all $m\geq M_1$,
\begin{equation}\label{eq:Jexpansion}
    J^{\B{m}}-I=\sum_{n=1}^\infty{r_{0t}^{-1}(-t_m)^{n-2\sigma_{0t}}R_n^-+(-t_m)^{n}R_n^0+r_{0t}^{+1}(-t_m)^{n+2\sigma_{0t}}R_n^+},
\end{equation}
converges absolutely in $L_2^2(\gamma_{0t})$. It follows that $\mathcal{R}^{\B{m}}$ admits the following absolutely convergent expansion with respect to the operator norm from $L_2^2(\gamma_{0t})$ to $L_2^2(\gamma_{0t})$,
\begin{equation}\label{eq:Rexpansion}
\mathcal{R}^{\B{m}}=\sum_{n=1}^\infty{r_{0t}^{-1}(-t_m)^{n-2\sigma_{0t}}\mathcal{R}_n^-+(-t_m)^{n}\mathcal{R}_n^0+r_{0t}^{+1}(-t_m)^{n+2\sigma_{0t}}\mathcal{R}_n^+},
\end{equation}
for all $m\geq M_1$, where
\begin{equation*}
    \mathcal{R}_n^\diamond(\mu):=\mathcal{C}_-[\mu\cdot R_n^\diamond]\qquad (\mu\in L_2^2(\gamma_{0t})),
\end{equation*}
for $n\geq 1$ and $\diamond\in\{+,0,-\}$.

In particular, there exists an $M_2\geq M_1$ such that, for all $m\geq M_2$, the operator norm of $\mathcal{R}^{\B{m}}$ is strictly bounded by $1$, i.e. $||\mathcal{R}^{\B{m}}||<1$. This means that, for $m\geq M_2$, the operator $\mathcal{L}^{\B{m}}$ is invertible, with inverse given by
\begin{equation}\label{eq:Lexpansion}
(\mathcal{L}^{\B{m}})^{-1}=\mathcal{I}+\sum_{n=1}^\infty (\mathcal{R}^{\B{m}})^n,
\end{equation}
where $(\mathcal{R}^{\B{m}})^n$ denotes the $n$-fold composition of $\mathcal{R}^{\B{m}}$ with itself, as usual. This expansion is absolutely convergent in the operator norm. Substituting the absolutely convergent expansion of $\mathcal{R}^{\B{m}}$, given in equation \eqref{eq:Rexpansion}, into \eqref{eq:Lexpansion}, and collecting terms, we find that there exist unique, bounded operators $\mathcal{L}_{n,k}$, from $L_2^2(\gamma_{0t})$ to $L_2^2(\gamma_{0t})$, for $-n\leq k\leq n$, $n\geq 1$, such that
\begin{equation}\label{eq:Lexpansionfull}
(\mathcal{L}^{\B{m}})^{-1}=\mathcal{I}+\sum_{n=1}^\infty\sum_{k=-n}^n{r_{0t}^k (-t_m)^{n+2k\sigma_{0t}}\mathcal{L}_{n,k}},
\end{equation}
where the right-hand side converges absolutely in the operator norm, for all $m\geq M_2$. Furthermore, we note that the coefficients $\mathcal{L}_{n,k}$ only depend on the parameters $\Theta$, $\sigma_{0t}$ and $w^e$, with
\begin{equation*}
    \mathcal{L}_{n,k}=(w^e)^{\frac{1}{2}\sigma_3}\cdot \widetilde{\mathcal{L}}_{n,k}\cdot (w^e)^{-\frac{1}{2}\sigma_3},\quad
\widetilde{\mathcal{L}}_{n,k}=\widetilde{\mathcal{L}}_{n,k}(\Theta,\sigma_{0t}),
\end{equation*}
where the products are compositions of operators, and the matrices $(w^e)^{\pm\frac{1}{2}\sigma_3}$ are interpreted as operators acting by left-multiplication, for $-n\leq k\leq n$ and $n\geq 1$.

Since $\mathcal{L}^{\B{m}}$ is invertible, it follows that the singular integral equation \eqref{eq:singular_integral} has a unique solution, $\mu=\mu^{\B{m}}$, for $m\geq M_2$, given by
\begin{equation*}
    \mu^{\B{m}}=(\mathcal{L}^{\B{m}})^{-1}\left(\mathcal{C}_-[J^{m}-I]\right).
\end{equation*}
By substituting the absolutely convergent expansions \eqref{eq:Jexpansion} and \eqref{eq:Lexpansionfull} into this equation, and collecting terms, we find that there exist unique $\mu_{n,k}\in L_2^2(\gamma_{0t})$, for $-n\leq k\leq n$, $n\geq 1$, such that
\begin{equation}\label{eq:muexpansion}
\mu^{\B{m}}=I+\sum_{n=1}^\infty\sum_{k=-n}^n{r_{0t}^k (-t_m)^{n+2k\sigma_{0t}}\mu_{n,k}},
\end{equation}
where the right-hand side converges absolutely in $L_2^2(\gamma_{0t})$ for $m\geq M_2$. Furthermore, the coefficients $\mu_{n,k}$ only depend on the parameters $\Theta$, $\sigma_{0t}$ and $w^e$, with
\begin{equation*}
    \mu_{n,k}=(w^e)^{\frac{1}{2}\sigma_3}\cdot \widetilde{\mu}_{n,k}\cdot (w^e)^{-\frac{1}{2}\sigma_3},\quad
\widetilde{\mu}_{n,k}=\widetilde{\mu}_{n,k}(\Theta,\sigma_{0t}),
\end{equation*}
for $n\leq k\leq n$ and $n\geq 1$.

Finally, setting $M=M_2$, for $m\geq M$, we define
\begin{equation}\label{eq:omegadefi}
    \Omega^{\B{m}}(z)=I+\mathcal{C}[(\mu^{\B{m}}+I)(J^{\B{m}}-I)](z).
\end{equation}
Then, $\Omega^{\B{m}}(z)$ is analytic on $\mathbb{CP}^1\setminus \gamma_{0t}$, with $\Omega^{\B{m}}(\infty)=I$, and the boundary values
\begin{equation*}
    \Omega_\pm^{\B{m}}(z)=I+\mathcal{C}_{\pm}[(\mu^{\B{m}}+I)(J^{\B{m}}-I)](z),
\end{equation*}
satisfy the jump condition
\begin{equation*}
    \Omega_+^{\B{m}}(z)=\Omega_-^{\B{m}}(z)J^{\B{m}}(z),
\end{equation*}
due to the singular integral equation \eqref{eq:singular_integral}, for $z\in\gamma_{0t}$. In particular, $\Omega^{\B{m}}(z)$ is the unique solution to RHP \ref{rhp:circleI} for $m\geq M$.

Next, through substitution of the absolutely convergent expansions for $J^{\B{m}}-I$ and $\mu^{\B{m}}$ in equations \eqref{eq:Jexpansion} and \eqref{eq:muexpansion}, and collecting terms, we find the  expansion
\begin{equation}\label{eq:small_omega_expansion}
  (\mu^{\B{m}}+I)(J^{\B{m}}-I)=\sum_{n=1}^\infty\sum_{k=-n}^n{r_{0t}^k (-t_m)^{n+2k\sigma_{0t}}\omega_{n,k}},
\end{equation}
which is absolutely convergent in $L_2^2(\gamma_{0t})$ for all $m\geq M$, for some unique coefficients $\omega_{n,k}\in L_2^2(\gamma_{0t})$, for $-n\leq k\leq n$ and $n\geq 1$. Furthermore, the  dependence of the coefficients $\omega_{n,k}$ on the parameters $\Theta$, $\sigma_{0t}$ and $w^e$, takes the form
\begin{equation}\label{eq:omega_small_parameter}
    \omega_{n,k}=(w^e)^{\frac{1}{2}\sigma_3}\cdot\widetilde{\omega}_{n,k}\cdot (w^e)^{-\frac{1}{2}\sigma_3},\quad
\widetilde{\omega}_{n,k}=\widetilde{\omega}_{n,k}(\Theta,\sigma_{0t}),
\end{equation}
for $n\leq k\leq n$ and $n\geq 1$.

We now define
\begin{equation*}
    \Omega_{n,k}(z)=\mathcal{C}[\omega_{n,k}](z),
\end{equation*}
for $-n\leq k\leq n$ and $n\geq 1$, so that the parameter dependence in \eqref{eq:omega_parameter} follows immediately from \eqref{eq:omega_small_parameter}.

We proceed to prove the validity and uniform and absolute convergence of the expansion \eqref{eq:omega_perturb}, as detailed in the proposition. For $z \in \mathbb{CP}^1\setminus \gamma_{0t}$, define the function
\begin{equation*}
    k_z(\zeta)=\frac{1}{\zeta-z}\qquad (\zeta\in\gamma_{0t}),
\end{equation*}
with $k_z=0$ if $z=\infty$. Then $k_z$ lies in $L^2(\gamma_{0t})$ and, for $z \in \mathbb{CP}^1\setminus \gamma_{0t}$,
\begin{align*}
\Omega^{(m)}(z)&=I+\oint_{\gamma_{0t}}\frac{(\mu^{\B{m}}(\zeta)+I)(J^{\B{m}}(\zeta)-I)}{\zeta-z}d\zeta\\
&=I+\oint_{\gamma_{0t}}k_z(\zeta)\sum_{n=1}^\infty\sum_{k=-n}^n{r_{0t}^k (-t_m)^{n+2k\sigma_{0t}}\omega_{n,k}(\zeta)}d\zeta\\
&=I+\oint_{\gamma_{0t}}\sum_{n=1}^\infty\sum_{k=-n}^n{r_{0t}^k (-t_m)^{n+2k\sigma_{0t}}k_z(\zeta)\omega_{n,k}(\zeta)}d\zeta\\
&=I+\sum_{n=1}^\infty\sum_{k=-n}^n{r_{0t}^k (-t_m)^{n+2k\sigma_{0t}}\oint_{\gamma_{0t}}k_z(\zeta)\omega_{n,k}(\zeta)}d\zeta\\
&=I+\sum_{n=1}^\infty\sum_{k=-n}^n{r_{0t}^k (-t_m)^{n+2k\sigma_{0t}}\Omega_{n,k}(z)},
\end{align*}
for all $m\geq M$, where the interchanging of summation and integration is justified by the absolute convergence of expansion \eqref{eq:small_omega_expansion} in $L_2^2(\gamma_{0t})$.

Take any compact subset $K$ of the exterior of $\gamma_{0t}$ in $\mathbb{CP}^1$, so
\begin{equation*}
    K\subseteq D_{\text{ex}}(\gamma_{0t})\cup\{\infty\}.
\end{equation*}
Then, there exists a $d=d(K)$ such that the $L^2$ norm of $k_z$ is uniformly bounded by $d$ for $z\in K$.

Let $||\cdot||_{\text{max}}$ denote the max norm for matrices, $||\cdot||_p$ the standard $L^p$ norm for $L^p$ functions, and $||\cdot||_{p,\text{max}}$ the maximum of the $L^p$ norms of the entries of a matrix function. Then, for $z\in K$ and $m\geq M$,
\begin{align*}
    &\sum_{n=1}^\infty\sum_{k=-n}^n\big|\big|{r_{0t}^k (-t_m)^{n+2k\sigma_{0t}}\Omega_{n,k}(z)}\big|\big|_{\text{max}}=\\
    &\sum_{n=1}^\infty\sum_{k=-n}^n\big|\big|{\oint_{\gamma_{0t}}r_{0t}^k (-t_m)^{n+2k\sigma_{0t}}k_z(\zeta)\omega_{n,k}(\zeta)}d\zeta\big|\big|_{\text{max}}\leq\\
   &\sum_{n=1}^\infty\sum_{k=-n}^n\big|\big|{r_{0t}^k (-t_m)^{n+2k\sigma_{0t}}k_z\omega_{n,k}}\big|\big|_{1,\text{max}}\leq\\
   &\sum_{n=1}^\infty\sum_{k=-n}^n\big|\big|k_z\big|\big|_{2}\; \big|\big|{r_{0t}^k (-t_m)^{n+2k\sigma_{0t}}\omega_{n,k}}\big|\big|_{2,\text{max}}\leq\\
   &d\sum_{n=1}^\infty\sum_{k=-n}^n \big|\big|{r_{0t}^k (-t_m)^{n+2k\sigma_{0t}}\omega_{n,k}}\big|\big|_{2,\text{max}}<\infty,
\end{align*}
where H\"olders inequality was used in the second inequality and the last inequality followed from the absolute summability of the right-hand side of equation \eqref{eq:small_omega_expansion} in $L_2^2(\gamma_{0t})$.
It follows that the series expansion \eqref{eq:omega_perturb} is uniformly absolutely convergent on $K$.

We will now use a `deformation of contours' argument, to show that the series expansion on the right-hand side of \eqref{eq:omega_perturb} is 
uniformly absolutely convergent on the closure of the full exterior of $\gamma_{0t}$, within $\mathbb{CP}^1$.
To this end, we choose an analytic Jordan curve $\gamma_{0t}'$ which is contained in the interior of $\gamma_{0t}$ and satisfies the same conditions as $\gamma_{0t}$, specified in Remark \ref{rem:gamma0t}. The argument above, yields a sequence of coefficients $(\Omega_{n,k}'(z))_{n,k}$ analytic on $\mathbb{CP}^1\setminus \gamma_{0t}'$, such that all the statements above for the original coefficients relative to $\gamma_{0t}$, also hold for the $(\Omega_{n,k}'(z))_{n,k}$ relative to $\gamma_{0t}'$.

In particular, if we set 
\begin{equation*}
K=\gamma_{0t}\sqcup D_{\text{ex}}(\gamma_{0t})\sqcup\{\infty\},
\end{equation*} then $K$ is a compact subset of the exterior of $\gamma_{0t}'$ in $\mathbb{CP}^1$. It follows that equation \eqref{eq:omega_perturb}, with coefficients $(\Omega_{n,k}'(z))_{n,k}$, is valid and the right-hand side converges uniformly absolutely on $K$ with respect to the max norm. But then it must be true that $\Omega_{n,k}'(z)=\Omega_{n,k}(z)$ for $z\in K$, $-n\leq k\leq n$ and $n\geq 1$. Therefore, the right-hand side of expansion \eqref{eq:omega_perturb} (with the original coefficients) converges uniformly absolutely in the max norm on $K$.

A similar argument shows that the right-hand side of expansion \eqref{eq:omega_perturb} converges uniformly absolutely in the max norm on
\begin{equation*}
\gamma_{0t}\sqcup D_{\text{in}}(\gamma_{0t}).
\end{equation*}
The proposition follows.
\end{proof}

\subsection{Extracting asymptotics near $\boldsymbol{t_m=0}$}\label{sec:extract_asymp}
Recall that the solution $\Psi^{\B{m}}(z)$ of the main RHP \ref{rhp:main}, defines a unique corresponding linear system \eqref{eq:linear_system}, with coefficient matrix
\begin{equation}\label{eq:Acoefdefi}
    A(z,t_m)=A_0(t_m)+zA_1(t_m)+z^2 A_2,\quad A_2=q^{-\theta_\infty \sigma_3}.
\end{equation}
In this section, we will use the asymptotic expansion in Proposition \ref{prop:omega_perturb}, for $\Omega^{\B{m}}(z)$ as $m\rightarrow+\infty$, to obtain corresponding expansions for $A_0(t_m)$ and $A_1(t_m)$. In turn, we will use these to extract the asymptotics of the corresponding solution $(f,g)$ for small $t_m$, yielding the proof of Theorem \ref{thm:generic_asymp_zero}.

Firstly, we prove the following theorem.
\begin{theorem}\label{thm:linear_system_asymp_zero}
Let $\eta\in \mathcal{F}(\Theta,t_0)$ be $0$-generic, see Definition \ref{def:generic0}, and $C(z)\in\mathfrak{C}(\Theta,t_0)$ be a corresponding connection matrix. Let $\sigma_{0t}$, $s_{0t}$ and $w^e$ denote corresponding intermediate exponent, twist parameter and external gauge parameter respectively, with respect to the Mano decomposition I of $C(z)$, as in equations \eqref{eq:gen_int_exponent} and \eqref{eq:gen_twist_parameter}. Then, there exists an $M\geq 0$, such that, for all $m\geq M$, the solution $\Psi^{\B{m}}$ to RHP \ref{rhp:main} exists and the corresponding
 coefficients $A_0(t_m)$ and $A_1(t_m)$ in the coefficient matrix \eqref{eq:Acoefdefi}, admit complete asymptotic expansions as $m\rightarrow +\infty$,
\begin{align}
    A_1(t_m)&=A_0^e+\sum_{n=1}^\infty\sum_{k=-n}^n r_{0t}^k (-t_m)^{n+2k\sigma_{0t}}A_{1,n,k},\label{eq:a1exp}\\
    A_0(t_m)&=\sum_{n=1}^\infty\sum_{k=-n}^n r_{0t}^k (-t_m)^{n+2k\sigma_{0t}}A_{0,n,k},\label{eq:a0exp}
\end{align}
which are absolutely convergent with respect to the max norm for all $m\geq M$, for some unique $2\times 2$ matrices $A_{j,n,k}$, $j\in\{0,1\}$, $-n\leq k\leq n$ and $n\geq 0$.
Here $A_0^e$ is given in equation \eqref{eq:A0e} and the leading order coefficient $A_{0,-1,1}$ in \eqref{eq:a0exp} is given by
\begin{align}
A_{0,-1,1}=&\frac{(q^{\theta_t+\theta_0+\sigma_{0t}}-1)(q^{\theta_t-\theta_0+\sigma_{0t}}-1)}{q^{\theta_t+\theta_1}(q^{\theta_\infty}-q^{-\theta_\infty})(q^{\sigma_{0t}}-q^{-\sigma_{0t}})^2}\label{eq:A0m11}\\
&\begin{pmatrix}
    (q^{\theta_1+\sigma_{0t}+\theta_\infty}-1)(q^{\theta_1+\sigma_{0t}-\theta_\infty}-1) &
    w^e(q^{\theta_1+\sigma_{0t}+\theta_\infty}-1)^2 \\
    \frac{1}{w^e}(q^{\theta_1+\sigma_{0t}-\theta_\infty}-1)^2  & (q^{\theta_1+\sigma_{0t}+\theta_\infty}-1)(q^{\theta_1+\sigma_{0t}-\theta_\infty}-1)
\end{pmatrix}.\nonumber
\end{align}
The coefficients only depend on the parameters $\Theta$, $\sigma_{0t}$ and $w^e$, with
\begin{equation}\label{eq:Aparameterdependence}
    A_{j,n,k}=(w^e)^{\frac{1}{2}\sigma_3}\widetilde{A}_{j,n,k}(w^e)^{-\frac{1}{2}\sigma_3},\quad
\widetilde{A}_{j,n,k}=\widetilde{A}_{j,n,k}(\Theta,\sigma_{0t}),
\end{equation}
for $j\in\{0,1\}$, $-n\leq k\leq n$ and $n\geq 1$.
\end{theorem}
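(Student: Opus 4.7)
The approach is to use the solution $\Omega^{\B{m}}$ of RHP~\ref{rhp:circleI} provided by Proposition~\ref{prop:omega_perturb}, and translate its complete asymptotic expansion into asymptotic expansions for the linear-system coefficients $A_0(t_m)$ and $A_1(t_m)$. First, for $m \geq M$ with $M$ as in that proposition, one reverses the quotienting of~\eqref{eq:omegapsi} to reconstruct $\Phi^{\B{m}}$ solving RHP~\ref{rhp:decomI}, and from it $\Psi^{\B{m}}$ solving RHP~\ref{rhp:main} (the jumps on $\gamma_e$ and $\gamma_i^{\B{m}}$ merging into the single jump on $\gamma^{\B{m}}$), establishing existence.

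The key identity. Reading off the last line of \eqref{eq:omegapsi} in $D_\infty$ (and extending by analytic continuation to the entire exterior of $\gamma_{0t}$) gives $\Psi_\infty^{\B{m}}(z) = \Omega^{\B{m}}(z)\Psi_\infty^e(z)$. Combining this with the standard relation $A(z,t_m) = z^2\Psi_\infty^{\B{m}}(qz)q^{-\theta_\infty\sigma_3}\Psi_\infty^{\B{m}}(z)^{-1}$, derived from \eqref{eq:true_sol} and $Y(qz)=A(z,t_m)Y(z)$, and the analogue $A^e(z) = z\Psi_\infty^e(qz)q^{-\theta_\infty\sigma_3}\Psi_\infty^e(z)^{-1}$ from \eqref{eq:Ae_psi}, yields the central factorisation
\begin{equation*}
A(z,t_m) \;=\; z\,\Omega^{\B{m}}(qz)\,A^e(z)\,\Omega^{\B{m}}(z)^{-1}.
\end{equation*}
Expanding $\Omega^{\B{m}}(z) = I + \Omega_1^{\B{m}}/z + \Omega_2^{\B{m}}/z^2 + O(z^{-3})$ near $z=\infty$ and matching powers of $z$ against the polynomial $A(z,t_m) = A_0(t_m)+zA_1(t_m)+z^2 A_2$ (with $A_2 = A_1^e = q^{-\theta_\infty\sigma_3}$) produces closed expressions for the coefficients, e.g.\
\begin{equation*}
A_1(t_m) \;=\; A_0^e + q^{-1}\Omega_1^{\B{m}}A_1^e - A_1^e\Omega_1^{\B{m}},
\end{equation*}
and a quadratic-in-$\Omega$ expression for $A_0(t_m)$ involving both $\Omega_1^{\B{m}}$ and $\Omega_2^{\B{m}}$.

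To obtain the expansions \eqref{eq:a1exp}--\eqref{eq:a0exp}, I would substitute the expansion of $\Omega^{\B{m}}$ from Proposition~\ref{prop:omega_perturb} and use that each coefficient $\Omega_{n,k}(z)$ vanishes at $z=\infty$ (since $\Omega^{\B{m}}(\infty) = I$), so it has a Laurent expansion $\Omega_{n,k}(z) = \Omega_{n,k}^{(1)}/z + \Omega_{n,k}^{(2)}/z^2 + \ldots$ at infinity. The $A_{j,n,k}$ are then finite combinations of products of the $\Omega_{n',k'}^{(j')}$ with $A_0^e$ and $A_1^e$. Absolute convergence of \eqref{eq:a0exp}--\eqref{eq:a1exp} in the max norm is inherited from the uniform absolute convergence of $\Omega^{\B{m}}(z)$ on a neighbourhood of $z=\infty$ asserted in Proposition~\ref{prop:omega_perturb}. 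The parameter dependence \eqref{eq:Aparameterdependence} follows from \eqref{eq:omega_parameter} together with the observation that $A_0^e = (w^e)^{\frac12\sigma_3}\widetilde{A}_0^e(w^e)^{-\frac12\sigma_3}$ (visible in \eqref{eq:A0e} since $\beta^e,\gamma^e$ in \eqref{eq:betae} do not depend on $w^e$), and that $A_1^e = q^{-\theta_\infty\sigma_3}$ commutes with $(w^e)^{\frac12\sigma_3}$.

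The explicit formula \eqref{eq:A0m11} for the leading coefficient is the main computational obstacle. A cleaner route than the $z=\infty$ matching is the dual representation
\begin{equation*}
A_0(t_m) \;=\; t_m\,\Psi_0^{\B{m}}(0)\,q^{\theta_0\sigma_3}\,\Psi_0^{\B{m}}(0)^{-1},
\end{equation*}
obtained by evaluating at $z=0$ the analogous identity $A(z,t_m) = t_m\Psi_0^{\B{m}}(qz)q^{\theta_0\sigma_3}\Psi_0^{\B{m}}(z)^{-1}$ derived from the $Y_0$ form in \eqref{eq:true_sol}. Substituting the factorisation $\Psi_0^{\B{m}}(0) = \Omega^{\B{m}}(0)\,h^e\,(-t_m)^{-\sigma_{0t}\sigma_3}\,h^i D_m^{-1}$, and using $[D_m,q^{\theta_0\sigma_3}]=0$ together with $h^i q^{\theta_0\sigma_3}(h^i)^{-1}=A_0^i$, reduces the problem to extracting the $(-t_m)^{1-2\sigma_{0t}}$ contribution of $t_m\,(-t_m)^{-\sigma_{0t}\sigma_3}A_0^i(-t_m)^{\sigma_{0t}\sigma_3}$ conjugated by $h^e$; only the $(1,2)$ entry $\beta^i w^i(-t_m)^{-2\sigma_{0t}}$ contributes at this order. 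Collecting factors via the twist-parameter identity $w^i r_{0t} = s_1^e w^e/s_2^e$ and simplifying $h^e E_{12}(h^e)^{-1}$ using \eqref{eq:he} and \eqref{eq:betai}, together with the $q$-identity for the Casoratian-type determinant of $h^e$, then produces the closed-form expression in \eqref{eq:A0m11}. The main delicate point of this calculation is the bookkeeping of $q$-shifts and signs in the resulting rank-one matrix factor.
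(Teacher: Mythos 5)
Your proposal is correct and follows essentially the same route as the paper: it invokes Proposition \ref{prop:omega_perturb}, reconstructs $\Psi^{\B{m}}$ through the factorisation $A(z,t_m)=z\,\Omega^{\B{m}}(qz)A^e(z)\Omega^{\B{m}}(z)^{-1}$, evaluates the dual identity at $z=0$ for $A_0(t_m)$, and isolates the $(1,2)$ entry of the conjugated $A_0^i$ together with the twist-parameter identity to get \eqref{eq:A0m11}, exactly as in the paper. The only cosmetic difference is that you extract $A_1(t_m)$ by matching Laurent coefficients at $z=\infty$ whereas the paper takes the $z$-derivative of the same exact expression; these are equivalent.
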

\begin{proof}
We start by applying Proposition \ref{prop:omega_perturb}, which gives us an $M\geq 0$, such that the solution $\Omega^{\B{m}}(z)$ of RHP \ref{rhp:circleI} exists for all $m\geq M$ and has a complete asymptotic expansion as $m\rightarrow +\infty$, given by equation \eqref{eq:omega_perturb}.

This means that RHP \ref{rhp:decomI} is solvable for all $m\geq M$, and its solution is given by
\begin{equation*}
    \Phi^{\B{m}}(z)=\begin{cases}
    \Psi_\infty^{\B{m}}(z) & \text{if }z\in D_\infty,\\
    \Psi_{0t}^{\B{m}}(z) & \text{if }z\in D_{0t}^{\B{m}},\\
    \Psi_0^{\B{m}}(z) & \text{if }z\in D_0^{\B{m}},\\
    \end{cases}
\end{equation*}
where
\begin{align*}
    \Psi_\infty^{\B{m}}(z)&=\Omega^{\B{m}}(z)\Psi_\infty^e(z) &  &(z\in D_\infty),\\
    \Psi_{0t}^{\B{m}}(z)&=\Omega^{\B{m}}(z)\Psi_0^e(z) &  &(z\in D_{0t}^{\B{m}}\cap D_\text{ex}(\gamma_{0t})),\\
    \Psi_{0t}^{\B{m}}(z)&=\Omega^{\B{m}}(z)\Psi_0^e(z)U_\infty^{\B{m}}(z) &  &(z\in D_{0t}^{\B{m}}\cap D_\text{in}(\gamma_{0t})),\\
    \Psi_{0}^{\B{m}}(z)&=\Omega^{\B{m}}(z)\Psi_0^e(z)U_0^{\B{m}}(z) &  &(z\in D_{0}^{\B{m}}),
\end{align*}
due to equation \eqref{eq:omegapsi}.

Next, we relate these functions to the coefficient matrix $A(z,t_m)$, using equations \eqref{eq:true_sol}, \eqref{eq:explicitsolyot} and \eqref{eq:linear_system}, giving
\begin{equation*}
    A(z,t_m)=\begin{cases}z^2 \Psi_\infty^{\B{m}}(qz)q^{-\theta_\infty\sigma_3}\Psi_\infty^{\B{m}}(z)^{-1}  &\text{if }z\in D_\infty\cap q^{-1}D_\infty,\\
    -z \Psi_{0t}^{\B{m}}(qz)q^{-\sigma_{0t}\sigma_3}\Psi_{0t}^{\B{m}}(z)^{-1}  &\text{if }z\in D_{0t}^{\B{m}}\cap q^{-1}D_{0t}^{\B{m}},\\
    t_m\Psi_{0}^{\B{m}}(qz)q^{\theta_0\sigma_3}\Psi_{0}^{\B{m}}(z)^{-1}  &\text{if }z\in D_{0}^{\B{m}}\cap q^{-1}D_{0}^{\B{m}}.
    \end{cases}
\end{equation*}
The first line shows that, for $z\in D_\infty\cap q^{-1}D_\infty$,
\begin{align*}
    A(z,t_m)&=z^2\; \Omega^{\B{m}}(qz)\Psi_\infty^e(qz)q^{-\theta_\infty\sigma_3}\Psi_\infty^e(z)^{-1}\Omega^{\B{m}}(z)^{-1}\\
    &=z\;\Omega^{\B{m}}(qz) A^e(z)\Omega^{\B{m}}(z)^{-1},
\end{align*}
where the second equality follows from equation \eqref{eq:Ae_psi}. In particular, we have the following exact formula for $A_1(t_m)$,
\begin{align}
    A_1(t_m)&=\frac{d}{dz}\left[A(z,t_m)-z^2q^{-\theta_\infty\sigma_3}\right]\nonumber\\
    &=\frac{d}{dz}\left[z\;\Omega^{\B{m}}(qz) A^e(z)\Omega^{\B{m}}(z)^{-1}-z^2q^{-\theta_\infty\sigma_3}\right]\nonumber\\
    &=\frac{d}{dz}\left[\Omega^{\B{m}}(qz)(z A_0^e+z^2q^{-\theta_\infty\sigma_3})\Omega^{\B{m}}(z)^{-1}-z^2q^{-\theta_\infty\sigma_3}\right],\label{eq:a1exact}
\end{align}
which holds for all $z\in D_\infty\cap q^{-1}D_\infty$. By substituting the (uniformly) absolutely convergent series expansion of $\Omega^{\B{m}}(z)$, given in equation \eqref{eq:omega_perturb}, into the last line, and collecting terms, we obtain the absolutely convergent series expansion \eqref{eq:a1exp}, for $A_1(t_m)$, in the theorem.

Similarly, for $z\in D_{0}^{\B{m}}\cap q^{-1}D_{0}^{\B{m}}$,
\begin{align*}
    A(z,t_m)=&t_m\Omega^{\B{m}}(qz)\Psi_0^e(qz)U_0^{\B{m}}(qz)q^{\theta_0\sigma_3}U_0^{\B{m}}(z)^{-1}\Psi_0^e(z)^{-1}\Omega^{\B{m}}(z)^{-1}\\
    =&t_m\Omega^{\B{m}}(qz)\Psi_0^e(qz)(-t_m)^{-\sigma_{0t}\sigma_3}A^i(z/t_m)(-t_m)^{\sigma_{0t}\sigma_3}\Psi_0^e(z)^{-1}\Omega^{\B{m}}(z)^{-1}\\
    =&\Omega^{\B{m}}(qz)\Psi_0^e(qz)(-t_m)^{-\sigma_{0t}\sigma_3}\left(t_mA_0^i-z\;q^{-\sigma_{0t}\sigma_3}\right)\\
    &(-t_m)^{\sigma_{0t}\sigma_3}\Psi_0^e(z)^{-1}\Omega^{\B{m}}(z)^{-1},
\end{align*}
where the second equality follows from equation \eqref{eq:aipsi0}. We now simply set $z=0$ to obtain an exact formula for $A_0(t_m)$,
\begin{equation}\label{eq:A0tm}
A_0(t_m)=\Omega^{\B{m}}(0)\Psi_0^e(0)(-t_m)^{-\sigma_{0t}\sigma_3}\left(t_mA_0^i\right)(-t_m)^{\sigma_{0t}\sigma_3}\Psi_0^e(0)^{-1}\Omega^{\B{m}}(0)^{-1}.
\end{equation}
Setting $z=0$ in equation \eqref{eq:omega_perturb}, gives an absolutely convergent series expansion for $\Omega^{\B{m}}(0)$. By substituting this expansion into equation \eqref{eq:A0tm}, we obtain the absolutely convergent series expansion \eqref{eq:a1exp}, for $A_1(t_m)$, in the theorem.

It follows from equation \eqref{eq:A0tm}, that the leading order coefficient, $A_{0,-1,1}$, in the asymptotic expansion \eqref{eq:a1exp}, is given by
\begin{equation}\label{eq:a0m11}
    A_{0,-1,1}=-r_{0t}\Psi_0^e(0)\begin{pmatrix}
        0 & (A_0^i)_{12}\\
        0 & 0
    \end{pmatrix}\Psi_0^e(0)^{-1}.
\end{equation}
Now, note that $\Psi_0^e(0)=h^e$, see equation \eqref{eq:he}, with
\begin{align*}
h^e&=(w^e)^{\frac{1}{2}\sigma_3} \widetilde{h}^e (w^e)^{-\frac{1}{2}\sigma_3} \begin{pmatrix}1/s_1^e & 0\\
0 & 1/s_2^e\\
\end{pmatrix},\\ 
\widetilde{h}^e&=\begin{pmatrix}
1-q^{\theta_1+\theta_\infty+\sigma_{0t}} & 1-q^{\theta_1+\theta_\infty-\sigma_{0t}}\\
1-q^{\theta_1-\theta_\infty+\sigma_{0t}} & 1-q^{\theta_1-\theta_\infty-\sigma_{0t}}
\end{pmatrix}.
\end{align*}
Further recalling that $(A_0^i)_{12}=\beta^iw^i$, equation \eqref{eq:a0m11} thus becomes
\begin{align*}
    A_{0,-1,1}&=-(w^e)^{\frac{1}{2}\sigma_3}\widetilde{h}^e(w^e)^{-\frac{1}{2}\sigma_3}\begin{pmatrix}
        0 & \beta^iw^i r_{0t}\frac{s_2^e}{s_1^e}\\
        0 & 0
    \end{pmatrix}(w^e)^{\frac{1}{2}\sigma_3}(\widetilde{h}^e)^{-1}(w^e)^{-\frac{1}{2}\sigma_3}\\
    &=-(w^e)^{\frac{1}{2}\sigma_3}\widetilde{h}^e\begin{pmatrix}
        0 & \beta^i r_{0t}\frac{w^is_2^e}{w_e s_1^e}\\
        0 & 0
    \end{pmatrix}(\widetilde{h}^e)^{-1}(w^e)^{-\frac{1}{2}\sigma_3}\\
    &=-(w^e)^{\frac{1}{2}\sigma_3}\widetilde{h}^e\begin{pmatrix}
        0 & \beta^i\\
        0 & 0
    \end{pmatrix}(\widetilde{h}^e)^{-1}(w^e)^{-\frac{1}{2}\sigma_3},
\end{align*}
where the last equality follows from equation \eqref{eq:gen_twist_parameter}. Direct substitution of the formula for $\widetilde{h}^e$ above, and the formula for $\beta^i$ in equation \eqref{eq:betai}, yield the explicit expression for the leading order coefficient $A_{0,-1,1}$ in the theorem.

Finally, the parameter dependence of the coefficients, as described in equation \eqref{eq:Aparameterdependence}, is a direct consequence of the exact formulas \eqref{eq:a1exact} and \eqref{eq:A0tm} for $A_1(t_m)$ and $A_0(t_m)$ respectively, and the parameter dependence of the coefficients in the asymptotic expansion for $\Omega^{\B{m}}(z)$ in Proposition \ref{prop:omega_perturb}, see equation \eqref{eq:omega_parameter}.
\end{proof}

We are now in a position to prove Theorem \ref{thm:generic_asymp_zero}.
\begin{proof}[Proof of Theorem \ref{thm:generic_asymp_zero}]
 To prove Theorem \ref{thm:generic_asymp_zero}, all that is left to do, is to extract the asymptotics of $(f,g)$ from the asymptotics of the matrix coefficients $A_0(t_m)$ and  $A_1(t_m)$ in Theorem \ref{thm:linear_system_asymp_zero}, as $m\rightarrow +\infty$.

To this end, note that, by equation \eqref{eq:coordinates_linear},
\begin{equation}\label{eq:f_in_terms_of_coef}
f(t_m)=-\frac{(A_0)_{12}(t_m)}{(A_1)_{12}(t_m)}.
\end{equation}
Now, by Theorem \ref{thm:linear_system_asymp_zero}, there exists an $M\geq 0$, such that
\begin{align}
    (A_1)_{12}(t_m)&=w^e\left(\beta^e+\sum_{n=1}^\infty\sum_{k=-n}^n r_{0t}^k (-t_m)^{n+2k\sigma_{0t}}(\widetilde{A}_{1,n,k})_{12}\right),\nonumber\\
    (A_0)_{12}(t_m)&=w^e \sum_{n=1}^\infty\sum_{k=-n}^n r_{0t}^k (-t_m)^{n+2k\sigma_{0t}}(\widetilde{A}_{0,n,k})_{12},\label{eq:a012}
\end{align}
where the right-hand sides are absolutely summable for all $m\geq M$.
Choose an $M'\geq M$ such that, for all $m\geq M',$
\begin{equation*}
\sum_{n=1}^\infty\sum_{k=-n}^n |r_{0t}^k (-t_m)^{n+2k\sigma_{0t}}(\widetilde{A}_{1,n,k})_{12}|\leq |\beta^e|\delta,
\end{equation*}
for some fixed $0<\delta<1$. Then, for $m\geq M'$, the reciprocal of $(A_1)_{12}(t_m)$, is given by
\begin{align*}
\frac{1}{(A_1)_{12}(t_m)}&=\frac{1}{w^e\beta^e}\sum_{N=0}^\infty\left(-\frac{1}{\beta^e}\sum_{n=1}^\infty\sum_{k=-n}^n r_{0t}^k (-t_m)^{n+2k\sigma_{0t}}(\widetilde{A}_{1,n,k})_{12}\right)^N\\
&=\frac{1}{w^e}\left(\frac{1}{\beta^e}+\sum_{n=1}^\infty\sum_{k=-n}^n r_{0t}^k (-t_m)^{n+2k\sigma_{0t}}\widetilde{p}_{n,k}\right),
\end{align*}
for some unique coefficients $\widetilde{p}_{n,k}=\widetilde{p}_{n,k}(\Theta,\sigma_{0t})$, $-n\leq k\leq n$, $n\geq 1$, where the series expansion on the last line is absolutely summable, for all $m\geq M'$.

It follows from this, as well as equations \eqref{eq:f_in_terms_of_coef} and \eqref{eq:a012}, that $f(t_m)$ has a asymptotic series expansion of the form
\begin{equation}\label{eq:fserieszero}
    f(t_m)=\sum_{n=1}^\infty\sum_{k=-n}^n F_{n,k}r_{0t}^k(- t_m)^{n+2k\sigma_{0t}},
\end{equation}
where the right-hand side is absolutely convergent for all $m\geq M'$, for some unique coefficients $F_{n,k}=F_{n,k}(\Theta,\sigma_{0t})$, $-n\leq k\leq n$, $n\geq 1$. Furthermore, the leading order coefficient $F_{1,-1}$ is given by
\begin{align}\label{eq:f1m1extract}
    F_{1,-1}=&-\frac{\widetilde{A}_{0,1,-1}}{\beta^e}\\
    =&q^{-\theta_t}\frac{\bigl(q^{\theta_t+\theta_0-\sigma_{0t}}-1\bigr)\bigl(q^{\theta_t-\theta_0-\sigma_{0t}}-1\bigr)\bigl(q^{\theta_1+\theta_\infty-\sigma_{0t}}-1\bigr)}{\bigl(q^{\theta_1+\theta_\infty- \sigma_{0t}}-1\bigr)\bigl(q^{\sigma_{0t}}-q^{-\sigma_{0t}}\bigr)^2},\nonumber
\end{align}
as follows directly from the explicit expressions for $\beta^e$ and $\widetilde{A}_{0,1,-1}$ in equations \eqref{eq:betae} and \eqref{eq:A0m11} respectively.

Similarly, by equation \eqref{eq:coordinates_linear}, we have the following explicit expression for $g(t_m)$,
\begin{equation}\label{eq:g_in_terms_of_coef}
    g(t_m)=\frac{(A_0)_{22}(t_m)+(A_1)_{22}(t_m)f(t_m)+q^{\theta_\infty} f(t_m)^2}{q(f(t_m)-q^{\theta_1})(f(t_m)-q^{-\theta_1})}.
\end{equation}
Considering the series expansion \eqref{eq:fserieszero} for $f(t_m)$, we choose an $M''\geq M'$, such that, for all $m\geq M''$,
\begin{equation*}
    \sum_{n=1}^\infty\sum_{k=-n}^n |F_{n,k}r_{0t}^k(- t_m)^{n+2k\sigma_{0t}}|\leq \delta' \min(|q^{\theta_1}|,|q^{-\theta_1}|),
\end{equation*}
for some fixed $0\leq \delta'<1$. Then, it follows that the reciprocal of $(f(t_m)-q^{\theta_1})(f(t_m)-q^{-\theta_1})$, has an asymptotic expansion, as $m\rightarrow +\infty$,
\begin{equation*}
    \frac{1}{(f(t_m)-q^{\theta_1})(f(t_m)-q^{-\theta_1})}=1+\sum_{n=1}^\infty\sum_{k=-n}^n r_{0t}^k (-t_m)^{n+2k\sigma_{0t}}r_{n,k},
\end{equation*}
which is absolutely convergent for all $m\geq M''$, for some unique coefficients $r_{n,k}=r_{n,k}(\Theta,\sigma_{0t})$, $-n\leq k\leq n$, $n\geq 1$. Combining this asymptotic expansion with the expansions for $(A_0)_{22}(t_m)$, $(A_1)_{22}(t_m)$ and $f(t_m)$ in Theorem \ref{thm:linear_system_asymp_zero} and equation \eqref{eq:fserieszero}, equation \eqref{eq:g_in_terms_of_coef} yields the following asymptotic expansion, as $m\rightarrow+\infty$, for $g(t_m)$,
\begin{equation}\label{eq:gserieszero}
    g(t_m)=\sum_{n=1}^\infty\sum_{k=-n}^n G_{n,k}r_{0t}^k(- t_m)^{n+2k\sigma_{0t}},
\end{equation}
which is absolutely convergent for all $m\geq M''$, for some unique coefficients $G_{n,k}=G_{n,k}(\Theta,\sigma_{0t})$, $-n\leq k\leq n$, $n\geq 1$. Furthermore, the leading order coefficient $G_{1,-1}$ is given by
\begin{align}\label{eq:G1m1extract}
G_{1,-1}&=\frac{1}{q}((A_{0,-1,1})_{22}+(A_0^e)_{22}F_{1,-1})\\
&=-q^{-1+\sigma_{0t}}F_{1,-1},
\end{align}
where the last line follows by direct substitution of the explicit formulas for $(A_0^e)_{22}$ and $(A_{0,-1,1})_{22}$ in equation \eqref{eq:A0e} and Theorem \ref{thm:linear_system_asymp_zero}.

Finally, the explicit expressions for the sub-leading coefficients $F_{1,0}$, $F_{1,1}$, $G_{1,0}$ and $G_{1,1}$ are computed by direct substitution of the series expansions for $f(t_m)$ and $g(t_m)$ into the $q\Psix$ equation and comparing terms. This finishes the proof of Theorem \ref{thm:generic_asymp_zero}.
\end{proof}

\begin{proof}[Proof of Statement in Remark \ref{remark:weakeningzero}]
In the derivation of the asymptotic expansions in Theorem \ref{thm:generic_asymp_zero}, we never used the fact that $\Re\sigma_{0t}\neq 0$. To be precise, the statement in Proposition \ref{prop:omega_perturb} continues to hold true without any modification when $\Re\sigma_{0t}=0$. The same is true for Theorem \ref{thm:linear_system_asymp_zero}, except that `leading order' has to be removed in the sentence above equation \eqref{eq:A0m11}, since the three terms in the first inner summation (i.e. $n=1$) on the right-hand side of equation \eqref{eq:a0exp} are of the same order. Similarly, the argument to extract the asymptotics of $(f,g)$, in the proof of Theorem \ref{thm:generic_asymp_zero}, continues to hold. We only need to remove `leading order' 
in the lines above equations \eqref{eq:f1m1extract} and \eqref{eq:G1m1extract}, and replace `sub-leading coefficients' by `other coefficients' in the second to last sentence of the proof. The statement in remark \ref{remark:weakeningzero} follows.
\end{proof}

\subsection{Asymptotics on the line $\widetilde{\mathcal{L}}_2^\infty$}\label{sec:reducible_rhp}
In this section, we prove the asymptotics of solutions corresponding to the line $\widetilde{\mathcal{L}}_2^\infty$ on the affine Segre surface $\mathcal{F}(\Theta,t_0)$, as detailed in Proposition \ref{prop:asymptotics_lineLd2i}. The reason for choosing this line, is that derivation for the generic asymptotics near $t=0$ goes through on this line without any modifications. We will derive the asymptotics on the other lines by employing symmetries in Section \ref{sec:lines_asymp}.

\begin{proof}[Proof of Proposition \ref{prop:asymptotics_lineLd2i} and Corollary \ref{coro:intersectionanalytic}]
Consider the general setup in Section \ref{sec:singlecontourI} and set
\begin{equation}\label{eq:sigma0t_special}
    \sigma_{0t}=\theta_t-\theta_0.
\end{equation}
We will show that the proof for the generic asymptotics around $t=0$, remains valid under weaker conditions than those in \eqref{eq:gen_int_exponent}. Namely, we only impose that
\begin{equation}\label{eq:realassump}
    \Re \sigma_{0t}=\Re(\theta_t-\theta_0)<\frac{1}{2},
\end{equation}
and we correspondingly get
\begin{equation}\label{eq:rho34}
    \rho_{34}=\frac{\vartheta_\tau(-\theta_0+\theta_t+\theta_1-\theta_\infty,-\theta_0+\theta_t-\theta_1+\theta_\infty)}{\vartheta_\tau(-\theta_0+\theta_t-\theta_1-\theta_\infty,-\theta_0+\theta_t+\theta_1+\theta_\infty)}.
\end{equation}
We note that this equality necessarily holds on the line $\widetilde{\mathcal{L}}_2^\infty$, as follows from equation \eqref{eq:rhoduality2} with $(i,j,k,l)=(3,4,2,1)$.

The main connection matrix is given by
\begin{equation*}
    C(z)=C^i\left(z/t_0\right)(-t_0)^{\sigma_{0t} \sigma_3}C^e(z),
\end{equation*}
where $C^i(z)$ and $C^e(z)$ are defined in equation \eqref{eq:CeCidefi}, and we are left with the free gauge and scaling parameters $w^i,w^e, s_{1,2}^i$ and $s_{1,2}^e$, some of whose relative values are controlled by the quantity $r_{0t}$, via equation \eqref{eq:gen_twist_parameter}.  For the sake of simplicity, we set
\begin{equation*}
s_1^i=s_2^i=s_1^e=s_2^e=1,\quad w^e=1,
\end{equation*}
so that the only free parameter left is
\begin{equation*}
    w^i=r_{0t}^{-1}.
\end{equation*}

The remainder of the proof consists of three parts. In the first, we show that, by varying $r_{0t}$, we trace out the line $\widetilde{\mathcal{L}}_2^\infty$. In the second, we derive the asymptotics when $r_{0t}\in\mathbb{C}^*$. In the third, we consider the case $r_{0t}=\infty$ and prove Corollary \ref{coro:intersectionanalytic}.

Firstly, we show that the monodromy coordinates $\eta$ corresponding to the connection matrix $C(z)$, indeed lie on the line $\widetilde{\mathcal{L}}_2^\infty$. To this end, we note that, by the choice \eqref{eq:sigma0t_special}, the internal connection matrix, $C^i(z)$, is reducible. By equation \eqref{eq:explicit_connection}, it reads
\begin{equation*}
    C^i(z)=\begin{pmatrix} s_1^i & 0\\
    0 & s_2^i\end{pmatrix}
    \begin{pmatrix}
   c_{11}^i\,\theta_q(q^{-\theta_t}z) & r_{0t}^{-1}\, c_{12}^i\,\theta_q(q^{\theta_t-2\theta_0}z)\\
    0& c_{22}^i\,\theta_q(q^{\theta_t}z)\\
    \end{pmatrix},
\end{equation*}
where
\begin{equation*}
    c_{11}^i=\frac{1}{1-q^{2(\theta_t-\theta_0)}},\quad
    c_{12}^i=\frac{\Gamma_q(-2\theta_0,2\theta_0-2\theta_t)}{(1-q)\Gamma_q(-2\theta_t)},\quad
    c_{22}^i=\frac{1}{1-q^{2\theta_0}}.
\end{equation*}
In particular, the dual Tyurin parameters of $C^i(z)$ equal
\begin{align*}
\widetilde{\rho}_1^i&=\pi[C^i(q^{+\theta_t})^T]=\frac{u}{r_{0t}},\qquad u:=\frac{c_{12}^i\theta_q(q^{2(\theta_t-\theta_0)})}{c_{22}^i\theta_q(q^{2\theta_t})},\\
\widetilde{\rho}_2^i&=\pi[C^i(q^{-\theta_t})^T]=\infty,
\end{align*}
Therefore, see equation \eqref{eq:tyurin_identification}, the dual Tyurin parameters $\widetilde{\rho}_{1,2}$ of the global connection matrix $C(z)$, are given by the same equations,
\begin{equation}\label{eq:dualtyurinproof}
    \widetilde{\rho}_{1}=\frac{u}{r_{0t}},\qquad
    \widetilde{\rho}_{2}=\infty.
\end{equation}
 In particular, the corresponding monodromy coordinates $\eta$ lie on the line $\widetilde{\mathcal{L}}_2^\infty$.

Next, we compute how $\eta$ depends on $r_{0t}$. To accomplish this, we first compute an explicit parametrisation for the triplet of Tyurin ratios
$\{\rho_{14},\rho_{24},\rho_{34}\}$. We already found that $\rho_{34}$ is constant and given by
\eqref{eq:rho34}. Similarly, it follows from equation \eqref{eq:rhoduality2} with $(i,j,k,l)=(1,4,2,3)$ that $\rho_{14}$ is constant and given by
\begin{equation}\label{eq:rho14}
    \rho_{14}=\frac{\vartheta_\tau(-\theta_0+\theta_t+\theta_1-\theta_\infty)\theta_q(q^{-\theta_0+\theta_\infty}t_0^{-1})}{\vartheta_\tau(-\theta_0+\theta_t+\theta_1+\theta_\infty)\theta_q(q^{-\theta_0-\theta_\infty}t_0^{-1})}.
\end{equation}
Finally, using the formula for the twist parameter in Lemma \ref{lem:twist_s0t}, we find
\begin{equation*}
    \rho_{24}=\frac{\vartheta_\tau(\theta_0-\theta_t+\theta_1-\theta_\infty)\theta_q(q^{\theta_0+\theta_\infty}t_0^{-1})-Z\vartheta_\tau(-\theta_0+\theta_t+\theta_1-\theta_\infty)\theta_q(q^{-\theta_0+\theta_\infty+2\theta_t}t_0^{-1})}
    {\vartheta_\tau(\theta_0-\theta_t+\theta_1+\theta_\infty)\theta_q(q^{\theta_0-\theta_\infty}t_0^{-1})-Z\vartheta_\tau(-\theta_0+\theta_t+\theta_1+\theta_\infty)\theta_q(q^{-\theta_0-\theta_\infty+2\theta_t}t_0^{-1})},
\end{equation*}
where
\begin{equation*}
    Z=-\frac{(-t_0)^{-2\sigma_{0t}}}{s_{0t}}=-\frac{c_{0t}(-t_0)^{-2\sigma_{0t}}}{r_{0t}},
\end{equation*}
and
\begin{equation*}
    c_{0t}=\frac{\Gamma_q(1+2\theta_t)\Gamma_q(1+2\theta_0-2\theta_t)^2}{\Gamma_q(1+2\theta_0)\Gamma_q(1-2\theta_0+2\theta_t)^{\color{white} 2}}\prod_{\epsilon=\pm 1}\frac{\Gamma_q(1-\theta_0+\theta_t+\theta_1+\epsilon \theta_\infty)}{\Gamma_q(1+\theta_0-\theta_t+\theta_1+\epsilon \theta_\infty)}.
\end{equation*}

By direct substitution of the expressions for $\{\rho_{14},\rho_{24},\rho_{34}\}$ into the defining equations \eqref{eq:eta_defi} of the $\eta$-coordinates, we find that each
\begin{equation*}
    \eta_{ij}=\eta_{ij}(Z)\qquad (1\leq i<j\leq 4),
\end{equation*}
is a M\"obius transform in $Z$. For example,
\begin{equation*}
    \eta_{12}(Z)=\frac{T_{12}\rho_{14} \rho_{24}(Z)\vartheta_\tau(+\frac{1}{2},-\frac{1}{2})/\vartheta_\tau(+\theta_0,-\theta_0)}{T_{12}'\rho_{14} \rho_{24}(Z)+T_{13}'\rho_{14} \rho_{34}+T_{14}'\rho_{34} +T_{23}'\rho_{24}(Z) \rho_{34}+T_{24}'\rho_{24}(Z) +T_{34}'\rho_{34}},
\end{equation*}
where we recall that $\rho_{14}$ and $\rho_{34}$ are constant and $\rho_{24}=\rho_{24}(Z)$ is a M\"obius transform in $Z$.

Furthermore, since the connection matrix $C(z)$ is well-defined for any choice of $r_{0t}^{-1}\in \mathbb{C}$, the $\eta$-coordinates must be regular in $Z$ on $\mathbb{C}$. In other words, each of these coordinates is linear in $Z$ and $\eta=\eta(Z)$ parametrises a line in $\mathcal{F}(\Theta,t_0)$ as $Z$ varies in $\mathbb{C}$. This line must be $\widetilde{\mathcal{L}}_2^\infty$, as we have already shown that $\widetilde{\rho}_2=\infty$, regardless of the value of $r_{0t}$.

When $r_{0t}=\infty$, it follows from equation \eqref{eq:dualtyurinproof} that also $\widetilde{\rho}_1=0$. In other words, this is the intersection point of the lines $\widetilde{\mathcal{L}}_2^\infty$ and $\widetilde{\mathcal{L}}_1^0$, for which the internal connection matrix $C^i(z)$ is diagonal. On the other hand, the connection matrix is ill-defined at $r_{0t}=0$ and in terms of the $\eta$-coordinates this corresponds to the intersection point of $\widetilde{\mathcal{L}}_2^\infty$ with the hyperplane section at infinity of the Segre surface.

Next, we derive the asymptotics of the solution $(f,g)$ corresponding to any choice of $r_{0t}\in \mathbb{C}^*$. Returning to the setting in Section \ref{sec:singlecontourI}, each of the matrix functions $\Psi_\infty^e(z)$, $\Psi_0^e(z)$ and $A^e(z)$ remains well-defined under the parameter specification $\sigma_{0t}=\theta_t-\theta_0$. Similarly, each of the matrix functions $\Psi_\infty^i(\zeta)$, $\Psi_0^i(\zeta)$ and $A^i(\zeta)$ remain well-defined and they all three become upper-triangular. The same statement holds true for the matrix functions $U_\infty^{\B{m}}(z)$ and $U_0^{\B{m}}(z)$.

We now consider the quotient \eqref{eq:omegapsi}, which defines the solution $\Omega^{\B{m}}(z)$ to RHP \ref{rhp:circleI}. Due to the fact that $\Psi_\infty^i(\zeta)$ is upper-triangular, the jump matrix $J^{\B{m}}(z)$ of RHP \ref{rhp:circleI} simplifies slightly. Namely,
\begin{align*}
    J^{\B{m}}(z)&=\Psi_0^e(z)(-t_m)^{-\sigma_{0t}\sigma_3}\Psi_\infty^i(z/t_m)(-t_m)^{\sigma_{0t}\sigma_3} \Psi_0^e(z)^{-1}\\
&=\Psi_0^e(z)\begin{pmatrix}
   (\Psi_\infty^i)_{11}(z/t_m) & (-t_m)^{-2\sigma_{0t}}(\Psi_\infty^i)_{12}(z/t_m) \\
   0 & (\Psi_\infty^i)_{22}(z/t_m) 
\end{pmatrix}
\Psi_0^e(z)^{-1},
\end{align*}
which means that the coefficients $R_n^+(z)$, $n\geq 1$, in the asymptotic expansion for $J^{\B{m}}(z)$ in Lemma \ref{lem:jumpperturb} all vanish. Namely, the asymptotic expansion as $m\rightarrow +\infty$, simplifies to
\begin{equation*}
    J^{\B{m}}(z)=I+\sum_{n=1}^\infty{r_{0t}^{-1}(-t_m)^{n-2\sigma_{0t}}R_n^-(z)+(-t_m)^{n}R_n^0(z)}.
\end{equation*}

We now move to the proof of Proposition \ref{prop:omega_perturb}. By the above simplification, the coefficients $\mathcal{R}_n^+$, $n\geq 1$, in the expansion \eqref{eq:Rexpansion} for $\mathcal{R}^{\B{m}}$ vanish. The expansion is therefore absolutely convergent with respect to the operator norm from $L_2^2(\gamma_{0t})$ to $L_2^2(\gamma_{0t})$, under the weakened condition $\Re \sigma_{0t}<\tfrac{1}{2}$.

It follows that the norm of $\mathcal{R}^{\B{m}}$ is strictly smaller than $1$ for $m\geq 0$ large enough and $\mathcal{L}^{\B{m}}$ is thus invertible and given by the absolutely convergent sum \eqref{eq:Lexpansion}. It follows that the series representation of $\mathcal{L}^{\B{m}}$ in equation \eqref{eq:Lexpansionfull} remains to holds true, with all coefficients $\mathcal{L}_{k,n}=0$, for $0<k\leq n$, $n\geq 1$. A similar truncation holds true for the solution $\mu^{\B{m}}$ of the singular integral equation. Namely, the series representation \eqref{eq:muexpansion}, becomes
\begin{equation*}
\mu^{\B{m}}=I+\sum_{n=1}^\infty\sum_{k=-n}^0{r_{0t}^k (-t_m)^{n+2k\sigma_{0t}}\mu_{n,k}},
\end{equation*}
and remains absolutely convergent in $L2^2(\gamma_{0t})$ for large enough $m\geq 0$.

Equation \eqref{eq:omegadefi} then defines the unique solution $\Omega^{\B{m}}(z)$ to RHP \ref{rhp:circleI}, for large enough $m\geq 0$. In particular, the statement of Proposition \ref{prop:omega_perturb} continues to hold true. That is, there exists an $M\geq 0$, such that the solution $\Omega^{\B{m}}(z)$ of RHP \ref{rhp:circleI} exists for all $m\geq M$ and has a complete asymptotic expansion as $m\rightarrow \infty$,
\begin{equation}\label{eq:omega_perturb_truncate}
\Omega^{\B{m}}(z)=I+\sum_{n=1}^\infty\sum_{k=-n}^0 r_{0t}^k (-t_m)^{n+2k\sigma_{0t}}\Omega_{n,k}(z),
\end{equation}
which is uniformly absolutely convergent in $z\in\mathbb{CP}^1\setminus \gamma_{0t}$, with respect to the max norm, for some unique coefficients $\Omega_{n,k}(z)$, which are analytic matrix functions on $\mathbb{CP}^1\setminus \gamma_{0t}$. Furthermore, the coefficients have well-defined boundary values at $\gamma_{0t}$ from either side and the uniform absolute convergence of \eqref{eq:omega_perturb_truncate} extends to the boundary correspondingly.

Similarly, the proof of Theorem \ref{thm:linear_system_asymp_zero} remains valid, and we obtain series expansions
\begin{align}
    A_1(t_m)&=A_0^e+\sum_{n=1}^\infty\sum_{k=-n}^0 r_{0t}^k (-t_m)^{n+2k\sigma_{0t}}A_{1,n,k},\label{eq:a1exptrunc}\\
    A_0(t_m)&=\sum_{n=1}^\infty\sum_{k=-n}^0 r_{0t}^k (-t_m)^{n+2k\sigma_{0t}}A_{0,n,k},\label{eq:a0exptrunc}
\end{align}
for the coefficient matrices, which are absolutely convergent with respect to the max norm, for large enough $m\geq 0$. The only thing to note, is that $A_{0,-1,1}$ need no longer be the leading order coefficient in \eqref{eq:a0exptrunc}. since the real part of $\sigma_{0t}$ can be negative.

Finally, extracting the asymptotics of $(f,g)$ is done exactly as in the proof of Theorem \ref{thm:generic_asymp_zero}. The series expansion for $f$ in equation \eqref{eq:fserieszero} becomes
\begin{equation}\label{eq:fserieszerotrunc}
    f(t_m)=\sum_{n=1}^\infty\sum_{k=-n}^0 F_{n,k}r_{0t}^k(- t_m)^{n+2k\sigma_{0t}},
\end{equation}
and the value of $F_{1,-1}$, which need no longer be the leading order coefficient, is extracted using equation \eqref{eq:f1m1extract}. Similarly, the series expansion for $g$ in equation \eqref{eq:gserieszero} becomes
\begin{equation*}
    g(t_m)=\sum_{n=1}^\infty\sum_{k=-n}^0 G_{n,k}r_{0t}^k(- t_m)^{n+2k\sigma_{0t}},
\end{equation*}
and the value of $G_{1,-1}$ is extracted using equation \eqref{eq:G1m1extract}.
This then yields the series expansions for $(f,g)$ in Proposition\ref{prop:asymptotics_lineLd2i}, when $r_{0t}\neq \infty$.

Finally, we consider the case $r_{0t}=\infty$ and prove Corollary \ref{coro:intersectionanalytic}. We may now drop the assumption on the real part of $\sigma_{0t}$ in equation \eqref{eq:realassump}. All the series expansions in the proof above further simplify, since $\Psi^i(z)$ is now diagonal. Firstly, the asymptotic expansion for $J^{\B{m}}(z)$ in Lemma \ref{lem:jumpperturb}, becomes a power series around $t_m=0$,
\begin{equation*}
    J^{\B{m}}(z)=I+\sum_{n=1}^\infty{-t_m)^{n}R_n^0(z)},
\end{equation*}
uniformly convergent for $z\in\gamma_{0t}$.
Following the steps above, we find that the statement of Proposition \ref{prop:omega_perturb} continues to hold true and the series expansion simplifies drastically. Namely, we find that there exists an $M\geq 0$, such that the solution $\Omega^{\B{m}}(z)$ of RHP \ref{rhp:circleI} exists for all $m\geq M$ and has a power series expansion around $t_m=0$,
\begin{equation*}
\Omega^{\B{m}}(z)=I+\sum_{n=1}^\infty (-t_m)^{n}\Omega_{n,0}(z),
\end{equation*}
which is uniformly absolutely convergent in $z\in\mathbb{CP}^1\setminus \gamma_{0t}$, with respect to the max norm, for large enough $m\geq 0$, for some unique coefficients $\Omega_{n,0}(z)$, which are analytic matrix functions on $\mathbb{CP}^1\setminus \gamma_{0t}$.

The proof of Theorem \ref{thm:linear_system_asymp_zero} remains valid, and we obtain power series expansions
\begin{align*}
    A_1(t_m)&=A_0^e+\sum_{n=1}^\infty (-t_m)^{n}A_{1,n,0},\\
    A_0(t_m)&=\sum_{n=1}^\infty(-t_m)^{n}A_{0,n,k},
\end{align*}
for the coefficient matrices, which are absolutely convergent with respect to the max norm, for large enough $m\geq 0$. We then obtain the power series expansions for $(f,g)$ using equations \eqref{eq:f_in_terms_of_coef} and \eqref{eq:g_in_terms_of_coef}, making use of the formula
\begin{equation*}
    A_{0,1,0}=\Psi_0^e(0)A_0^i\Psi_0^e(0)^{-1}.
\end{equation*}
This concludes the proofs of Theorem \ref{thm:generic_asymp_zero} and Corollary \ref{coro:intersectionanalytic}.
\end{proof}

\section{Symmetries and asymptotics}\label{sec:symmetries}

In this section, we use symmetries of the $q\Psix$ equation to derive all the remaining asymptotic formulas from the results in Theorem \ref{thm:generic_asymp_zero} and Proposition \ref{prop:asymptotics_lineLd2i}.
To this end, we require six symmetries, denoted by $r_0,r_t,r_1,r_\infty, v_{t1}$ and $v_{0\infty}$, described in Table \ref{table:symmetries}. We recall the two relations between them in equation \eqref{eq:conjugation_sym}.

For each of the symmetries, we show how it can be lifted to a symmetry of the linear problem and then work out how it acts on monodromy. The results are given in Table \ref{table:symmetries}.

The symmetry $v_{t1}$ allows us to translate asymptotic results around $t=0$ to asymptotics results around $t=\infty$. Similarly, the symmetry $v_{0\infty}$ allows us to find a dual asymptotic formula to a given asymptotic formula. For example, as we will see, application of this symmetry to the results in Theorem \ref{thm:generic_asymp_zero}, gives the dual asymptotic formulas in Theorem \ref{thm:generic_asymp_zero_dual}. We remark that this trick is essentially due to Guzzetti \cite{guzzetti_solving}, who used it to find some of the last missing asymptotic formulas for $\Psix$.

We will not give a full account of the symmetry group of $q\Psix$, see for example Kajiwara et al. \cite{kajiwarareview} for a complete description. We will just note that $v_{t1}$ and $v_{0\infty}$ are related to the two Dynkin diagram automorphisms denoted by $\pi_1$ and $\pi_2$, and $r_0$, $r_t$, $r_1$ and $r_\infty$ correspond to $s_5$, $s_0$, $s_1$ and $s_4$ respectively in \cite{kajiwarareview}*{\S 8.4.4}.

In Section \ref{sec:symr0t1} we derive the action of $r_0$, $r_t$ and $r_1$ on monodromy data. In Sections \ref{sec:symrt1} and \ref{sec:symroinf} we do the same for the respective symmetries $v_{t1}$ and $v_{0\infty}$. The action of $r_\infty$ now follows by conjugation, and is deduced at the end of Section \ref{sec:symroinf}.

Then, in Section \ref{sec:extract_asymp_zero_dual}, we derive the dual asymptotics near $t=0$, using the symmetry $v_{0\infty}$, yielding  Theorem \ref{thm:generic_asymp_zero_dual}. in Section \ref{sec:extract_asymp_infty} we derive Theorems \ref{thm:generic_asymp_infty} and \ref{thm:generic_asymp_infty_dual} respectively, by application of the symmetry $v_{t1}$. 

Finally, in Section \ref{sec:lines_asymp}, the asymptotics on the remaining lines are derived as well as their points of intersection. To this end, we make use of the actions of the symmetries on the lines, which are illustrated graphically in Figure \ref{figure:action_on_lines}.

\subsection{Symmetries $\boldsymbol{r_0}$, $\boldsymbol{r_t}$ and $\boldsymbol{r_1}$}\label{sec:symr0t1}
We start with the symmetry
\begin{equation*}
    r_t:\theta_t\rightarrow -\theta_t,\quad (f,g)\mapsto (f,g).
\end{equation*}
It lifts to a trivial symmetry of the linear system,
\begin{equation*}
    r_t:A(z,t)\mapsto A(z,t),
\end{equation*}
and thus also acts trivially on the connection matrix,
\begin{equation*}
    r_t:C(z,t)\mapsto C(z,t).
\end{equation*}
The reflection $\theta_t\mapsto -\theta_t$ interchanges the zeros $x_1=q^{+\theta_t}t_0$ and $x_2=q^{-\theta_t}t_0$ of the determinant of $C(z,t_0)$, and thus acts on the Tyurin parameters as a permutation,
\begin{equation*}
   r_t: \rho_k\mapsto \rho_{\alpha_t(k)},\quad \widetilde{\rho}_k\mapsto \widetilde{\rho}_{\alpha_t(k)},\quad (1\leq k\leq 4),\qquad \alpha_t:=(1\;2)\in S_4.
\end{equation*}
We now  work out how $r_t$ acts on the $\eta$-coordinates. To this end, note that
\begin{equation*}
    T_{jk}|_{\theta_t\mapsto -\theta_t}=-q^{-2\theta_t}T_{\alpha_t(j)\alpha_t(k)}\quad (1\leq j<k\leq 4),
\end{equation*}
where we identify $T_{jk}=T_{kj}$ as before. It follows that $r_t$ acts as a permutation on the $\eta$-coordinates,
\begin{equation*}
  r_t:  \eta_{jk}\mapsto \eta_{\alpha_t(j)\alpha_t(k)}\quad (1\leq j<k\leq 4).
\end{equation*}
In particular, $r_t$ extends to an invertible linear mapping on the ambient space $\mathbb{CP}^6$, mapping $\widehat{\mathcal{F}}(\Theta,t_0)$ isomorphically onto $\widehat{\mathcal{F}}(\widehat{\Theta},t_0)$, where $\widehat{\Theta}=(\theta_0,-\theta_t,\theta_1,\theta_\infty)$. It permutes lines as follows,
\begin{equation*}
r_t: \mathcal{L}_k^\diamond\mapsto \mathcal{L}_{\alpha_t(k)}^\diamond,\quad
\widetilde{\mathcal{L}}_k^\diamond\mapsto \widetilde{\mathcal{L}}_{\alpha_t(k)}^\diamond\qquad
(\diamond\in\{0,\infty\},1\leq k\leq 4).
\end{equation*}

\subsubsection{Symmetry $r_1$}
Similarly, we have the symmetry
\begin{equation*}
    r_1:\theta_1\rightarrow -\theta_1,\quad (f,g)\mapsto (f,g).
\end{equation*}
which acts on the Tyurin parameters by
\begin{equation*}
   r_1: \rho_k\mapsto \rho_{\alpha_1(k)},\quad \widetilde{\rho}_k\mapsto \widetilde{\rho}_{\alpha_1(k)},\quad (1\leq k\leq 4),\qquad \alpha_1:=(3\;4)\in S_4.
\end{equation*}
It acts on the coefficients of $T(\rho)$ by
\begin{equation*}
    T_{jk}|_{\theta_1\mapsto -\theta_1}=-q^{-2\theta_t}T_{\alpha_1(j)\alpha_1(k)}\quad (1\leq j<k\leq 4),
\end{equation*}
and thus acts as a permutation on the $\eta$-coordinates,
\begin{equation*}
   r_1: \eta_{jk}\mapsto \eta_{\alpha_1(j)\alpha_1(k)}\quad (1\leq j<k\leq 4).
\end{equation*}
In particular, $r_1$ extends to an invertible linear mapping on the ambient space $\mathbb{CP}^6$, mapping $\mathcal{F}(\Theta,t_0)$ isomorphically onto $\mathcal{F}(\widehat{\Theta},t_0)$, where $\widehat{\Theta}=(\theta_0,\theta_t,-\theta_1,\theta_\infty)$.
It permutes lines as follows,
\begin{equation*}
r_1: \mathcal{L}_k^\diamond\mapsto \mathcal{L}_{\alpha_1(k)}^\diamond,\quad
\widetilde{\mathcal{L}}_k^\diamond\mapsto \widetilde{\mathcal{L}}_{\alpha_1(k)}^\diamond\qquad
(\diamond\in\{0,\infty\},1\leq k\leq 4).
\end{equation*}

\subsubsection{Symmetry $r_0$}
Next, we consider the symmetry
\begin{equation*}
    r_0:\theta_0\rightarrow -\theta_0,\quad (f,g)\mapsto (f,g).
\end{equation*}
It has a trivial lift to the linear system,
\begin{equation*}
    r_0:A(z,t)\mapsto A(z,t),
\end{equation*}
but acts non-trivally on the canonical solution at $z=0$, defined in equation \eqref{eq:linear_sys_solutions}. To see this, let us first note that $\Psi_0(z,t)$ and $\Psi_\infty(z,t)$ are uniquely characterised by
\begin{subequations}\label{eq:psi0infqdif}
\begin{align}
\Psi_\infty(qz,t)&=z^{-2}A(z,t)\Psi_\infty(z,t)q^{\theta_\infty \sigma_3} & \Psi_\infty(z,t)=I+\mathcal{O}(z^{-1})\quad (z\rightarrow \infty),\\
\Psi_0(qz,t)&=t^{-1}\;A(z,t)\Psi_0(z,t)q^{-\theta_0 \sigma_3} & \Psi_0(z,t)=H(t)+\mathcal{O}(z)\quad (z\rightarrow 0),
\end{align}
\end{subequations}
where $H(t)\in GL_2(\mathbb{C})$ diagonalises $A(0,t)$ and is such that the connection matrix $C(z,t)=\Psi_0(z,t)^{-1}\Psi_\infty(z,t)$ satisfies
\begin{equation}\label{eq;pure_isomonodromy}
    C(z,qt)=z\;C(z,t).
\end{equation}

We thus have the following lift of $r_0$, acting on the canonical solutions and connection matrix, as
\begin{equation*}
    r_0:\Psi_{0}(z,t)\mapsto \Psi_{0}(z,t)\sigma_1,\quad 
    \Psi_{\infty}(z,t)\mapsto \Psi_{\infty}(z,t),
\end{equation*}
and
\begin{equation*}
    r_0: C(z,t)\mapsto \sigma_1 C(z,t),
\end{equation*}
where $\sigma_1$ is the Pauli matrix
\begin{equation*}
    \sigma_1:=\begin{pmatrix}
        0 & 1\\
    1 & 0
    \end{pmatrix}.
\end{equation*}

It thus acts on the Tyurin parameters by
\begin{equation*}
   r_0: \rho_k\mapsto \rho_{k},\quad \widetilde{\rho}_k\mapsto 1/\widetilde{\rho}_{k}\qquad (1\leq k\leq 4).
\end{equation*}
It acts trivially on the coefficients of $T(\rho)$ and also acts trivially
on the $\eta$-coordinates,
\begin{equation*}
    r_0:\eta\mapsto \eta.
\end{equation*}
In particular, $r_0$ simply acts as the identity from $\mathcal{F}(\Theta,t_0)$ to $\mathcal{F}(\widehat{\Theta},t_0)$, where 
$\widehat{\Theta}=(-\theta_0,\theta_t,\theta_1,\theta_\infty)$. It permutes lines as follows,
\begin{equation*}
r_0: \mathcal{L}_k^\diamond\mapsto \mathcal{L}_{k}^\diamond,\quad
\widetilde{\mathcal{L}}_k^0\mapsto \widetilde{\mathcal{L}}_k^\infty,\quad 
\widetilde{\mathcal{L}}_k^\infty\mapsto \widetilde{\mathcal{L}}_k^0\quad
(\diamond\in\{0,\infty\},1\leq k\leq 4).
\end{equation*}

\subsection{The symmetry $\boldsymbol{v_{t1}}$} \label{sec:symrt1}
Next, we consider a slightly more involved symmetry. For the sake of simplicity, we write $f^{\B{m}}=f(q^mt_0)$, $g^{\B{m}}=g(q^mt_0)$, 
$A^{\B{m}}(z)=A(z,t_m)$ and so on. The symmetry then reads
\begin{equation*}
    v_{t1}:\Theta=(\theta_0,\theta_t,\theta_1,\theta_\infty)\mapsto \widehat{\Theta}=(\theta_0,\theta_1,\theta_t,\theta_\infty), \quad (f,g)\mapsto (\widehat{f},\widehat{g}),\quad t_0\mapsto t_0^{-1},
\end{equation*}
where
\begin{equation}\label{fgrt1}
    \widehat{f}^{\B{m}}=\frac{f^{\B{-m}}}{t_{-m}},\quad \widehat{g}^{\B{m}}=\frac{1}{q g^{\B{1-m}}}\qquad (m\in\mathbb{Z}).
\end{equation}

A lift of this symmetry to the linear system is given by
\begin{equation*}
    v_{t1}: A\mapsto \widehat{A},\quad \widehat{A}^{\B{m}}(z)=t_{-m}^{-2} A^{\B{-m}}(t_{-m}z).
\end{equation*}
To check this assertion, note that $\widehat{A}^{\B{m}}(z)$ is a degree two matrix polynomial with
\begin{equation*}
\widehat{A}^{\B{m}}(z)=t_{-m}^{-2}(t_{-m}z)^2A_2+\mathcal{O}(z)=z^2q^{-\theta_\infty \sigma_3}+\mathcal{O}(z) \quad (z\rightarrow \infty),
\end{equation*}
and
\begin{align*}
  \widehat{A}^{\B{m}}(0)=t_{-m}^{-2}A_0^{\B{-m}}=&t_{-m}^{-2}H^{\B{-m}}(t_{-m} q^{\theta_0\sigma_3})(H^{\B{-m}})^{-1}\\
  =&H^{\B{-m}}(q^mt_0^{-1} q^{\theta_0\sigma_3})(H^{\B{-m}})^{-1}.
\end{align*}
Furthermore, the determinant of $\widehat{A}^{\B{m}}(z)$ reads
\begin{align*}
    |\widehat{A}^{\B{m}}(z)|=&t_{-m}^{-4}(t_{-m}z-q^{+\theta_t}t_{-m})(t_{-m}z-q^{-\theta_t}t_{-m})(t_{-m}z-q^{+\theta_1})(t_{-m}z-q^{-\theta_1})\\
    =&(z-q^{+\theta_t})(z-q^{-\theta_t})(z-q^{+\theta_1}q^mt_0^{-1})(z-q^{-\theta_1}q^mt_0^{-1}).
\end{align*}
This shows that the action of $v_{t1}$ on the linear system is consistent with the way $v_{t1}$ acts on the parameters $\Theta$ and $t_0$. In order to check consistency with equation \eqref{fgrt1}, we use equations \eqref{eq:coordinates_linear}, giving
\begin{equation*}
\widehat{f}^{\B{m}}=-\frac{(\widehat{A}_{0}^{\B{m}})_{12}}{\widehat{A}_{1}^{\B{m}})_{12}}=
\frac{t_{-m}^{-2}(A_{0}^{\B{-m}})_{12}}{t_{-m}^{-1}A_{1}^{\B{-m}})_{12}}=\frac{f^{\B{-m}}}{t_{-m}},
\end{equation*}
and
\begin{align*}
\widehat{g}^{\B{m}}&=\frac{\widehat{A}_{22}^{\B{m}}(\widehat{f}^{\B{m}})}{q(\widehat{f}^{\B{m}}-q^{+\theta_t}) (\widehat{f}^{\B{m}}-q^{-\theta_t})}\\
&=\frac{t_{-m}^{-2}A_{22}^{\B{-m}}(t_{-m}\widehat{f}^{\B{m}})}{q(\widehat{f}^{\B{m}}-q^{+\theta_t}) (\widehat{f}^{\B{m}}-q^{-\theta_t})}\\
&=\frac{A_{22}^{\B{-m}}(f^{\B{-m}})}{q(f^{\B{-m}}-q^{+\theta_t}t_{-m}) (f^{\B{-m}}-q^{-\theta_t}t_{-m})}\\
&=\frac{(f^{\B{-m}}-q^{+\theta_1})(f^{\B{-m}}-q^{-\theta_1})g^{\B{-m}}}{(f^{\B{-m}}-q^{+\theta_t}t_{-m}) (f^{\B{-m}}-q^{-\theta_t}t_{-m})}\\
&=\frac{1}{q g^{\B{1-m}}}.
\end{align*}

Next, we consider the action of $v_{t1}$ on monodromy. To this end, we note that equations \eqref{eq:psi0infqdif} imply the following lift of $v_{t1}$ to the canonical solutions near $z=0$ and $z=\infty$,
\begin{align*}
    v_{t1}: \Psi_0\mapsto \widehat{\Psi}_0, \quad \widehat{\Psi}_0^{\B{m}}(z)=\Psi_0^{\B{-m}}(t_{-m}z),\quad
    \Psi_\infty\mapsto \widehat{\Psi}_\infty, \quad \widehat{\Psi}_\infty^{\B{m}}(z)=\Psi_\infty^{\B{-m}}(t_{-m}z).
\end{align*}
Therefore,
\begin{equation*}
     v_{t1}: C\mapsto \widehat{C},\quad \widehat{C}^{\B{m}}(z)=C^{\B{-m}}(t_{-m}z).
\end{equation*}

We remark that the system
\begin{equation*}
    Y(qz)=\widehat{A}^{\B{m}}(z)Y(z),
\end{equation*}
is not purely isomonodromic in $t$, since, in comparison to equation \eqref{eq;pure_isomonodromy},
\begin{equation*}
    \widehat{C}^{\B{m+1}}(z)=q^{-1}zq^{-\theta_0\sigma_3}\widehat{C}^{\B{m}}(z)q^{-\theta_\infty\sigma_3}.
\end{equation*}
To fix this, one could apply a further scaling
\begin{equation*}
   \widehat{\Psi}_\infty^{\B{m}}\mapsto q^{-m\theta_\infty\sigma_3} \widehat{\Psi}_\infty^{\B{m}} q^{m\theta_\infty\sigma_3},\quad
    \widehat{\Psi}_0^{\B{m}}\mapsto q^{-m}q^{-m\theta_\infty\sigma_3} \widehat{\Psi}_0^{\B{m}} q^{-m\theta_0\sigma_3},
\end{equation*}
so that
\begin{equation*}
    \widehat{A}^{\B{m}}\mapsto q^{-m\theta_\infty\sigma_3} \widehat{A}^{\B{m}} q^{m\theta_\infty\sigma_3},\quad
    \widehat{C}^{\B{m}}\mapsto q^m q^{m\theta_0\sigma_3} \widehat{C}^{\B{m}} q^{m\theta_\infty\sigma_3}.
\end{equation*}
This, however, has no impact on the Tyurin data. The action of $v_{t1}$ on them is easily computed, for example
\begin{equation*}
\widehat{\rho}_1=\pi\big[\widehat{C}^{\B{0}}\big(q^{\theta_1}t_0^{-1}\big)\big]=
    \pi[C^{\B{0}}(q^{\theta_1})]=\rho_3,
\end{equation*}
and
\begin{equation*}
\widehat{\rho}_3=\pi\big[\widehat{C}^{\B{0}}\big(q^{\theta_t}\big)\big]=
    \pi[C^{\B{0}}(q^{\theta_t}t_0)]=\rho_1.
\end{equation*}
In general, we find that $v_{t1}$ permutes them as follows,
\begin{equation}\label{eq:symmetryrt1tyurin}
   v_{t1}: \rho_k\mapsto \rho_{\alpha_{t1}(k)},\quad \widetilde{\rho}_k\mapsto \widetilde{\rho}_{\alpha_{t1}(k)},\quad (1\leq k\leq 4),\qquad \alpha_{t1}:=(1\;3)\;(2\; 4)\in S_4.
\end{equation}
It acts on the coefficients of $T(\rho)$ by
\begin{equation*}
    v_{t1}(T_{jk})=-t_0^{-2}T_{\alpha_{t1}(j)\alpha_{t1}(k)}\quad (1\leq j<k\leq 4),
\end{equation*}
and consequently we find that
\begin{equation}\label{eq:symmetryrt1eta}
   v_{t1}: \eta_{jk}\mapsto \eta_{\alpha_{t1}(j)\alpha_{t1}(k)}\quad (1\leq j<k\leq 4).
\end{equation}
In particular, $v_{t1}$ extends to an invertible linear mapping on the ambient space $\mathbb{CP}^6$, mapping $\mathcal{F}(\Theta,t_0)$ isomorphically onto $\widehat{\mathcal{F}}(\widehat{\Theta},t_0^{-1})$.
It permutes lines as follows,
\begin{equation*}
v_{t1}: \mathcal{L}_k^\diamond\mapsto \mathcal{L}_{\alpha_{t1}(k)}^\diamond,\quad
\widetilde{\mathcal{L}}_k^\diamond\mapsto \widetilde{\mathcal{L}}_{\alpha_{t1}(k)}^\diamond\qquad
(\diamond\in\{0,\infty\},1\leq k\leq 4).
\end{equation*}

\subsection{The symmetry $\boldsymbol{v_{0\infty}}$}\label{sec:symroinf}
 The final symmetry we need, reads
\begin{equation*}
    v_{0\infty}:\Theta=(\theta_0,\theta_t,\theta_1,\theta_\infty)\mapsto \widehat{\Theta}=(\theta_\infty-\tfrac{1}{2},\theta_1,\theta_t,\theta_0+\tfrac{1}{2}), \quad (f,g)\mapsto (\widehat{f},\widehat{g}),
\end{equation*}
where
\begin{equation}\label{eq:fgr0i}
    \widehat{f}^{\B{m}}=\frac{t_m}{f^{\B{m}}},\quad \widehat{g}^{\B{m}}=\frac{t_m}{q^{\frac{3}{2}} g^{\B{m}}}\qquad (m\in\mathbb{Z}).
\end{equation}

Our starting point to compute a lift of this symmetry to the linear system, is equations \eqref{eq:psi0infqdif}. Firstly, we are going to swap the role of $z=0$ and $z=\infty$, by introducing the new variable
\begin{equation*}
    \zeta=\frac{qt_m}{z},
\end{equation*}
and setting
\begin{equation*}
\widetilde{\Psi}_\infty^{\B{m}}(\zeta)=g_\infty(\zeta)\Psi_0^{\B{m}}\bigg(\frac{qt_m}{\zeta}\bigg),\quad 
\widetilde{\Psi}_0^{\B{m}}(\zeta)=g_0(\zeta)\Psi_\infty^{\B{m}}\bigg(\frac{qt_m}{\zeta}\bigg),
\end{equation*}
so that
\begin{subequations}\label{eq:psitildeqdif}
\begin{align}
\widetilde{\Psi}_\infty^{\B{m}}(q\zeta)&=\frac{g_\infty(q\zeta)}{g_\infty(\zeta)}t_m A^{\B{m}}(t_m/\zeta)^{-1}\widetilde{\Psi}_\infty^{\B{m}}(\zeta)q^{\theta_0\sigma_3},\\
   \widetilde{\Psi}_0^{\B{m}}(q\zeta)&=\frac{g_0(q\zeta)}{g_0(\zeta)}(t_m/\zeta)^2 A^{\B{m}}(t_m/\zeta)^{-1}\widetilde{\Psi}_0^{\B{m}}(\zeta)q^{-\theta_\infty\sigma_3}.
\end{align}
\end{subequations}
Next, we choose $g_0(\zeta)$ and $g_\infty(\zeta)$ precisely such that they cancel out poles of $A^{\B{m}}(t_m/\zeta)^{-1}$ and are analytic at $\zeta=0$ and $\zeta=\infty$ respectively, namely
\begin{align*}
g_\infty(\zeta)&=\big( q^{1-\theta_t}/\zeta, q^{1+\theta_t}/\zeta, q^{1-\theta_1}t_m/\zeta, q^{1+\theta_1}t_m/\zeta;q\big)_\infty,\\
g_0(\zeta)&=\big(\zeta/q^{-\theta_t},\zeta/q^{+\theta_t},\zeta/(q^{-\theta_1}t_m),\zeta/(q^{+\theta_1}t_m);q\big)_\infty^{-1}.
\end{align*}
Equations \eqref{eq:psitildeqdif} then simplify to
\begin{subequations}\label{eq:psihatqdif}
\begin{align}
\widetilde{\Psi}_\infty^{\B{m}}(q\zeta)&=\zeta^{-2} \widetilde{A}^{\B{m}}(\zeta)\widetilde{\Psi}_\infty^{\B{m}}(\zeta)q^{\theta_0\sigma_3},\\
   \widetilde{\Psi}_0^{\B{m}}(q\zeta)&=t_m^{-1}\widetilde{A}^{\B{m}}(\zeta)\widetilde{\Psi}_0^{\B{m}}(\zeta)q^{-\theta_\infty\sigma_3},
\end{align}
\end{subequations}
where
\begin{equation*}
    \widetilde{A}^{\B{m}}(\zeta)=\frac{\zeta^2}{t_m}\operatorname{adj}(A^{\B{m}}(t_m/\zeta)).
\end{equation*}

The matrix function $\widetilde{A}^{\B{m}}(\zeta)$ is a polynomial in $\zeta$,
\begin{equation*}
  \widetilde{A}^{\B{m}}(\zeta)=  \widetilde{A}_0^{\B{m}}+\zeta \widetilde{A}_1^{\B{m}}+\zeta^2 \widetilde{A}_2^{\B{m}},
\end{equation*}
with coefficients
\begin{equation*}
    \widetilde{A}_0^{\B{m}}=t_m q^{\theta_\infty\sigma_3},\quad
    \widetilde{A}_1^{\B{m}}=\operatorname{adj}(A_1^{\B{m}}),\quad
    \widetilde{A}_2^{\B{m}}=H^{\B{m}} q^{-\theta_0\sigma_3} (H^{\B{m}})^{-1}.
\end{equation*}
One could now proceed to normalise $\widetilde{A}^{\B{m}}(\zeta)$ by conjugating it by $H^{\B{m}}$, but computing the corresponding bi-rational action on $(f,g)$ is involved. Instead, we are going to apply a gauge transformation,
\begin{align*}
    \widehat{\Psi}_\infty^{\B{m}}(\zeta)&=G^{\B{m}}(\zeta)\widetilde{\Psi}_\infty^{\B{m}}(\zeta)\zeta^{\frac{1}{2}\sigma_3},\\
    \widehat{\Psi}_0^{\B{m}}(\zeta)&=G^{\B{m}}(\zeta)\widetilde{\Psi}_0^{\B{m}}(\zeta)\zeta^{\frac{1}{2}\sigma_3},
\end{align*}
where $G^{\B{m}}(\zeta)$ is lower-triangular and takes the form
\begin{equation*}
    G^{\B{m}}(\zeta)=\zeta^{\frac{1}{2}}\begin{pmatrix}
        0 & 0\\
        g_1^{\B{m}} & g_2^{\B{m}}
    \end{pmatrix}+\zeta^{-\frac{1}{2}}\begin{pmatrix}
        g_3^{\B{m}} & 0\\
        g_4^{\B{m}} & 0
    \end{pmatrix}.
\end{equation*}
The main idea behind this gauge transformation is that it will simultaneously normalise $\widetilde{A}_2^{\B{m}}$ and shift the monodromy exponents at $\zeta=\infty$ and $\zeta=0$ by $+\tfrac{1}{2}$ and $-\tfrac{1}{2}$ respectively.

To see the shift in exponents, note that, under the gauge transformation, equations \eqref{eq:psihatqdif} become
\begin{align*}
\widehat{\Psi}_\infty^{\B{m}}(q\zeta)&=\zeta^{-2} \widehat{A}^{\B{m}}(\zeta)\widehat{\Psi}_\infty^{\B{m}}(\zeta)q^{(\theta_0+\frac{1}{2})\sigma_3},\\
   \widehat{\Psi}_0^{\B{m}}(q\zeta)&=t_m^{-1}\widehat{A}^{\B{m}}(\zeta)\widehat{\Psi}_0^{\B{m}}(\zeta)q^{-(\theta_\infty-\frac{1}{2})\sigma_3},
\end{align*}
where
\begin{align}\label{eq:Ahatdefi}
    \widehat{A}^{\B{m}}(\zeta)&=G^{\B{m}}(q\zeta)\widetilde{A}^{\B{m}}(\zeta)G^{\B{m}}(\zeta)^{-1}\\
 &=\frac{\zeta^2}{t_m}G^{\B{m}}(q\zeta)\operatorname{adj}(A^{\B{m}}(t_m/\zeta))G^{\B{m}}(\zeta)^{-1}.\label{eq:Ahatdefi2}
\end{align}

Furthermore, the general shape of $G^{\B{m}}(\zeta)$ was chosen such that $\widehat{\Psi}_0^{\B{m}}(\zeta)$ is automatically analytic around $\zeta=0$, indeed
\begin{align*}
  \widehat{\Psi}_0^{\B{m}}(\zeta)&=G^{\B{m}}(\zeta)g_0(\zeta)\Psi_\infty^{\B{m}}\bigg(\frac{qt_m}{\zeta}\bigg)\zeta^{\frac{1}{2}\sigma_3}\\
  &=G^{\B{m}}(\zeta)(I+\mathcal{O}(\zeta))\zeta^{\frac{1}{2}\sigma_3}\\
  &=\left(\zeta\begin{pmatrix}
        0 & 0\\
        g_1^{\B{m}} & g_2^{\B{m}}
    \end{pmatrix}+\begin{pmatrix}
        g_3^{\B{m}} & 0\\
        g_4^{\B{m}} & 0
    \end{pmatrix}\right)(I+\mathcal{O}(\zeta))\begin{pmatrix}
        1 & 0\\
        0 & \zeta^{-1}
    \end{pmatrix},\\
    &=\begin{pmatrix}
        g_3^{\B{m}} & 0\\
        g_4^{\B{m}} & 0
    \end{pmatrix}(I+\mathcal{O}(\zeta))\begin{pmatrix}
        0 & 0\\
        0 & \zeta^{-1}
    \end{pmatrix}+\mathcal{O}(1)\\
    &=\mathcal{O}(1),
\end{align*}
as $\zeta\rightarrow 0$.

The remaining condition that we have to impose, which will complete determine the coefficients of $G^{\B{m}}(\zeta)$, is that
\begin{equation}\label{eq:asymptotic_condition}
    \widehat{\Psi}_\infty^{\B{m}}(\zeta)=I+\mathcal{O}(\zeta^{-1})\qquad (\zeta\rightarrow \infty).
\end{equation}
Let us write
\begin{equation*}
    \Psi_0^{\B{m}}(z)=H^{\B{m}}(I+z\;U^{\B{m}}+\mathcal{O}(z^2)),
\end{equation*}
then condition \eqref{eq:asymptotic_condition} is equivalent to
\begin{equation*}
g_\infty(\zeta) \left(\zeta\begin{pmatrix}
        0 & 0\\
        g_1^{\B{m}} & g_2^{\B{m}}
    \end{pmatrix}+\begin{pmatrix}
        g_3^{\B{m}} & 0\\
        g_4^{\B{m}} & 0
    \end{pmatrix}\right) H^{\B{m}}\left(I+\frac{qt_m}{\zeta}U^{\B{m}}\right)\begin{pmatrix}
        1 & 0\\
        0 & \zeta^{-1}
    \end{pmatrix}=I+\mathcal{O}(\zeta^{-1}),
\end{equation*}
as $\zeta\rightarrow \infty$.
Using the fact that
\begin{equation*}
    g_\infty(\zeta)=1-\frac{q^{\theta_t}+q^{-\theta_t}+t_m(q^{\theta_1}+q^{-\theta_1})}{(q-1)\zeta},
\end{equation*}
this asymptotic condition is equivalent to the following four algebraic equations,
\begin{align*}
    &h_{11}^{\B{m}} g_1^{\B{m}}+h_{21}^{\B{m}}g_2^{\B{m}}=0,\\
    &h_{21}^{\B{m}} g_1^{\B{m}}+h_{22}^{\B{m}}g_2^{\B{m}}=1,\\
    &h_{11}^{\B{m}}g_3^{\B{m}}=1,\\
    &h_{11}^{\B{m}}g_4^{\B{m}}=-qt_m U_{21}^{\B{m}}.
\end{align*}
We  thus find
\begin{equation*}
    g_1^{\B{m}}=-\frac{h_{21}^{\B{m}}}{|H^{\B{m}}|},\quad
    g_2^{\B{m}}=\frac{h_{11}^{\B{m}}}{|H^{\B{m}}|},\quad
    g_3^{\B{m}}=\frac{1}{h_{11}^{\B{m}}},\quad
    g_4^{\B{m}}=-\frac{qt_m U_{21}^{\B{m}}}{h_{11}^{\B{m}}},
\end{equation*}
and in particular $|G^{\B{m}}(\zeta)|\equiv 1$.

Next, we calculate the action of $v_{0\infty}$ on $(f,g)$ using formulas \eqref{eq:coordinates_linear} and check that they match with \eqref{eq:fgr0i}. Firstly, we note that $\widehat{f}^{\B{m}}$ is defined as the unique value of $\zeta$ for which $\widehat{A}^{\B{m}}(\zeta)$ is lower-triangular. Since $G^{\B{m}}(\zeta)$ is lower-triangular, it follows from equation \eqref{eq:Ahatdefi} that $\widehat{A}^{\B{m}}(\zeta)$ is lower triangular if and only if $\widetilde{A}^{\B{m}}(\zeta)$ is lower-triangular, for $\zeta=\widehat{f}^{\B{m}}$. In turn, the latter holds true if and only if $\operatorname{adj}(A^{\B{m}}(t_m/\zeta))$ is lower-triangular, which is equivalent to $t_m/\zeta=f^{\B{m}}$. It follows $\widehat{f}^{\B{m}}$ is need given by the formula in \eqref{eq:fgr0i}.

To compute $\widehat{g}^{\B{m}}$, we note that, by equation \eqref{eq:Ahatdefi2},
\begin{equation*}
     \widehat{A}^{\B{m}}(\widehat{f}^{\B{m}})
 =\frac{t_m}{(f^{\B{m}})^2}G^{\B{m}}(qf^{\B{m}})\begin{pmatrix}
     A_{22}^{\B{m}}(f^{\B{m}}) & 0\\
     -A_{21}^{\B{m}}(f^{\B{m}}) & A_{11}^{\B{m}}(f^{\B{m}})
 \end{pmatrix}G^{\B{m}}(f^{\B{m}})^{-1}.
\end{equation*}
and thus
\begin{equation*}
    \widehat{A}_{22}^{\B{m}}(\widehat{f}^{\B{m}})=q^{\frac{1}{2}}\frac{t_m}{(f^{\B{m}})^2}A_{11}^{\B{m}}(f^{\B{m}}).
\end{equation*}
Using this identity, we compute $\widehat{g}^{\B{m}}$ as follows,
\begin{align*}
    \widehat{g}^{\B{m}}&=\frac{\widehat{A}_{22}^{\B{m}}(\widehat{f}^{\B{m}})}{q(\widehat{f}^{\B{m}}-q^{\theta_t})(\widehat{f}^{\B{m}}-q^{-\theta_t})}\\
    &=q^{\frac{1}{2}}\frac{t_m}{(f^{\B{m}})^2}\frac{A_{11}^{\B{m}}(f^{\B{m}})}{q(t_m/f^{\B{m}}-q^{\theta_t})(t_m/f^{\B{m}}-q^{-\theta_t})}\\
    &=\frac{|A^{\B{m}}(f^{\B{m}})|/A_{22}^{\B{m}}(f^{\B{m}})}{q^{\frac{1}{2}}(f^{\B{m}}-q^{\theta_t}t_m)(f^{\B{m}}-q^{-\theta_t}t_m)}\\
    &=\frac{|A^{\B{m}}(f^{\B{m}})|}{q^{\frac{3}{2}}(f^{\B{m}}-q^{\theta_t}t_m)(f^{\B{m}}-q^{-\theta_t}t_m)(f^{\B{m}}-q^{\theta_1})(f^{\B{m}}-q^{-\theta_1})g^{\B{m}}}\\
    &=\frac{1}{\frac{3}{2}g^{\B{m}}},
\end{align*}
where in the last equality we used equation \eqref{eq:detA} for the determinant of $A^{\B{m}}(z)$.

We have now shown that
\begin{equation*}
    v_{0\infty}: A\mapsto \widehat{A},\quad \widehat{A}^{\B{m}}(\zeta)=\frac{\zeta^2}{t_m}G^{\B{m}}(q\zeta)\operatorname{adj}(A^{\B{m}}(t_m/\zeta))G^{\B{m}}(\zeta)^{-1},
\end{equation*}
lifts the action of $v_{0\infty}$ to the linear problem.
Next, we consider the action of $v_{0\infty}$ on monodromy. We have already constructed a corresponding lift of $v_{0\infty}$ to the canonical solutions,
\begin{align*}
    &v_{0\infty}: \Psi_0\mapsto \widehat{\Psi}_0, & \widehat{\Psi}_0^{\B{m}}(\zeta)&=g_0(\zeta)G^{\B{m}}(\zeta)\Psi_\infty^{\B{m}}(qt_m/\zeta)\zeta^{\frac{1}{2}\sigma_3},\\
    &v_{0\infty}:
    \Psi_\infty\mapsto \widehat{\Psi}_\infty, & \widehat{\Psi}_\infty^{\B{m}}(\zeta)&=g_\infty(\zeta)G^{\B{m}}(\zeta)\Psi_0^{\B{m}}(qt_m/\zeta)\zeta^{\frac{1}{2}\sigma_3},
\end{align*}
and thus $v_{0\infty}$ acts on the connection matrix as,
\begin{equation*}
     v_{0\infty}: C\mapsto \widehat{C},
\end{equation*}
where
\begin{align*}
    \widehat{C}^{\B{m}}(\zeta)&=\widehat{\Psi}_0^{\B{m}}(\zeta)^{-1}\widehat{\Psi}_\infty^{\B{m}}(\zeta)\\
    &=\zeta^{-\frac{1}{2}\sigma_3}\frac{g_\infty(\zeta)}{g_0(\zeta)} \Psi_\infty^{\B{m}}(qt_m/\zeta)^{-1}\Psi_0^{\B{m}}(qt_m/\zeta)\zeta^{\frac{1}{2}\sigma_3}\\
    &=\zeta^{-\frac{1}{2}\sigma_3}\frac{g_\infty(\zeta)}{g_0(\zeta)}C^{\B{m}}(qt_m/\zeta)^{-1}\zeta^{\frac{1}{2}\sigma_3}\\
    &=\zeta^{-\frac{1}{2}\sigma_3}\frac{|C^{\B{m}}(qt_m/\zeta)|}{c_m}C^{\B{m}}(qt_m/\zeta)^{-1}\zeta^{\frac{1}{2}\sigma_3}\\
    &=c_m^{-1}\zeta^{-\frac{1}{2}\sigma_3}\operatorname{adj}(C^{\B{m}}(qt_m/\zeta))\zeta^{\frac{1}{2}\sigma_3},\\
    &=\frac{\zeta^2}{c_mt_m}\zeta^{-\frac{1}{2}\sigma_3}q^{-\theta_\infty\sigma_3}\operatorname{adj}(C^{\B{m}}(t_m/\zeta))q^{-\theta_0\sigma_3}\zeta^{\frac{1}{2}\sigma_3},
\end{align*}
for some $c_m\in\mathbb{C}^*$, where in the fourth equality we used equation \eqref{eq:connectiondet} for the determinant of $C^{\B{m}}(z)$.

We remark that the system
\begin{equation*}
    Y(q\zeta)=\widehat{A}^{\B{m}}(\zeta)Y(\zeta),
\end{equation*}
is not purely isomonodromic in $t$, since
\begin{equation*}
    \widehat{C}^{\B{m+1}}(\zeta)=\frac{c_m}{c_{m+1}}\zeta q^{-\theta_\infty\sigma_3}\widehat{C}^{\B{m}}(\zeta)q^{-\theta_0\sigma_3}.
\end{equation*}
We could resolve this by a further scaling, which does not affect the Tyurin data. 

We proceed to compute the action of $v_{0\infty}$ on the Tyurin parameters. To compute $\widehat{\rho}_1$, we proceed as follows,
\begin{align*}
    \widehat{\rho}_1&=\pi[\widehat{C}^{\B{0}}(q^{\widehat{\theta}_t}t_0)]\\
    &=\pi[\widehat{C}^{\B{0}}(q^{\theta_1}t_0)]\\
    &=\pi\left[\frac{(q^{\theta_1}t_0)^2}{c_m t_m}(q^{\theta_1}t_0)^{-\frac{1}{2}\sigma_3}q^{-\theta_\infty\sigma_3}\operatorname{adj}(C^{\B{0}}(q^{-\theta_1}))q^{-\theta_0\sigma_3}(q^{\theta_1}t_0)^{\frac{1}{2}\sigma_3}\right]\\
    &=\pi\left[\operatorname{adj}(C^{\B{0}}(q^{-\theta_1}))q^{-\theta_0\sigma_3}(q^{\theta_1}t_0)^{\frac{1}{2}\sigma_3}\right]\\
    &=\pi\left[\operatorname{adj}(C^{\B{0}}(q^{-\theta_1}))\right]q^{-2\theta_0} q^{\theta_1}t_0\\
    &=-q^{-2\theta_0}\frac{q^{\theta_1}t_0}{\widetilde{\rho}_4},
\end{align*}
where, in the last equality, we used that
\begin{equation*}
    \pi[\operatorname{adj}(R)]=-1/\pi(R^T),
\end{equation*}
for any rank one $2\times 2$ matrix.

The action of $v_{0\infty}$ on the other Tyurin parameters is computed similarly, yielding
\begin{equation*}
    \widehat{\rho}_k=-q^{-2\theta_0} \frac{t_0}{x_{\beta(k)}\widetilde{\rho}_{\beta(k)}},\quad \widehat{\widetilde{\rho}}_k=-q^{-2\theta_\infty} \frac{x_{\beta(k)}}{t_0\rho_{\beta(k)}}\qquad (1\leq k\leq 4),
\end{equation*}
where $\beta=(1\;4)\;(2\;3)\in S_4$. Now, recall that the Tyurin parameters are only determined up to scalar multiplication by the corrersponding solution. In other words, we can get rid of the common factors $-q^{-2\theta_0}$ and $-q^{-2\theta_\infty}$ and multiply by $t_{0}^{\pm \frac{1}{2}}$ and write the action of $v_{0\infty}$ on the Tyurin parameters as
\begin{equation}\label{eq:tyurin_parametersr0infty}
 v_{0\infty}:   \rho_k\mapsto t_0^{\frac{1}{2}}\frac{1}{x_{\beta(k)}\widetilde{\rho}_{\beta(k)}},\quad \widetilde{\rho}_k\mapsto t_0^{-\frac{1}{2}}\frac{x_{\beta(k)}}{\rho_{\beta(k)}}\qquad (1\leq k\leq 4),
\end{equation}
Alternatively, we could have accomplished this by a further scaling of $\widehat{A}$, $\widehat{\Psi}_0$, $\widehat{\Psi}_\infty$ and $\widehat{C}$. The normalisation is chosen such that applying $ v_{0\infty}$ twice maps $\rho_k$ to $1\cdot \rho_k$, $1\leq k\leq 4$, rather than a non-trivial scalar multiple. For example,
\begin{align*}
v_{0\infty}^2 \left[\rho_1\right]&=v_{0\infty}\left[t_0^{\frac{1}{2}}\frac{1}{x_4\widetilde{\rho}_{4}}\right]=v_{0\infty}\left[t_0^{\frac{1}{2}}q^{\theta_1}\frac{1}{\widetilde{\rho}_{4}}\right]\\
&=t_0^{\frac{1}{2}}q^{\theta_t}v_{0\infty}[\widetilde{\rho}_{4}]^{-1}
=t_0^{\frac{1}{2}}q^{\theta_t}\big(t_0^{-\frac{1}{2}} x_1/\rho_1\big)^{-1}=\rho_1.
\end{align*}

The action of $v_{0\infty}$ on the $\eta$-coordinates can now be computed, yielding
\begin{equation*}
   v_{0\infty}: \eta\mapsto \widehat{\eta},
\end{equation*}
where $\widehat{\eta}=\widehat{\eta}(\eta,\Theta,t_0)$ a rational function in $\eta$. Furthermore, since $v_{0\infty}$ maps the solution space of $q\Psix(\Theta,t_0)$ bijectively onto the solution space of $q\Psix(\widehat{\Theta},t_0)$, it must be regular on $\mathcal{F}(\Theta,t_0)$ and map it biholomorphically onto $\mathcal{F}(\widehat{\Theta},t_0)$. Furthermore, it permutes lines as follows,
\begin{equation*}
\mathcal{L}_k^0\mapsto \widetilde{\mathcal{L}}_{\beta(k)}^\infty,\quad
\mathcal{L}_k^\infty\mapsto \widetilde{\mathcal{L}}_{\beta(k)}^0,\quad
\widetilde{\mathcal{L}}_k^0\mapsto \mathcal{L}_{\beta(k)}^\infty,\quad
\widetilde{\mathcal{L}}_k^\infty\mapsto \mathcal{L}_{\beta(k)}^0\quad (1\leq k\leq 4),
\end{equation*}
where $\beta=(1\;4)\;(2\;3)\in S_4$.

A unique linear extension of $v_{0\infty}$ to the ambient space $\mathbb{CP}^6$ can now be computed, by simply evaluating the action of $v_{0\infty}$ on $\eta$ at $7$ explicit points in $\mathcal{F}(\Theta,t_0)$ and consequently determining the coefficients of the linear mapping. We will, however, not need an explicit expression for the linear extension of $v_{0\infty}$.

We finish this section with computing the action of the symmetry $r_\infty$,
\begin{equation*}
    r_\infty:\theta_\infty\rightarrow 1-\theta_\infty,\quad (f,g)\mapsto (f,g).
\end{equation*}
Its action on monodromy follows by conjugation, due to equation \eqref{eq:conjugation_sym}.
In particular, $r_\infty$ acts on the Tyurin parameters by
\begin{equation*}
   r_\infty: \rho_k\mapsto \frac{x_k^2}{t_0\rho_{k}},\quad \widetilde{\rho}_k\mapsto \widetilde{\rho}_{k}\qquad (1\leq k\leq 4).
\end{equation*}
It acts on the coefficients of $T(\rho)$ by
\begin{equation*}
    r_{\infty}(T_{ij})=\frac{t_0^2}{x_ix_j}T_{kl}\qquad (\{i,j,k,l\}=\{1,2,3,4\}),
\end{equation*}
so that
\begin{equation*}
    r_{\infty}: T_{ij}\rho_i\rho_j\mapsto T_{kl}\frac{1}{\rho_i\rho_j},
\end{equation*}
and, consequently, we find that
\begin{equation*}
   r_{\infty}: \eta_{ij}\mapsto \eta_{kl}\qquad (\{i,j,k,l\}=\{1,2,3,4\}).
\end{equation*}
In particular, $r_{\infty}$ extends to an invertible linear mapping on the ambient space $\mathbb{CP}^6$, mapping $\mathcal{F}(\Theta,t_0)$ isomorphically onto $\widehat{\mathcal{F}}(\widehat{\Theta},t_0)$,
where $\widehat{\Theta}=(\theta_0,\theta_t,\theta_1,1-\theta_\infty)$.
It permutes lines as follows,
\begin{equation*}
r_{\infty}: \mathcal{L}_k^0\mapsto \mathcal{L}_k^\infty,\quad \mathcal{L}_k^\infty\mapsto \mathcal{L}_k^0,\quad
\widetilde{\mathcal{L}}_k^\diamond\mapsto \widetilde{\mathcal{L}}_k^\diamond \qquad
(\diamond\in\{0,\infty\},1\leq k\leq 4).
\end{equation*}

\subsection{Dual asymptotics near $\boldsymbol{t=0}$}\label{sec:extract_asymp_zero_dual}
In this section, we derive the dual asymptotic expansions for solutions of $q\Psix$ around $t=0$, essentially by applying the symmetry $v_{0\infty}$ to the results in Theorem \ref{thm:generic_asymp_zero}.
\begin{proof}[Proof of Theorem \ref{thm:generic_asymp_zero_dual} and Remark \ref{remark:weakeningzerodual}]
To obtain the dual asymptotic expansions of solutions around $t=0$, we use the symmetry $v_{0\infty}$, following Guzzetti \cite{guzzetti_solving}. Let us take a $0$-generic point $\eta\in\mathcal{F}(\Theta,t_0)$, see Definition \ref{def:generic0}. Let $(f,g)$ denote the corresponding solution of $q\Psix(\Theta,t_0)$. By applying $v_{0\infty}$, we obtain a corresponding solution $(\widehat{f},\widehat{g})$ of $q\Psix(\widehat{\Theta},t_0)$, with parameters
\begin{equation*}
    \widehat{\Theta}=(\theta_\infty-\tfrac{1}{2},\theta_1,\theta_t,\theta_0+\tfrac{1}{2}).
\end{equation*}
To this solution corresponds a unique point $\widehat{\eta}=v_{0\infty}(\eta)\in \mathcal{F}(\widehat{\Theta},t_0)$ and we know the values of the correponding (dual) Tyurin ratios, due to equations \eqref{eq:tyurin_parametersr0infty}. Of particular relevance for the current proof, are the values of
\begin{equation*}
\widehat{\rho}_{34}=v_{0\infty}(\rho_{34})=q^{2\theta_t}\widetilde{\rho}_{12},\qquad 
\widehat{\rho}_{24}=v_{0\infty}(\rho_{24})=q^{\theta_t-\theta_1}t_0\widetilde{\rho}_{13}.
\end{equation*}

The solutions $(f,g)$ and $(\widehat{f},\widehat{g})$ are related by
\begin{align}\label{eq:fgandtheirhats}
    \frac{t_m}{f(t_m)}&=\widehat{f}(t_m),\\
    \frac{t_m}{q^{\frac{3}{2}}g(t_m)}&=\widehat{g}(t_m),
\end{align}
and the crux of the proof lies in applying Theorem \ref{thm:generic_asymp_zero} to the right-hand sides of the above equations, i.e. to the solution $(\widehat{f},\widehat{g})$. In order to apply the theorem, the first thing we have to do is check that $\widehat{\eta}$ is $0$-generic. To this end, we consider the equation
\begin{equation}\label{eq:exponent_equation}
    \frac{\vartheta_\tau(\widehat{\sigma}-\widehat{\theta}_1+\widehat{\theta}_\infty,\widehat{\sigma}+\widehat{\theta}_1-\widehat{\theta}_\infty)}{\vartheta_\tau(\widehat{\sigma}+\widehat{\theta}_1+\widehat{\theta}_\infty,\widehat{\sigma}-\widehat{\theta}_1-\widehat{\theta}_\infty)}=\widehat{\rho}_{34}.
\end{equation}
Through substitution of parameter values and application of the basic identities of $\theta_\tau(\cdot)$, this equation becomes
\begin{equation*}
    \frac{\vartheta_\tau(\sigma-\theta_t+\theta_0,\sigma+\theta_t-\theta_0)}{\vartheta_\tau(\sigma+\theta_t+\theta_0,\sigma-\theta_t-\theta_0)}=\widetilde{\rho}_{12},\quad \sigma=\tfrac{1}{2}-\widehat{\sigma}.
\end{equation*}
Through the identification of dual Tyurin parameters $
(\widetilde{\rho}_1,\widetilde{\rho}_2)=(\widetilde{\rho}_1^i,\widetilde{\rho}_2^i)$, see equation \eqref{eq:tyurin_identification}, and application of equation \eqref{eq:tyurinqoutientdual}, we find precisely
\begin{equation}\label{eq:rhodual12}
    \widetilde{\rho}_{12}=\frac{\vartheta_\tau(\sigma_{0t}-\theta_t+\theta_0,\sigma_{0t}+\theta_t-\theta_0)}{\vartheta_\tau(\sigma_{0t}+\theta_t+\theta_0,\sigma_{0t}-\theta_t-\theta_0)}.
\end{equation}
This tells us that $\widehat{\sigma}_{0t}=\tfrac{1}{2}-\sigma_{0t}$ solves equation \eqref{eq:exponent_equation}. Furthermore, note that
\begin{equation}\label{eq:realityofsigma}
    0<\Re \widehat{\sigma}_{0t}<\tfrac{1}{2}, 
\end{equation}
and the conditions \eqref{eq:line_conditions} are preserved exactly under $v_{0\infty}$, with $v_{0\infty}:\sigma_{0t}\mapsto \widehat{\sigma}_{0t}$. We conclude that $\widehat{\eta}$ is $0$-generic and we may thus apply Theorem \ref{thm:generic_asymp_zero} to the solution $(\widehat{f},\widehat{g})$. This gives the asymptotic expansions in the theorem, where
\begin{equation*}
    \widehat{r}_{0t}=\widehat{c}_{0t}\widehat{s}_{0t},\qquad \widehat{c}_{0t}:=c_{0t}|_{\Theta\mapsto \widehat{\Theta},\sigma_{0t}\mapsto \widehat{\sigma}_{0t}},
\end{equation*}
and
\begin{equation*}
 \widehat{s}_{0t}=-(-t_0)^{-2\widehat{\sigma}_{0t}}M_{0t}|_{\Theta\mapsto \widehat{\Theta},\sigma_{0t}\mapsto \widehat{\sigma}_{0t}}(\widehat{\rho}_{24}).   
\end{equation*}

There are now two ways to proceed. The easy way would be to compare asymptotic behaviours on both sides of equation \eqref{eq:fgandtheirhats} to obtain
\begin{equation}\label{eq:productofranditsdual}
  r_{0t}\widehat{r}_{0t}=-F_{1,-1}(\Theta,\sigma_{0t})F_{1,-1}(\widehat{\Theta},\widehat{\sigma}_{0t}),
\end{equation}
from which the remaining equations in the theorem follow. However, this approach breaks down when $\Re \widehat{\sigma}_{0t}=0$, corresponding to the cases in Remark \ref{remark:weakeningzerodual}.

Instead, we have to go the hard way. Firstly, by application of equation \eqref{eq:rhoduality2} with $(i,j,k,l)=(2,3,1,4)$ we obtain
\begin{equation*}
\rho_{24}=\rho_{34}\rho_{23}=\rho_{34}\frac{x_3}{x_2}M_{(2,3,1,4)}\left( \widetilde{\rho}_{12};\theta_\infty,\theta_0\right)M_{(3,4,1,2)}\left(\widetilde{\rho}_{13},\theta_\infty,\theta_0\right).
\end{equation*}
Through substitution of this equation for $\rho_{24}$ into the formula for the twist parameter, equation \eqref{eq:twist0t}, as well as the equations for $\widetilde{\rho}_{12}$ and $\rho_{34}$ in equations \eqref{eq:rhodual12} and \eqref{eq:elliptic_equation0},
and simplification using standard manipulations of $q$-theta functions, we arrive at the alternative formula for the twist parameter \eqref{eq:twist0talt}. It follows from this alternative formula that
\begin{equation*}
    s_{0t}\widehat{s}_{0t}=-q^{1-\theta_t-\theta_1}.
\end{equation*}
Furthermore, a direct computation yields
\begin{equation*}
    c_{0t}\widehat{c}_{0t}=q^{-1+\theta_t+\theta_1}F_{1,-1}(\Theta,\sigma_{0t})F_{1,-1}(\widehat{\Theta},\widehat{\sigma}_{0t}),
\end{equation*}
which yields the formula \eqref{eq:productofranditsdual} and Theorem \ref{thm:generic_asymp_zero_dual} follows.
\end{proof}

\subsection{Generic asymptotics at $\boldsymbol{t=\infty}$}\label{sec:extract_asymp_infty}
In this section, we make use of the symmetry $v_{t1}$ to derive the generic asymptotics around $t=\infty$ of solutions, proving Theorem \ref{thm:generic_asymp_infty}. We then employ $v_{0\infty}$ to obtain the dual asymptotic expansions around $t=\infty$ in Theorem \ref{thm:generic_asymp_infty_dual}.
\begin{proof}[Proof of Theorem \ref{thm:generic_asymp_infty} and Remark \ref{remark:coefficients}]
To obtain the asymptotic expansions of solutions around $t=\infty$, we make use of the symmetry $v_{t1}$.
 Let us take a $\infty$-generic point $\eta\in\mathcal{F}(\Theta,t_0)$, see Definition \ref{def:genericinfty}. Let $(f,g)$ denote the corresponding solution of $q\Psix(\Theta,t_0)$. By applying $v_{t1}$, we obtain a corresponding solution $(\widehat{f},\widehat{g})$ of $q\Psix(\widehat{\Theta},t_0^{-1})$, with parameters
\begin{equation*}
    \widehat{\Theta}=(\theta_0,\theta_1,\theta_t,\theta_\infty).
\end{equation*}
To this solution corresponds a unique point $\widehat{\eta}=v_{0\infty}(\eta)\in \mathcal{F}(\widehat{\Theta},t_0^{-1})$, given by
\begin{equation*}
    \widehat{\eta}_{ij}=\eta_{\alpha_{t1}(i)\alpha_{t1}(j)}\qquad (1\leq i<j\leq 4),
\end{equation*}
see equation \eqref{eq:symmetryrt1eta}. By equation \eqref{eq:symmetryrt1tyurin}, the corresponding Tyurin parameters are given by
\begin{equation*}
    \widehat{\rho}=(\rho_3,\rho_4,\rho_1.\rho_2).
\end{equation*}
In particular, we have the following important identities among the Tyurin ratios,
\begin{equation*}
\widehat{\rho}_{34}=v_{t1}(\rho_{34})=\rho_{12},\qquad 
\widehat{\rho}_{24}=v_{t1}(\rho_{24})=1/\rho_{24}.
\end{equation*}

The solutions $(f,g)$ and $(\widehat{f},\widehat{g})$ are related by
\begin{align}\label{eq:fgandtheirhatsinfty}
    \frac{f(t_m)}{t_m}&=\widehat{f}(1/t_m),\\
    \frac{1}{g(t_m)}&=q\widehat{g}(q/t_m),
\end{align}
and the crux of the proof lies in applying Theorem \ref{thm:generic_asymp_zero} to the right-hand sides of the above equations, i.e. to the solution $(\widehat{f},\widehat{g})$. To this end, we start by considering the corresponding exponent $\widehat{\sigma}_{0t}$, which by definition satisfies
\begin{equation}\label{eq:exponent_equation_infty}
    \frac{\vartheta_\tau(\widehat{\sigma}-\widehat{\theta}_1+\widehat{\theta}_\infty,\widehat{\sigma}+\widehat{\theta}_1-\widehat{\theta}_\infty)}{\vartheta_\tau(\widehat{\sigma}+\widehat{\theta}_1+\widehat{\theta}_\infty,\widehat{\sigma}-\widehat{\theta}_1-\widehat{\theta}_\infty)}=\widehat{\rho}_{34}.
\end{equation}
Substituting the values of the parameters and recalling that $\widehat{\rho}_{34}=\rho_{12}$, this is precisely equation \eqref{eq:elliptic_equationinf}. In particular, a solution is given by $\widehat{\sigma}_{0t}=\sigma_{01}$. Furthermore, it follows that $\widehat{\eta}$ is $0$-generic, precisely because $\eta$ is $\infty$-generic. In particular, we can apply Theorem \ref{thm:generic_asymp_zero} to $(\widehat{f},\widehat{g})$. This yields the asymptotic expansions in Theorem \ref{thm:generic_asymp_infty}, with coefficients given in Remark \ref{remark:coefficients}. The remaining formulas in the theorem now simply follow by applying $v_{t1}$ to the corresponding formulas in Theorem \ref{thm:generic_asymp_zero}. In other words, we just replace
\begin{equation*}
\Theta\mapsto \widehat{\Theta},\quad \sigma_{0t}\mapsto \sigma_{01}, \quad t_0\mapsto t_0^{-1},\quad  \rho_{24}\mapsto 1/\rho_{24},
\end{equation*}
everywhere, so that $c_{0t}\mapsto c_{01}$ and $s_{0t}\mapsto s_{01}$, and Theorem \ref{thm:generic_asymp_infty} follows.
\end{proof}

\begin{proof}[Proof of Theorem \ref{thm:generic_asymp_infty_dual}]
The proof is analogous to the proof of Theorem \ref{thm:generic_asymp_zero_dual}, following again Guzzetti \cite{guzzetti_solving}.
We start by taking a $\infty$-generic point $\eta\in\mathcal{F}(\Theta,t_0)$ and let $(f,g)$ denote the corresponding solution of $q\Psix(\Theta,t_0)$. By applying $v_{0\infty}$, we obtain a corresponding solution $(\widehat{f},\widehat{g})$ of $q\Psix(\widehat{\Theta},t_0)$, with parameters
\begin{equation*}
    \widehat{\Theta}=(\theta_\infty-\tfrac{1}{2},\theta_1,\theta_t,\theta_0+\tfrac{1}{2}).
\end{equation*}
To this solution corresponds a unique point $\widehat{\eta}=v_{0\infty}(\eta)\in \mathcal{F}(\widehat{\Theta},t_0)$ and we know the values of the correponding (dual) Tyurin ratios, due to equations \eqref{eq:tyurin_parametersr0infty}. Of particular relevance for the current proof, are the values of
\begin{equation*}
\widehat{\rho}_{12}=v_{0\infty}(\rho_{12})=q^{2\theta_1}\widetilde{\rho}_{34},\qquad 
\widehat{\rho}_{24}=v_{0\infty}(\rho_{24})=q^{\theta_t-\theta_1}t_0\widetilde{\rho}_{13}.
\end{equation*}

The solutions $(f,g)$ and $(\widehat{f},\widehat{g})$ are related by
\begin{align*}
    \frac{1}{f(t_m)}&=\frac{\widehat{f}(t_m)}{t_m},\\
    \frac{q^{\frac{3}{2}}g(t_m)}{t_m}&=\frac{1}{\widehat{g}(t_m)},
\end{align*}
and the crux of the proof lies in applying Theorem \ref{thm:generic_asymp_infty} to the right-hand sides of the above equations, i.e. to the solution $(\widehat{f},\widehat{g})$, following the same steps as in the proof Theorem \ref{thm:generic_asymp_zero_dual}, yielding Theorem \ref{thm:generic_asymp_infty_dual}.
\end{proof}

\subsection{Degenerate asymptotics on lines}\label{sec:lines_asymp}

In this section, we consider truncated asymptotics on the remaining lines and in particular prove Theorem \ref{thm:line_truncations}.

\begin{proof}[Proof of Theorem \ref{thm:line_truncations}]
Note that the content of Proposition \ref{prop:asymptotics_lineLd2i} is precisely the truncation on line $\widetilde{\mathcal{L}}_2^\infty$ indicated in Theorem \ref{thm:line_truncations}. We prove the truncations on the other lines through application of the symmetries in Table \ref{table:symmetries}. Here, Figure \ref{figure:action_on_lines}, which displays the action of the symmetries on the lines, can be helpful.

Firstly, we apply $r_0$ to the truncated asymptotics on the line $\widetilde{\mathcal{L}}_2^\infty$. By simply replacing $\theta_0\mapsto -\theta_0$ in every formula in Proposition \ref{prop:asymptotics_lineLd2i}, we obtain the truncated asymptotics on $\widetilde{\mathcal{L}}_2^0$. 

Next, we apply $r_t$ to the truncated asymptotics on the lines $\widetilde{\mathcal{L}}_2^\infty$ and $\widetilde{\mathcal{L}}_2^0$, to obtain the truncated asymptotics on the lines $\widetilde{\mathcal{L}}_1^\infty$ and $\widetilde{\mathcal{L}}_1^0$ respectively. Practically, this means, for example, simply replacing
\begin{equation*}
    \theta_t\mapsto -\theta_t,\quad \rho_{24}\mapsto \rho_{14},
\end{equation*}
in every formula in Proposition \ref{prop:asymptotics_lineLd2i}, to obtain the truncated asymptotics on $\widetilde{\mathcal{L}}_1^\infty$.

This gives all the statements in Theorem \ref{thm:line_truncations} on truncations of the asymptotic expansions around $t=0$ in Theorem \ref{thm:generic_asymp_zero}.
By applying $v_{0\infty}$ to each of these cases, we obtain all the statements in Theorem \ref{thm:line_truncations} on truncations of the dual asymptotic expansions around $t=0$ in Theorem \ref{thm:generic_asymp_zero_dual}.
Similarly, by applying $v_{t1}$ to each of the cases, we obtain all the statements in Theorem \ref{thm:line_truncations} on truncations of the asymptotic expansions around $t=\infty$ in Theorem \ref{thm:generic_asymp_infty}. Finally, by applying $v_{t1}v_{0\infty}$  to each of the cases, we obtain all the statements in Theorem \ref{thm:line_truncations} on truncations of the dual asymptotic expansions around $t=\infty$ in Theorem \ref{thm:generic_asymp_infty_dual}. This proves theorem \ref{thm:line_truncations}.
\end{proof}

For convenience of the reader, a full statement of the truncated asymptotics on one of the other lines of the Segre surface is given in the following proposition.
\begin{proposition}[Asymptotics on $\widetilde{\mathcal{L}}_1^0$] \label{prop:asymptotics_line_tilde01}
Assume $\theta_t-\theta_0\notin \frac{1}{2}\Lambda_\tau$ and $\Re (\theta_t-\theta_0)>-\tfrac{1}{2}$.
Take a point $\eta$ in the affine Segre surface $\mathcal{F}(\Theta,t_0)$ which lies on the line $\widetilde{\mathcal{L}}_1^0$ and let $(f,g)$ denote the corresponding solution of $q\Psix(\Theta,t_0)$. Denote $t_m=q^mt_0$, $m\in\mathbb{Z}$, then $f$ and $g$ have complete and convergent asymptotic expansions as $t_m\rightarrow 0$ of the form,
\begin{align*}
    f(t_m)&=\sum_{n=1}^\infty\sum_{k=0}^n F_{n,k}r_{0t}^k(- t_m)^{n+2k(\theta_t-\theta_0)},\\
    g(t_m)&=\sum_{n=1}^\infty\sum_{k=0}^n G_{n,k}r_{0t}^k(- t_m)^{n+2k(\theta_t-\theta_0)},
\end{align*}
where
\begin{align*}
    F_{1,1}&=\frac{\bigl(q^{-\theta_t}-q^{\theta_t}\bigr)\bigl(q^{\theta_1+\theta_\infty+\theta_0-\theta_t}-1\bigr)}{\bigl(q^{\theta_1+\theta_\infty-\theta_0+\theta_t}-1\bigr)\bigl(q^{2\theta_t-2\theta_0}-1\bigr)},  & G_{1,1}&=-q^{-1-\theta_t+\theta_0}F_{1,-1},\\
    F_{1,0}&=\frac{q^{\theta_0}-q^{-\theta_0}}{q^{\theta_t-\theta_0}-q^{\theta_0-\theta_t}}, &
    G_{1,0}&=\frac{q^{\theta_t}-q^{-\theta_t}}{q^{\theta_0-\theta_t}-q^{\theta_t-\theta_0}},
\end{align*}
and the higher order coefficients may be computed recursively via the $q\Psix$ equation.
Here the branches of the complex powers are principal and the integration constant $r_{0t}$ is given by
\begin{equation*}
    r_{0t}=c_{0t}\times s_{0t},
\end{equation*}
with
\begin{align*}
c_{0t}&=\frac{\Gamma_q(1-2(\theta_t-\theta_0))^2\Gamma_q(1-2\theta_0)}{\Gamma_q(1+2(\theta_t-\theta_0))^2\Gamma_q(1-2\theta_t)}\prod_{\epsilon=\pm1}\frac{ \Gamma_q(1+\theta_1+\epsilon\,\theta_\infty+\theta_t-\theta_0)}{ \Gamma_q(1+\theta_1+\epsilon\,\theta_\infty-\theta_t+\theta_0)},\\
s_{0t}&=-(-t_0)^{-2(\theta_t-\theta_0)}M_0(\rho_{14}),
\end{align*}
where $M_0(\cdot)$ is the M\"obius transformation
\begin{equation*}
    M_0(Z)=\frac{\vartheta_\tau(\theta_1-\theta_\infty+\theta_t-\theta_0)\theta_q(q^{\theta_\infty-\theta_0}t_0^{-1})-Z
\vartheta_\tau(\theta_1+\theta_\infty+\theta_t-\theta_0)\theta_q(q^{-\theta_0-\theta_\infty}t_0^{-1})}{\vartheta_\tau(\theta_1-\theta_\infty-\theta_t+\theta_0)\theta_q(q^{\theta_0+\theta_\infty-2\theta_t}t_0^{-1})-Z
\vartheta_\tau(\theta_0+\theta_\infty-\theta_t+\theta_0)\theta_q(q^{\theta_0-\theta_\infty-2\theta_t}t_0^{-1})}.
\end{equation*}
\end{proposition}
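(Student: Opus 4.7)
The plan is to derive this statement by applying the symmetries $r_0$ and $r_t$ of $q\Psix$, listed in Table \ref{table:symmetries}, to transport Proposition \ref{prop:asymptotics_lineLd2i} from the line $\widetilde{\mathcal{L}}_2^\infty$ to the line $\widetilde{\mathcal{L}}_1^0$. This is the route already indicated in the proof of Theorem \ref{thm:line_truncations} (via the intermediate step $\widetilde{\mathcal{L}}_2^\infty \to \widetilde{\mathcal{L}}_2^0 \to \widetilde{\mathcal{L}}_1^0$ under $r_0$ followed by $r_t$), but I will reorganize the conclusion so that the exponent appears in the symmetric form $2k(\theta_t-\theta_0)$ with $k\geq 0$.

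First I would set $\widehat{\Theta}=(-\theta_0,-\theta_t,\theta_1,\theta_\infty)$. The composition $r_0\circ r_t$ realises a bijection between solutions of $q\Psix(\Theta,t_0)$ and solutions of $q\Psix(\widehat{\Theta},t_0)$ which acts as the identity on $(f,g)$ and on $t$. On monodromy data, it induces an isomorphism $\mathcal{F}(\Theta,t_0)\xrightarrow{\sim}\mathcal{F}(\widehat{\Theta},t_0)$ sending $\widetilde{\mathcal{L}}_1^0(\Theta)$ to $\widetilde{\mathcal{L}}_2^\infty(\widehat{\Theta})$ and sending the Tyurin ratio $\rho_{24}$ of the hatted problem to $\rho_{14}$ of the original one, via the transposition $\alpha_t=(1\ 2)$. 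The hypothesis $\Re(\theta_t-\theta_0)>-\tfrac{1}{2}$ of the present proposition is exactly $\Re(\widehat{\theta}_t-\widehat{\theta}_0)<\tfrac{1}{2}$, so Proposition \ref{prop:asymptotics_lineLd2i} applies to $(f,g)$ viewed as a solution of $q\Psix(\widehat{\Theta},t_0)$ whose monodromy lies on $\widetilde{\mathcal{L}}_2^\infty(\widehat{\Theta})$.

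Proposition \ref{prop:asymptotics_lineLd2i} then yields expansions of the form
\begin{equation*}
f(t_m)=\sum_{n=1}^\infty\sum_{k=-n}^0 F_{n,k}(\widehat{\Theta},\widehat{\sigma}_{0t})\,\widehat{r}_{0t}^{\;k}(-t_m)^{n+2k\widehat{\sigma}_{0t}},
\end{equation*}
and similarly for $g$, with $\widehat{\sigma}_{0t}=\widehat{\theta}_t-\widehat{\theta}_0=\theta_0-\theta_t$. Reindexing $k\mapsto -k$ converts the summation to $k\in\{0,\ldots,n\}$ with powers $(-t_m)^{n+2k(\theta_t-\theta_0)}$. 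Defining $r_{0t}$ as the reciprocal of $\widehat{r}_{0t}$, up to an overall sign that is fixed by the normalization of the twist formula below, then identifies the new coefficients with $F_{n,k}:=F_{n,-k}(\widehat{\Theta},\widehat{\sigma}_{0t})$, and similarly for $G_{n,k}$. The explicit values of $F_{1,\pm 1}$, $G_{1,\pm 1}$ in the statement follow by direct substitution of $\widehat{\Theta}$ and $\widehat{\sigma}_{0t}$ into the leading-order formulas of Proposition \ref{prop:asymptotics_lineLd2i}; the entries $F_{1,0}$ and $G_{1,0}$ depend symmetrically on $\sigma_{0t}\leftrightarrow-\sigma_{0t}$ and thus are unchanged.

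The main obstacle is verifying the twist formula $r_{0t}=c_{0t}\cdot s_{0t}$ with $s_{0t}=-(-t_0)^{-2(\theta_t-\theta_0)}M_0(\rho_{14})$. Under the symmetry, the argument $\rho_{24}$ in the hatted problem is the $\rho_{14}$ of the original one, while the passage from $\widehat{r}_{0t}$ to its reciprocal inverts the M\"obius factor. Using $\vartheta_\tau(-x)=\vartheta_\tau(x)$ together with the $q$-theta identities \eqref{eq:qtheta_identities}, one checks that the denominator of $M_0$ as stated matches the numerator of the transported $M_{0t}(\widehat{\Theta},\widehat{\sigma}_{0t})$ and vice versa, so that $M_0=1/M_{0t}(\widehat{\Theta},\widehat{\sigma}_{0t})$. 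The rearrangement of $c_{0t}(\widehat{\Theta},\widehat{\sigma}_{0t})$ into the form stated in the proposition is a routine use of the reflection symmetry $\Gamma_q(1+x)\Gamma_q(1-x)$-type identities combined with the substitution $\widehat{\sigma}_{0t}=-(\theta_t-\theta_0)$. The bookkeeping of signs from the reindexing, from the inversion of $M_{0t}$, and from the normalisation of $r_{0t}$ is therefore the principal source of delicacy, but it involves no analytic ingredient beyond Proposition \ref{prop:asymptotics_lineLd2i} itself.
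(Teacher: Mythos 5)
Your proposal is correct and follows essentially the same route as the paper: the paper's proof likewise obtains the result by applying the substitution $\theta_0\mapsto-\theta_0$, $\theta_t\mapsto-\theta_t$, $\rho_{24}\mapsto\rho_{14}$ (i.e.\ $r_0$ and $r_t$) to Proposition \ref{prop:asymptotics_lineLd2i}, then reindexing $k\mapsto -k$ with the corresponding inversions $c_{0t}\mapsto 1/c_{0t}$ and $s_{0t}\mapsto 1/s_{0t}$. Your bookkeeping of the hypotheses, the Tyurin-ratio relabelling, and the inversion of the M\"obius transform all match the paper's (much terser) argument.
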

\begin{proof}
 The proposition is obtained by the method described in the proof of Theorem \ref{thm:line_truncations}. Explicitly, it is obtained by replacing
 \begin{equation*}
     \theta_0\mapsto-\theta_0,\quad \theta_t\mapsto-\theta_t,\quad \rho_{24}\mapsto \rho_{14}
 \end{equation*}
 in Proposition \ref{prop:asymptotics_lineLd2i}, and then changing the inner index $k\mapsto -k$ and correspondingly changing $c_{0t}\mapsto 1/c_{0t}$ and $s_{0t}\mapsto 1/s_{0t}$.
\end{proof}
\begin{remark}
If we consider $r_{0t}$ as a free variable, then $\eta=\eta(r_{0t})$ traces out the line $\widetilde{\mathcal{L}}_1^0$ in the Segre surface $\widehat{\mathcal{F}}$. The value $r_{0t}=0$ corresponds to the intersection point of this line with the line $\widetilde{\mathcal{L}}_2^\infty$, see Corollary \ref{coro:intersectionanalytic}. The value $r_{0t}=\infty$ corresponds to the intersection point of the line with the hyperplane section at infinity, $X=\widehat{\mathcal{F}}\setminus \mathcal{F}$, and is thus excluded.
\end{remark}

\subsection{Double and doubly degenerate asymptotics on intersection points}\label{sec:asymptotics_intersection}
In this section, we study the asymptotics of solutions corresponding to the intersection points of lines in the Segre surface. In particular, we prove Theorems \ref{thm:kanekoohyama} and \ref{thm:orbitII}.  For convenience of the reader, the three orbits of the intersection points under the action of the symmetries in Table \ref{table:symmetries} are given in Figures \ref{figure:kaneko_ohyama_orbit}, \ref{figure:orbit_I} and \ref{figure:orbit_II}.

\begin{figure}[t]
\begin{tikzpicture}
  \matrix (m) [matrix of math nodes, row sep=3em,
    column sep=3em]{
    & {\color{red}\widetilde{\mathcal{L}}_3^0\cap \widetilde{\mathcal{L}}_4^\infty}& & {\color{red}\widetilde{\mathcal{L}}_3^\infty\cap \widetilde{\mathcal{L}}_4^0}  & \\
    {\color{blue}\widetilde{\mathcal{L}}_1^0\cap \widetilde{\mathcal{L}}_2^\infty} & & {\color{blue}\widetilde{\mathcal{L}}_1^\infty\cap \widetilde{\mathcal{L}}_2^0}  &  &  \\
    & {\color{red}\mathcal{L}_1^0\cap \mathcal{L}_2^\infty} & & {\color{red}\mathcal{L}_1^\infty\cap \mathcal{L}_2^0}  &   \\
    {\color{blue}\mathcal{L}_3^0\cap\mathcal{L}_4^\infty} & & {\color{blue}\mathcal{L}_3^\infty\cap\mathcal{L}_4^0}  &  &    \\};
  \path[-stealth]
    (m-1-2) edge[<->] node[above,xshift=-7mm] {$r_t\text{/}r_{\infty}$} (m-1-4) edge[<->] node[above left,xshift=1mm,yshift=-1mm] {$v_{t1}$} (m-2-1)
            edge[<->]   [densely dotted] node[right,yshift=8mm] {$v_{0\infty}$} (m-3-2)
    (m-1-4) edge[<->] node[right,yshift=8mm] {$v_{0\infty}$} (m-3-4) edge[<->] node[above left,xshift=1mm,yshift=-1mm] {$v_{t1}$} (m-2-3)
    (m-2-1) edge[<->] [-,line width=6pt,draw=white] (m-2-3)
            edge[<->] node[above,xshift=-7mm] {$r_t\text{/}r_0$} (m-2-3) edge[<->] node[right,yshift=8mm] {$v_{0\infty}$} (m-4-1)
    (m-3-2) edge[<->] [densely dotted] node[above,xshift=-7mm] {$r_t\text{/}r_\infty$} (m-3-4)
            edge[<->] [densely dotted] node[above left,xshift=1mm,yshift=-1mm] {$v_{t1}$} (m-4-1)
    (m-4-1) edge[<->] node[above,xshift=-7mm] {$r_1\text{/}r_\infty$} (m-4-3)
    (m-3-4) edge[<->] node[above left,xshift=1mm,yshift=-1mm] {$v_{t1}$} (m-4-3)
    (m-2-3) edge[<->]  [-,line width=6pt,draw=white] (m-4-3)
            edge[<->] node[right,yshift=8mm] {$v_{0\infty}$} (m-4-3);
\end{tikzpicture}
\caption{Orbit of Kaneko-Ohyama points under action of symmetries in Table \ref{table:symmetries}. We omit loops and write e.g. $r_t\text{/}r_\infty$ when either symmetry interchanges the two vertices.}
\label{figure:kaneko_ohyama_orbit}
\end{figure}
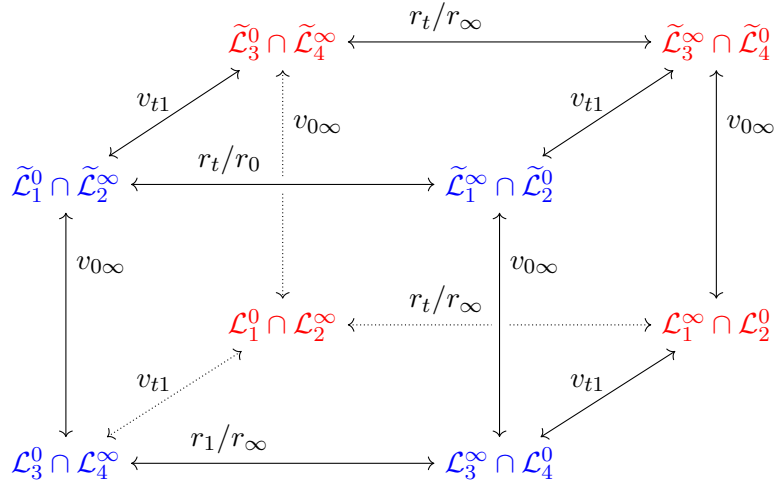

\begin{proof}[Proof of Theorem \ref{thm:kanekoohyama}]
Due to Corollary \ref{coro:intersectionanalytic}, we already know that the solution $(f,g)$ corresponding to the intersection point of $\widetilde{\mathcal{L}}_1^0$ and $\widetilde{\mathcal{L}}_2^\infty$, has asymptotics around $t=0$ given by a convergent power series.

Now, we note that the Kaneko-Ohyama points, as defined in the beginning of Section \ref{subsection:intersect}, form a closed orbit under the symmetries $r_0$, $r_t$, $r_1$, $r_\infty$, $v_{t1}$ and $v_{0\infty}$, displayed in Figure \ref{figure:kaneko_ohyama_orbit}. An easy way to derive this figure, is to note that each Kaneko-Ohyama point corresponds to a diagonal on one the front or back faces of the two cubes in Figure \ref{figure:action_on_lines}. Application of any of the symmetries sends such diagonals to other such diagonals. The actions of the symmetries on Kaneko-Ohyama points in Figure \ref{figure:kaneko_ohyama_orbit} is now easily derived.

At the same time, each symmetry sends Kaneko-Ohyama solutions to Kaneko-Ohyama solutions.
Proving the theorem is now a matter of direct calculations. We discuss one such calculation in detail. Our starting point is the Kaneko-Ohyama solution $(f_0,g_0)$ of  $q\Psix(\Theta,t_0)$ corresponding to the intersection point of $\widetilde{\mathcal{L}}_1^0$ and $\widetilde{\mathcal{L}}_2^\infty$, whose power series expansion around $t=0$ is given in Corollary \ref{coro:intersectionanalytic}. We apply the symmetry $v_{0\infty}$, yielding the solution
\begin{equation*}
    f_*(t_m)=\frac{t_m}{f_0(t_m)},\quad 
    g_*(t_m)=\frac{t_m}{q^{\frac{3}{2}}g(t_m)},
\end{equation*}
of $q\Psix(\widehat{\Theta},t_0)$, where $\widehat{\Theta}=(\theta_\infty-\tfrac{1}{2},\theta_1,\theta_t,\theta_0+\tfrac{1}{2})$, corresponding to the Kaneko-Ohyama point $\mathcal{L}_3^0\cap\mathcal{L}_4^\infty$ in $\mathcal{F}(\widehat{\Theta},t_0)$.
It has a convergent power series expansion around $t=0$, of the form
\begin{align*}
    f_*(t_m)&=\sum_{n=0}^\infty F_{n}^*(- t_m)^{n}, &
    g_*(t_m)&=\sum_{n=0}^\infty G_{n}^*(- t_m)^{n},\\
    F_{0}^*&=\frac{q^{\theta_0-\theta_t}-q^{-\theta_0+\theta_t}}{q^{\theta_0}-q^{-\theta_0}}, &
    G_{0}^*&=-q^{-\frac{1}{2}}\frac{q^{\theta_0-\theta_t}-q^{-\theta_0+\theta_t}}{q^{\theta_t}-q^{-\theta_t}}.
\end{align*}
Next, we undo the change of parameters, by setting
\begin{align*}
    f_1(t_m)&=f_*(t_m)|_{\Theta\mapsto \widehat{\Theta}},\\
    g_1(t_m)&=g_*(t_m)|_{\Theta\mapsto \widehat{\Theta}},
\end{align*}
so that $(f_1,g_1)$ is the solution of $q\Psix(\Theta,t_0)$ corresponding to the Kaneko-Ohyama point $\mathcal{L}_3^0\cap\mathcal{L}_4^\infty$ in $\mathcal{F}(\Theta,t_0)$. The power series expansion of $(f_1,g_1)$ around $t=0$ is then given in (2) of Theorem \ref{thm:kanekoohyama} with $\pm=-$.

Furthermore, the existence of the original Kaneko-Ohyama solution $(f_0,g_0)$ and corresponding point is predicated on $\theta_0-\theta_t\notin \tfrac{1}{2}\Lambda_\tau$. Under the change of parameter values, this condition becomes $\theta_\infty-\theta_1-\tfrac{1}{2}\notin \tfrac{1}{2}\Lambda_\tau$, which is equivalent to $\theta_\infty-\theta_1\notin \tfrac{1}{2}\Lambda_\tau$.

This proves (2) of Theorem \ref{thm:kanekoohyama} with $\pm=-$. The other ones are derived by similar applications of the symmetries, yielding the theorem.
\end{proof}

\begin{figure}[t]
\begin{tikzpicture}
  \matrix (m) [matrix of math nodes, row sep=3em,
    column sep=1em]{
    & {\color{blue}\widetilde{\mathcal{L}}_1^\infty}\cap {\color{red}\mathcal{L}_1^0}& & {\color{blue}\widetilde{\mathcal{L}}_1^\infty}\cap {\color{red}\mathcal{L}_1^\infty}& & {\color{blue}\mathcal{L}_3^0}\cap {\color{red}\widetilde{\mathcal{L}}_3^\infty} &  & {\color{blue}\mathcal{L}_3^\infty}\cap {\color{red}\widetilde{\mathcal{L}}_3^\infty}\\
    {\color{blue}\widetilde{\mathcal{L}}_1^0}\cap {\color{red}\mathcal{L}_1^0} & & {\color{blue}\widetilde{\mathcal{L}}_1^0}\cap {\color{red}\mathcal{L}_1^\infty} &  & {\color{blue}\mathcal{L}_3^0}\cap {\color{red}\widetilde{\mathcal{L}}_3^0} &  & {\color{blue}\mathcal{L}_3^\infty}\cap {\color{red}\widetilde{\mathcal{L}}_3^0} &\\
    & {\color{blue}\widetilde{\mathcal{L}}_2^\infty}\cap {\color{red}\mathcal{L}_2^0} & & {\color{blue}\widetilde{\mathcal{L}}_2^\infty}\cap {\color{red}\mathcal{L}_2^\infty}  &  & {\color{blue}\mathcal{L}_4^0}\cap {\color{red}\widetilde{\mathcal{L}}_4^\infty} &  & {\color{blue}\mathcal{L}_4^\infty}\cap {\color{red}\widetilde{\mathcal{L}}_4^\infty}\\
    {\color{blue}\widetilde{\mathcal{L}}_2^0}\cap {\color{red}\mathcal{L}_2^0} & & {\color{blue}\widetilde{\mathcal{L}}_2^0}\cap {\color{red}\mathcal{L}_2^\infty} &  & {\color{blue}\mathcal{L}_4^0}\cap {\color{red}\widetilde{\mathcal{L}}_4^0} &  & {\color{blue}\mathcal{L}_4^\infty}\cap {\color{red}\widetilde{\mathcal{L}}_4^0} & \\};
  \path[-stealth]
    (m-1-2) edge[<->] node[above,xshift=-5mm] {$r_\infty$} (m-1-4) edge[<->] node[above left,xshift=1mm,yshift=-1mm] {$r_0$} (m-2-1)
            edge[<->]   [densely dotted] node[right,yshift=8mm] {$r_t$} (m-3-2)
    (m-1-4) edge[<->] node[right,yshift=8mm] {$r_t$} (m-3-4) edge[<->] node[above left,xshift=1mm,yshift=-1mm] {$r_0$} (m-2-3)
    (m-2-1) edge[<->] [-,line width=6pt,draw=white] (m-2-3)
            edge[<->] node[above,xshift=-5mm] {$r_\infty$} (m-2-3) edge[<->] node[right,yshift=8mm] {$r_t$} (m-4-1)
    (m-3-2) edge[<->] [densely dotted] node[above,xshift=-5mm] {$r_\infty$} (m-3-4)
            edge[<->] [densely dotted] node[above left,xshift=1mm,yshift=-1mm] {$r_0$} (m-4-1)
    (m-4-1) edge[<->] node[above,xshift=-5mm] {$r_\infty$} (m-4-3)
    (m-3-4) edge[<->] node[above left,xshift=1mm,yshift=-1mm] {$r_0$} (m-4-3)
    (m-2-3) edge[<->]  [-,line width=6pt,draw=white] (m-4-3)
            edge[<->] node[right,yshift=8mm] {$r_t$} (m-4-3);
    \path[-stealth]
    (m-1-6) edge[<->] node[above,xshift=-5mm] {$r_\infty$} (m-1-8) edge[<->] node[above left,xshift=1mm,yshift=-1mm] {$r_0$} (m-2-5)
            edge[<->]   [densely dotted] node[right,yshift=8mm] {$r_1$} (m-3-6)
    (m-1-8) edge[<->] node[right,yshift=8mm] {$r_1$} (m-3-8) edge[<->] node[above left,xshift=1mm,yshift=-1mm] {$r_0$} (m-2-7)
    (m-2-5) edge[<->] [-,line width=6pt,draw=white] (m-2-7)
            edge[<->] node[above,xshift=-5mm] {$r_\infty$} (m-2-7) edge[<->] node[right,yshift=8mm] {$r_1$} (m-4-5)
    (m-3-6) edge[<->] [densely dotted] node[above,xshift=-5mm] {$r_\infty$} (m-3-8)
            edge[<->] [densely dotted] node[above left,xshift=1mm,yshift=-1mm] {$r_0$} (m-4-5)
    (m-4-5) edge[<->] node[above,xshift=-5mm] {$r_\infty$} (m-4-7)
    (m-3-8) edge[<->] node[above left,xshift=1mm,yshift=-1mm] {$r_0$} (m-4-7)
    (m-2-7) edge[<->]  [-,line width=6pt,draw=white] (m-4-7)
            edge[<->] node[right,yshift=8mm] {$r_1$} (m-4-7);

\node (c1) at (m-1-2) {};
\node (c2) at (m-2-3) {};
\node (c12) at ($(c1)!0.5!(c2)$) {};

\node (d1) at (m-1-6) {};
\node (d2) at (m-2-7) {};
\node (d12) at ($(d1)!0.5!(d2)$) {};

\path[<->]  (c12)  edge   [bend left=38]   node[above] {$v_{t1}$} (d12);
\end{tikzpicture}
\caption{The orbit of the intersection point $\widetilde{\mathcal{L}}_1^0\cap\mathcal{L}_1^0$ under the symmetries in Table \ref{table:symmetries}. Loops are omitted and the action of $v_{0\infty}$ is not displayed. The symmetry $v_{t1}$ interchanges the vertices of the two cubes, realised by simply swapping them. The symmetry $v_{t1}\circ v_{0\infty}$ can be realised in the figure as rotating each cube $180^{\circ}$ along the axis of symmetry that passes through the midpoints of the right vertical edge on the front face and the left vertical edge on the back face.}
\label{figure:orbit_I}
\end{figure}
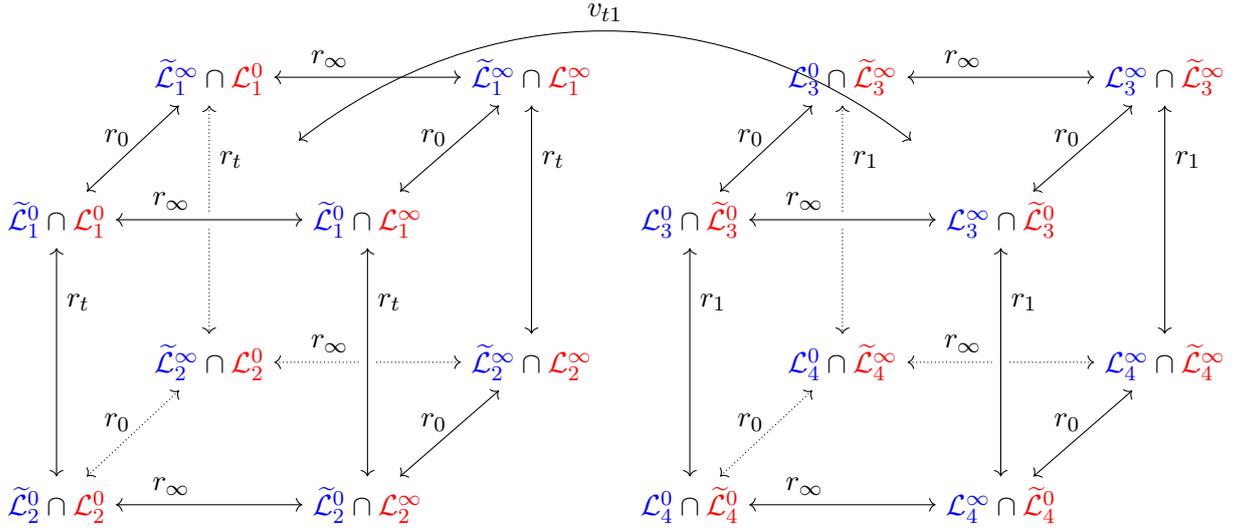

\begin{figure}[th!]
\begin{tikzpicture}
  \matrix (m) [matrix of math nodes, row sep=3em,
    column sep=1em]{
    & {\color{blue}\widetilde{\mathcal{L}}_2^0}\cap {\color{red}\widetilde{\mathcal{L}}_3^\infty}& & {\color{blue}\widetilde{\mathcal{L}}_2^0}\cap {\color{red}\widetilde{\mathcal{L}}_4^\infty}& & {\color{blue}\mathcal{L}_3^\infty}\cap {\color{red}\mathcal{L}_2^0} &  & {\color{blue}\mathcal{L}_3^\infty}\cap {\color{red}\mathcal{L}_1^0}\\
    {\color{blue}\widetilde{\mathcal{L}}_1^0}\cap {\color{red}\widetilde{\mathcal{L}}_3^\infty} & & {\color{blue}\widetilde{\mathcal{L}}_1^0}\cap {\color{red}\widetilde{\mathcal{L}}_4^\infty} &  & {\color{blue}\mathcal{L}_4^\infty}\cap {\color{red}\mathcal{L}_2^0} &  & {\color{blue}\mathcal{L}_4^\infty}\cap {\color{red}\mathcal{L}_1^0} &\\
    & {\color{blue}\widetilde{\mathcal{L}}_2^\infty}\cap {\color{red}\widetilde{\mathcal{L}}_3^0} & & {\color{blue}\widetilde{\mathcal{L}}_2^\infty}\cap {\color{red}\widetilde{\mathcal{L}}_4^0}  &  & {\color{blue}\mathcal{L}_3^0}\cap {\color{red}\mathcal{L}_2^\infty} &  & {\color{blue}\mathcal{L}_3^0}\cap {\color{red}\mathcal{L}_1^\infty}\\
    {\color{blue}\widetilde{\mathcal{L}}_1^\infty}\cap {\color{red}\widetilde{\mathcal{L}}_3^0} & & {\color{blue}\widetilde{\mathcal{L}}_1^\infty}\cap {\color{red}\widetilde{\mathcal{L}}_4^0} &  & {\color{blue}\mathcal{L}_4^0}\cap {\color{red}\mathcal{L}_2^\infty} &  & {\color{blue}\mathcal{L}_4^0}\cap {\color{red}\mathcal{L}_1^\infty} & \\};
  \path[-stealth]
    (m-1-2) edge[<->] node[above,xshift=-5mm] {$r_1$} (m-1-4) edge[<->] node[above left,xshift=1mm,yshift=-1mm] {$r_t$} (m-2-1)
            edge[<->]   [densely dotted] node[right,yshift=8mm] {$r_0$} (m-3-2)
    (m-1-4) edge[<->] node[right,yshift=8mm] {$r_0$} (m-3-4) edge[<->] node[above left,xshift=1mm,yshift=-1mm] {$r_t$} (m-2-3)
    (m-2-1) edge[<->] [-,line width=6pt,draw=white] (m-2-3)
            edge[<->] node[above,xshift=-5mm] {$r_1$} (m-2-3) edge[<->] node[right,yshift=8mm] {$r_0$} (m-4-1)
    (m-3-2) edge[<->] [densely dotted] node[above,xshift=-5mm] {$r_1$} (m-3-4)
            edge[<->] [densely dotted] node[above left,xshift=1mm,yshift=-1mm] {$r_t$} (m-4-1)
    (m-4-1) edge[<->] node[above,xshift=-5mm] {$r_1$} (m-4-3)
    (m-3-4) edge[<->] node[above left,xshift=1mm,yshift=-1mm] {$r_t$} (m-4-3)
    (m-2-3) edge[<->]  [-,line width=6pt,draw=white] (m-4-3)
            edge[<->] node[right,yshift=8mm] {$r_0$} (m-4-3);
    \path[-stealth]
    (m-1-6) edge[<->] node[above,xshift=-5mm] {$r_t$} (m-1-8) edge[<->] node[above left,xshift=1mm,yshift=-1mm] {$r_1$} (m-2-5)
            edge[<->]   [densely dotted] node[right,yshift=8mm] {$r_\infty$} (m-3-6)
    (m-1-8) edge[<->] node[right,yshift=8mm] {$r_\infty$} (m-3-8) edge[<->] node[above left,xshift=1mm,yshift=-1mm] {$r_1$} (m-2-7)
    (m-2-5) edge[<->] [-,line width=6pt,draw=white] (m-2-7)
            edge[<->] node[above,xshift=-5mm] {$r_t$} (m-2-7) edge[<->] node[right,yshift=8mm] {$r_\infty$} (m-4-5)
    (m-3-6) edge[<->] [densely dotted] node[above,xshift=-5mm] {$r_t$} (m-3-8)
            edge[<->] [densely dotted] node[above left,xshift=1mm,yshift=-1mm] {$r_1$} (m-4-5)
    (m-4-5) edge[<->] node[above,xshift=-5mm] {$r_t$} (m-4-7)
    (m-3-8) edge[<->] node[above left,xshift=1mm,yshift=-1mm] {$r_1$} (m-4-7)
    (m-2-7) edge[<->]  [-,line width=6pt,draw=white] (m-4-7)
            edge[<->] node[right,yshift=8mm] {$r_\infty$} (m-4-7);

\node (c1) at (m-1-2) {};
\node (c2) at (m-2-3) {};
\node (c12) at ($(c1)!0.5!(c2)$) {};

\node (d1) at (m-1-6) {};
\node (d2) at (m-2-7) {};
\node (d12) at ($(d1)!0.5!(d2)$) {};

\path[<->]  (c12)  edge   [bend left=38]   node[above] {$v_{0\infty}$} (d12);
\end{tikzpicture}
\caption{The orbit of the intersection point $\widetilde{\mathcal{L}}_1^0\cap\widetilde{\mathcal{L}}_3^\infty$ under the symmetries in Table \ref{table:symmetries}. Loops are omitted and the action of $v_{t1}$ is not displayed. The symmetry $v_{0\infty}$ interchanges the vertices of the two cubes, realised by simply swapping them. The symmetry $v_{t1}$ can be realised in the figure as rotating each cube $180^{\circ}$ along the axis of symmetry that passes through the midpoints of the left vertical edge on the front face and the right vertical edge on the back face.}
\label{figure:orbit_II}
\end{figure}

\begin{proof}[Proof of Theorem \ref{thm:orbitII}]
We start with the observation that the intersection point
\begin{equation*}
\{N_*(t_0)\}=\widetilde{\mathcal{L}}_1^0\cap\widetilde{\mathcal{L}}_3^\infty,
\end{equation*}
defines an analytic function
\begin{equation*}
    N_*:\mathbb{C}^*\rightarrow \mathbb{CP}^6,
\end{equation*}
which is $q$-periodic, such that $N_*(t_0)\in \widehat{\mathcal{F}}(\Theta,t_0)$ for all $t_0\in\mathbb{C}^*$.
Furthermore, due to item (4)' in the proof of Theorem \ref{thm:intersection} in Section \ref{sec:intersectionpoints}, $N_*(t_0)$ is infinite if and only if $t_0\in q^{\mathbb{Z}+2\theta_0-\theta_t+\theta_1}$. Define the set
\begin{equation*}
    V=\mathbb{C}^*\setminus (q^{\mathbb{Z}-2\theta_0+\theta_t-\theta_1}\cup S),
\end{equation*}
where $S$ is the union of $q$-lines on the right-hand side of equation \eqref{eq:param_assumptions_2}.
Then, for any $t_0\in V'$, the Riemann-Hilbert correspondence defines a unique corresponding solution $(f(q^mt_0),g(q^m t_0))_{m\in\mathbb{Z}}$ on the $q$-line $q^{\mathbb{Z}}t_0$. We denote by $(f,g)$ the corresponding uniformisation in $t_0$, so that both $f$ and $g$ define complex functions on $V$ which satisfy $q\Psix$.

Next, we apply Proposition \ref{prop:asymptotics_line_tilde01} to obtain the asymptotics of $(f,g)$ near $t=0$. This yields the following asymptotic expansion as $m\rightarrow +\infty$, absolutely convergent for large enough $m$,
\begin{align}\label{eq:zzzasymptoticzero}
    f(t_m)&=\sum_{n=1}^\infty\sum_{k=0}^n F_{n,k}(- t_m)^{n}e_0(m;t_0)^k,\\
    g(t_m)&=\sum_{n=1}^\infty\sum_{k=0}^n G_{n,k}(- t_m)^{n}e_0(m;t_0)^k,\nonumber
\end{align}
where the coefficients are as in Proposition \ref{prop:asymptotics_line_tilde01} and in particular only depend on $\Theta$, and
\begin{equation}\label{eq:defie0}
    e_0(m;t_0):=r_{0t}(t_0)(-q^m t_0)^{2(\theta_t-\theta_0)}.
\end{equation}
Here $r_{0t}=c_{0t}\times s_{0t}$ and explicit formulas for the constant $c_{0t}$ and twist parameter $s_{0t}=s_{0t}(t_0)$ are given in Proposition \ref{prop:asymptotics_line_tilde01}. It will be convenient to consider the dual expression for the twist parameter, given in general in equation \eqref{eq:twist0talt}. By setting $\sigma_{0t}=\theta_t-\theta_0$ in equation \eqref{eq:twist0talt}, and consequently replacing
\begin{equation*}
     \theta_0\mapsto-\theta_0,\quad \theta_t\mapsto-\theta_t,\quad \rho_{24}\mapsto \rho_{14},
 \end{equation*}
and taking reciprocal of the final result, we obtain the following dual expression for the twist parameter,
\begin{equation}\label{eq:twist0taltorbitII}
s_{0t}=-(-t_0)^{-2(\theta_t-\theta_0)}L(\widetilde{\rho}_{23};t_0)^{-1},
\end{equation}
where $L=L(Z;t_0)$ is the following linear map,
\begin{equation*}
    L(Z;t_0)=\frac{\vartheta_\tau(-2\theta_t)\theta_q(q^{2\theta_0-\theta_t+\theta_1}t_0^{-1})-Z
\vartheta_\tau(2\theta_0-2\theta_t)\theta_q(q^{\theta_1-\theta_t}t_0^{-1})}{\vartheta_\tau(-2\theta_0)\theta_q(q^{\theta_t+\theta_1}t_0^{-1})}.
\end{equation*}

Now, we use the fact that $N_*\in \widetilde{\mathcal{L}}_3^\infty$, so that the dual Tyurin ratio $\rho_{23}$ is zero and thus we obtain the following simplified expression for $r_{0t}$,
\begin{align*}
    r_{0t}&=-(-t_0)^{-2(\theta_t-\theta_0)}c_{0t} \frac{\vartheta_\tau(-2\theta_0)\theta_q(q^{\theta_t+\theta_1}t_0^{-1})}{\vartheta_\tau(-2\theta_t)\theta_q(q^{2\theta_0-\theta_t+\theta_1}t_0^{-1})}\\
    &=-(-t_0)^{-2(\theta_t-\theta_0)}c_{0t} \frac{\vartheta_\tau(-1-2\theta_0)\theta_q(q^{-\theta_t-\theta_1}t_0)}{\vartheta_\tau(-1-2\theta_t)\theta_q(q^{-2\theta_0+\theta_t-\theta_1}t_0)}.
\end{align*}
From equation \eqref{eq:defie0}, we then get the following expression for $e_0(m;t_0)$,
\begin{align*}
    e_0(m;t_0)&=-(-q^m t_0)^{2(\theta_t-\theta_0)}(-t_0)^{-2(\theta_t-\theta_0)}c_{0t} \frac{\vartheta_\tau(-1-2\theta_0)\theta_q(q^{-\theta_t-\theta_1}t_0)}{\vartheta_\tau(-1-2\theta_t)\theta_q(q^{-2\theta_0+\theta_t-\theta_1}t_0)}\\
    &=q^{2(\theta_t-\theta_0)m}c_{0t} \frac{\vartheta_\tau(-1-2\theta_0)\theta_q(q^{-\theta_t-\theta_1}t_0)}{\vartheta_\tau(-1-2\theta_t)\theta_q(q^{-2\theta_0+\theta_t-\theta_1}t_0)}\\
    &=c_{0t} \frac{\vartheta_\tau(-1-2\theta_0)\theta_q(q^{-\theta_t-\theta_1}t_m)}{\vartheta_\tau(-1-2\theta_t)\theta_q(q^{-2\theta_0+\theta_t-\theta_1}t_m)},
\end{align*}
where the last equality follows by $m$-fold application of the identity $\theta_q(qz)=-z^{-1}\theta_q(z)$ in both numerator and denominator. The last equation gives the desired uniformisation in the asymptotic expansion. Namely, we now define
\begin{equation*}
    E_0(t):=-F_{1,1}c_{0t} \frac{\vartheta_\tau(-1-2\theta_0)\theta_q(q^{-\theta_t-\theta_1}t)}{\vartheta_\tau(-1-2\theta_t)\theta_q(q^{-2\theta_0+\theta_t-\theta_1}t)},
\end{equation*}
so that $E_0(q^mt_0)=-F_{1,1}e_0(m;t_0)$. Then $E_0(t)$ is given by equation \eqref{eq:e0infdefi} in the theorem and we obtain the following asymptotic expansions, 
\begin{align}\label{eq:zzzzasympzero}
    f(t)&=\frac{q^{\theta_0}-q^{-\theta_0}}{q^{\theta_0-\theta_t}-q^{\theta_t-\theta_0}}\;t+t E_0(t)+\sum_{n=2}^\infty\sum_{k=0}^n f_{n,k}t^n E_0(t)^k,\\
    g(t)&=-q^{-1}\frac{q^{\theta_t}-q^{-\theta_t}}{q^{\theta_0-\theta_t}-q^{\theta_t-\theta_0}}\;t-q^{-1+\theta_0-\theta_t}t E_0(t)+\sum_{n=2}^\infty\sum_{k=0}^n g_{n,k}t^n E_0(t)^k,\nonumber
\end{align}
as $t\rightarrow 0$ along any fixed $q$-line, i.e. $t\in q^{\mathbb{Z}}t_0$, for any fixed $t_0\in V$, which are absolutely convergent for small enough $t\in q^{\mathbb{Z}}t_0$.
Here the coefficients are related to those in expansions \eqref{eq:zzzasymptoticzero} by
\begin{equation*}
    f_{n,k}=(-1)^{n+k}F_{n,k}/F_{1,-1}^k,\quad g_{n,k}=(-1)^{n+k}G_{n,k}/F_{1,-1}^k\qquad (0\leq k \leq n,n\geq 2).
\end{equation*}
The function $E_0(t)$ has poles on the $q$-line $q^{\mathbb{Z}+2\theta_0-\theta_t+\theta_1}$. By choosing $t_0$ close enough to $q^{2\theta_0-\theta_t+\theta_1}$, we may ensure that $|E_0(t_0)|$ becomes arbitrarily large, whilst \eqref{eq:zzzzasympzero} remains absolutely convergent for small enough $t\in q^\mathbb{Z}t_0$. Since
\begin{equation}\label{eq:qdif}
    E_0(qt)=E_0(t)q^{2\theta_0-2\theta_t},
\end{equation}
it follows that, for arbitrarily large fixed $C>0$, the series expansions \eqref{eq:zzzzasympzero}, with $E_0(t)$ replaced by $C|t|^{\Re(2\theta_0-2\theta_t)}$ everywhere, is uniformly absolutely convergent for small enough $t\in\mathbb{C}^*$.

Now, take any compact set $K\subseteq \mathbb{CP}^1\setminus q^{\mathbb{Z}-2\theta_0+\theta_t-\theta_1}$, such that $K\cap\mathbb{C}^*$ is non-empty, with $qK=K$. Then, the function $E_0(t)$ is bounded on $\{z\in K: |q|\leq |z|\leq 1\}$ and so we can find a $C>$ such that \begin{equation}\label{eq:boundonE0}
    |E_0(t)|\leq C|t|^{\Re(2\theta_0-2\theta_t)},
\end{equation} on this region. But then, by equation \eqref{eq:qdif}, the bound \eqref{eq:boundonE0} holds on the whole of $K\cap\mathbb{C}^*$. Since the series expansions \eqref{eq:zzzzasympzero}, with $E_0(t)$ replaced by $C|t|^{\Re(2\theta_0-2\theta_t)}$ everywhere, is absolutely convergent for small enough $t\in\mathbb{C}^*$, it follows that the original series expansions \eqref{eq:zzzzasympzero} must be uniformly absolutely convergent for small enough $t\in K\setminus\{0\}$.

This also shows that $f$ and $g$ are meromorphic functions for small enough $t\in \mathbb{C}^*\setminus q^{\mathbb{Z}-2\theta_0+\theta_t-\theta_1}$. Then, by analytic continuation via the $q\Psix$ equation, it follows that $f$ and $g$ are meromorphic functions on all of $\mathbb{C}^*\setminus q^{\mathbb{Z}-2\theta_0+\theta_t-\theta_1}$. This concludes the asymptotic analysis of the solution near $t=0$. An analogous analysis around $t=\infty$ gives the asymptotic expansion around $t=\infty$ and Theorem \ref{thm:orbitII} follows.
\end{proof}

\appendix

\section{Proof that RH is a biholomorpism}\label{appendix:rhdif}

\begin{proof}[Proof of Proposition \ref{prop:rhdiffeo}.]
Let $\mathcal{V}(\Theta,t_0)$ denote the initial value space of $q\Psix(\Theta)$ at $t=t_0$, constructed by blowing up
the eight points
\begin{equation}
    \begin{aligned}
    &b_1=(0,q^{-1+\theta_0}t_0), & &b_3=(q^{+\theta_t}t_0,0), & &b_5=(q^{+\theta_1},\infty), & &b_7=(\infty,q^{-\theta_\infty}),\\
    &b_2=(0,q^{-1-\theta_0}t_0), & &b_4=(q^{-\theta_t}t_0,0), & &b_6=(q^{-\theta_1},\infty), & &b_8=(\infty,q^{-1+\theta_\infty}).
\end{aligned}\label{eq:basepoints}
\end{equation}
in $\{(f,g)\in\mathbb{CP}^1\times \mathbb{CP}^1\}$ and subsequently removing the inaccessible divisor, given by the union of the proper transforms of $\{f=0\}, \{f=\infty\},\{g=0\}$ and  $\{g=\infty\}$, see \cite{s:01} for details.
The time-evolution of $q\Psix$ now defines an isomorphism
\begin{equation}\label{eq:time-evolution}
    \overline{\cdot}:\mathcal{V}(\Theta,t_0)\rightarrow \mathcal{V}(\Theta,qt_0),
\end{equation}
which is also a biholomorphism.
Through the identification of the initial value space $\mathcal{V}(\Theta,t_0)$ with the solution space $\mathcal{I}(\Theta,t_0)$, the mapping $\textrm{RH}$ in equation \eqref{eq:rhmapping} becomes a bijective mapping
\begin{equation}\label{eq:rhmappinginit}
    \widetilde{\textrm{RH}}:\mathcal{V}(\Theta,t_0)\rightarrow \mathcal{F}(\Theta,t_0),
\end{equation}
with the fundamental property that it `commutes' with the time-evolution in equation \eqref{eq:time-evolution}. To be precise, for any $p\in \mathcal{V}(\Theta,t_0)$,
\begin{equation}\label{eq:commutativediagram}
    \widetilde{\textrm{RH}}(\overline{p})=\widetilde{\textrm{RH}}(p),
\end{equation}
where we note that $\mathcal{F}(\Theta,qt_0)=\mathcal{F}(\Theta,t_0)$. The verification of this statement goes as follows. We know that the time-evolution in equation \eqref{eq:time-evolution} has a lift to the monodromy manifold, see equation \eqref{eq:connection_deformation}, 
\begin{equation*}
    \mathcal{M}(\Theta,t_0)\rightarrow \mathcal{M}(\Theta,q t_0),C(z)\mapsto \overline{C}(z)=z \; C(z).
\end{equation*}
The Tyurin parameters $\rho_{3}$ and $\rho_{4}$ remain unchanged under the time-evolution, 
\begin{equation*}
\overline{\rho}_3=\pi\left[\overline{C}(q^{\theta_1})\right]=\pi\left[q^{\theta_1} C(q^{\theta_1})\right]=\pi\left[C(q^{\theta_1})\right]=\rho_3,
\end{equation*}
and similarly we have $\overline{\rho}_4=\rho_4$. On the other hand, the points $x_1=q^{+\theta_t}t_0$
and $x_2=q^{-\theta_t}t_0$ move under the time-evolution, and correspondingly,
\begin{align*}
   \overline{\rho}_1&= \pi\left[\overline{C}(q^{\theta_t+1}t_0)\right]= \pi\left[q^{\theta_t+1}t_0C(q^{\theta_t+1}t_0)\right]\\
   &=\pi\left[q^{\theta_t+1}t_0 \frac{t_0}{(q^{\theta_t}t_0)^2}q^{\theta_0 \sigma_3}C(q^{\theta_t}t_0)q^{\theta_\infty \sigma_3}\right]=\pi\left[C(q^{\theta_t}t_0)q^{\theta_\infty \sigma_3}\right]\\
   &=\pi\left[C(q^{\theta_t}t_0)\right]q^{2\theta_\infty}=\rho_1 q^{2\theta_\infty},
\end{align*}
where in the third equality we used the $q$-difference equation that $C(z)$ satisfies, equation \eqref{eq:connectionqdif} with $t=t_0$. The computation for $\rho_2$ is identical and we find
\begin{equation*}
    \overline{\rho}=(q^{2\theta_\infty}\rho_1,q^{2\theta_\infty}\rho_2,\rho_3,\rho_4).
\end{equation*}

Now, the time-evolution acts on the coefficients of $T(\rho)$ by
\begin{align*}
  \overline{T}_{12}&=q^{-2\theta_\infty}t_0^{-2} T_{12},  & 
   \overline{T}_{13}&=t_0^{-2} T_{13}, & 
   \overline{T}_{14}&=t_0^{-2} T_{14},\\
  \overline{T}_{34}&=q^{+2\theta_\infty}t_0^{-2} T_{34},  & 
   \overline{T}_{24}&=t_0^{-2} T_{24}, & 
   \overline{T}_{23}&=t_0^{-2} T_{23}.
\end{align*}
It is now straightforward to check that the time-evolution acts trivially on the $\eta$-coordinates. We give the details of the computation for $\eta_{12}$ in inhomogeneous coordinates,
\begin{align*}
\overline{\eta}_{12}&=\frac{\vartheta_\tau(+\frac{1}{2},-\frac{1}{2})}{\vartheta_\tau(+\theta_0,-\theta_0)}\frac{\overline{T}_{12}\overline{\rho}_1\overline{\rho}_2}{\overline{T}_{12}'\overline{\rho}_1\overline{\rho}_2+\overline{T}_{13}'\overline{\rho}_1\overline{\rho}_3+\ldots+\overline{T}_{34}'\overline{\rho}_3\overline{\rho}_4}\\
&=\frac{\vartheta_\tau(+\frac{1}{2},-\frac{1}{2})}{\vartheta_\tau(+\theta_0,-\theta_0)}\frac{T_{12}\rho_1\rho_2\; q^{2\theta_\infty}t_0^{-2}}{(T_{12}'\rho_1\rho_2+T_{13}'\rho_1\rho_3+\ldots+T_{34}'\rho_3\rho_4) q^{2\theta_\infty}t_0^{-2}}\\
&=\eta_{12}.
\end{align*}

Next, we are going to show that $\widetilde{\textrm{RH}}$ is analytic. Firstly, a definition. We say that a point $p\in \mathcal{V}(\Theta,t_0)$ is  in generic position, if it does not lie on any of the eight exceptional lines above the base points given in equation \eqref{eq:basepoints}. 

We first prove that $\widetilde{\textrm{RH}}$ is analytic at points in generic position. So, take a point $p_*=(f_*,g_*)\in \mathcal{V}(\Theta,t_0)$  in generic position. Then we can construct an open environment $U\subseteq \mathcal{V}(\Theta,t_0)$ of $(f_*,g_*)$ such that any any point in it is in generic position. Consider the corresponding coefficient matrix $A(z)=A(z;f,g)$, defined via equations \eqref{eq:coordinates_linear}, where we put $w\equiv 1$, for $(f,g)\in U$.

By schrinking $U$ if necessary, we can diagonalise $A_0=A(0)$ by a matrix $H=H(f,g)$ which is analytic on $U$. It is straightforward to incorporate parameters in Carmichael's existence theorems in \cite{carmichael1912}, from which it follows that $\Psi_\infty(z;f,g)$ and $\Psi_0(z;f,g)$ are analytic in $(f,g)$ in an open environment $U'\subseteq U$ of $(f_*,g_*)$. It follows that
\begin{equation*}
    C(z;f,g)=\Psi_0(z;f,g)^{-1}\Psi_\infty(z;f,g),
\end{equation*}
is analytic  in $(f,g)\in U'$, and thus so are the Tyurin parameters
\begin{equation*}
    \rho_k(f,g)=\pi(C(x_k;f,g))\qquad (1\leq k\leq 4).
\end{equation*}
Then, by equation \eqref{eq:eta_defi},
\begin{equation*}
    \eta_{ij}=\eta_{ij}(f,g)=\frac{\vartheta_\tau(+\frac{1}{2},-\frac{1}{2})}{\vartheta_\tau(+\theta_0,-\theta_0)}\frac{T_{ij}\rho_i(f,g)\rho_j(f,g)}{T'(\rho(f,g))}\qquad (1\leq i<j\leq 4),
\end{equation*}
from which it follows that $\eta=\eta(f,g)$ is analytic on $U'$.




We conclude that $\widetilde{\textrm{RH}}$ is analytic at any point $p\in \mathcal{V}(\Theta,t_0)$ in generic position.  Since $\widetilde{\textrm{RH}}$ commutes with the time-evolution \eqref{eq:time-evolution}, see equation \eqref{eq:commutativediagram}, it follows that $\widetilde{\textrm{RH}}$ is analytic everywhere if, for any point $p\in \mathcal{V}(\Theta,t_0)$, there exists an $m\in\mathbb{Z}$ such that $m$-fold application of the time-evolution \eqref{eq:time-evolution} gives a point $\overline{p}^{(m)}\in \mathcal{V}(\Theta,q^m t_0)$ in generic position. The last statement is proven in a straightforward but laborious manner, as follows.

Take a point $p\in \mathcal{V}(\Theta,t_0)$ and suppose it is not in generic position. Then it must lie in one of the eight exceptional lines above $b_1,\ldots,b_8$. Suppose it lies in the exceptional line above $b_1$. We introduce local coordinates
\begin{equation*}
f=u,\quad g-q^{-1+\theta_0}t_0=uv,
\end{equation*}
so that $\{u=0,v\in\mathbb{C}\}$ parametrises the exceptional line above $b_1$, minus one point which lies on the inaccessible divisor. It follows that the point $p$ is given by a unique value $v_p\in\mathbb{C}$ with $u=u_p=0$. Application of the time-evolution, yields
\begin{align*}
\overline{g}&=q^{-\theta_0}t_0,\\
\overline{f}&=\frac{q(q^{\theta_0}-q^{-\theta_0})t_0}{(q^{\theta_0-\theta_\infty}-t_0)(q^{\theta_0+\theta_\infty-1}-t_0)}
(q^{-1+\theta_0}(q^{\theta_t}+q^{-\theta_t}-t_0(q^{\theta_1}+q^{-\theta_1})+v).
\end{align*}
This point is in generic position, unless $\overline{f}$ happens to vanish, i.e. when
\begin{equation*}
    v=-q^{-1+\theta_0}(q^{\theta_t}+q^{-\theta_t}-t_0(q^{\theta_1}+q^{-\theta_1})).
\end{equation*}
For this special value of $v$, we find
\begin{align*}
\overline{\overline{g}}&=q^{1+\theta_0}t_0,\\
\overline{\overline{f}}&=\frac{q^{-1-\theta_0}(q^{\theta_0}-q^{-\theta_0})t_0}{(q^{-\theta_0-\theta_\infty-1}-t_0)(q^{-\theta_0+\theta_\infty-2}-t_0)}
(q(q^{\theta_1}+q^{-\theta_1})t_0-(q^{\theta_t}+q^{-\theta_t})).
\end{align*}
This point is in generic position, unless $t_0$ happens to be such that $\overline{\overline{f}}$ vanishes, that is,
\begin{equation*}
    t_0=q^{-1}(q^{\theta_t}+q^{-\theta_t})/(q^{\theta_1}+q^{-\theta_1}).
\end{equation*}
Assuming this special value for $t_0$, we find
\begin{align*}
\overline{\overline{\overline{g}}}&=q^{2-\theta_0}t_0,\\
\overline{\overline{\overline{f}}}&=\frac{q^{\theta_0-3}(1-q)(q^{\theta_0}-q^{-\theta_0})(q^{\theta_t}-q^{-\theta_t})^2}{(q^{\theta_1}+q^{\theta_1})(q^{\theta_0-\theta_\infty-2}-t_0)(q^{\theta_0+\theta_\infty-3}-t_0)},
\end{align*}
which is necessarily in generic position, due to parameter conditions \eqref{eq:param_assumptions_1} and \eqref{eq:param_assumptions_2}.

By a similar argument for points on the seven other exceptional lines, we find that, for any point $p\in \mathcal{V}(\Theta,t_0)$, there exists an $m\in\mathbb{Z}$ such that $m$-fold application of the time-evolution \eqref{eq:time-evolution} gives a point $\overline{p}^{(m)}\in \mathcal{V}(\Theta,q^m t_0)$ in generic position. Since $\widetilde{\textrm{RH}}$ is analytic at points in generic position, it  follows that $\widetilde{\textrm{RH}}$ is analytic
on the whole of $\mathcal{V}(\Theta,t_0)$. So $\widetilde{\textrm{RH}}$ is an analytic bijection from $\mathcal{V}(\Theta,t_0)$ to $\mathcal{F}(\Theta,t_0)$.

What it left, is to prove that $\widetilde{\textrm{RH}}^{-1}$ is analytic. Take any point $\eta_*\in \mathcal{F}(\Theta,t_0)$ and choose a small open environment $V\subseteq \mathcal{F}(\Theta,t_0)$ of $\eta_*$. By shrinking $V$ if necessary, we can construct a quadruplet of Tyurin parameters $\rho=\rho(\eta)\in\mathbb{CP}^4$, analytic in $\eta\in V$, such that equations \eqref{eq:Thom}, \eqref{eq:Thom0} and \eqref{eq:eta_defi} are satisfied for $\eta\in V$. We can now construct a corresponding connection matrix $C(z;\eta)\in \mathfrak{C}(\Theta,t_0)$, analytic in $\eta$,
such that 
\begin{equation}\label{eq:rho_uniform}
    \pi(C(x_k;\eta))=\rho_k(\eta)\qquad (1\leq k\leq 4),
\end{equation}
for $\eta\in V'$, for some open environment $V'\subseteq V$ of $\eta_*$. In other words, $C(z;\eta)$ is an analytic matrix function on $\mathbb{C}^*\times V'$. To prove this rigorously, one chooses a basis of the vector space of $2\times 2$ analytic matrix functions satisfying \eqref{eq:connectionqdif}, with $t=t_0$, and rewrites equations \eqref{eq:rho_uniform} as a system of eight homogeneous equations among the eight entries of the coordinate vector for $C(z;\eta)$ on this basis (see \cite{roffelsenjoshiqpvi}*{Equations (5.9) and (5.10)}. The homogeneous system has rank less than $8$, and we construct a non-trivial solution of the homogeneous system, analytically in $\eta$, around $\eta=\eta_*$. This then defines a connection matrix $C(z;\eta)$ with the desired properties.

Define $p_*\in \mathcal{V}(\Theta,t_0)$ by $\widetilde{\textrm{RH}}(p_*)=\eta_*$, and let $m\in\mathbb{Z}$ be such that $\overline{p}_*^{(m)}\in \mathcal{V}(\Theta,q^m t_0)$ is in generic position. Then, we know for a fact, that Riemann-Hilbert problem \ref{rhp:main}, with connection matrix $C(z;\eta_*)$, is solvable for that value of $m$. Choose open sets $\overline{p}_*^{(m)}\in U''\subseteq\mathcal{V}(\Theta,q^m t_0)$ and $\eta_*\in V''\subseteq V'\subseteq \mathcal{F}(\Theta,q^m t_0)$, such that $\widetilde{\textrm{RH}}$ maps $U''$ homeomorphically onto $V''$ and every point in $U''$ is in generic position. Then, Riemann-Hilbert problem \ref{rhp:main}, with connection matrix $C(z;\eta)$, has a unique solution $Y^{\B{m}}(z;\eta)$, for all $\eta\in V''$. Furthermore, by the analytic Fredholm alternative, see \cite{zhourhp}*{Proposition 4.3} and also \cite{fokasitskapaev}*{Corollary 3.1}, this solution  is analytic in $\eta\in V''$.

Therefore, the corresponding coefficient matrix $A(z,t_m,\eta)$, defined via equation \eqref{eq:linear_system}, is analytic in $\eta\in V''$. In particular, the coordinates $f=f(\eta)$ and $g=g(\eta)$, defined via equations \eqref{eq:coordinates_linear}, are also analytic in $\eta\in V''$, and thus so is 
\begin{equation*}
    \overline{p}^{(m)}(\eta):=(f(\eta),g(\eta))\in \mathcal{V}(\Theta,q^m t_0).
\end{equation*}
But then $p(\eta)\in \mathcal{V}(\Theta,t_0)$, defined as the $m$-fold down-shift of $\overline{p}^{(m)}(\eta)$, is analytic in $\eta\in V''$.
On the other hand, by definition,
\begin{equation*}
   \widetilde{\textrm{RH}}( p(\eta))=\eta,
\end{equation*}
and we conclude that $\widetilde{\textrm{RH}}^{-1}$ is analytic at $\eta_*\in V''$.

It follows that $\widetilde{\textrm{RH}}^{-1}$ is analytic and since we have already shown that the forward mapping is analytic, Proposition \ref{prop:rhdiffeo} follows.
\end{proof}

\begin{bibdiv}
 \begin{biblist}
 


\bib{birkhoffgeneralized1913}{article}{
  author={Birkhoff, G.D.},
  title={The generalized {R}iemann problem for linear differential
    equations and the allied problems for linear difference and
    $q$-difference equations},
  journal= {Proceedings of the American Academy of Arts and Sciences},
  volume={49},
  pages={521--568},
  year={1913}
  }

\bib{boalchklein}{article}{
   author={Boalch, P.},
   title={From Klein to Painlev\'{e} via Fourier, Laplace and Jimbo},
   journal={Proc. London Math. Soc. (3)},
   volume={90},
   date={2005},
   number={1},
   pages={167--208},
}

\bib{carleson}{book}{
   author={B\"{o}ttcher, A.},
   author={Karlovich, Y.I.},
   title={Carleson curves, Muckenhoupt weights, and Toeplitz operators},
   series={Progress in Mathematics},
   volume={154},
   publisher={Birkh\"{a}user Verlag, Basel},
   date={1997},
   pages={xvi+397},
}

\bib{cafassowidom}{article}{
   author={Cafasso, M.},
   author={Gavrylenko, P.},
   author={Lisovyy, O.},
   title={Tau functions as Widom constants},
   journal={Comm. Math. Phys.},
   volume={365},
   date={2019},
   number={2},
   pages={741--772},
}


\bib{carmichael1912}{article}{
  author= {Carmichael,R.D.},
  title={The general theory of linear $q$-difference equations},
  journal={ American Journal of Mathematics},
  volume={34},
  pages={147--168},
  year={1912}
  }




\bib{dolgachev}{article}{
      title={Topics in Classical Algebraic Geometry}, 
      author={Dolgachev, I.V.},
      year={2010},
      eprint={https://www.math.ucsd.edu/~eizadi/207A-14/Dolgachev-topics.pdf}
}


\bib{dubmazzocco}{article}{
   author={Dubrovin, B.},
   author={Mazzocco, M.},
   title={Monodromy of certain Painlev\'{e}-VI transcendents and reflection
   groups},
   journal={Invent. Math.},
   volume={141},
   date={2000},
   number={1},
   pages={55--147}
}


\bib{fokasitskapaev}{book}{
    AUTHOR = {Fokas, A.S.},
    AUTHOR = {Its, A.R.},
    AUTHOR = {Kapaev, A.A.},
    AUTHOR = {Novokshenov, V.Y.},
     TITLE = {Painlev\'{e} transcendents},
    SERIES = {Mathematical Surveys and Monographs},
    VOLUME = {128},
      NOTE = {The Riemann-Hilbert approach},
 PUBLISHER = {American Mathematical Society, Providence, RI},
      YEAR = {2006},
     PAGES = {xii+553}
}

\bib{gavlis2018}{article}{
   author={Gavrylenko, P.},
   author={Lisovyy, O.},
   title={Fredholm determinant and Nekrasov sum representations of
   isomonodromic tau functions},
   journal={Comm. Math. Phys.},
   volume={363},
   date={2018},
   number={1},
   pages={1--58}
}


\bib{guzzettiloga}{article}{
   author={Guzzetti, D.},
   title={The logarithmic asymptotics of the sixth Painlev\'{e} equation},
   journal={J. Phys. A},
   volume={41},
   date={2008},
   number={20},
   pages={205201, 46}
}

\bib{guzzetti_solving}{article}{
   author={Guzzetti, D.},
   title={Solving the sixth Painlev\'{e} equation: towards the classification of
   all the critical behaviors and the connection formulae},
   journal={Int. Math. Res. Not. IMRN},
   date={2012},
   number={6},
   pages={1352--1413}
}

\bib{guzzettitabulation}{article}{
    AUTHOR = {Guzzetti, D.},
     TITLE = {Tabulation of {P}ainlev\'{e} 6 transcendents},
   JOURNAL = {Nonlinearity},
    VOLUME = {25},
      YEAR = {2012},
    NUMBER = {12},
     PAGES = {3235--3276}}

\bib{guzzettireview}{article}{
    AUTHOR = {Guzzetti, D.},
     TITLE = {A review of the sixth {P}ainlev\'{e} equation},
   JOURNAL = {Constr. Approx.},
    VOLUME = {41},
      YEAR = {2015},
    NUMBER = {3},
     PAGES = {495--527}}

\bib{hondaminitwistor}{article}{
   author={Honda, N.},
   title={Segre quartic surfaces and minitwistor spaces},
   journal={New York J. Math.},
   volume={28},
   date={2022},
   pages={672--704}
}


\bib{itslisovyy}{article}{
   author={Its, A. R.},
   author={Lisovyy, O.},
   author={Prokhorov, A.},
   title={Monodromy dependence and connection formulae for isomonodromic tau
   functions},
   journal={Duke Math. J.},
   volume={167},
   date={2018},
   number={7},
   pages={1347--1432}
}


\bib{jimbo1982}{article}{
    AUTHOR = {Jimbo, M.},
     TITLE = {Monodromy problem and the boundary condition for some
              {P}ainlev\'{e} equations},
   JOURNAL = {Publ. Res. Inst. Math. Sci.},
    VOLUME = {18},
      YEAR = {1982},
    NUMBER = {3},
     PAGES = {1137--1161}
}


\bib{jimbonagoyasakai}{article}{
    AUTHOR = {Jimbo, M.},
    AUTHOR = {Nagoya, H.},
    AUTHOR = {Sakai, H.},
     TITLE = {C{FT} approach to the {$q$}-{P}ainlev\'{e} {VI} equation},
   JOURNAL = {J. Integrable Syst.},
    VOLUME = {2},
      YEAR = {2017},
    NUMBER = {1},
     PAGES = {27}
} 

\bib{jimbosakai}{article}{
  author={Jimbo, M.},
  author={Sakai, H.},
  title={A $q$-analogue of the sixth {P}ainlev{\'e} equation},
  journal={ Lett. Math. Phys.},
  volume={38},
  pages={145--154},
  year={1996}
  }

\bib{joshimazzoccoroffelsen}{article}{
   author={Joshi, N.},
   author={Mazzocco, M.},
   author={Roffelsen, P.},
   title={Segre surfaces and the geometry of Painlev\'e equations},
     year={2024},
      eprint={arXiv:2405.10541 [math-ph]},
      note={preprint}
}

\bib{joshi_roffelsenqpa1}{article}{
   author={Joshi, N.},
   author={Roffelsen, P.},
   title={Analytic solutions of $q$-$P(A_1)$ near its critical points},
   journal={Nonlinearity},
   volume={29},
   date={2016},
   number={12},
   pages={3696--3742}
}

\bib{joshiroffelseniv}{article}{
    AUTHOR = {Joshi, N.},
    AUTHOR = {Roffelsen, P.},
     TITLE = {On the {R}iemann-{H}ilbert problem for a {$q$}-difference
              {P}ainlev\'{e} equation},
   JOURNAL = {Comm. Math. Phys.},
    VOLUME = {384},
      YEAR = {2021},
    NUMBER = {1},
     PAGES = {549--585}
}

\bib{roffelsenjoshiqpvi}{article}{
    AUTHOR = {Joshi, N.},
    AUTHOR = {Roffelsen, P.},
     TITLE = {On the monodromy manifold of {$q$}-{P}ainlev\'{e} {VI} and its
              {R}iemann-{H}ilbert problem},
   JOURNAL = {Comm. Math. Phys.},
    VOLUME = {404},
      YEAR = {2023},
    NUMBER = {1},
     PAGES = {97--149}
}

\bib{kajiwarareview}{article}{
   author={Kajiwara, K.},
   author={Noumi, M.},
   author={Yamada, Y.},
   title={Geometric aspects of Painlev\'{e} equations},
   journal={J. Phys. A},
   volume={50},
   date={2017},
   number={7},
   pages={073001, 164}
}

\bib{kanekomero}{article}{
   author={Kaneko, K.},
   title={Painlev\'{e} VI transcendents which are meromorphic at a fixed
   singularity},
   journal={Proc. Japan Acad. Ser. A Math. Sci.},
   volume={82},
   date={2006},
   number={5},
   pages={71--76}
}

\bib{kunyavski}{article}{
   author={Kunyavski\u{\i}, B. \`E.},
   author={Skorobogatov, A. N.},
   author={Tsfasman, M. A.},
   title={del Pezzo surfaces of degree four},
   language={English, with French summary},
   journal={M\'{e}m. Soc. Math. France (N.S.)},
   number={37},
   date={1989},
   pages={113}
}

\bib{lecaine}{article}{
   author={Le Caine, J.},
   title={{The linear $q$-difference equation of the second order}},
   journal={Amer. J. Math.},
   volume={65},
   date={1943},
   pages={585--600},
   issn={0002-9327},
   review={\MR{8889}},
   doi={10.2307/2371867}
}

\bib{lisovyyalgebraic}{article}{
   author={Lisovyy, O.},
   author={Tykhyy, Y.},
   title={Algebraic solutions of the sixth Painlev\'{e} equation},
   journal={J. Geom. Phys.},
   volume={85},
   date={2014},
   pages={124--163}
}


\bib{manoqpvi}{article}{
    AUTHOR = {Mano, T.},
     TITLE = {Asymptotic behaviour around a boundary point of the {$q$}-{P}ainlev\'{e} {VI} equation and its connection problem},
   JOURNAL = {Nonlinearity},
    VOLUME = {23},
      YEAR = {2010},
    NUMBER = {7},
     PAGES = {1585--1608},
}


\bib{mabuchi}{article}{
   author={Mabuchi, T.},
   author={Mukai, S.},
   title={Stability and Einstein-K\"{a}hler metric of a quartic del Pezzo
   surface},
   conference={
      title={Einstein metrics and Yang-Mills connections},
      address={Sanda},
      date={1990},
   },
   book={
      series={Lecture Notes in Pure and Appl. Math.},
      volume={145},
      publisher={Dekker, New York},
   },
   date={1993},
   pages={133--160},
}

\bib{ohyamamero}{article}{
   author={Ohyama, Y.},
   title={Analytic solutions to the sixth $q$-Painlev\'{e} equation around the
   origin},
   conference={
      title={Expansion of integrable systems},
   },
   book={
      series={RIMS K\^{o}ky\^{u}roku Bessatsu, B13},
      publisher={Res. Inst. Math. Sci. (RIMS), Kyoto},
   },
   date={2009},
   pages={45--52}
}

\bib{ohyama_connection}{article}{
    AUTHOR = {Ohyama, Y.},
   title={Connection problem on $q$-Painlev\'e VI equation and special solutions},
   journal={in preparation},
}

\bib{ohyama_connection_talk}{article}{
    AUTHOR = {Ohyama, Y.},
   title={Global theory on $q$-Painlev\'e equations.},
   journal={talk in Web-seminar on Painlev\'e Equations and related topics},
   date={$5$ April, 2023},
   note = {Available at \url{http://www2.kobe-u.ac.jp/~mhsaito/iwpe/ohyama23_04-1.pdf}},
}

\bib{ohyamaramissualoy}{article}{
    AUTHOR = {Ohyama, Y.},
    AUTHOR = {Ramis, J.P.},
    AUTHOR = {Sauloy, J.},
     TITLE = {The space of monodromy data for the {J}imbo-{S}akai family of
              {$q$}-difference equations},
   JOURNAL = {Ann. Fac. Sci. Toulouse Math. (6)},
    VOLUME = {29},
      YEAR = {2020},
    NUMBER = {5},
     PAGES = {1119--1250},
}


\bib{ramis2023geometry}{article}{
    AUTHOR = {Ramis, J.P.},
    AUTHOR = {Sauloy, J.},
     TITLE = {Geometry of the space of monodromy data},
      YEAR = {2023},
       eprint={arXiv:2301.04083 [math.DS]},
      note={preprint}
}

\bib{phdroffelsen}{thesis}{
  author={Roffelsen, P.},
  title={On the global asymptotic analysis of a $q$-discrete
        {P}ainlev{\'e} equation},
    type={PhD thesis},
    organization={The University of Sydney},
    note = {Available at \url{https://ses.library.usyd.edu.au/handle/2123/16601}},
    year={2017}
    }

\bib{roffelsencontinuum}{article}{
   author={Roffelsen, P.},
   title={A Riemann-Hilbert approach to the continuum limit of $q$-Painlev\'e VI},
   journal={in preparation}
}



\bib{s:01}{article}{
  author={Sakai, H.},
  title={Rational surfaces associated with affine root systems
      and geometry of the {P}ainlev\'e equations},
  journal={Communications in Mathematical Physics},
  volume={220},
  pages={165--229},
  date={2001}
}


\bib{schlafki1858}{article}{
  author={Schläfli., L.},
  title={An attempt to determine the twenty-seven lines upon a surface of the third order, and to derive such surfaces in species, in reference to the reality of the lines upon the surface},
  journal={The Quarterly Journal of Mathematics},
  volume={2},
  pages={110–120},
  date={1858}
}

\bib{segrebook}{book}{
   author={Segre, B.},
   title={The Non-singular Cubic Surfaces},
   publisher={Oxford University Press, Oxford},
   date={1942},
}

\bib{Segre}{article}{
  author={Segre, C.},
  title={Etude des diff\'erentes surfaces du 4e ordre \`a conique double ou cuspidale (g\'en\'erale ou d\'ecompos\'ee) consid\'er\'ees comme des projections de l'intersection de deux vari\'et\'es quadratiques de l'espace \`a quatre dimensions.},
  journal={Math. Ann.},
  volume={24},
  pages={313–444},
  date={1884}
}


 \bib{watson}{article}
{	author = {Watson, G.N.},
    title = {{The continuation of functions defined by generalized hypergeometric series}},
	journal = {Trans. Cambridge Phil. Soc.},
	number = {21},
	pages = {281–299},
	year = {1910}}

\bib{yamadalax}{article}{
   author={Yamada, Y.},
   title={Lax formalism for $q$-Painlev\'{e} equations with affine Weyl group
   symmetry of type $E^{(1)}_n$},
   journal={Int. Math. Res. Not. IMRN},
   date={2011},
   number={17},
   pages={3823--3838}
}


\bib{zhourhp}{article}{
   author={Zhou, X.},
   title={The Riemann-Hilbert problem and inverse scattering},
   journal={SIAM J. Math. Anal.},
   volume={20},
   date={1989},
   number={4},
   pages={966--986},
   issn={0036-1410},
   review={\MR{1000732}},
   doi={10.1137/0520065},
}

\end{biblist}
\end{bibdiv}

\end{document}